\newif\ifarxiv\arxivtrue%

\documentclass[a4paper,UKenglish,cleveref,numberwithinsect]{lipics-v2019}

\usepackage{silence}

\WarningFilter{latexfont}{Font shape `U/stmry/m/n' in size <8.5> not available}
\WarningFilter{latexfont}{Font shape `U/stmry/m/n' in size <4.25> not available}
\WarningFilter{latexfont}{Size substitutions with differences}
\WarningFilter{latex}{Marginpar on page}

\WarningFilter{remreset}{The remreset package is obsolete}

\ifarxiv\nolinenumbers\else%
\fi%

\usepackage{amsmath}
\usepackage{amssymb}
\usepackage{arydshln}
\usepackage{amsthm}

\usepackage{stmaryrd}
\SetSymbolFont{stmry}{bold}{U}{stmry}{m}{n}

\usepackage{mathpartir}
\usepackage{upgreek}
\usepackage{microtype}
\usepackage{bm}

\usepackage{tikz}
\usetikzlibrary{positioning}
\usetikzlibrary{calc}

\usepackage{aliascnt}
\usepackage{hyperref}
\usepackage{paralist}
\usepackage{xspace}

\usepackage{thm-restate}
\input{fix-restatable.tex} 
\usepackage{cleveref}

\usepackage[colorinlistoftodos]{todonotes}

\usepackage{scalerel}

\usepackage{enumitem}

\newcommand{\alphabet}{\Sigma}

\newcommand{\thealgebra}{\emph{observation} algebra\xspace}
\newcommand{\acronym}{OA}
\newcommand{\PL}{\ensuremath{\mathsf{\acronym}}\xspace}

\newcommand{\BKA}{\ensuremath{\mathsf{BKA}}\xspace}

\newcommand{\val}{\ensuremath{\normalfont\textsc{Val}}\xspace}
\newcommand{\var}{\ensuremath{\normalfont\textsc{Var}}\xspace}

\newcommand{\termsbka}[1][]{\mathcal{E}_{#1}}
\newcommand{\termspl}{\mathcal{O}}
\newcommand{\termspocka}{\mathcal{T}}

\newcommand{\equivt}[1]{\equiv_{\scriptscriptstyle#1}}
\newcommand{\equivpl}{\equivt{}}
\newcommand{\equivplexpl}{\equivt{\PL}}
\newcommand{\equivpocka}{\equivt{}}
\newcommand{\equivbka}{\equivt{}}
\newcommand{\equivbkaexpl}{\equivt{\BKA}}

\newcommand{\leqqt}[1]{\leqq_{\scriptscriptstyle#1}}
\newcommand{\leqqpl}{\leqqt{}}
\newcommand{\leqqplexpl}{\leqqt{\PL}}
\newcommand{\leqqpocka}{\leqqt{}}
\newcommand{\leqqbka}{\leqqt{}}
\newcommand{\leqqbkaexpl}{\leqqt{\BKA}}

\newcommand{\pnegop}{\overline{\vphantom{i}\hspace{0.5mm}\cdot\hspace{0.5mm}}}

\newcommand\closure[2][H]{{#2}{\downarrow^{#1}}}
\newcommand\closurep[2][H]{\left(#2\right){\downarrow^{#1}}}

\newcommand{\sem}[1]{{\left\llbracket#1\right\rrbracket}}
\newcommand{\semt}[2]{\sem{#2}_{\scriptscriptstyle#1}}

\newcommand{\sembka}{\sem}
\newcommand{\sempl}{\sem}
\newcommand{\semplexpl}{\semt{\PL}}

\newcommand{\rsem}[1]{\stretchleftright{\llparenthesis}{\displaystyle\makebox[0pt][c]{\color{white} $\beta$}{#1}}{\rrparenthesis}}
\newcommand{\sempocka}[1]{\rsem{#1}{\downarrow}}
\newcommand{\semwpocka}{\rsem}

\newcommand{\angl}[1]{\left\langle#1\right\rangle}
\newcommand{\pipe}{\;\;|\;\;}

\newcommand{\naturals}{\mathbb{N}}

\newcommand{\lp}[1]{\mathbf{#1}}
\newcommand{\ltr}[1]{\mathtt{#1}}

\newcommand{\Pom}{\ensuremath{\mathsf{Pom}}}
\newcommand{\SP}{\ensuremath{\mathsf{SP}}}

\newcommand{\PC}{\mathsf{PC}}

\newcommand{\hexch}{\mathsf{exch}}

\newcommand{\hcontr}{\mathsf{contr}}

\newcommand{\htop}{\mathsf{top}}

\newcommand{\Act}{\mathsf{Act}}
\newcommand{\Obs}{\mathsf{Obs}}
\newcommand{\State}{\mathsf{State}}

\newcommand{\dom}{\mathsf{dom}}
\newcommand{\G}{\mathcal{G}}

\usepackage{thm-restate}

\newif\iffull\fullfalse%

\theoremstyle{plain}

\crefname{lem}{Lemma}{Lemmas}
\crefname{prop}{Proposition}{Propositions}

\title{Partially Observable Concurrent Kleene Algebra}
\author{Jana Wagemaker}{Radboud University, Nijmegen}{j.wagemaker@cs.ru.nl}{http://orcid.org/0000-0002-8616-3905}{} 
\author{Paul Brunet}{University College London}{}{https://orcid.org/0000-0002-9762-6872}{EPSRC grant IRIS (EP/R006865/1)} 
\author{Simon Docherty}{University College London}{}{http://orcid.org/0000-0001-7523-6630}{EPSRC grant (EP/S013008/1)} 
\author{Tobias Kapp\'{e}}{University College London}{}{https://orcid.org/0000-0002-6068-880X}{} 
\author{Jurriaan Rot}{Radboud University, Nijmegen}{}{}{}
\author{Alexandra Silva}{University College London}{}{https://orcid.org/0000-0001-5014-9784}{} 

\funding{ERC Starting Grant ProFoundNet (679127)}

\relatedversion{
A version with detailed proofs
is available at~\url{https://arxiv.org/abs/2007.07593}.
}

\authorrunning{J. Wagemaker, P. Brunet, S. Docherty, T. Kapp\'{e}, J. Rot, and A. Silva}

\Copyright{J. Wagemaker, P. Brunet, S. Docherty, T. Kapp\'{e}, J. Rot, and A. Silva}

\ccsdesc{Theory of computation~Semantics and reasoning}
\ccsdesc{Theory of computation~Concurrency}
\ccsdesc{Theory of computation~Formal languages and automata theory}

\acknowledgements{}

\usepackage{etoolbox}
\makeatletter
\patchcmd{\@addmarginpar}{\ifodd\c@page}{\ifodd\c@page\@tempcnta\m@ne}{}{}
\makeatother
\reversemarginpar%

\begin{document}

\maketitle

\begin{abstract}
We introduce partially observable concurrent Kleene algebra (POCKA), an algebraic framework to reason about concurrent programs with variables as well as control structures, such as conditionals and loops, that depend on those variables.
We illustrate the use of POCKA through concrete examples.
We prove that POCKA is a sound and complete axiomatisation of a model of partial observations, and show the semantics passes an important check for sequential consistency.
\keywords{Concurrent Kleene algebra, Kleene algebra with tests, observations, axiomatisation, completeness, sequential consistency}
\end{abstract}

\section{Introduction}

Kleene Algebra (KA) was originally proposed as the algebra of regular languages~\cite{salomaa-1966,conway-1971,krob-1990,kozen-1994}, but its well-developed meta-theory facilitates applications in the analysis and verification of sequential programs.
Many extensions of KA were studied in the last decades, notably Kleene Algebra with Tests (KAT)~\cite{kozen-1996}, which enables reasoning about control structures such as $\mathsf{if}$-statements and $\mathsf{while}$-loops.
Orthogonally, Concurrent Kleene Algebra (CKA) was proposed as an extension of KA to analyse concurrent program behaviour~\cite{hoare-moeller-struth-wehrman-2009}.

It is a natural question whether concurrent Kleene algebra can be extended with tests as in KAT\@.
This question was studied by Jipsen~\cite{jipsen-moshier-2016} and later by Kapp\'{e} et al.~\cite{kappe-brunet-rot-silva-wagemaker-zanasi-2019,fossacs2020},
who proposed Concurrent Kleene Algebra with Observations (CKAO). Observations are tests in a concurrent setting, and they are governed by different axioms than tests, hence justifying their name change.
It was illustrated that extending CKA with tests in a naive way results in an algebraic framework that is unusable in program verification.
In a nutshell, the interactions of parallel threads are lost if we identify the conjunction of observations with their sequential composition, as is done in KAT\@.
Instead, an algebra where conjunction and sequential composition are kept distinct is essential to capture concurrent interaction between conditionals in different threads --- this distinguishes tests from observations.

In this paper we demonstrate how this class of techniques can be used for a more fine-grained analysis of concurrent programs.
We focus our development around the issue of \emph{sequential consistency}, i.e., whether programs behave as if memory accesses taking place were interleaved and executed sequentially~\cite{lampie}.
A standard way of testing this property is the so-called \emph{store buffering litmus test}~\cite{litmus}.
Consider the following program with two threads:
\[\begin{array}{ll||ll} 
  \textsf{T0:}& x\leftarrow 1; & \textsf{T1:}&  y\leftarrow 1;\\
              & r_0\leftarrow y;&&r_1\leftarrow x;
\end{array}\]
A sequentially consistent implementation should satisfy the following property: if initially both registers $r_0$ and $r_1$ are set to $0$, after running the program one of them should have value $1$.
Therefore, we can detect failures of sequential consistency by observing behaviour that deviates from this specification.
This test can be encoded algebraically~\cite{kozen-2000} as:
\begin{equation}%
\label{equation:sequential-consistency}
\Big((r_0=0\wedge r_1=0) ; (\textsf{T0} \parallel \textsf{T1}) ; \neg(r_0=1\vee r_1=1)\Big) \equiv 0.
\tag{\textdagger}
\end{equation}
That is, the program that asserts that $r_0$ and $r_1$ are both $0$, executes $\mathsf{T0}$ and $\mathsf{T1}$ in parallel, and then asserts that neither $r_0$ nor $r_1$ is $1$, is equivalent to the program $0$, which has no valid behaviour.
To reason in this fashion, our algebraic framework should include \emph{observations} of the shape $v=n$ as well as \emph{assignments} $v \leftarrow n$ and $v \leftarrow v'$, where $v,v'$ and $n$ range over some fixed sets of variables and values.
To that end, we propose \emph{Partially Observable Concurrent Kleene Algebra} (\emph{POCKA}), an algebraic theory built on top of CKA that allows for an analysis of concurrent programs manipulating memory, such as the simple program above.
POCKA has a natural interpretation in terms of pomset languages over assignments
and memory states, encoded as partial functions, similarly to separation logic~\cite{reynolds}, which describe the behaviour of concurrent programs that can access variables and values (Section~\ref{sec:pocka}).
We prove soundness and completeness with respect to this interpretation (Section~\ref{sec:complete}).

POCKA deviates from KAT and CKAO by using \emph{partial} observations in its semantics.
These are crucial in a concurrent setting, where a single thread may have only a partial view of the memory.
Whilst memory as a whole depends on the combined action of all threads, these partial views may be analysed on a thread-by-thread basis.
This shift from total to partial observations thus allows for a richer compositional semantic model.
Formally, this means that we move from a Boolean algebra of observations, as in CKAO or KAT, to a pseudocomplemented  distributive lattice (PCDL)~\cite{blyth-2005}, as proposed by Jipsen and Moshier~\cite{jipsen-moshier-2016}.

To ensure compositionality, semantics of concurrent programs should capture not only isolated program behaviour, but also all \emph{possible} behaviours of the program when run in parallel with another program.
For example, take the program $P = (x=1); (x=2)$, which asserts that $x$ has value $1$ and then value $2$, and the program $Q = (x \leftarrow 2)$, which assigns the value $2$ to $x$.
In an interpretation that captures isolated program behaviour, the semantics of $P$ would be empty, as $x$ cannot change between the tests.
In contrast, the program $P \parallel Q$ (i.e., $P$ and $Q$ in parallel) \emph{does} have behaviour, because the assignment may be interleaved between the two observations.
Hence, the isolated semantics of $P$ is not sufficient.

Thus, the semantics of a POCKA term accommodates possible interference by an outside context.
As a result, the test (\ref{equation:sequential-consistency}) fails at this stage, meaning this semantics is not sequentially consistent.
This raises the question of how to study the isolated program behaviour.
To this end, we identify a subset of the semantics that captures isolated program behaviour, and show that this fragment coincides with guarded pomsets~\cite{jipsen-moshier-2016} (Section~\ref{section:causal}).
This turns out to fix the defect in sequential consistency we observe earlier, as we show in Section~\ref{section:litmus}.


\section{Preliminaries}

Throughout this section we fix a finite alphabet $\alphabet$.
We recall pomsets~\cite{gischer-1988,grabowski-1981}, a generalisation of words that model concurrent traces.
First, a \emph{labelled poset} over $\alphabet$ is a tuple $\lp{u} = \angl{S_\lp{u}, \leq_\lp{u}, \lambda_\lp{u}}$, where $S_\lp{u}$ is a finite set (the \emph{carrier} of $\lp{u}$), $\leq_\lp{u}$
is a partial order on $S_\lp{u}$ (the \emph{order} of $\lp{u}$), and $\lambda_\lp{u} \colon S_\lp{u} \to \alphabet$ is a function (the \emph{labelling} of $\lp{u}$).
Pomsets are labelled posets up to isomorphism:

\begin{definition}[Poset isomorphism, pomset]%
\label{definition:lp-isomorphism-pomset}
Let $\lp{u}, \lp{v}$ be labelled posets over $\alphabet$.
We say $\lp{u}$ is \emph{isomorphic} to $\lp{v}$, denoted $\lp{u} \cong \lp{v}$, if there exists a bijection $h\colon S_\lp{u} \to S_\lp{v}$ that preserves labels, and preserves and reflects ordering.
More precisely, we require that $\lambda_\lp{v} \circ h = \lambda_\lp{u}$, and $s \leq_\lp{u} s'$ if and only if $h(s) \leq_\lp{v} h(s')$.
A \emph{pomset} over $\alphabet$ is an isomorphism class of labelled posets over $\alphabet$, i.e., the class
$[\lp{v}] = \{ \lp{u} \mid \lp{u} \cong \lp{v} \}$ for some labelled poset $\lp{v}$.
\end{definition}

When two pomsets are in scope, we tacitly assume that they are represented by labelled posets with disjoint carriers.
We write $\Pom(\alphabet)$ for the set of pomsets over $\alphabet$, and $1$ for the empty pomset.
When $\ltr{a} \in \alphabet$, we write $\ltr{a}$ for the pomset represented by the labelled poset whose sole element is labelled by $\ltr{a}$.
Pomsets can be composed in sequence and in parallel:

\begin{definition}[Pomset composition]%
\label{definition:pomset-composition}
Let $U = [\lp{u}]$ and $V = [\lp{v}]$ be pomsets over $\alphabet$.

We write $U \parallel V$ for the \emph{parallel composition} of $U$ and $V$, which is the pomset over $\alphabet$ represented by the labelled poset $\lp{u} \parallel \lp{v}$, where
    $S_{\lp{u} \parallel \lp{v}} = S_\lp{u} \cup S_\lp{v}$,
    ${\leq_{\lp{u} \parallel \lp{v}}} = {\leq_\lp{u}} \cup {\leq_\lp{v}}$
    and for $x\in S_\lp{u}$ we have $\lambda_{\lp{u} \parallel \lp{v}}(x)= \lambda_\lp{u}(x)$
    and for $x\in S_\lp{v}$ we let $\lambda_{\lp{u} \parallel \lp{v}}(x)=\lambda_\lp{v}(x)$.

We write $U \cdot V$ for the \emph{sequential composition} of $U$ and $V$, that is, the pomset represented by the labelled poset $\lp{u} \cdot \lp{v}$, where
    $S_{\lp{u} \cdot \lp{v}} = S_{\lp{u} \parallel \lp{v}}$,
    ${\leq_{\lp{u} \cdot \lp{v}}} = {\leq_\lp{u}} \cup {\leq_\lp{v}} \cup (S_\lp{u} \times S_\lp{v})$ and
    $\lambda_{\lp{u} \cdot \lp{v}} = \lambda_{\lp{u} \parallel \lp{v}}$.
\end{definition}

The pomsets that we use can be built using sequential and parallel composition.

\begin{definition}[Series-parallel pomsets]%
\label{definition:pomset-sp}
The set of \emph{series-parallel pomsets (sp-pomsets)} over $\alphabet$, denoted $\SP(\alphabet)$, is the smallest subset of $\Pom(\alphabet)$ such that $1 \in \SP(\alphabet)$ and $\ltr{a} \in \SP(\alphabet)$ for every $\ltr{a} \in \alphabet$, and is furthermore closed under parallel and sequential composition.
\end{definition}

One way of comparing pomsets is to see whether they have the same events and labels, except that one is ``more sequential'' in the sense that more events are ordered.
This is captured by the notion of \emph{subsumption}~\cite{gischer-1988}, defined as follows.

\begin{definition}[Subsumption]%
\label{definition:subsumption}
Let $U = [\lp{u}]$ and $V = [\lp{v}]$.
We say $U$ \emph{is subsumed by} $V$, written $U \sqsubseteq V$, if there exists a label- and order-preserving bijection $h\colon S_\lp{v} \to S_\lp{u}$.
That is, $h$ is a bijection such that $\lambda_\lp{u} \circ h = \lambda_\lp{v}$ and if $s \leq_\lp{v} s'$, then $h(s) \leq_\lp{u} h(s')$.
\end{definition}
In the rest of this paper we only consider the relation $\sqsubseteq$ restricted to series-parallel pomsets.
We will also need the notion of \emph{pomset contexts}~\cite{fossacs2020}.

\begin{definition}
Let $*$ be a symbol not in $\alphabet$.
The set of \emph{pomset contexts}, denoted $\PC(\alphabet)$, is the smallest subset of $\SP(\alphabet \cup \{*\})$ satisfying
\begin{mathpar}
\inferrule{~}{%
    * \in \PC(\alphabet)
}
\and
\inferrule{%
    X \in \SP(\alphabet \cup \{*\}) \\
    C \in \PC(\alphabet)
}{%
    X \cdot C \in \PC(\alphabet) \\
    C \cdot X \in \PC(\alphabet)
}
\and
\inferrule{%
    X \in \SP(\alphabet \cup \{*\}) \\
    C \in \PC(\alphabet)
}{%
    X \parallel C \in \PC(\alphabet)
}
\end{mathpar}
Alternatively, $\PC(\alphabet)$ consists of the sp-pomsets over $\alphabet \cup \{ * \}$ with exactly one occurrence of $*$.
\end{definition}

One can think of $*$ as a gap where another pomset can be inserted: given $C \in \PC$ and $U \in \Pom$, we can insert $U$ into the gap in $C$ to obtain $C[U]$.
More precisely, we define
\begin{mathpar}
*[U] = U
\and
(C \cdot X)[U] = C[U] \cdot X
\and
(X \cdot C)[U] = X \cdot C[U]
\and
(X \parallel C)[U] = X \parallel C[U]
\end{mathpar}
This insertion is well-defined, and can in fact be extended to pomsets in general~\cite{fossacs2020}.
We extend the notation to a set of pomsets $L \subseteq \Pom$ by $C[L] = \left\{C[U] \mid U\in L\right\}$.

\subparagraph*{Bi-Kleene Algebra (BKA): syntax and semantics}
\emph{Bi-Kleene Algebra}~\cite{laurence-struth-2014} adds a binary operator, denoted $\parallel$, to KA, which satisfies a few basic axioms but does not interact with the other KA operators.
\emph{BKA-terms} over $\alphabet$, denoted $\termsbka[\alphabet]$ (the subscript is omitted if it is clear from the context), also called series-rational expressions~\cite{lodaya-weil-2000}, are generated by the grammar
\[
    e, f::= 0 \pipe 1 \pipe \ltr{a} \in \alphabet \pipe e + f \pipe e \cdot f \pipe e \parallel f \pipe e^*
\]

The semantics of a BKA-term is a \emph{pomset language}, i.e., an element of $2^\SP$.
Formally, the BKA-semantics is a function $\sembka{-}\colon \termsbka \to 2^\SP$ defined inductively, as follows:
\begin{align*}
\sembka{0} &= \emptyset
    & \sembka{1} &= \{ 1 \}&
 \sembka{e + f} &= \sembka{e} \cup \sembka{f}
    & \sembka{e \cdot f} &= \sembka{e} \cdot \sembka{f}    \\
    \sembka{e^*} &= \sembka{e}^*    &
\sembka{\ltr{a}} &= \{ \ltr{a} \}
    & \sembka{e \parallel f} &= \sembka{e} \parallel \sembka{f}
\end{align*}
In this definition we use the pointwise lifting of sequential and parallel composition from pomsets to pomset languages, e.g., $L \cdot K = \{U \cdot V \mid U \in L,\ V \in K\}$.
The Kleene star of a pomset language $L$ is defined as $L^* = \bigcup_{n \in \naturals} L^n$, where $L^0 = \{ 1 \}$ and $L^{n+1} = L^n \cdot L$.

We write $\equivbkaexpl$ or simply $\equivbka$ for the smallest congruence on $\termsbka$
generated by the Kleene algebra axiom together with the additional bi-Kleene algebra axioms, which govern the parallel operator $\parallel$; it is associative, commutative, has a unit and distributes over $+$ (\cref{fig:pocka}).
Soundness and completeness of $\equivbkaexpl$ w.r.t.\ the pomset language semantics was proved in~\cite{laurence-struth-2014}:
\begin{theorem}[Soundness and Completeness BKA]%
\label{theorem:bka-completeness}
Let $e, f \in \termsbka$.
Then $e \equivbka f \Leftrightarrow \sembka{e} = \sembka{f}$.
\end{theorem}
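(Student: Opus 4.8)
The statement splits into two implications; soundness ($\Rightarrow$) is routine and completeness ($\Leftarrow$) carries all the weight.

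For soundness it suffices to check that each axiom generating $\equivbka$ is valid under $\sembka{-}$: the general claim then follows because $\sembka{-}$ is defined compositionally, so validity is preserved by the congruence rules. The idempotent-semiring axioms hold because $(2^{\SP}, \cup, \cdot, \emptyset, \{1\})$ is an idempotent semiring --- associativity and units for $\cdot$ come from the corresponding properties of pomset composition, and distributivity over $\cup$ from the pointwise definition. The star axioms hold because $\sembka{e}^* = \bigcup_{n \in \naturals} \sembka{e}^n$ is precisely the least pomset language $L$ satisfying $\{1\} \cup \sembka{e} \cdot L \subseteq L$, which immediately yields both unfolding inequalities and both fixpoint quasi-equations. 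The bi-Kleene axioms for $\parallel$ follow because parallel composition of pomsets is a commutative monoid with unit $1$, and its pointwise lift to pomset languages distributes over $\cup$.

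For completeness I would follow Laurence and Struth. The central tool is a Kleene theorem for series-parallel rational pomset languages: a language $L \subseteq \SP(\alphabet)$ equals $\sembka{e}$ for some $e \in \termsbka$ if and only if it is recognised by a finite-state device --- a branching/pomset automaton whose accepting runs are constrained to have series-parallel shape. The expression-to-automaton direction is a structural induction using closure of recognisable languages under $\cup$, $\cdot$, $\parallel$ and Kleene star. The automaton-to-expression direction views the automaton's behaviour as the least solution of a finite system of left-linear equations; this system is genuinely linear once every maximal parallel subterm is treated as a single generator, which is legitimate because every $e \in \termsbka$ has bounded parallel width (formally $w(e^*) = w(e)$, $w(e \parallel f) = w(e) + w(f)$, and every $U \in \sembka{e}$ has antichains of size at most $w(e)$). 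One then solves the system by iterated use of the Kleene-algebra fixpoint axioms. Crucially, all of this is carried out at the level of $\equivbka$-provability: the fixpoint axioms make least solutions of such systems unique modulo $\equivbka$, so two expressions with equal denotations can be connected through a common automaton.

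Concretely, given $e, f \in \termsbka$ with $\sembka{e} = \sembka{f}$, I would build automata $\mathcal{A}_e$ and $\mathcal{A}_f$, exhibit a simulation witnessing language equality, and translate it back into a chain of $\equivbka$-steps between $e$ and $f$. The main obstacle is this automaton-to-expression direction together with its ``provable'' refinement: the automaton model must be designed so that state elimination mirrors the BKA axioms step by step, and in particular the interaction of Kleene star with $\parallel$ --- where a starred subexpression can repeatedly spawn parallel sub-pomsets --- has to be tamed by the bounded-width observation and reflected faithfully in the equation system. The remaining ingredients (semiring and parallel-monoid bookkeeping, the closure properties, the standard uniqueness-of-least-fixpoint lemma) are routine once that machinery is in place.
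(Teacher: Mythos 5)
The paper does not prove this theorem; it imports it wholesale from Laurence and Struth (the reference \texttt{laurence-struth-2014}), so the relevant comparison is against that source. Your soundness argument is standard and correct. For completeness, however, you sketch an automaton-theoretic route in the style of Lodaya and Weil's Kleene theorem for branching automata: compile $e$ and $f$ into pomset automata, prove language equality via simulation, and solve a left-linear equation system back into an expression. Laurence and Struth's actual argument is quite different and avoids pomset automata entirely. It exploits the fact that in BKA (with no exchange law) $\parallel$ does not interact with $\cdot$ or $*$ except through the congruence rules, so a term can be stratified by parallel-nesting depth; one replaces maximal $\parallel$-subterms by fresh letters, invokes Kozen's completeness theorem for ordinary Kleene algebra at each stratum, and then unsubstitutes by induction. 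Your bounded-width observation is indeed central to their argument, but it drives that depth/width stratification, not an automaton construction. The step you yourself flag as the main obstacle --- turning recognition by a common pomset automaton into a provable chain of $\equivbka$-steps --- is exactly where the automaton route is weakest, since there is no canonical minimal or deterministic pomset automaton analogous to the word case, so the ``connect through a common automaton'' move does not come for free. The published algebraic reduction buys you a clean appeal to Kozen's theorem and sidesteps this difficulty; your route, if made precise, would yield a Kleene-style correspondence as a by-product but requires substantially more automaton machinery than the theorem itself needs.
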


Given alphabets $\alphabet$ and $\Gamma$, a function $h\colon\alphabet\to\termsbka[\Gamma]$ extends inductively to a map $\hat h \colon \termsbka[\alphabet]\to\termsbka[\Gamma]$ (e.g., $\hat{h}(e + f) = \hat{h}(e) + \hat{h}(f)$) which we refer to as the \emph{homomorphism generated by $h$}.

\subparagraph*{Concurrent Kleene Algebra with Hypotheses (CKAH)}
\emph{Concurrent Kleene algebra with Hypotheses}~\cite{fossacs2020} (see also~\cite{DoumaneKPP19} for the case of KA), allows for a set of additional axioms, called \emph{hypotheses}, to be added to the axioms of BKA\@.
Based on these hypotheses, one can then derive a sound model.
This facilitates a modular completeness proof of POCKA based on the completeness of BKA, as POCKA extends BKA with additional axioms.
\begin{definition}
A \emph{hypothesis} is an inequation $e \leq f$ where $e, f \in \termsbka$.
When $H$ is a set of hypotheses, we write $\equivbka^H$ for the smallest congruence on $\termsbka$ generated by the hypotheses in $H$ as well as the axioms and implications that build the equational theory of BKA\@.
More concretely, whenever $e \leq f \in H$, also $e \leqqbka^H f$.%
\end{definition}

\begin{definition}\label{def:closure}
Let $L \subseteq \Pom$.
We define the $H$-closure of $L$, written $\closure L$, as the smallest language containing $L$ such that for all $e \leq f \in H$ and $C \in \PC$, if $C[\sembka f]\subseteq \closure L$, then $C[\sembka e]\subseteq\closure L$.
We stress here the use of the BKA-semantics for defining the $H$-closure of any language.
Formally, $\closure L$ may be described as the smallest language satisfying:
\begin{mathpar}
    \inferrule{~}{%
        L\subseteq \closure L
    }
    \and
    \inferrule{%
      e\leq f \in H\\
      C\in\PC\\
      C[\sembka f]\subseteq \closure L
    }{%
      C[\sembka e]\subseteq \closure L
    }
\end{mathpar}
\end{definition}

The $H$-closure adds those pomsets that are needed to ensure soundness of the axioms generated by $H$. This yields a sound model for BKA with the set of hypotheses $H$~\cite{fossacs2020}:
\begin{restatable}[Soundness]{lem}{soundness}%
\label{lemma:soundness}
If $e \equivbka^H f$, then $\closure {\sembka{e}} = \closure {\sembka{f}}$.
\end{restatable}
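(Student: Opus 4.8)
The plan is to show that the relation $\mathord\sim$ on $\termsbka$ defined by $e \sim f$ iff $\closure{\sembka{e}} = \closure{\sembka{f}}$ is a congruence that contains every instance of the BKA axioms and implications and that satisfies $e + f \sim f$ whenever $e \le f \in H$. Since $\equivbka^H$ is by definition the least congruence with these properties, this immediately gives $e \equivbka^H f \Rightarrow e \sim f$, which is the lemma. There are thus four obligations: $\mathord\sim$ is an equivalence, $\mathord\sim$ contains the equational theory of BKA, $\mathord\sim$ absorbs the hypotheses, and $\mathord\sim$ is compatible with $+$, $\cdot$, $\parallel$ and ${}^*$, including the two star-induction implications.

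The first three are short. That $\mathord\sim$ is an equivalence is clear. Any BKA axiom $e \equivbka f$ already satisfies $\sembka{e} = \sembka{f}$ by soundness of BKA (\cref{theorem:bka-completeness}), hence $e \sim f$. For a hypothesis $e \le f \in H$, instantiate the closure rule of \cref{def:closure} with the empty context $* \in \PC$: since $*[\sembka{f}] = \sembka{f} \subseteq \closure{\sembka{f}}$, we obtain $\sembka{e} = *[\sembka{e}] \subseteq \closure{\sembka{f}}$; as $\closure{\sembka{f}}$ is $H$-closed and contains $\sembka{e} \cup \sembka{f} = \sembka{e + f}$, minimality of the closure gives $\closure{\sembka{e+f}} \subseteq \closure{\sembka{f}}$, while the reverse inclusion is monotonicity, so $e + f \sim f$.

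The bulk of the work is compatibility with the operations, and it reduces to a \emph{commutation} property of $\closure{\cdot}$: for each BKA operation, closing an argument before applying the operation leaves the closure of the result unchanged, e.g.\ $\closurep{L \cdot K} = \closurep{\closure{L} \cdot K} = \closurep{L \cdot \closure{K}}$, and similarly for $\parallel$, $\cup$ and ${}^*$. Granting this, compatibility of $\mathord\sim$ follows at once: if $\closure{\sembka{e_1}} = \closure{\sembka{e_2}}$ then $\closure{\sembka{e_1 \cdot f}} = \closurep{\closure{\sembka{e_1}} \cdot \sembka{f}} = \closurep{\closure{\sembka{e_2}} \cdot \sembka{f}} = \closure{\sembka{e_2 \cdot f}}$, and the remaining arguments and operations are analogous. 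For the $\cup$ and ${}^*$ cases of the commutation property only monotonicity and idempotence of $\closure{\cdot}$ are needed, with an auxiliary induction showing $(\closure{L})^n \subseteq \closure{L^n}$ for the star. The substantive content is the three one-sided inclusions $\closure{L} \cdot K \subseteq \closurep{L \cdot K}$, $K \cdot \closure{L} \subseteq \closurep{K \cdot L}$ and $\closure{L} \parallel K \subseteq \closurep{L \parallel K}$. To prove the first, fix a pomset $W \in K$ and consider the language $S_W = \{ U \in \SP(\alphabet) \mid U \cdot W \in \closurep{L \cdot K} \}$; the key point is that for any context $C \in \PC$ we again have $C \cdot W \in \PC$, with $C[\sembka{g}] \cdot \{W\} = (C \cdot W)[\sembka{g}]$, so the closure rule for the context $C \cdot W$ transfers verbatim and makes $S_W$ an $H$-closed language. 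As $L \subseteq S_W$, minimality gives $\closure{L} \subseteq S_W$, i.e.\ $\closure{L} \cdot \{W\} \subseteq \closurep{L \cdot K}$; ranging over $W \in K$ gives the inclusion. The other two one-sided inclusions are proved identically, using $W \cdot C \in \PC$ and $W \parallel C \in \PC$ respectively.

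Finally the star-induction implications; take $g + e \cdot f \le f \Rightarrow e^* \cdot g \le f$. From the premise read in $\mathord\sim$, i.e.\ $\closure{\sembka{g + e \cdot f}} = \closure{\sembka{f}}$, we extract $\sembka{g} \subseteq \closure{\sembka{f}}$ and $\sembka{e} \cdot \sembka{f} \subseteq \closure{\sembka{f}}$. A routine induction on $n$ then yields $\sembka{e}^n \cdot \sembka{g} \subseteq \closure{\sembka{f}}$ for all $n$: the base case is $\sembka{g} \subseteq \closure{\sembka{f}}$, and at the step one uses $\sembka{e} \cdot \closure{\sembka{f}} \subseteq \closurep{\sembka{e} \cdot \sembka{f}} \subseteq \closure{\sembka{f}}$, combining the second one-sided inclusion above with monotonicity and idempotence. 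Hence $\sembka{e^* \cdot g} = \bigcup_{n} \sembka{e}^n \cdot \sembka{g} \subseteq \closure{\sembka{f}}$, so $\closure{\sembka{e^* \cdot g}} \subseteq \closure{\sembka{f}}$, which is exactly $e^* \cdot g \leqqbka^H f$ witnessed in $\mathord\sim$; the other star rule is symmetric, using the first one-sided inclusion. The one genuinely delicate step in the whole argument is this one-sided inclusion $\closure{L} \cdot K \subseteq \closurep{L \cdot K}$ together with its variants, where one must carefully track that appending a fixed pomset to a context yields again a context; everything else is monotonicity, idempotence, and short inductions.
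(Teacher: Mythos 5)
The paper does not prove this lemma itself: it is cited directly from the CKAH paper~\cite{fossacs2020}, and the restatement has no corresponding proof in the appendix. Your argument is correct, and it matches the standard structure of the proof in that reference: define the relation $\sim$, show it is a congruence compatible with the BKA operations (including the two star implications), absorbs the hypotheses, and contains the BKA axioms, then invoke minimality of $\equivbka^H$.

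The one non-trivial ingredient is the commutation of closure with the language operators, which the paper records as \cref{lemma:composition-vs-closure} (also imported from~\cite{fossacs2020}) rather than proving. You re-derive the needed one-sided inclusions $\closure{L}\cdot K\subseteq\closurep{L\cdot K}$, $K\cdot\closure{L}\subseteq\closurep{K\cdot L}$, $\closure{L}\parallel K\subseteq\closurep{L\parallel K}$ via the observation that $C\cdot W$, $W\cdot C$, $W\parallel C$ are again pomset contexts and $(C\cdot W)[U]=C[U]\cdot W$, so that $S_W=\{U\mid U\cdot W\in\closurep{L\cdot K}\}$ is $H$-closed. That is exactly the right device, and the remaining steps (hypothesis absorption via the trivial context $*$, the star-induction implications via the auxiliary induction $\sembka{e}^n\cdot\sembka{g}\subseteq\closure{\sembka f}$ using the one-sided inclusions plus monotonicity and idempotence) are all sound. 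One small point worth being explicit about in a full write-up: the premise in the star case, read in $\sim$, is $\closure{\sembka{(g+e\cdot f)+f}}=\closure{\sembka f}$, from which the two containments you use follow by monotonicity; you elide the ``$+f$'' but the extraction still goes through.
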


An axiom often added to BKA is the \emph{exchange law}, and together with BKA it axiomatises Concurrent Kleene Algebra (CKA).
It can be added in the form of a set of hypotheses~\cite{fossacs2020}:
\begin{definition}
We write $\hexch$ for the set
$
    \{ (e \parallel f) \cdot (g \parallel h) \leq (e \cdot g) \parallel (f \cdot h) \pipe e, f, g, h \in \termsbka \}
$.
\end{definition}
These hypotheses encode the interleavings of a program: when $e\cdot g$ runs in parallel with $f\cdot h$, one possible behaviour is that $e$ first runs in parallel with $f$, followed by $g$ in parallel with $h$.

The $\hexch$-closure coincides with the downwards closure w.r.t.\ the subsumption order~\cite{fossacs2020}.

\begin{lemma}%
\label{lemma:exch-closure-vs-subsumption}
Let $L \subseteq \SP$ and $U \in \SP$.
$U\in\closure[\hexch]L$ $\Leftrightarrow$ there exists a $V \in L$ s.t.\ $U \sqsubseteq V$.
\end{lemma}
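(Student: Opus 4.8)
The plan is to prove the two implications separately, the harder one being ``$U\sqsubseteq V\in L$ implies $U\in\closure[\hexch]{L}$'', which rests on Gischer's theorem~\cite{gischer-1988} that subsumption between series-parallel pomsets is exactly the relation generated by the exchange law.

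\emph{From closure to subsumption.} I would show that the sp-pomset downward closure $L'=\{W\in\SP\mid\exists V\in L.\ W\sqsubseteq V\}$ satisfies the two rules that define $\closure[\hexch]{L}$ in \cref{def:closure}; since $\closure[\hexch]{L}$ is the least such language, this yields $\closure[\hexch]{L}\subseteq L'$, i.e.\ the left-to-right implication. The inclusion $L\subseteq L'$ is just reflexivity of $\sqsubseteq$. For the closure rule, fix a hypothesis $(e\parallel f)\cdot(g\parallel h)\leq(e\cdot g)\parallel(f\cdot h)$ in $\hexch$ and $C\in\PC$, and assume $C[\sembka{(e\cdot g)\parallel(f\cdot h)}]\subseteq L'$. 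Every element of $C[\sembka{(e\parallel f)\cdot(g\parallel h)}]$ has the form $C[(A\parallel B)\cdot(P\parallel Q)]$ with $A\in\sembka e$, $B\in\sembka f$, $P\in\sembka g$, $Q\in\sembka h$; then $(A\cdot P)\parallel(B\cdot Q)\in\sembka{(e\cdot g)\parallel(f\cdot h)}$, so by assumption $C[(A\cdot P)\parallel(B\cdot Q)]\sqsubseteq V$ for some $V\in L$. The single-pomset instance of the exchange law gives $(A\parallel B)\cdot(P\parallel Q)\sqsubseteq(A\cdot P)\parallel(B\cdot Q)$, and since $\sqsubseteq$ is compatible with $\cdot$ and $\parallel$~\cite{gischer-1988} — hence, by induction on $C$, with insertion into a pomset context~\cite{fossacs2020} — and transitive, we conclude $C[(A\parallel B)\cdot(P\parallel Q)]\sqsubseteq V$, so this pomset lies in $L'$.

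\emph{From subsumption to closure.} First I record that every $P\in\SP$ is realised by a BKA-term, i.e.\ there is $e_P\in\termsbka$ with $\sembka{e_P}=\{P\}$ (an easy induction on the construction of $\SP$). Next I establish that $\closure[\hexch]{L}$ is closed under a single exchange rewrite inside a context: if $W\in\closure[\hexch]{L}$ and $W=D[(P_1\cdot P_3)\parallel(P_2\cdot P_4)]$ for some $D\in\PC$ and $P_1,\dots,P_4\in\SP$, then $D[(P_1\parallel P_2)\cdot(P_3\parallel P_4)]\in\closure[\hexch]{L}$. Indeed, instantiating $\hexch$ with $e_{P_1},e_{P_2},e_{P_3},e_{P_4}$ gives a hypothesis whose two sides have semantics $\{(P_1\parallel P_2)\cdot(P_3\parallel P_4)\}$ and $\{(P_1\cdot P_3)\parallel(P_2\cdot P_4)\}$; applying the second rule of \cref{def:closure} with the context $D$ and using $D[\{(P_1\cdot P_3)\parallel(P_2\cdot P_4)\}]=\{W\}\subseteq\closure[\hexch]{L}$ yields the claim. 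Finally, given $U\sqsubseteq V\in L$, Gischer's theorem~\cite{gischer-1988} supplies a finite sequence $V=W_0,W_1,\dots,W_n=U$ in which each $W_{i+1}$ arises from $W_i$ by exactly one such exchange rewrite in a context; starting from $W_0=V\in\closure[\hexch]{L}$ and iterating the previous step, we obtain $U=W_n\in\closure[\hexch]{L}$.

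\emph{Main obstacle.} The crux is the appeal to Gischer's theorem together with the translation of its term-level exchange rewriting into the $\PC$-based formulation of \cref{def:closure}: one needs that a subsumption $U\sqsubseteq V$ factors through atomic applications of the exchange law, and that the position of each such application is captured by a genuine pomset context. The remaining ingredients — building $e_P$, compatibility of $\sqsubseteq$ with composition and with context insertion, and transitivity/reflexivity of $\sqsubseteq$ — are routine.
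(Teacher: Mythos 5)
The paper does not prove \cref{lemma:exch-closure-vs-subsumption}; it imports it from~\cite{fossacs2020}. However, it does prove the directly analogous statement for $\hcontr$ and the contraction order $\preceq$, namely \cref{lemma:closure-vs-contraction}, and your argument follows essentially the same blueprint. The forward direction in both cases is a minimality argument: show that the downward closure of $L$ under the relevant order satisfies the two defining rules of the $H$-closure, then invoke leastness of $\closure[\hexch]{L}$. For the backward direction, the paper's $\hcontr$-proof inducts directly on the derivation of $\preceq$ using its characterisation as the smallest precongruence generated by $\alpha \preceq \alpha\cdot\alpha$ (its \cref{lemma:preceq-smallest-preorder}), handling compatibility with $\cdot$ and $\parallel$ by nesting contexts (passing from $C$ to $C[{*}\cdot V_1]$ and then $C[U_0\cdot {*}]$); you instead invoke Gischer's result to factor $U \sqsubseteq V$ into a finite chain of atomic exchange rewrites, absorbing each step into $\closure[\hexch]{L}$ by instantiating the hypothesis at the pomset-denoting terms $e_{P_i}$. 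These are two interchangeable presentations of the ``smallest precongruence'' ingredient, and both are correct; the induction-on-derivation route avoids committing to a normal-form rewrite sequence, while yours makes the role of the exchange redex more concrete. The point you flag as the main obstacle --- that each atomic Gischer rewrite sits inside a genuine $\PC$-context --- does go through: the rewrite is applied inside a $+$- and $*$-free linear term context of $e_V$, and the BKA-semantics of such a context is a single sp-pomset with one occurrence of the hole, i.e.\ an element of $\PC$.
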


\begin{definition}%
\label{definition:reduction}
  A map $c\colon \termsbka \to \termsbka$ is a \emph{syntactic closure} for $H$ when for all $e \in \termsbka$ it holds that $e \equivbka^H c(e)$ and $\closure[H]{\sembka e}=\sembka {c(e)}$.
\end{definition}
Syntactic closures are used in modular constructions of completeness proofs: their existence implies a completeness result for $H$, by reducing it to completeness of \BKA, i.e. \Cref{theorem:bka-completeness}.

\section{Partially Observable Concurrent Kleene Algebra}%
\label{sec:pocka}

In this section we define \emph{partially observable concurrent Kleene algebra} (\emph{POCKA}).
The syntax of POCKA is given by BKA terms over an alphabet tailor-made to reason about programs that can access variables and values.
Specifically, this alphabet holds \emph{assignments} of the form $(v\leftarrow n)$ and $(v\leftarrow v')$, and \emph{observations} of the form $(v=n)$.
We say $(v\leftarrow n)$ assigns the value $n$ to variable $v$, $(v\leftarrow v')$ copies the value of variable $v'$ to $v$, and $(v=n)$ asserts that $v$ must have value $n$.
Formally, we define the alphabets
\begin{mathpar}
\Act=\{(v\leftarrow n), (v\leftarrow v') \pipe v,v'\in \var, n\in \val\}
\and
\Obs=\{(v=n) \pipe v\in \var, n\in \val\}
\end{mathpar}
where $\var$ and $\val$ are finite sets of variables and values, respectively (see \cref{rmk:finiteness} for a discussion on the finiteness assumption).
An example POCKA term would be $(x=1)\cdot(x\leftarrow 2)\cdot (x=2)$, which asserts that $x$ must start with value $1$, assigns the value $2$ to $x$, and then asserts that $x$ holds the value $2$.

We will later give semantics to POCKA terms using program states, which are partial functions from $\var$ to $\val$: $\State=\{\alpha \pipe \alpha\colon \var\rightharpoonup \val\}$.
The domain of a state $\alpha$ is denoted $\dom(\alpha)$. $\State$ carries a partial order $\leq$, where $\alpha \leq \beta$ iff $\dom(\beta)\subseteq \dom(\alpha)$ and for all $x \in \dom(\beta)$ we have $\alpha(x) = \beta(x)$, which we will use to generate the algebra of observations.

\subsection{Observation algebra: axiomatisation and semantics}
To obtain POCKA, we define the \thealgebra (\acronym) that will be added to CKA as the algebraic structure of observations.
This is similar to how a Boolean algebra enriches Kleene algebra into Kleene algebra with tests.
In contrast with KAT, the observation algebra of POCKA is a pseudocomplemented distributive lattice, which is a generalisation of Boolean algebra in which the law of excluded middle does not necessarily hold.

\begin{table*}[t]\small\centering\noindent%
  \fbox{\noindent%
    \begin{minipage}[t]{.46\textwidth}
      \noindent%
      \textbf{Kleene Algebra Axioms}
      \vspace{-.8em}
      \begin{align*}
        e + (f + g) & \equivbka  (f + g) + h \\
        e + f  & \equivbka  f + e   \\
        e + 0                             & \equivbka  e\\
        e + e                             & \equivbka  e\\
        e \cdot (f \cdot g)               & \equivbka  (e \cdot f) \cdot g     \\
        e \cdot 1                         & \equivbka  e \equivbka  1 \cdot e
        \\
        e \cdot 0 & \equivbka  0  \equivbka  0 \cdot e\\
        e \cdot (f + g) & \equivbka  e \cdot f + e \cdot h \\
        (e + f) \cdot g & \equivbka  e \cdot g + f \cdot g\\
        e^* &\equiv 1 + ee^* \\
        e + f \cdot g \leqqbka f & \Rightarrow  e \cdot g^* \leqqbka f\\
        e^* &\equiv 1 + e^*e \\
        e + f \cdot g \leqqbka g  &\Rightarrow  f^* \cdot e \leqqbka g
      \end{align*}
      \vspace{-1.5em}
      \hrule
      \vspace{.5em}
      \noindent%
      \textbf{Additional Bi-Kleene Algebra Axioms}
      \vspace{-.8em}
      \begin{align*}
        e \parallel 1 & \equivbka  e  \\
        e \parallel (f \parallel g) & \equivbka   (e \parallel f) \parallel g \\
        e \parallel 0 & \equivbka  0 \\
        e \parallel (f + g) & \equivbka  e \parallel f + e \parallel g \\
        e \parallel f & \equivbka  f \parallel e
      \end{align*}
      \vspace{-1.5em}
      \hrule
      \vspace{.5em}
      \noindent%
      \textbf{Exchange law}
      \vspace{-.8em}
      \begin{equation*}
        (e \parallel f) \cdot (g \parallel h) \leqqpocka (e \cdot g) \parallel (f \cdot h)
      \end{equation*}
    \end{minipage}\hspace{.01\linewidth}\vrule\hspace{.01\linewidth}%
    \begin{minipage}[t]{.49\linewidth}
      \noindent%
      \textbf{Bounded Distributive Lattice Axioms}
      \vspace{-.8em}
      \begin{align*}
        p \vee \bot  &\equivpl  p \equivpl
                       p \wedge \top \\
        p \vee q & \equivpl  q \vee p \\
        p \wedge q & \equivpl  q \wedge p \\
        p \wedge (q \wedge r) & \equivpl  (p \wedge q) \wedge r \\
        p \vee (q \vee r) & \equivpl  (p \vee q) \vee r \\
        p \vee (p \wedge q) & \equivpl  p  \equivpl
                              p \wedge (p \vee q) \\
        p \vee (q \wedge r) & \equivpl  (p \vee q) \wedge (p \vee r) \\
        p \wedge (q \vee r) & \equivpl  (p \wedge q) \vee (p \wedge r)
      \end{align*}
      \vspace{-1.5em}
      \hrule
      \vspace{.5em}
      \noindent%
      \textbf{Pseudocomplement}
      \vspace{-.8em}
      \begin{equation*}
        p\leqqpl \overline{q} \Leftrightarrow p\wedge q\equivpl \bot
      \end{equation*}
      \vspace{-1.5em}
      \hrule
      \vspace{.5em}
      \noindent%
      \textbf{Observation Axioms}
      \vspace{-.8em}
      \begin{align*}
        v = n &\wedge v = m \equivpl  \bot \tag{$n\neq m$} \\ 
        \overline{v=n} & \leqqpl  \bigvee\limits_{n\neq m}v=m \\
        \overline{\bigwedge_{i} v_i=n_i} &\leqqpl \bigvee_{i} \overline{v_i=n_i} \tag{$\forall i\neq j. v_i \neq v_j$} 
      \end{align*}
      \vspace{-1em}
      \hrule
      \vspace{.5em}
      \noindent%
      \textbf{Interface Axioms}
      \vspace{-.8em}
      \begin{align*}
        p \wedge q & \leqqpocka  p \cdot q \\
        p \vee q & \equivpocka  p + q \\
        0 & \equivpocka  \bot \\
        \top \cdot p & \leqqpocka   p &p \cdot \top &\leqqpocka  p\tag{$p\in\termspl$}  \\
        \top \cdot a & \leqqpocka   a &a \cdot \top &\leqqpocka a\tag{$a\in\Act$}
      \end{align*}
    \end{minipage}
  }
  \caption{%
    Axioms of POCKA, built over an alphabet of actions $\Act$ and observations $\Obs$.
    The left column contains the axioms of Concurrent Kleene Algebra.
    The right column axiomatises the partial observations: they form a pseudocomplemented distributive lattice, subject to constraints on the interface axioms that connect the lattice operators to the Kleene algebra ones.
    The last group of axioms applies to the observation alphabet $\Obs$. We write $ e\leqqpl f $ as a shorthand for $e+f \equivpl f$.
  }%
  \label{fig:pocka}
\end{table*}
\begin{definition}[Pseudocomplemented Distributive Lattice]
A \emph{pseudocomplemented distributive lattice} (PCDL) is a tuple $(A, \land, \lor, \pnegop, \top, \bot)$ such that $(A, \land, \lor, \top, \bot)$ is a bounded distributive lattice and $\pnegop\colon A \to A$ is such that for $p, q \in A$ we have $p \land q = \bot$ iff $p \leq \overline{q}$.
\end{definition}

For a poset $(X, \leq)$ and a set $S \subseteq X$, define the \emph{downwards-closure} of $S$ by $S_{\leq} ::= \{x \pipe \exists y\in S \text{ s.t. } x\leq y\}$ and $P_{\leq}(X) ::= \{ Y \subseteq X \mid Y = Y_{\leq} \}$.
It is well-known that $P_{\leq}(X)$ carries the structure of a bounded distributive lattice, with intersection as meet, union as join, $X$ as top and $\emptyset$ as bottom.
Further, if $(X, \leq)$ is finite, the lattice is itself finite and thus carries a (necessarily unique) pseudocomplement defined by $\overline{Y} ::= \bigcup \{ Z \in P_{\leq}(X) \mid Y \cap Z = \emptyset\}$.
This simply reifies that the pseudocomplement of an element is the largest element incompatible with it, which is guaranteed to exist in any complete lattice with bottom.

\begin{definition}[Observation Algebra]
The \emph{Observation Algebra} is the PCDL $\mathrm{\acronym} ::= (P_{\leq}(\State), \cap, \cup, \pnegop, \State, \emptyset)$ generated by $(\State, \leq)$.
\end{definition}

Taking $\Obs$ as our set of propositions, we generate a term language $\termspl$ over the signature of PCDLs as follows:
\[ p, q ::= \bot \pipe \top \pipe o \in \Obs \pipe p \vee q \pipe p \wedge q \pipe \overline{p}. \]

\noindent This language is interpreted in $\acronym$ by the homomorphic extension of the assignment
 \[ \sempl{v=n} ::= {\{ \{ v \mapsto n \} \}}_{\leq} = \{ \alpha \in \State \mid \alpha(v) = n \}. \]

Intuitively, the behaviour of an observation $p$ consists of all partial functions that agree with $p$. This is captured algebraically below, in \cref{axiomaticsum}.
For instance, $\sempl{v=n}$ is the set containing all partial functions assigning $n$ to $v$, and this is downwards closed because any partial function with a larger domain that also assigns $n$ to $v$ is included in this set.

If threads have only partial information about the machine state, an observation should
be satisfied only if there is \emph{positive evidence} for it. Hence, $\overline{v = n}$ should be satisfied only
when $v$ has a value that is different from $n$. To see why a Boolean algebra does not capture
the intended meaning of $\overline{v = n}$, consider using a BA over sets of partial functions, with
negation as set-complement. This entails that $\sempl{\overline{v=n}}$ will
include all partial functions where $v$ either gets a different value than $n$ or no value at all. In
the latter case, were we to obtain more information about the machine state we may discover
that the actual value of $v$ is in fact $n$, and it was therefore incorrect to assert $\overline{v = n}$. Our
pseudocomplement provides a notion of negation that correctly excludes states for which this
error could manifest, and this motivates our use of a PCDL rather than a Boolean algebra.
This can be calculated directly:

\begin{example}
Consider the semantics of $\overline{v=n}$:
\begin{align*}
\sempl{\overline{v=n}} &= \bigcup \{ Z \in P_{\leq}(\State) \mid \sempl{v=n} \cap Z = \emptyset\} \\
& = \{\alpha \pipe \alpha\in Z \text{ and } \{\beta \pipe \beta(v)=n\}\cap Z =\emptyset \} \\
& = \{\alpha \pipe \alpha(v)=m \text{ and } m\neq n\}
\end{align*}
In the last step we use that $Z$ is downwards closed: if $\alpha(v)$ were undefined, then the partial function $\alpha'$ which is the same as $\alpha$ except that $\alpha'(v)=n$ would also occur in $Z$, making the intersection with $\sempl{v=n}$ non-empty.
Thus $\alpha \in \sempl{\overline{v=n}}$ only if $\alpha(v)$ is defined, and evaluates to a value distinct from $n$. This witnesses the failure of the law of excluded middle.
\end{example}

\begin{definition}[Axiomatisation]
$\equivplexpl$, or simply $\equivpl$, is the smallest congruence on $\termspl$ generated by the distributive lattice, pseudocomplement and observation axioms in \cref{fig:pocka}.
\end{definition}

This axiomatisation supplements a standard axiomatisation of PCDLs with domain-specific axioms to capture the propositional theory of observations.
For instance, the axiom $(v = n \wedge v = m \equivpl \bot)$ states that a variable cannot have two different values at the
same time.
The axiom $(\overline{v=n}\leqqpl \bigvee_{n\neq m}v=m)$ tells us that the pseudocomplement of a variable having a value $n$ is the assertion that the variable holds \emph{some} distinct value $m$ (the axiom is an inequality, but the other way around also holds).
The last domain-specific axiom enforces specific instances of a De Morgan law that does not hold generally hold in arbitrary PCDLs\@.

Soundness of this axiomatisation follows straightforwardly from the fact that $\mathrm{\acronym}$ is a PCDL, together with basic consequences of the definition of the poset $(\State, \leq)$.

\begin{restatable}[Soundness OA]{lem}{soundnesspcdl}%
  \label{soundnesspcdl}
  For all $p,q\in\termspl$, if $p\equivpl q$ then $\sempl{p}=\sempl{q}$.
\end{restatable}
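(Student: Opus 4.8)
The plan is a routine induction on the derivation of $p \equivpl q$: since $\equivpl$ is the least congruence generated by the distributive lattice, pseudocomplement and observation axioms of \cref{fig:pocka}, it suffices to check that the interpretation $\sempl{-}$ validates each of these axioms in $\acronym$. The congruence rules need no separate treatment: $\sempl{-}$ is \emph{defined} as the homomorphic extension of $v=n \mapsto \{\alpha \in \State \mid \alpha(v)=n\}$, so it commutes with $\vee$, $\wedge$ and $\overline{-}$ by construction, which immediately handles reflexivity, symmetry, transitivity and compatibility with the operations. Throughout, recall that $e \leqqpl f$ abbreviates $e \vee f \equivpl f$, which under $\sempl{-}$ unfolds to $\sempl{e} \subseteq \sempl{f}$, the lattice order of $\acronym$.

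For the bounded distributive lattice axioms and the pseudocomplement law $p \leqqpl \overline{q} \Leftrightarrow p \wedge q \equivpl \bot$, soundness is immediate from the fact — established when $\acronym$ was introduced, using finiteness of $\var$ and $\val$, hence of $\State$ — that $\acronym = P_{\leq}(\State)$ is a PCDL: each such axiom is an instance of a law valid in every PCDL, and $\sempl{-}$ sends both sides to the corresponding elements of $\acronym$. It then remains to verify the three observation axioms by unfolding the concrete semantics. The axiom $v=n \wedge v=m \equivpl \bot$ (for $n \neq m$) holds because $\sempl{v=n} \cap \sempl{v=m} = \{\alpha \mid \alpha(v)=n=m\} = \emptyset$. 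For $\overline{v=n} \leqqpl \bigvee_{n\neq m} v=m$, the computation in the Example above already gives $\sempl{\overline{v=n}} = \{\alpha \mid \alpha(v) \text{ is defined and } \alpha(v) \neq n\} = \bigcup_{m\neq n}\sempl{v=m}$, so equality in fact holds and the required inclusion follows a fortiori.

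The remaining axiom, $\overline{\bigwedge_i v_i=n_i} \leqqpl \bigvee_i \overline{v_i=n_i}$ under the side condition that the $v_i$ are pairwise distinct, is the main obstacle, as it is the only point where the downward-closure structure of $\acronym$ and the extension order on $\State$ genuinely interact. Unfolding the pseudocomplement, $\beta \in \sempl{\overline{\bigwedge_i v_i=n_i}}$ iff $\{\beta\}_{\leq}$ is disjoint from $\{\alpha \mid \forall i.\ \alpha(v_i)=n_i\}$, i.e.\ no $\gamma \leq \beta$ has $\gamma(v_i)=n_i$ for every $i$. If $\beta(v_i)$ is defined and distinct from $n_i$ for some $i$, then no such $\gamma$ exists; conversely, if for every $i$ the value $\beta(v_i)$ is either undefined or equal to $n_i$, then pairwise distinctness of the $v_i$ lets us extend $\beta$ by $n_i$ on the undefined coordinates without clash, producing such a $\gamma$. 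Hence $\beta \in \sempl{\overline{\bigwedge_i v_i=n_i}}$ iff $\beta(v_i)$ is defined and $\neq n_i$ for some $i$, which by the previous computation is exactly $\bigcup_i \sempl{\overline{v_i=n_i}}$ — again an equality, so the inclusion holds. Without the distinctness hypothesis the extension step would fail, which is precisely why the axiom carries that side condition. Assembling these checks and running the induction completes the proof.
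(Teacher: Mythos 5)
Your proof is correct and follows essentially the same route as the paper: appeal to $\acronym$ being a PCDL for the lattice and pseudocomplement axioms, then verify the three observation axioms by unfolding the concrete semantics, with the distinctness side-condition used to extend a partial function consistently on the undefined coordinates. The only stylistic difference is that you first isolate the clean characterisation $\beta \in \overline{Y} \iff \{\beta\}_{\leq} \cap Y = \emptyset$ (which the paper uses only implicitly via an arbitrary disjoint down-set $B$), and you prove the two negation identities as equalities rather than just the required inclusions.
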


Let $\pi_{\alpha} ::= \bigwedge_{\alpha(v)=n}v=n$.
Note that if $\alpha$ is the empty function, then $\pi_{\alpha}=\bigwedge \emptyset = \top$.
\begin{restatable}{lem}{lemmabasicfacts}\label{lemma:basicfacts}
For all $\alpha,\beta\in \State$: $\alpha \in \sempl{\pi_\beta}$ iff $\alpha \leq \beta$ iff $\pi_\alpha \leqqpl \pi_{\beta}$.
\end{restatable}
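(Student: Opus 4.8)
The plan is to route both biconditionals through the middle condition $\alpha \leq \beta$: I would prove $\alpha \in \sempl{\pi_\beta} \iff \alpha \leq \beta$ purely by unfolding the semantics, and $\alpha \leq \beta \iff \pi_\alpha \leqqpl \pi_\beta$ by combining a short syntactic argument with soundness of the observation algebra.

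First I would unfold the semantics. Since $\var$ is finite, $\pi_\beta = \bigwedge_{\beta(v)=n} (v=n)$ is a genuine finite $\termspl$-term, well-defined up to $\equivpl$ by commutativity and associativity of $\wedge$, with the empty conjunction being $\top$ when $\dom(\beta) = \emptyset$. As $\sempl{-}$ is a homomorphism and $\sempl{v=n} = \{\gamma \in \State \mid \gamma(v) = n\}$, we get $\sempl{\pi_\beta} = \{\gamma \in \State \mid \forall v \in \dom(\beta).\ \gamma(v) = \beta(v)\}$, reading the empty intersection as $\State$. Now the equation $\gamma(v) = \beta(v)$ for $v \in \dom(\beta)$ already forces $v \in \dom(\gamma)$, so this set is exactly $\{\gamma \mid \dom(\beta) \subseteq \dom(\gamma)$ and $\gamma, \beta$ agree on $\dom(\beta)\}$, which by the definition of $\leq$ on $\State$ is $\{\gamma \mid \gamma \leq \beta\}$. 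Taking $\gamma := \alpha$ yields $\alpha \in \sempl{\pi_\beta} \iff \alpha \leq \beta$; in particular, by reflexivity, $\alpha \in \sempl{\pi_\alpha}$.

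Next, for $\alpha \leq \beta \Rightarrow \pi_\alpha \leqqpl \pi_\beta$: if $\alpha \leq \beta$, then every conjunct $v = \beta(v)$ of $\pi_\beta$ (where $v \in \dom(\beta) \subseteq \dom(\alpha)$ and $\alpha(v) = \beta(v)$) also occurs among the conjuncts of $\pi_\alpha$. Regrouping via commutativity and associativity, $\pi_\alpha \equivpl \pi_\beta \wedge r$, where $r$ is the conjunction of the conjuncts $v = \alpha(v)$ for $v \in \dom(\alpha) \setminus \dom(\beta)$ (with $r := \top$ if that set is empty). Unfolding $\leqqpl$, it suffices to show $\pi_\alpha \vee \pi_\beta \equivpl \pi_\beta$, and indeed the absorption axiom $p \vee (p \wedge q) \equivpl p$, applied with $p := \pi_\beta$ and $q := r$, gives $\pi_\alpha \vee \pi_\beta \equivpl \pi_\beta \vee (\pi_\beta \wedge r) \equivpl \pi_\beta$. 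For the converse $\pi_\alpha \leqqpl \pi_\beta \Rightarrow \alpha \leq \beta$: unfolding $\leqqpl$ gives $\pi_\alpha \vee \pi_\beta \equivpl \pi_\beta$, so by soundness of the observation algebra (\Cref{soundnesspcdl}) and homomorphy of $\sempl{-}$ we get $\sempl{\pi_\alpha} \cup \sempl{\pi_\beta} = \sempl{\pi_\beta}$, hence $\sempl{\pi_\alpha} \subseteq \sempl{\pi_\beta}$. Since $\alpha \in \sempl{\pi_\alpha}$ by the first step, $\alpha \in \sempl{\pi_\beta}$, and the first step again gives $\alpha \leq \beta$.

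I do not expect a genuine obstacle: the argument is entirely bookkeeping, and the only points needing a little care are the handling of empty conjunctions (matching them with $\top$) and phrasing the ``common conjuncts'' step of the third paragraph through the finitary lattice axioms (commutativity, associativity, absorption) rather than an informal appeal to ``a larger meet is smaller''.
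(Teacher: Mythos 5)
Your proof is correct and essentially matches the paper's, which also routes through the middle condition: one direction by unfolding semantics, $\alpha \leq \beta \Rightarrow \pi_\alpha \leqqpl \pi_\beta$ by lattice reasoning on shared conjuncts, and the converse via soundness plus $\alpha \in \sempl{\pi_\alpha}$. The only cosmetic difference is that the paper establishes the syntactic inequality via the meet form $\pi_\alpha \wedge \pi_\beta \equivpl \pi_\alpha$ (idempotence) while you work with the join form $\pi_\alpha \vee \pi_\beta \equivpl \pi_\beta$ (absorption) directly; these are interchangeable in any lattice, and your version has the minor advantage of matching the paper's official definition of $\leqqpl$ as $e + f \equivpl f$ without an implicit conversion step.
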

In the following sections, we will silently assume that $\State\subseteq\termspl$.
This is possible because $\pi_{-}$ provides us with a sound way of injecting $\State$ inside $\termspl$.
In order to prove completeness for $\sempl{-}$ w.r.t. $\equivpl$, we need an intermediary result, which allows us to syntactically rewrite any \acronym-expression in terms of elements of $\State$.
\begin{restatable}{lem}{axiomaticsum}\label{axiomaticsum}
For all $p\in\termspl$, we have $p \equivpl \bigvee \{ \alpha \in \State \pipe \pi_{\alpha}\leqqpl p\}$.
\end{restatable}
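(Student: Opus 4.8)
The plan is to reduce the claim to a single inequality and then induct. Writing $D(p)$ for $\bigvee\{\alpha\in\State \mid \pi_\alpha\leqqpl p\}$, one inclusion is immediate: every disjunct $\pi_\alpha$ occurring in $D(p)$ satisfies $\pi_\alpha\leqqpl p$ by the condition defining the index set, and a finite join of elements each $\leqqpl p$ is itself $\leqqpl p$, so $D(p)\leqqpl p$. It therefore suffices to prove $p\leqqpl D(p)$ by structural induction on $p\in\termspl$. Since $\var$ and $\val$ are finite, $\State$ is finite and all joins and meets below are finitary; and in the inductive step we may freely use $p'\equivpl D(p')$ for proper subterms $p'$, by combining the induction hypothesis with $D(p')\leqqpl p'$.

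The base cases are quick. For an observation $o=(v=n)$ we have $o\equivpl\pi_{\{v\mapsto n\}}$, so $o$ occurs among the disjuncts defining $D(o)$; similarly $\top\equivpl\pi_\emptyset$ is a disjunct of $D(\top)$. For $p=\bot$ the index set is empty: $\pi_\alpha\leqqpl\bot$ would force $\pi_\alpha\equivpl\bot$, which is impossible since $\alpha\in\sempl{\pi_\alpha}$ by \cref{lemma:basicfacts} and soundness (\cref{soundnesspcdl}); hence $D(\bot)$ is the empty join $\bot$. The cases $p\vee q$ and $p\wedge q$ are routine given the induction hypotheses. For $\vee$, every disjunct of $D(p)\vee D(q)\equivpl p\vee q$ is a $\pi_\alpha$ below $p$ or below $q$, hence below $p\vee q$, hence a disjunct of $D(p\vee q)$. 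For $\wedge$, distributivity rewrites $p\wedge q\equivpl D(p)\wedge D(q)$ as a join of terms $\pi_\alpha\wedge\pi_\beta$ with $\pi_\alpha\leqqpl p$ and $\pi_\beta\leqqpl q$; using commutativity, associativity and idempotence of $\wedge$ together with the axiom $(v=n)\wedge(v=m)\equivpl\bot$, each $\pi_\alpha\wedge\pi_\beta$ equals $\bot$ when $\alpha,\beta$ disagree on a common variable, and otherwise equals $\pi_{\alpha\cup\beta}$, where $\alpha\cup\beta\in\State$ satisfies $\alpha\cup\beta\le\alpha,\beta$ and hence $\pi_{\alpha\cup\beta}\leqqpl p\wedge q$ by \cref{lemma:basicfacts}; so every surviving summand is a disjunct of $D(p\wedge q)$.

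The case $p=\overline q$ is the crux. By the induction hypothesis, write $q\equivpl D(q)=\bigvee_{\alpha\in F}\pi_\alpha$ with $F=\{\alpha\in\State\mid\pi_\alpha\leqqpl q\}$; the subcases $q\equivpl\bot$ and $q\equivpl\top$ are immediate, so assume $F$ is nonempty and does not contain the empty state. First apply the De Morgan law $\overline{a\vee b}\equivpl\overline a\wedge\overline b$ — valid in every pseudocomplemented distributive lattice and derivable from the axioms of \cref{fig:pocka} (its dual is exactly the one that fails) — to get $\overline q\equivpl\bigwedge_{\alpha\in F}\overline{\pi_\alpha}$. Next, because the variables in $\dom\alpha$ are pairwise distinct, the domain-specific De Morgan axiom together with antitonicity of the pseudocomplement gives $\overline{\pi_\alpha}\equivpl\bigvee_{v\in\dom\alpha}\overline{v=\alpha(v)}$, and the observation axiom $\overline{v=n}\equivpl\bigvee_{m\neq n}(v=m)$ (an equivalence, its reverse inequality being trivial) rewrites this as an explicit join of $\pi$'s. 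So $\overline q$ is a finite meet of finite joins of $\pi$'s; distributing, and then collapsing the resulting meets of $\pi$'s exactly as in the $\wedge$-case, yields $\overline q\equivpl\bigvee_{\gamma\in G}\pi_\gamma$ for some $G\subseteq\State$. Each such $\pi_\gamma$ arises as a meet $\bigwedge_{\alpha\in F}\pi_{\beta_\alpha}$ of disjuncts $\pi_{\beta_\alpha}$ of the $\overline{\pi_\alpha}$, so $\pi_\gamma\leqqpl\overline{\pi_\alpha}$ for every $\alpha\in F$ and hence $\pi_\gamma\leqqpl\bigwedge_{\alpha\in F}\overline{\pi_\alpha}\equivpl\overline q$; thus $\pi_\gamma$ is a disjunct of $D(\overline q)$, and $\overline q\leqqpl D(\overline q)$ follows.

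The hard part is managing this negation case carefully. Beyond verifying that the auxiliary PCDL facts used — antitonicity of the pseudocomplement, $\overline\bot\equivpl\top$, $\overline\top\equivpl\bot$, and the De Morgan law above — all follow from the stated axioms, one must handle the degenerate situations arising when pushing $\overline{\cdot}$ inward: $q\equivpl\bot$ (so $\overline q\equivpl\top$ and $D(\overline q)=D(\top)$), $\pi_\emptyset\equivpl\top$ occurring among the $\pi_\alpha$ (so $\overline q\equivpl\bot$), $|\val|=1$ (so some $\overline{v=n}$ collapses to the empty join $\bot$), and empty joins and meets in general — ensuring that in every case $\overline q$ still rewrites to a join of $\pi$'s each of which lies below $\overline q$, which is all the argument needs.
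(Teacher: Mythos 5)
Your proof is correct and follows essentially the same route as the paper's: both directions are split off with the trivial inequality dispatched first, and the remaining inequality is proved by structural induction with identical handling of $\bot,\top,(v=n)$, the $\vee$ case, the $\wedge$ case via distributivity and compatibility/collapse into $\pi_{\alpha\cup\beta}$ or $\bot$, and the $\overline{\cdot}$ case via De Morgan, the two observation axioms, distribution, and a final argument that each surviving meet $\pi_\gamma$ lies below $\overline q$. The only cosmetic difference is in that last step: you observe $\pi_\gamma \leqqpl \overline{\pi_\alpha}$ for each $\alpha$ directly and conclude via $\overline q \equivpl \bigwedge_\alpha \overline{\pi_\alpha}$, whereas the paper shows $\pi_\gamma \wedge \pi_\alpha \equivpl \bot$ for all $\alpha$ and then invokes the pseudocomplement characterisation together with the induction hypothesis $q \equivpl \bigvee_\alpha \pi_\alpha$ — the two arguments are interchangeable.
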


With this result, we can then prove completeness of $\sempl{-}$ w.r.t.\ $\equivpl$ on terms from $\termspl$. In short,
from $\sempl{p} = \sempl{q}$, $\pi_{\alpha} \leqqpl p$ iff $\pi_{\alpha} \leqqpl q$ can be established, from which $p \equivpl q$ follows.

\begin{restatable}[Completeness OA]{thrm}{soundnesscompletenesspcdl}%
  \label{soundnesscompletenesspcdl}
  For all $p,q\in\termspl$, we have $p\equivpl q$ if and only if $\sempl{p}=\sempl{q}$.
\end{restatable}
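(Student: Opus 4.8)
The plan is to obtain the nontrivial ($\Leftarrow$) direction as an easy consequence of \Cref{axiomaticsum}; the ($\Rightarrow$) direction is precisely \Cref{soundnesspcdl}. For $p \in \termspl$ write $S_p ::= \{\alpha \in \State \mid \pi_\alpha \leqqpl p\}$. Since $\var$ and $\val$ are finite, $\State$ is finite, so (identifying each $\alpha$ with $\pi_\alpha$, as the paper does) $\bigvee S_p$ is a well-formed $\termspl$-term and \Cref{axiomaticsum} reads $p \equivpl \bigvee S_p$. Hence it suffices to prove that $S_p = S_q$ whenever $\sempl p = \sempl q$: from this, $p \equivpl \bigvee S_p \equivpl \bigvee S_q \equivpl q$, where the outer steps are \Cref{axiomaticsum} and the middle one is reflexivity, $\bigvee S_p$ and $\bigvee S_q$ being joins over the same set.

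The core of the argument is the equivalence
\[
  \pi_\alpha \leqqpl p \iff \alpha \in \sempl p
  \qquad (\alpha \in \State,\ p \in \termspl),
\]
which identifies $S_p$ with $\sempl p$ and therefore yields $S_p = S_q$ from $\sempl p = \sempl q$ at once. For the left-to-right implication: $\pi_\alpha \leqqpl p$ unfolds to $\pi_\alpha \vee p \equivpl p$, so \Cref{soundnesspcdl} gives $\sempl{\pi_\alpha} \cup \sempl p = \sempl p$, i.e.\ $\sempl{\pi_\alpha} \subseteq \sempl p$; and $\alpha \in \sempl{\pi_\alpha}$ by \Cref{lemma:basicfacts} with $\beta = \alpha$ (using $\alpha \leq \alpha$), so $\alpha \in \sempl p$. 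For right-to-left: from \Cref{axiomaticsum} and \Cref{soundnesspcdl} we get $\sempl p = \sempl{\bigvee S_p}$, which by the homomorphic definition of $\sempl{-}$ equals $\bigcup_{\beta \in S_p} \sempl{\pi_\beta}$; so $\alpha \in \sempl p$ yields some $\beta \in S_p$ with $\alpha \in \sempl{\pi_\beta}$, whence \Cref{lemma:basicfacts} gives $\alpha \leq \beta$ and so $\pi_\alpha \leqqpl \pi_\beta$, which with $\pi_\beta \leqqpl p$ and transitivity of $\leqqpl$ gives $\pi_\alpha \leqqpl p$.

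The remaining work is purely routine bookkeeping: that $\leqqpl$ is a preorder and interacts as expected with $\equivpl$, $\vee$, and finite joins (immediate from the bounded-distributive-lattice axioms and associativity/commutativity/idempotence of $\vee$), and that $\sempl{-}$ sends $\vee$ and $\bigvee$ to $\cup$ and $\bigcup$ (by definition of the homomorphic extension). I do not expect a genuine obstacle in this theorem itself: all the creative content — rewriting an arbitrary \acronym-term into a join of the state conjunctions $\pi_\alpha$ — has been isolated in \Cref{axiomaticsum} and the observation axioms it relies on, and once that is in hand the completeness statement follows mechanically.
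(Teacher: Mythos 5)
Your proof is correct and follows essentially the same route as the paper's: both directions rest on \Cref{soundnesspcdl}, \Cref{axiomaticsum}, and \Cref{lemma:basicfacts}, and the key step — that $\pi_\alpha \leqqpl p$ holds iff $\alpha \in \sempl p$, so that the join representation of \Cref{axiomaticsum} is determined by the semantics — is exactly what the paper proves inline (in the form ``$\alpha \leqqpl p$ iff $\alpha \leqqpl q$''). You have merely packaged that step as a standalone equivalence $S_p = \sempl p$ rather than as a two-way implication between $p$ and $q$, which is a cosmetic difference.
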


\subsection{POCKA:\texorpdfstring{\@}{} axiomatisation and semantics}%
\label{section:syntaxpocka}
\begin{definition}
The POCKA-terms, denoted $\termspocka$, are formed by the following grammar:
\[
    e, f ::= 0 \pipe 1 \pipe a \in \Act \pipe p \in \termspl \pipe e + f \pipe e \cdot f \pipe e \parallel f \pipe e^* \ .
\]
\end{definition}
Note that $\termspocka=\termsbka[\Act\cup\termspl]$.

The language model for POCKA consists of pomset languages over $\Act\cup\State$.
When using pomsets to reason about behaviours of programs, we would like actions and states to alternate, because the states allow one
to take stock of the configuration of the machine in between actions.
However, imposing such an alternation
in the semantics can be problematic with the exchange law~\cite{kappe-brunet-rot-silva-wagemaker-zanasi-2019}.
Imagine the program $(\alpha\cdot\beta)\parallel \ltr{a}$, where $\alpha,\beta\in\State$ and $\ltr{a}\in\Act$. We can derive the following:
\begin{align*}
 \alpha\cdot\ltr{a}\cdot\beta & \equivpocka (\alpha\parallel 1)\cdot (1\parallel\ltr{a})\cdot (\beta \parallel 1) \tag{Unit axiom} \\
 & \leqq (\alpha\parallel 1)\cdot ((1\cdot \beta) \parallel (\ltr{a}\cdot 1)) \tag{Exchange Law} \\
 & \equivpocka (\alpha\parallel 1)\cdot (\beta \parallel \ltr{a}) \tag{Unit Axiom} \\
 & \leqq (\alpha\cdot \beta) \parallel (1\cdot \ltr{a}) \equivpocka (\alpha\cdot\beta)\parallel \ltr{a} \tag{Exchange Law, Unit Axiom}
\end{align*}
However, if the semantics $\semwpocka{-}$ contain only pomsets with alternating assignments and states, $\semwpocka{\alpha\cdot\beta}$ would have to consist of one state, and hence be empty if $\alpha\neq\beta$.
This would make $\semwpocka{(\alpha\cdot\beta)\parallel \ltr{a}}$ empty as well.
As $\semwpocka{\alpha\cdot\ltr{a}\cdot\beta}$ should not be empty, the exchange law is unsound.
Thus, the POCKA-semantics is not restricted to pomsets with alternating states and actions.

\begin{definition}[Semantics]
Let $\semwpocka{-}\colon\termspocka \to 2^{\SP}$, where $2^{\SP}$ are pomset languages over $\Act\cup\State$.
 For $p\in\termspl$, $(v\leftarrow n),(v\leftarrow v')\in\Act$ and
 $e,f\in\termspocka$ we have:
\begin{align*}
\semwpocka{v\leftarrow n} &= \State^*\cdot \{v\leftarrow n \}\cdot \State^*
    & \semwpocka{e + f} &= \semwpocka{e} \cup \semwpocka{f}
    & \semwpocka{e^*} &= {\semwpocka{e}}^* \\
\semwpocka{v\leftarrow v'} &= \State^*\cdot \{v\leftarrow v' \}\cdot \State^*
    & \semwpocka{e \cdot f} &= \semwpocka{e} \cdot \semwpocka{f}
    & \semwpocka{0} &= \emptyset \\
\semwpocka{p} &= \State^*\cdot \semplexpl{p}\cdot \State^*
    & \semwpocka{e \parallel f} &= \semwpocka{e} \parallel \semwpocka{f}
    & \semwpocka{1} &= \{ 1 \}
\end{align*}
We define the POCKA-semantics of $e \in \termspocka$ as $\sempocka{e}= \closure[\hexch\cup\hcontr] {\semwpocka{e}}$, where we use the closure definition from \cref{def:closure}, and $\hcontr=\{ \alpha \leq \alpha \cdot \alpha \pipe \alpha\in\State \}$,
referred to as \emph{contraction}.
\end{definition}

We briefly explain closure under $\hexch$ and $\hcontr$. These closures are not part of the axiomatisation, but exist to ensure soundness of some of the axioms.
The set of hypotheses $\hexch$ closes the POCKA-semantics under subsumption and ensures soundness for the exchange law familiar from CKA\@.
The set $\hcontr$ encodes that
one way of observing $\alpha$ twice is to make both observations on the same state.
This provides soundness for the axiom $p\wedge q\leqqpocka p\cdot q$, which was introduced in~\cite{kappe-brunet-rot-silva-wagemaker-zanasi-2019}.
This axiom captures that if $p$ and $q$ hold simultaneously in some state, it is possible to observe $p$ and $q$ in sequence (the converse should not hold as some action could happen in between the two obervations in a parallel thread).

\begin{remark}
 The assignment $(v\leftarrow v')$ cannot be simulated.
 In a sequential setting, we could express $(v\leftarrow v')$ as $\sum_{n\in \val}((v'=n)\cdot(v\leftarrow n))$.
 However, in a parallel setting this does not work, since some action can change the value of $v'$ in between the observation that $(v'=n)$ and the assignment $(v\leftarrow n)$, meaning that $v$ does not get assigned the value of $v'$.
\end{remark}

\begin{remark}\label{rmk:finiteness}
  In this paper we assume the set of variables $\var$ and the set of values $\val$ are both finite, in keeping with other verification frameworks, e.g.\ in model-checking.

  The restriction on $\var$ could be lifted, since the finite set of variables that appear syntactically in a term completely determine its semantics. However, this is not the case for the set of values: for instance, the term $\overline{v=0}$ evaluates to $\emptyset$ if the set of values is $\val=\{0\}$, but contains the partial function $[v\mapsto n]$ if $\val$ contains some value $n\neq 0$.

  It is possible that a more sophisticated reduction could still work: indeed it seems unlikely that our finite terms would be able to manipulate non-trivially infinitely many values. For now though, this question is left open for future investigations.
\end{remark}

The POCKA-semantics of a program $e$ contains the possible behaviours of $e$ in any possible context, where the context refers to any expression that could be put in parallel with $e$.
For instance,  $\sempocka{(v\leftarrow n)}$ contains pomsets that consist of a string of possible states of the machine, where the state of the machine can have been influenced by other parallel threads, followed at some point by the assignment $(v\leftarrow n)$, followed by another string of states.
In \cref{section:causal}, we will show how to reason about programs in isolation, i.e., under the hypothesis that there is no outside context to prompt state-modifying actions.
 \begin{example}\label{example:semantics-litmus}
 Let $t = (r_0=0\wedge r_1=0) \cdot (\textsf{T0} \parallel \textsf{T1}) \cdot \overline{(r_0=1\vee r_1=1)}$ as in~\eqref{equation:sequential-consistency} be our litmus test.
A pomset in $\semwpocka{t}$ may look as follows, where we depict a pomset graphically with nodes labelled by actions or observations and their ordering with arrows.

\begin{tikzpicture}[node distance=0.5cm]
    \begin{scope}[every node/.style={anchor=center}]
    \node (gamma1) {$\gamma_1$};
    \node[right=of gamma1] (alpha) {$\alpha$};
    \node[yshift=4mm,right=of alpha] (gamma2) {$\gamma_2$};
    \node[yshift=-4mm,right=of alpha] (gamma5) {$\gamma_5$};
    \node[right=of gamma2] (x) {$(x\leftarrow 1)$};
    \node[right=of gamma5] (y) {$(y\leftarrow 1)$};
    \node[right=of x] (gamma3) {$\gamma_3$};
    \node[right=of y] (gamma6) {$\gamma_6$};
    \node[right=of gamma3] (r0) {$(r_0\leftarrow y)$};
    \node[right=of gamma6] (r1) {$(r_1\leftarrow x)$};
    \node[right=of r0] (gamma4) {$\gamma_4$};
    \node[right=of r1] (gamma7) {$\gamma_7$};
    \node[yshift=-4mm, right=of gamma4] (delta) {$\delta$};
    \node[right=of delta] (gamma8) {$\gamma_8$};
    \end{scope}
    \draw[->] (gamma1) edge (alpha);
    \draw[->] (alpha) edge (gamma2);
    \draw[->] (alpha) edge (gamma5);
    \draw[->] (gamma2) edge (x);
    \draw[->] (gamma5) edge (y);
    \draw[->] (x) edge (gamma3);
    \draw[->] (y) edge (gamma6);
    \draw[->] (gamma3) edge (r0);
    \draw[->] (gamma6) edge (r1);
    \draw[->] (r0) edge (gamma4);
    \draw[->] (r1) edge (gamma7);
    \draw[->] (gamma4) edge (delta);
    \draw[->] (gamma7) edge (delta);
    \draw[->] (delta) edge (gamma8);
\end{tikzpicture}

\noindent Here, $\gamma_i\in\State$, $\alpha(r_0)=0=\alpha(r_1)$ and $\delta(r_0)=0=\delta(r_1)$.
However, as stated in the introduction, if POCKA is sequentially consistent, this litmus test should pass,
which means that the semantics of
$t$
should instead be empty. The reason it is not empty is that our semantics gives the behaviour of a program in any possible context, and indeed, if we put the litmus test in parallel with a program such as $(r_0\leftarrow 0)\cdot (r_1\leftarrow 0)$, the final assertion becomes satisfiable.
In \cref{section:causal,section:litmus}, we look at how to execute the litmus test in isolation.
\end{example}

We also have axioms to algebraically describe equivalence between POCKA-terms, including some domain-specific axioms tailored to the alphabet.
We define $\equivpocka$ as the smallest congruence on $\termspocka$ generated by the axioms in \cref{fig:pocka}.

\begin{restatable}[Soundness POCKA]{thrm}{soundnesspocka}\label{soundnesspocka}
  For all $e,f\in\termspocka$, if $e\equivpocka f$ then $\sempocka{e}=\sempocka{f}$.
\end{restatable}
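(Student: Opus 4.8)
The plan is to reduce soundness to the general closure-soundness machinery already available, namely \Cref{lemma:soundness}, which states that if $e \equivbka^H f$ then $\closure[H]{\sembka{e}} = \closure[H]{\sembka{f}}$. Observe that $\termspocka = \termsbka[\Act \cup \termspl]$ and that the POCKA-semantics is $\sempocka{e} = \closure[\hexch \cup \hcontr]{\semwpocka{e}}$. The idea is to view $\semwpocka{-}$ itself as arising from a BKA-semantics over a suitable alphabet, and to package all POCKA axioms as a set of hypotheses $H$ over that alphabet, so that $e \equivpocka f$ implies $e \equivbka^H f$, and then invoke \Cref{lemma:soundness} with $H = \hexch \cup \hcontr \cup (\text{the remaining axioms})$. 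Concretely, I would first set up a translation $\iota\colon \termspocka \to \termsbka[\Act \cup \State]$ (using the injection $\State \subseteq \termspl$ via $\pi_{-}$, and for a general $p \in \termspl$ using \Cref{axiomaticsum} to write $p$ as a join of states, i.e.\ $p \equivpl \bigvee\{\alpha : \pi_\alpha \leqqpl p\}$), and check that $\semwpocka{e}$ coincides with the BKA-semantics of $\iota(e)$ over $\Act \cup \State$ once the ``padding by $\State^*$'' on letters is itself encoded as a hypothesis of the form $\ltr{a} \leq \State^* \cdot \ltr{a} \cdot \State^*$ (or, more cleanly, via dedicated padding letters). This makes $\sempocka{e} = \closure[H]{\sembka{\iota(e)}}$ for the appropriate finite $H$.

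The second step is the heart of the argument: go through the axioms of \Cref{fig:pocka} one family at a time and show each is sound for $\sempocka{-}$. The Kleene algebra axioms and the additional bi-Kleene algebra axioms are sound already for $\sembka{-}$ by \Cref{theorem:bka-completeness}, and soundness is preserved under the closure operator (this is exactly the content of \Cref{lemma:soundness}, applied with the relevant axiom placed in $H$, or rather: closure is monotone and a congruence-respecting operation, so BKA-provable equalities survive). The exchange law is sound because $\hexch \subseteq H$ and by \Cref{lemma:exch-closure-vs-subsumption} the $\hexch$-closure is downward subsumption closure, under which the exchange inequality holds. The bounded distributive lattice axioms, the pseudocomplement axiom, and the three observation axioms are sound on $\termspl$ by \Cref{soundnesscompletenesspcdl} (equivalently \Cref{soundnesspcdl}); I then need to transport this through the padding map $p \mapsto \State^* \cdot \semplexpl{p} \cdot \State^*$, which is routine since $L \mapsto \State^* \cdot L \cdot \State^*$ is monotone and commutes with union. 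For the interface axioms: $p \vee q \equivpocka p + q$ and $0 \equivpocka \bot$ are immediate from the definitions of $\semwpocka{-}$ and $\semplexpl{-}$; the padding axioms $\top \cdot p \leqqpocka p$, $p \cdot \top \leqqpocka p$, $\top \cdot a \leqqpocka a$, $a \cdot \top \leqqpocka a$ hold because $\semplexpl{\top} = \State$ and the target semantics already pads every letter and every observation with $\State^*$ on both sides, so prepending or appending one more state is absorbed; and the key axiom $p \wedge q \leqqpocka p \cdot q$ is sound precisely because $\hcontr \subseteq H$: a pomset witnessing $p \wedge q$ contains a state $\alpha \in \semplexpl{p} \cap \semplexpl{q}$, and the hypothesis $\alpha \leq \alpha \cdot \alpha$ lets the $\hcontr$-closure split that single state into two consecutive copies, one serving $p$ and one serving $q$.

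Assembling these, $e \equivpocka f$ is generated by finitely many axiom instances, each of which I will have shown is contained in (the congruence generated by) $\equivbka^H$ for $H = \hexch \cup \hcontr \cup (\text{lattice/pseudocomplement/observation/interface axioms, suitably expressed as hypotheses})$; hence $e \equivbka^H f$, and \Cref{lemma:soundness} gives $\closure[H]{\sembka{\iota(e)}} = \closure[H]{\sembka{\iota(f)}}$, i.e.\ $\sempocka{e} = \sempocka{f}$. The main obstacle I anticipate is bookkeeping around the padding: making precise that $\semwpocka{e}$ equals a BKA-semantics over $\Act \cup \State$ modulo a closure, and checking that the lattice and observation axioms — which live purely on $\termspl$ and are justified by the OA-semantics $\semplexpl{-}$ rather than a pomset semantics — interact correctly with both the $\State^*$-padding and with the later $\hexch$- and $\hcontr$-closures (in particular that closing under subsumption and contraction does not break any of the PCDL-level identities). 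A secondary subtlety is the De Morgan observation axiom with its side condition $\forall i \neq j.\, v_i \neq v_j$: one must check the side condition is genuinely used so that the corresponding state-level fact holds, and that no unsound instance sneaks in when variables coincide.
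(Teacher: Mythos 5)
Your proposal takes a genuinely different route from the paper's, which proves soundness by a direct induction on $\equivpocka$, verifying each axiom family against $\sempocka{-}$ using the concrete closure lemmas; you instead try to reduce to \cref{lemma:soundness} by translating into $\termsbka[\Act\cup\State]$ and choosing $H$ so that $\sempocka{e}=\closure[H]{\sembka{\iota(e)}}$. That strategy is closer in spirit to the paper's \emph{completeness} argument in Section~\ref{sec:complete}, and it is viable, but your execution has gaps. Most seriously, you must not put the lattice, pseudocomplement, observation, and interface axioms into $H$ ``as hypotheses'': once $H$ contains anything beyond $\hexch\cup\hcontr$, the equality $\sempocka{e}=\closure[H]{\sembka{\iota(e)}}$ is no longer given by the definition of $\sempocka{-}$ and you would owe a separate argument that the extra closure rules add nothing new --- an argument you never supply, and which is not automatic (the paper never proves, e.g., $\closure[\hexch\cup\hcontr\cup\htop]{L}=\closurep[\hexch\cup\hcontr]{\closure[\htop]{L}}$). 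The clean repair is to bake the $\State^*$-padding into $\iota$ (take $\iota=\hat{s}\circ\hat{r}$, so that $\semwpocka{e}=\sembka{\iota(e)}$ by \cref{lemma:unclosed-vs-top,lemma:completeness-alt}), keep $H=\hexch\cup\hcontr$, and then show every POCKA axiom is \emph{derivable} in $\equivbka^{\hexch\cup\hcontr}$ after applying $\iota$: the OA axioms become literal sum-of-states BKA identities via \cref{soundnesspcdl,axiomaticsum}; the $\top$-absorption axioms reduce to the pure KA inequality $g^* g g^* g^*\leqqbka g^*$ with $g=\sum_\alpha\alpha$; $p\wedge q\leqqpocka p\cdot q$ uses one $\hcontr$ hypothesis; and exchange is literally an $\hexch$ hypothesis.

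Two further direction slips are worth flagging. By \cref{def:closure}, a hypothesis $e\leq f$ lets the closure add $C[\sembka e]$ whenever $C[\sembka f]$ is already present, so $\ltr{a}\leq\State^*\cdot\ltr{a}\cdot\State^*$ would \emph{remove} padding rather than introduce it; the paper's $\htop$ is $\alpha\cdot c\leq c$ and $c\cdot\alpha\leq c$. Similarly, for $p\wedge q\leqqpocka p\cdot q$, the $\hcontr$-closure does not ``split'' a single $\alpha$ into $\alpha\cdot\alpha$; it works the other way, \emph{contracting} the pomset with $\alpha\cdot\alpha$ (already in $\semwpocka{p\cdot q}$) down to the one with a single $\alpha$, which is exactly the witness for $p\wedge q$. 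Your conclusion is right, but the mechanism as you state it is backwards.
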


\section{Completeness}\label{sec:complete}
In this section we prove completeness of the POCKA-semantics w.r.t.\ the axioms provided in \cref{section:syntaxpocka}.
First, we show that POCKA terms can be normalised to a simpler form, where the only observations that appear are states.
Next, we show that the resulting POCKA-terms can be used to describe the POCKA-semantics using BKA-semantics and closure.
We then use the techniques from~\cite{fossacs2020} to obtain a completeness result with respect to this semantics.
Finally, we put all of the above together to obtain a completeness result for POCKA proper.

In order to normalise POCKA terms, we replace every observation by the summation of states to which it corresponds.
This is done using the homomorphism $\hat{r}$ generated by
\[
    r(a) =
    \begin{cases}
      \sum_{\alpha \leqqplexpl a} \alpha & a \in \termspl \\
      a & a \in \Act
    \end{cases}
\]
As a straightforward consequence of \cref{axiomaticsum} and the interface axioms, we then obtain:
\begin{lemma}\label{lemma:reification}
For all $e \in \termspocka$, it holds that $e \equiv \hat{r}(e)$.
\end{lemma}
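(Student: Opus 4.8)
The plan is to proceed by induction on the structure of the POCKA-term $e \in \termspocka$, exploiting the fact that $\hat r$ is by definition a homomorphism and that $\equivpocka$ is a congruence. First I would dispatch the base cases. For $e = 0$, $e = 1$, or $e = a$ with $a \in \Act$, we have $\hat r(e) = e$ by definition of $r$, so the equivalence is trivial. The only interesting base case is $e = p$ for $p \in \termspl$: here $\hat r(p) = \sum_{\alpha \leqqplexpl p} \alpha$, and the claim is that $p \equivpocka \sum_{\alpha \leqqplexpl p}\alpha$. This is where \cref{axiomaticsum} does the work: that lemma gives $p \equivpl \bigvee\{\alpha \in \State \pipe \pi_\alpha \leqqpl p\}$ inside $\termspl$, and then the interface axiom $p \vee q \equivpocka p + q$ (applied repeatedly, together with $0 \equivpocka \bot$ for the empty disjunction) lets us transport this from the lattice join $\bigvee$ to the Kleene-algebra sum $\sum$. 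One has to be slightly careful that the index sets match: $\sum_{\alpha \leqqplexpl a}\alpha$ ranges over states $\alpha$ with $\alpha \leqqplexpl p$ (viewing $\alpha$ as $\pi_\alpha \in \termspl$ under the silent injection $\State \subseteq \termspl$), which is exactly $\{\alpha \pipe \pi_\alpha \leqqpl p\}$ by \cref{lemma:basicfacts}, so the two sums agree.

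For the inductive step, each POCKA operator is handled uniformly: since $\hat r$ is a homomorphism, $\hat r(e + f) = \hat r(e) + \hat r(f)$, $\hat r(e \cdot f) = \hat r(e)\cdot\hat r(f)$, $\hat r(e \parallel f) = \hat r(e)\parallel\hat r(f)$, and $\hat r(e^*) = \hat r(e)^*$. Applying the induction hypothesis $e \equivpocka \hat r(e)$ and $f \equivpocka \hat r(f)$, and using that $\equivpocka$ is a congruence with respect to $+$, $\cdot$, $\parallel$ and ${}^*$ (which it is, being the smallest congruence generated by the axioms of \cref{fig:pocka}), we immediately get $e \mathbin{\square} f \equivpocka \hat r(e) \mathbin{\square} \hat r(f) = \hat r(e \mathbin{\square} f)$ for each binary operator $\square$, and similarly for the star. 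No further axioms are needed here.

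The only genuine obstacle is the base case $e = p$, and specifically making the passage from $\equivpl$ (equivalence in the observation-algebra term language $\termspl$) to $\equivpocka$ (equivalence in the full POCKA theory) airtight. One needs to observe that the axioms generating $\equivplexpl$ are all among those generating $\equivpocka$ — they are exactly the bounded-distributive-lattice, pseudocomplement and observation axioms in \cref{fig:pocka} — so $p \equivpl q$ implies $p \equivpocka q$; combined with the interface axioms $p \vee q \equivpocka p + q$ and $0 \equivpocka \bot$, the \cref{axiomaticsum} identity is promoted to a $\equivpocka$-identity of the required shape. Everything else is a routine structural induction, so the lemma follows; this is indeed "a straightforward consequence of \cref{axiomaticsum} and the interface axioms" as claimed.
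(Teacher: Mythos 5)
Your proof is correct and follows essentially the same route as the paper's: structural induction with the substantive base case $e = p \in \termspl$ handled via \cref{axiomaticsum} together with the interface axioms $p \vee q \equivpocka p + q$ and $0 \equivpocka \bot$ to pass from the lattice join to the Kleene sum, and the inductive step discharged by congruence. If anything you are a little more careful than the paper's one-line proof, explicitly noting that the $\equivplexpl$-axioms are among the $\equivpocka$-axioms and that the index sets of $\bigvee$ and $\sum$ agree via \cref{lemma:basicfacts}.
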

\begin{proof}
This can be proven by induction on the structure of $e$. If $e=\ltr{a} \in \Act$, then $\hat r(\ltr{a}) = \ltr{a} \equiv \ltr{a}$ immediately.
Otherwise, if $p \in \termspl$, then we derive $\hat r(p) = \sum_{\alpha \leqqplexpl p} \alpha \equiv \bigvee_{\alpha \leqqplexpl p} \alpha \equiv p$, where we apply \cref{axiomaticsum} in the last step.
The inductive step follows trivially.
\end{proof}

The effect of $\hat r$ is to bridge the gap between the semantics of BKA and that of observation algebra: indeed for an observation $p\in\termspl$, we have $\sembka{\hat r (p)}=\semplexpl{p}$.
However, this does not bring us fully to the unclosed POCKA-semantics $\semwpocka{-}$, as the latter inserts state-nodes in between actions and observations.
For instance, $\semwpocka{v \leftarrow n}$ includes pomsets like $\alpha \cdot (v \leftarrow n) \cdot \beta$, while $\sembka{v \leftarrow n} = \{ v \leftarrow n \}$.
We cover for this by means of the following set of hypotheses:
\[
    \htop = \{ \alpha \cdot c \leq c, c \cdot \alpha \leq c \mid \alpha \in \State, c \in \Act \cup \State \}
\]

The hypotheses in $\htop$ allow us to connect the unclosed POCKA semantics to the BKA-semantics, by filling in surrounding or preceding state-labelled nodes as necessary.
\begin{restatable}{lem}{lemmaunclosedvstop}\label{lemma:unclosed-vs-top}
  For all $e\in\termspocka$, we have $\semwpocka{e}=\closure[\htop]{\sembka{\hat{r}(e)}}$.
\end{restatable}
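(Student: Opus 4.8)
The plan is to prove this by induction on the structure of $e\in\termspocka$, the key ingredient being an explicit description of the effect of the $\htop$-closure on a BKA-pomset. Call a pomset $W$ over $\Act\cup\State$ a \emph{state-padding} of a pomset $V$ if $W$ arises from $V$ by replacing each node $n$ of $V$ (with label $c_n$) by a chain $w_n^-\cdot c_n\cdot w_n^+$ for some $w_n^-,w_n^+\in\State^*$, while keeping the order between two distinct chains exactly as the order between the two corresponding nodes of $V$ (so the chain of $n$ lies entirely below the chain of $n'$ iff $n\leq_V n'$, and the two chains are mutually incomparable iff $n,n'$ are). The crux claim is that $\closure[\htop]{\{V\}}$ is precisely the set of state-paddings of $V$. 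Since each hypothesis in $\htop$ has a singleton pomset language on its right-hand side, the closure rule only ever fires on, and only ever adds, a single pomset at a time; hence $\closure[\htop]{-}$ distributes over arbitrary unions, and $\closure[\htop]{L}$ is the set of all state-paddings of members of $L$.

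For the crux claim, the inclusion $(\subseteq)$ follows because $V$ is a state-padding of itself, and the set of state-paddings of $V$ is closed under the $\htop$-rule: a single application of $\alpha\cdot c\leq c$ or $c\cdot\alpha\leq c$ at the gap node of a decomposition $C[c]$ merely inserts one extra $\State$-node immediately below (resp.\ above) the gap node, inheriting its external order-relations, and this — whether the gap node is an original node of $V$ or already part of the padding — again yields a state-padding. For $(\supseteq)$, any state-padding is reached from $V$ by finitely many such insertions: to realise a chain $w_n^-=\beta_1\cdots\beta_k$ below $c_n$, one inserts $\beta_k$ just below $c_n$ via $\beta_k\cdot c_n\leq c_n$, then $\beta_{k-1}$ just below $\beta_k$ via $\beta_{k-1}\cdot\beta_k\leq\beta_k$, and so on (dually for $w_n^+$), every inequation used being an instance of $\htop$.

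The induction itself is then routine. For the base cases: $\closure[\htop]{\sembka{\hat{r}(0)}}=\closure[\htop]{\emptyset}=\emptyset=\semwpocka{0}$; $\closure[\htop]{\sembka{\hat{r}(1)}}=\closure[\htop]{\{1\}}=\{1\}=\semwpocka{1}$, as the empty pomset has no nodes to pad; for $\ltr{a}\in\Act$, $\closure[\htop]{\sembka{\hat{r}(\ltr{a})}}=\closure[\htop]{\{\ltr{a}\}}=\State^*\cdot\{\ltr{a}\}\cdot\State^*=\semwpocka{\ltr{a}}$; and for $p\in\termspl$, recalling that $\sembka{\hat{r}(p)}=\semplexpl{p}$ is a downward-closed set of single-node state-pomsets, $\closure[\htop]{\semplexpl{p}}=\bigcup_{\alpha\in\semplexpl{p}}\State^*\cdot\{\alpha\}\cdot\State^*=\State^*\cdot\semplexpl{p}\cdot\State^*=\semwpocka{p}$. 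Since $\hat{r}$, $\sembka{-}$ and $\semwpocka{-}$ all act homomorphically on the connectives $+$, $\cdot$, $\parallel$ and on Kleene star, the inductive step amounts to checking that taking state-paddings commutes with $\cup$ (immediate), with pointwise $\cdot$, with pointwise $\parallel$, and with the Kleene star (which reduces to the union and $\cdot$ cases). For $\cdot$ and $\parallel$ the non-trivial inclusion is that a state-padding of $U\cdot V$ splits canonically as a padding of $U$ followed by a padding of $V$, since every node of $U$ precedes every node of $V$ in $U\cdot V$; dually, a padding of $U\parallel V$ splits as a parallel composition because all nodes across the two components are incomparable.

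I expect the main obstacle to be the pomset-context bookkeeping behind the crux claim: one must check precisely that a single $\htop$-step on a decomposition $C[c]$ inserts exactly one adjacent $\State$-node carrying exactly the gap node's external order-relations, and that the ``build each padding chain from the inside out'' sequence of insertions passes through legitimate context decompositions at every stage. A slightly different packaging that side-steps some of this is to establish the commutations $\closure[\htop]{L\cup K}=\closure[\htop]{L}\cup\closure[\htop]{K}$, $\closure[\htop]{L\cdot K}=\closure[\htop]{L}\cdot\closure[\htop]{K}$, $\closure[\htop]{L\parallel K}=\closure[\htop]{L}\parallel\closure[\htop]{K}$ and $\closure[\htop]{L^*}=(\closure[\htop]{L})^*$ independently — each by showing the right-hand side already contains the corresponding unclosed operand and is itself $\htop$-closed — and to compute only the base cases via the padding description; there the delicate point is the sequential commutation, which rests on the observation that if the gap of a context $C$ lies inside the left factor of $C[c]=W_1\cdot W_2$, then $C=C_1\cdot W_2$ for some context $C_1$ with $C_1[c]=W_1$.
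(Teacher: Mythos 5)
Your proof is correct, and its core idea is genuinely different from the paper's, so a comparison is worthwhile. The paper's argument for the base case $a\in\Act\cup\termspl$ goes in two separate directions: the inclusion $\semwpocka{a}\subseteq\closure[\htop]{\sembka{\hat r(a)}}$ is shown by a direct induction on the length of the state-prefix $u$ and suffix $v$ of a pomset $u\cdot y\cdot v\in\semwpocka{a}$, while the converse inclusion is obtained by first proving $\sembka{\hat r(a)}\subseteq\semwpocka{a}$ and then invoking a separate lemma (the paper's \cref{lemma:closed-alt}) to the effect that $\semwpocka{a}$ is itself $\htop$-closed. You instead extract a single explicit description of $\closure[\htop]{\{V\}}$ as the set of state-paddings of $V$, from which both inclusions of the base case and the characterisation of $\closure[\htop]{L}$ for arbitrary $L$ (via distribution over unions, which is where the singleton right-hand sides of $\htop$ matter) fall out uniformly; this buys a cleaner conceptual picture at the cost of the bookkeeping you yourself flag, namely that every $\htop$-step on a decomposition $C[c]$ inserts one state node inheriting exactly the gap's external order-relations, and that the inside-out insertions realising a given padding all pass through genuine pomset-context decompositions. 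For the inductive step both routes converge: the paper's \cref{lemma:composition-vs-top} establishes $\closurep[\htop]{L\cup K}=\closure[\htop]{L}\cup\closure[\htop]{K}$ and its analogues for $\cdot$, $\parallel$ and $*$ using the context-splitting lemmas (\cref{lemma:tinycontextsequential,lemma:tinycontextparallel}), which is exactly your alternative packaging; in your main version these commutations are re-derived by splitting a state-padding of a composite pomset along the block structure inherited from the original nodes, which works because in $U\cdot V$ every $U$-chain sits wholly below every $V$-chain (and in $U\parallel V$ the chains are pairwise incomparable). Two small points worth tightening if you were to write this out: the observation-base case $\sembka{\hat r(p)}=\semplexpl{p}$ is asserted in the paper's prose but itself rests on \cref{axiomaticsum}, \cref{lemma:basicfacts} and \cref{soundnesspcdl}, so it deserves a citation rather than a ``recall''; and the phrase ``the chain of $n$ lies entirely below the chain of $n'$ iff $n\leq_V n'$'' should read $n<_V n'$ to avoid the degenerate $n=n'$ case.
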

\begin{proof}[Sketch]
Proceed by induction on the construction of $e$; the case where $e \in \Act \cup \termspl$ is fairly straightforward.
The inductive case follows from the fact that closure w.r.t. $\htop$ is compatible with pomset language composition, i.e., that $\closurep[\htop]{L \cup K} = \closure[\htop]{L} \cup \closure[\htop]{K}$ as well as $\closurep[\htop]{L \cdot K} = \closure[\htop]{L} \cdot \closure[\htop]{K}$, and similarly for the other operators defining $\sembka{-}$.
\end{proof}

The next step is to provide syntactic closures for the sets of hypotheses involved.
First, we note that there exists a syntactic closure for $\hexch \cup \hcontr$, as shown in~\cite[Theorem~5.6]{fossacs2020}.
\begin{lemma}\label{lemma:completeness-exch-contr}
There exists a syntactic closure $k$ for $\hexch \cup \hcontr$.
\end{lemma}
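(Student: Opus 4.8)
The statement to prove is \Cref{lemma:completeness-exch-contr}: the existence of a syntactic closure $k$ for $\hexch \cup \hcontr$. This is attributed to \cite[Theorem~5.6]{fossacs2020}, so my plan is essentially to recall why that result applies in this setting and, if a self-contained argument is wanted, to reconstruct its skeleton.

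\medskip

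\noindent\textbf{Approach.} The plan is to invoke the modular machinery of \cite{fossacs2020}, which builds syntactic closures compositionally from syntactic closures for the individual hypothesis sets. First I would observe that $\hexch$ on its own admits a syntactic closure: by \Cref{lemma:exch-closure-vs-subsumption} the $\hexch$-closure of a language is precisely its downward closure under subsumption $\sqsubseteq$, and it is known (again from \cite{fossacs2020}, going back to the closure of series-rational languages under subsumption) that one can construct, for every BKA-term $e$, a term $e{\downarrow}$ with $\sembka{e{\downarrow}} = \closure[\hexch]{\sembka e}$ and $e \equivbka^{\hexch} e{\downarrow}$; since $\equivbka^{\hexch}$ is subsumed by $\equivpocka$ this gives the first syntactic closure. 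Next I would treat $\hcontr = \{\alpha \le \alpha\cdot\alpha \mid \alpha\in\State\}$: because $\State$ is finite and each hypothesis has a single letter on the left and a two-letter word on the right, the $\hcontr$-closure of a pomset language can be realised syntactically by a substitution that replaces each occurrence of $\alpha$ by $\alpha\cdot\alpha^*$ (more precisely, by repeatedly "splitting" state-nodes), and the relevant term-level rewriting is validated by the contraction-style reasoning available in POCKA. The final step is to combine the two: \cite{fossacs2020} gives a sufficient condition — roughly, that the two hypothesis sets do not interfere destructively (the closure operators commute in an appropriate lax sense, or one set is "compatible" with the other's closure) — under which the composite map $k$, obtained by alternately applying the two syntactic closures until a fixpoint is reached, is a syntactic closure for the union. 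One checks that $\hexch$ and $\hcontr$ satisfy this condition: closing under subsumption and then under contraction, or vice versa, stabilises, because contraction only duplicates state-labelled events and subsumption only adds order, so finitely many alternations suffice.

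\medskip

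\noindent\textbf{Key steps, in order.} (1) Recall from \Cref{lemma:exch-closure-vs-subsumption} that $\closure[\hexch]{-}$ is downward closure under $\sqsubseteq$, and cite/recall the construction of the subsumption-closure term $e{\downarrow}$, verifying $e \equivpocka e{\downarrow}$ and $\sembka{e{\downarrow}} = \closure[\hexch]{\sembka e}$. (2) Construct a syntactic closure for $\hcontr$ alone, using finiteness of $\State$ and the single-letter-left-hand-side shape of the contraction hypotheses, and verify the term-equivalence side using the POCKA axioms. (3) Invoke the compositionality theorem of \cite{fossacs2020} (its Theorem~5.6 or the underlying lemma on combining syntactic closures) after checking the non-interference hypothesis for the pair $(\hexch, \hcontr)$, concluding that the alternating composite $k$ is a syntactic closure for $\hexch\cup\hcontr$, i.e.\ $e \equivpocka k(e)$ and $\sembka{k(e)} = \closure[\hexch\cup\hcontr]{\sembka e}$ for all $e\in\termspocka$.

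\medskip

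\noindent\textbf{Main obstacle.} The genuinely delicate point is step (3): verifying that the two closures can be combined without a syntactic closure for one set destroying the closedness achieved for the other, and that the alternation terminates. Subsumption-closure can, in principle, create new contexts in which a contraction hypothesis becomes applicable, and contraction duplicates nodes, potentially enabling new subsumptions — so one must argue that this feedback loop is bounded. The saving observation is that all of this activity is confined to state-labelled nodes and to adding order, and that the number of distinct state labels is finite while subsumption never increases the carrier; combined with the explicit analysis in \cite{fossacs2020} of exactly this pair of hypotheses, the fixpoint is reached after a controlled number of rounds. Since the cited theorem already packages this argument, in the paper proper it suffices to appeal to it; a fuller treatment is deferred to the extended version.
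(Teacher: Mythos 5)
The paper itself offers no proof of this lemma beyond a citation: in Section~\ref{sec:complete} it reads ``there exists a syntactic closure for $\hexch \cup \hcontr$, as shown in~\cite[Theorem~5.6]{fossacs2020}'', and the appendix likewise records only the auxiliary fact (also cited from~\cite{fossacs2020}) that $\closure[\hcontr \cup \hexch]{L}=\closure[\hcontr]{(\closure[\hexch]{L})}$. So a one-line citation would have been a correct ``proof''. Your attempt to reconstruct the underlying argument goes further, but it contains two genuine errors.

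First, you have the direction of contraction backwards. The hypothesis set is $\hcontr=\{ \alpha \leq \alpha \cdot \alpha \mid \alpha\in\State \}$, with the single $\alpha$ on the \emph{small} side. Unfolding Definition~\ref{def:closure}, the closure rule adds $C[\alpha]$ whenever $C[\alpha\cdot\alpha]$ is already present; equivalently (Lemma~\ref{lemma:closure-vs-contraction}), $U\in\closure[\hcontr]{L}$ iff $U\preceq V$ for some $V\in L$, where $\preceq$ is witnessed by a \emph{surjection} from $V$'s carrier onto $U$'s. In other words, $\hcontr$-closure \emph{merges} consecutive equal state-nodes and never duplicates them. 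Your proposed substitution $\alpha\mapsto\alpha\cdot\alpha^*$ does the opposite: applied to, say, $\alpha\cdot\alpha$ it yields $(\alpha\cdot\alpha^*)^2$ with semantics $\{\alpha^k\mid k\geq 2\}$, whereas the actual closure of $\{\alpha\cdot\alpha\}$ is $\{\alpha,\ \alpha\cdot\alpha\}$. More to the point, no letter substitution (homomorphism) can work at all, since $c(\alpha)=\alpha$ is forced but $c(\alpha\cdot\alpha)$ must denote $\{\alpha,\alpha\cdot\alpha\}\neq \{\alpha\cdot\alpha\}$; the $\hcontr$ syntactic closure in~\cite{fossacs2020} is a structural construction on terms, not a homomorphism, and your step~(2) would fail as written.

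Second, your combination step over-complicates matters. You propose iterating the two syntactic closures ``until a fixpoint is reached'' and arguing termination. But the relevant factorisation result (Lemma~\ref{lemma:factorise-exch} in this paper, again from~\cite{fossacs2020}) is stronger and cleaner: $\closure[\hcontr\cup\hexch]{L}$ equals a single pass of $\hexch$-closure followed by a single pass of $\hcontr$-closure. So the composite $k = c_{\hcontr}\circ c_{\hexch}$ works outright, without any fixpoint iteration or termination argument. The heuristic you give for why the alternation terminates (``contraction only duplicates state-labelled events'') is also the same directional error noted above. Your step~(1), invoking the subsumption-closure construction to obtain a syntactic closure for $\hexch$ alone, is reasonable and consistent with the literature; it is steps~(2) and~(3) that need to be redone.
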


For the set $\htop$ we still need to provide a syntactic closure.
To this end, we simply take every action or observation in a term and surround it by a sequence of states of arbitrary length, which can be done using the homomorphism generated by
\[
  s(a) = {\Bigl(\sum_{\alpha\in\State} \alpha\Bigr)}^* \cdot a  \cdot {\Bigl(\sum_{\alpha\in\State} \alpha\Bigr)}^*
\]
It is fairly straightforward to show that this gives rise to a syntactic closure.
\begin{restatable}{lem}{completenessalt}\label{lemma:completeness-alt}
  The homomorphism generated by $s$ is a syntactic closure for $\htop$.
\end{restatable}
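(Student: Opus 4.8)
To show that the homomorphism $\hat{s}$ generated by $s$ is a syntactic closure for $\htop$, I must establish the two conditions of \cref{definition:reduction}: first, that $e \equiv \hat{s}(e)$ for all $e \in \termspocka$ (more precisely, $e \equivbka^{\htop} \hat{s}(e)$), and second, that $\closure[\htop]{\sembka{e}} = \sembka{\hat{s}(e)}$. The plan is to handle each condition by induction on the structure of $e$, with the real content living in the base cases $e \in \Act \cup \termspl$ (equivalently, the single-letter case $e = c$ for $c$ in the underlying alphabet $\Act \cup \termspl$), since the inductive steps for $+$, $\cdot$, $\parallel$ and $(-)^*$ are immediate from $\hat{s}$ being a homomorphism and from both $\equivbka^{\htop}$ and $\sembka{-}$ respecting these operators.

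For the provability condition $e \equivbka^{\htop} \hat{s}(e)$, the base case amounts to showing $c \equivbka^{\htop} (\sum_{\alpha} \alpha)^* \cdot c \cdot (\sum_{\alpha} \alpha)^*$. The direction $\leqqbka^{\htop}$ is just the KA fact that $x \leqqbka y^* \cdot x \cdot z^*$ (taking the empty word from each star). For the reverse direction, I would peel off one state at a time: each hypothesis $\alpha \cdot c \leq c$ and $c \cdot \alpha \leq c$ in $\htop$ lets me absorb a leading or trailing state, and the Kleene-algebra star induction axiom (the rule $e + f \cdot g \leqqbka f \Rightarrow e \cdot g^* \leqqbka f$, and its mirror) packages infinitely many such absorptions into $(\sum_\alpha \alpha)^* \cdot c \leqqbka^{\htop} c$ and $c \cdot (\sum_\alpha \alpha)^* \leqqbka^{\htop} c$. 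Composing these gives $(\sum_\alpha \alpha)^* \cdot c \cdot (\sum_\alpha \alpha)^* \leqqbka^{\htop} c$, completing the base case.

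For the semantic condition $\closure[\htop]{\sembka{c}} = \sembka{\hat{s}(c)}$, note $\sembka{c} = \{c\}$ and $\sembka{\hat{s}(c)} = \State^* \cdot \{c\} \cdot \State^*$. The inclusion $\sembka{\hat{s}(c)} \subseteq \closure[\htop]{\{c\}}$ follows by repeatedly applying the closure rule of \cref{def:closure} with the hypotheses $\alpha \cdot c' \leq c'$ (and $c' \cdot \alpha \leq c'$) and suitable pomset contexts: starting from $c \in \closure[\htop]{\{c\}}$, one shows $\alpha \cdot c \in \closure[\htop]{\{c\}}$, then $\alpha' \cdot \alpha \cdot c$, and so on, and symmetrically on the right, so every pomset $\alpha_1 \cdots \alpha_k \cdot c \cdot \beta_1 \cdots \beta_\ell$ enters the closure. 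The converse inclusion $\closure[\htop]{\{c\}} \subseteq \State^* \cdot \{c\} \cdot \State^*$ is proved by showing the right-hand side is itself $\htop$-closed: if $C[\sembka{c'}] \subseteq \State^* \cdot \{c\} \cdot \State^*$ for some $c' \cdot \alpha \leq c'$ (resp.\ $\alpha \cdot c' \leq c'$) in $\htop$ and some context $C$, then $C[\sembka{c' \cdot \alpha}] = C'[\{c' \cdot \alpha\}]$ for the appropriate residual context, and since the ambient pomsets in $\State^* \cdot \{c\} \cdot \State^*$ are totally ordered sequences of states around a single $c$, deleting the $\alpha$ adjacent to $c'$ keeps us inside the set. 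This is the step that requires the most care: one must argue precisely about how $*$ sits inside a pomset context $C$ relative to the unique non-state letter $c$, and use that the pomsets of $\State^* \cdot \{c\} \cdot \State^*$ have a rigid linear shape so that removing a state node cannot escape the set. Once this invariant is checked, minimality of the closure gives the inclusion, and combining with the inductive composition lemmas (closure w.r.t.\ $\htop$ commutes with $\cup$, $\cdot$, $\parallel$, $(-)^*$, exactly as in the sketch of \cref{lemma:unclosed-vs-top}) finishes the proof.

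The main obstacle I anticipate is the second inclusion in the semantic base case, i.e.\ verifying that $\State^* \cdot \{c\} \cdot \State^*$ is closed under the rewriting induced by $\htop$ through arbitrary pomset contexts; the context machinery makes this more delicate than a naive word-rewriting argument, and it is where I would spend most of the effort, whereas the provability side and the inductive bookkeeping are routine applications of KA star-induction and homomorphism properties, respectively.
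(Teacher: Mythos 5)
Your provability argument ($e \equivbka^{\htop} \hat{s}(e)$) is essentially the same as the paper's: both derive $g^* \cdot a \cdot g^* \leqqbka^{\htop} a$ for each letter $a$ from the $\htop$ hypotheses via the KA star induction rules, and lift through the homomorphism. For the semantic condition, however, you take a genuinely different route. You propose proving $\closure[\htop]{\{c\}} = \State^* \cdot \{c\} \cdot \State^*$ directly in the base case, where the hard part (as you correctly flag) is the $\htop$-closedness of the linear set $\State^* \cdot \{c\} \cdot \State^*$, requiring a careful analysis of how a pomset context sits against the single action-node. The paper avoids re-deriving this here: it notes that $\sembka{\hat{s}(e)}$ equals the unclosed semantics $\semwpocka{e}$, invokes \cref{lemma:closed-alt} (that $\semwpocka{e}$ is already $\htop$-closed, which is where the context-decomposition via \cref{lemma:tinycontextsequential} and the commutation properties of \cref{lemma:composition-vs-top} actually live), and then combines soundness (\cref{lemma:soundness}) with the provability half to close the argument. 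Your route is more self-contained but essentially re-proves \cref{lemma:closed-alt} locally; the paper's is shorter because it reuses the lemma infrastructure already set up for \cref{lemma:unclosed-vs-top}. Both are correct in substance.

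One genuine issue to fix: the hypotheses in $\htop$ concern only letters $c \in \Act \cup \State$, so the single-letter base case must range over $\Act \cup \State$, not $\Act \cup \termspl$ as you wrote. For $p \in \termspl \setminus \State$, no $\htop$ rule can fire on the singleton language $\{p\}$ (there is no context $C$ and letter $c \in \Act \cup \State$ with $C[c]$ equal to the single node labelled $p$), hence $\closure[\htop]{\{p\}} = \{p\}$ while $\sembka{\hat{s}(p)} = \State^* \cdot \{p\} \cdot \State^*$, and the semantic condition fails. In the completeness proof $\hat{s}$ is only ever applied after $\hat{r}$, whose image lies in $\termsbka[\Act \cup \State]$, and the paper's own proof of this lemma implicitly restricts to that alphabet; your base case should do the same.
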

\begin{proof}[Sketch]
The proof proceeds by induction on the construction of a term $e$.
In the base, we can show for $a \in \Act \cup \termspl$ that $s(a) \equiv^\htop a$ and $\closure[\htop]{\sembka{a}} = \sembka{s(a)}$.
The inductive step follows by an argument similar to that in \cref{lemma:unclosed-vs-top}.
\end{proof}

The final step needed for the completeness proof of POCKA is a relation between the axioms that generate $\equiv$ and the hypotheses found in $\htop$ and $\hcontr$.
\begin{restatable}{lem}{implicationaxioms}\label{implicationaxioms}
  For all $e,f \in \termspocka$, if $e\leq f \in \htop\cup\hcontr$, then $e\leqq f$.
\end{restatable}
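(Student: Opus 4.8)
The plan is to check, for each hypothesis $e \le f$ in $\htop \cup \hcontr$, that the inequation $e \leqq f$ is derivable from the POCKA axioms in \cref{fig:pocka}. There are three shapes to handle: the two families $\alpha \cdot c \le c$ and $c \cdot \alpha \le c$ from $\htop$ (with $\alpha \in \State$, $c \in \Act \cup \State$), and the family $\alpha \le \alpha \cdot \alpha$ from $\hcontr$. For each I would exhibit a short derivation; the work is entirely within the equational/implicational theory $\equiv$, with no reference to semantics.

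First I would treat the $\htop$ hypotheses. Recall that we are silently identifying $\State$ with a subset of $\termspl$ via $\pi_{-}$, so each $\alpha \in \State$ is the term $\pi_\alpha$, and in particular $\alpha \leqq \top$ holds in $\termspl$ (every element of a bounded lattice is below $\top$), hence $\alpha \leqq \top$ in $\termspocka$ by the interface axiom $p \vee q \equivpocka p + q$ together with $\top \in \termspl$ — concretely $\alpha + \top \equivpl \top$ lifts to $\alpha + \top \equivpocka \top$. Now for $c \in \Act$ we chain $\alpha \cdot c \leqq \top \cdot c \leqq c$, using monotonicity of $\cdot$ (provable from the distributivity and idempotence axioms of KA) for the first step and the interface axiom $\top \cdot a \leqq a$ for $a \in \Act$ for the second; the case $c \cdot \alpha \leqq c$ is symmetric, using $a \cdot \top \leqq a$. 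For $c \in \State \subseteq \termspl$ the same argument works, but now invoking the interface axiom $\top \cdot p \leqq p$ (resp. $p \cdot \top \leqq p$) for $p \in \termspl$ instead. So in both sub-cases $\alpha \cdot c \leqq c$ and $c \cdot \alpha \leqq c$.

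Next, the $\hcontr$ hypothesis $\alpha \leqq \alpha \cdot \alpha$. Since $\alpha = \pi_\alpha \in \termspl$, we have in the observation algebra the lattice identity $\pi_\alpha \wedge \pi_\alpha \equivpl \pi_\alpha$ (idempotence of meet), which lifts to $\alpha \wedge \alpha \equivpocka \alpha$ in $\termspocka$. Then the interface axiom $p \wedge q \leqq p \cdot q$ instantiated at $p = q = \alpha$ gives $\alpha \equivpocka \alpha \wedge \alpha \leqq \alpha \cdot \alpha$, which is exactly the desired inequation.

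The only mild obstacle is bookkeeping: making sure monotonicity of $\cdot$ with respect to $\leqq$ is genuinely available (it is, as a standard consequence of the KA axioms, since $\leqq$ is defined via $+$ and $\cdot$ distributes over $+$ on both sides), and being careful that the $\State \hookrightarrow \termspl$ identification is used consistently so that the interface axioms for $p \in \termspl$ apply to states. Having dispatched all three families, $e \leq f \in \htop \cup \hcontr$ implies $e \leqq f$, completing the proof.
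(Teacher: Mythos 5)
Your proposal is correct and follows essentially the same route as the paper: split into the $\htop$ cases ($\alpha \cdot c \leqq \top \cdot c \leqq c$ via the interface axioms, distinguishing $c \in \Act$ from $c \in \State$, and symmetrically for $c \cdot \alpha$) and the $\hcontr$ case ($\alpha \equiv \alpha \wedge \alpha \leqq \alpha \cdot \alpha$ via idempotence of meet and the interface axiom $p \wedge q \leqq p \cdot q$). The paper states these derivations more tersely but makes exactly the same moves; your added remarks on monotonicity of $\cdot$ and the $\State \hookrightarrow \termspl$ identification are just careful bookkeeping the paper leaves implicit.
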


We now have all the ingredients in place for the desired completeness proof.

\begin{restatable}{thrm}{theoremcompletenessthree}\label{theorem:completeness-three-parts}
For all $e,f \in \termspocka$, we have $e \equiv f$ if and only if $\sempocka{e} = \sempocka{f}$.
\end{restatable}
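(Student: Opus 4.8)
The plan is to prove both directions of the biconditional, with soundness (\Cref{soundnesspocka}) giving the forward direction immediately, and the backward direction proceeding by reducing completeness of POCKA to the completeness of BKA via the accumulated closure machinery. For the forward direction, if $e \equivpocka f$ then $\sempocka{e} = \sempocka{f}$ is exactly \Cref{soundnesspocka}, so nothing further is needed there.

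For the backward direction, suppose $\sempocka{e} = \sempocka{f}$; I want to conclude $e \equivpocka f$. First I would unfold the definition of the POCKA-semantics: $\sempocka{e} = \closure[\hexch\cup\hcontr]{\semwpocka{e}}$, and by \Cref{lemma:unclosed-vs-top}, $\semwpocka{e} = \closure[\htop]{\sembka{\hat r(e)}}$. The key observation is that these closures compose into a single closure operation for the combined hypothesis set $\htop \cup \hexch \cup \hcontr$; concretely I would argue that $\closure[\hexch\cup\hcontr]{\closure[\htop]{L}} = \closure[\htop\cup\hexch\cup\hcontr]{L}$ (or, at minimum, that applying them in this order yields the closure under the union — this kind of fact is handled in the techniques of~\cite{fossacs2020} and may require checking that $\htop$-closure followed by the others stabilises). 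Composing the syntactic closures from \Cref{lemma:completeness-exch-contr} (the map $k$ for $\hexch\cup\hcontr$) and \Cref{lemma:completeness-alt} (the homomorphism generated by $s$ for $\htop$) then gives a syntactic closure $c = k \circ \hat s$ for $\htop \cup \hexch \cup \hcontr$ in the sense of \Cref{definition:reduction}: that is, $\sembka{c(\hat r(e))} = \closure[\htop\cup\hexch\cup\hcontr]{\sembka{\hat r(e)}} = \sempocka{e}$ and $\hat r(e) \equivbka^{\htop\cup\hexch\cup\hcontr} c(\hat r(e))$.

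Now I would chain the equalities. From $\sempocka{e} = \sempocka{f}$ we get $\sembka{c(\hat r(e))} = \sembka{c(\hat r(f))}$, so by completeness of BKA (\Cref{theorem:bka-completeness}) we obtain $c(\hat r(e)) \equivbka c(\hat r(f))$. Since $\equivbka$ is contained in $\equivpocka$ (the Kleene and bi-Kleene axioms are all POCKA axioms), this gives $c(\hat r(e)) \equivpocka c(\hat r(f))$. It remains to show $e \equivpocka c(\hat r(e))$ and symmetrically for $f$. We have $e \equivpocka \hat r(e)$ by \Cref{lemma:reification}. For $\hat r(e) \equivpocka c(\hat r(e))$, the syntactic-closure property only gives us $\hat r(e) \equivbka^{\htop\cup\hexch\cup\hcontr} c(\hat r(e))$, so I need to promote $\equivbka^{\htop\cup\hexch\cup\hcontr}$ to $\equivpocka$: this is precisely where \Cref{implicationaxioms} is used — every hypothesis in $\htop \cup \hcontr$ is derivable in $\equivpocka$, and the hypotheses in $\hexch$ are derivable since the exchange law is a POCKA axiom; hence any $\equivbka^{\htop\cup\hexch\cup\hcontr}$-derivation lifts to an $\equivpocka$-derivation. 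Putting it together: $e \equivpocka \hat r(e) \equivpocka c(\hat r(e)) \equivpocka c(\hat r(f)) \equivpocka \hat r(f) \equivpocka f$.

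The main obstacle I anticipate is the composition-of-closures step: showing that the syntactic closures for $\htop$ and for $\hexch\cup\hcontr$ compose to give a syntactic closure for the union, which requires that $\htop$-closure does not create new opportunities for $\hexch$- or $\hcontr$-reductions that were not already accounted for (and vice versa), so that iterating once in the right order suffices. This is the sort of confluence/commutation argument that needs care; I would expect the paper to isolate it as a lemma (perhaps appealing to a general composition principle for syntactic closures from~\cite{fossacs2020}, e.g.\ that if $c_1$ is a syntactic closure for $H_1$ and $c_2$ for $H_2$ and the closures are suitably compatible, then $c_2 \circ c_1$ is a syntactic closure for $H_1 \cup H_2$). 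Everything else — the chaining of equivalences, the appeal to BKA completeness, and the promotion of hypothesis-derivations to $\equivpocka$ via \Cref{implicationaxioms} — is routine once that composition is in hand.
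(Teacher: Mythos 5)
Your proposal matches the paper's structure — reduce via $\hat r$, apply the syntactic closures for $\htop$ and $\hexch\cup\hcontr$, invoke BKA completeness, lift back using \cref{implicationaxioms} — but it introduces a detour that the paper avoids. You propose to show that $k\circ\hat s$ is a syntactic closure for the \emph{union} $\htop\cup\hexch\cup\hcontr$, which (as you correctly flag) would require the nontrivial commutation $\closure[\hexch\cup\hcontr]{\closure[\htop]{L}} = \closure[\htop\cup\hexch\cup\hcontr]{L}$. The paper never needs this, because it never forms the combined closure at all. It simply rewrites $\sempocka{e} = \closurep[\hexch\cup\hcontr]{\closure[\htop]{\sembka{\hat r(e)}}}$ (\cref{lemma:unclosed-vs-top}), substitutes $\closure[\htop]{\sembka{\hat r(e)}} = \sembka{\hat s\circ\hat r(e)}$ using only the defining equality of the syntactic closure $\hat s$ (\cref{lemma:completeness-alt}), obtaining $\closure[\hexch\cup\hcontr]{\sembka{\hat s\circ\hat r(e)}}$, and then substitutes this to $\sembka{k\circ\hat s\circ\hat r(e)}$ using the defining equality of $k$ (\cref{lemma:completeness-exch-contr}). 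Each step is a plain substitution of equals for equals; no commutation of closure operators is ever invoked. The algebraic side is handled in the same granular way: the paper chains $e \equiv \hat r(e) \equivbkaexpl^{\htop} \hat s\circ\hat r(e) \equivbkaexpl^{\hexch\cup\hcontr} k\circ\hat s\circ\hat r(e) \equivbkaexpl k\circ\hat s\circ\hat r(f) \equivbkaexpl^{\hexch\cup\hcontr} \hat s\circ\hat r(f) \equivbkaexpl^{\htop} \hat r(f) \equiv f$, keeping the hypothesis sets separate on each link, and then promotes each $\equivbkaexpl^{H}$ to $\equiv$ via \cref{implicationaxioms} for $\htop\cup\hcontr$ together with the POCKA exchange axiom for $\hexch$. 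So the obstacle you anticipated is real, but it is an artefact of trying to merge the two syntactic closures into a single one for the union; the paper's sequential application sidesteps it entirely, and your proof would be correct once it abandons that merge in favour of the same two-step substitution.
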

\begin{proof}
  The direction from left to right was already established in \cref{soundnesspocka}.
  For the other direction, suppose that $e, f \in \termspocka$ such that $\sempocka{e} = \sempocka{f}$.
  We can then derive that
  \begin{align*}
    \sempocka{e}
      &= \closure[\hexch \cup \hcontr]{\semwpocka{e}}
          \tag{def. $\sempocka{-}$} \\
      &= \closurep[\hexch \cup \hcontr]{\closure[\htop]{\sembka{\hat{r}(e)}}}
          \tag{\cref{lemma:unclosed-vs-top}} \\
      &= \closure[\hexch \cup \hcontr]{\sembka{\hat{s} \circ \hat{r}(e)}}
          \tag{\cref{lemma:completeness-alt}} \\
      &= \sembka{k \circ \hat{s} \circ \hat{r}(e)}
          \tag{\cref{lemma:completeness-exch-contr}}
  \intertext{%
  Similarly, $\sempocka{f} = \sembka{k \circ \hat{s} \circ \hat{r}(e)}$.
  Since $\sempocka{e} = \sempocka{f}$, also $\sembka{k \circ \hat{s} \circ \hat{r}(e)} = \sembka{k \circ \hat{s} \circ \hat{r}(f)}$; by \cref{theorem:bka-completeness}, it then follows that $k \circ \hat{s} \circ \hat{r}(e) \equivbkaexpl k \circ \hat{s} \circ \hat{r}(e)$.
  We then derive that
  }
    e
      &\equiv \hat{r}(e)
        \tag{\cref{lemma:reification}} \\
      &\equivbkaexpl^\htop \hat{s} \circ \hat{r}(e)
        \tag{\cref{lemma:completeness-alt}} \\
      &\equivbkaexpl^{\hexch \cup \hcontr} k \circ \hat{s} \circ \hat{r}(e)
        \tag{\cref{lemma:completeness-exch-contr}} \\
      &\equivbkaexpl k \circ \hat{s} \circ \hat{r}(f)
        \tag{Observation above} \\
      &\equivbkaexpl^{\hexch \cup \hcontr} \hat{s} \circ \hat{r}(f)
        \tag{\cref{lemma:completeness-exch-contr}} \\
      &\equivbkaexpl^\htop \hat{r}(f)
        \tag{\cref{lemma:completeness-alt}} \\
      &\equiv f
        \tag{\cref{lemma:reification}}
  \end{align*}
  By~\cref{implicationaxioms}, $\equivbkaexpl^\htop$ and $\equivbkaexpl^{\hexch \cup \hcontr}$ are contained in $\equiv$; we conclude that $e \equiv f$.
\end{proof}

\section{Guarded Pomsets}\label{section:causal}
We now identify a fragment of the semantics that we use in the analysis of the litmus test from~\eqref{equation:sequential-consistency}, namely the \emph{guarded} pomsets. This term comes from Jipsen and Moshier~\cite{jipsen-moshier-2016}, and was meant to define guarded pomsets in analogy to guarded strings in KAT\@.

We need two pieces of notation. First, we define the result of a state
after updating it for one value. Let $\ltr{a}\in\Act$ and $\alpha \in \State$.
We say
that $\alpha[\ltr{a}]$ \emph{exists} if $\ltr{a}=v\leftarrow n$
for some $n \in \val$ or $\ltr{a} = v\leftarrow v'$ and $v' \in \dom(\alpha)$.
If $\alpha[\ltr{a}]$ exists, we define it for all $w\in \var$ as follows:
\begin{mathpar}
\alpha[v\leftarrow n](w) =
\begin{cases}
n & \text{if } w=v \\
\alpha(w) & \text{otherwise}
\end{cases}
\and
\alpha[v\leftarrow v'](w) =
\begin{cases}
\alpha(v') & \text{if } w=v \\
\alpha(w) & \text{otherwise}
\end{cases}
\end{mathpar}
\noindent Second, we define a binary operator $\oplus$ on $\State$ to combine states. For $\alpha,\beta\in \State$:
\[
\alpha\oplus\beta =
\begin{cases}
  \alpha\cup\beta & \text{ if $\alpha(v)=\beta(v)$ for all }v \in \dom(\alpha) \cap \dom(\beta)\\
  \text{undefined} & \text{otherwise}
\end{cases}
\]

\begin{definition}
The set of \emph{guarded pomsets}, denoted $\G$, is the smallest set satisfying:
\begin{mathpar}
\inferrule{%
    \alpha \in \State
}{%
    \alpha \in \G
}
\and
\inferrule{%
    \alpha \in \State \\
    \ltr{a} \in \Act \\
    \alpha[\ltr{a}] \text{ exists}
}{%
    \alpha \cdot \ltr{a} \cdot \alpha[\ltr{a}] \in \G
}
\and
\inferrule{%
    U \cdot \alpha, \alpha \cdot V \in \G \\
    \alpha \in \State
}{%
    U \cdot \alpha \cdot V \in \G
}
\and
\inferrule{%
    \alpha \cdot U \cdot \beta \\
    \gamma \cdot V \cdot \delta \in \G \\
    \alpha\oplus\gamma \text{ defined } \\
    \beta\oplus\delta \text{ defined } \\
    \alpha, \beta, \gamma, \delta \in \State
}{%
    \alpha\oplus\gamma \cdot (U \parallel V) \cdot \beta\oplus\delta \in \G
}
\end{mathpar}
\end{definition}

This definition is close to~\cite{jipsen-moshier-2016}.
The definition in op.\ cit.\ is not catered to a specific alphabet, and the operator $\oplus$ to combine states does not allow for
the two states to have any shared variable in their domains.
We deliberately deviate from this, allowing threads to share variables as long as they do so in a consistent manner.

\begin{example}
  Consider the guarded pomsets $(x=1) \cdot(x \leftarrow 2) \cdot (x=2)$ and $(x=1 \wedge y=3) \cdot (x \leftarrow 2) \cdot (x=2 \wedge y=3) \cdot (y \leftarrow x) \cdot (x=2 \wedge y=2)$.
  The final rule for the construction of $\G$ guarantees that the following pomset is again guarded:
  \[
  \begin{tikzpicture}[node distance=0.5cm]
    \begin{scope}[every node/.style={anchor=center}]
    \node (n1) {$(x=1 \wedge y=3)$};
    \node[yshift=-4mm,right=of n1] (n3) {$(x \leftarrow 2)$};
    \node[right=of n3] (n4) {$(x=2 \wedge y=3)$};
    \node[right=of n4] (n5) {$(y \leftarrow x)$};
    \node[right=of n5,yshift=4mm] (n6) {$(x=2 \wedge y=2)$};

    \node[above=4mm of n4] (n2) {$(x \leftarrow 2)$};
    \end{scope}
    \path ($(n1.north east) + (0,-2mm)$) edge[->] (n2.west);
    \path ($(n1.south east) + (0,2mm)$) edge[->] (n3.west);
    \path (n2.east) edge[->] ($(n6.north west) + (0,-2mm)$);
    \path (n3) edge[->] (n4);
    \path (n4) edge[->] (n5);
    \path (n5) edge[->] ($(n6.south west) + (0,2mm)$);
  \end{tikzpicture}
  \]
  Note how $(x = 2) \oplus (x=2 \wedge y=3)$ is defined, because both states agree on the value of $x$.
\end{example}

In the execution of parallel threads in pomsets, no interaction between the threads takes place: the threads execute ``truly'' concurrently.
To account for interactions, we consider the interleavings that result from closure w.r.t.\ the exchange law (c.f.\ \cref{lemma:exch-closure-vs-subsumption}).

\begin{example}
Consider the a slightly adjusted version of the litmus test $t$ discussed earlier:
\[
    t' = (r_0=0 \wedge r_1=0) ; (\textsf{T0} \parallel \textsf{T1}) ; (r_0=1 \vee r_1=1)
\]
The \emph{unclosed} semantics of $t'$ includes (but is not limited to) the pomset below on the left, for all $\alpha, \beta, \gamma \in \State$, where $\alpha(r_0) = \alpha(r_1) = 0$, and $\gamma(r_0) = 1$ or $\gamma(r_1) = 1$.
As a result of the exchange law, the \emph{closed} semantics includes the pomset below on the right.
\begin{mathpar}
  \begin{tikzpicture}[node distance=0.5cm]
    \begin{scope}[every node/.style={anchor=center}]
      \node (atom1) {$\alpha$};
      \node[right=of atom1] (write1) {$(x \leftarrow 1)$};
      \node[below=1mm of write1] (write2) {$(y \leftarrow 1)$};
      \node[right=of write1] (atom3) {$\beta$};
      \node[right=of atom3] (copy1) {$(r_0 \leftarrow y)$};
      \node[below=1mm of copy1] (copy2) {$(r_1 \leftarrow x)$};
      \node[right=of copy1] (atom7) {$\gamma$};
    \end{scope}
    \path (atom1) edge[->] (write1);
    \path (atom1) edge[->] (write2.west);
    \path (write1) edge[->] (atom3);
    \path (write2) edge[->] (copy2);
    \path (atom3) edge[->] (copy1);
    \path (copy1) edge[->] (atom7);
    \path (copy2.east) edge[->] (atom7);
  \end{tikzpicture}
  \and
  \begin{tikzpicture}[node distance=0.5cm]
    \begin{scope}[every node/.style={anchor=center}]
      \node (atom1) {$\alpha$};
      \node[right=of atom1] (write1) {$(x \leftarrow 1)$};
      \node[below=1mm of write1] (write2) {$(y \leftarrow 1)$};
      \node[right=of write1] (atom3) {$\beta$};
      \node[right=of atom3] (copy1) {$(r_0 \leftarrow y)$};
      \node[below=1mm of copy1] (copy2) {$(r_1 \leftarrow x)$};
      \node[right=of copy1] (atom7) {$\gamma$};
    \end{scope}
    \path (atom1) edge[->] (write1);
    \path (atom1) edge[->] (write2.west);
    \path (write1) edge[->] (atom3);
    \path (write2.east) edge[->] (atom3);
    \path (atom3) edge[->] (copy2.west);
    \path (atom3) edge[->] (copy1);
    \path (copy1) edge[->] (atom7);
    \path (copy2.east) edge[->] (atom7);
  \end{tikzpicture}
\end{mathpar}
In the special case where $\alpha = \{ r_0 \mapsto 0, r_1 \mapsto 0 \}$, $\beta = \{ r_0 \mapsto 0, r_1 \mapsto 0, x \mapsto 1, y \mapsto 1 \}$, $\gamma = \{ r_0 \mapsto 1, r_1 \mapsto 1, x \mapsto 1, y \mapsto 1 \}$, the latter is a guarded pomset.
\end{example}

Guardedness in pomsets can be characterised by the conjunction of seven properties, which we will discuss now.
On the one hand, these properties have an intuitive explanation as characteristics of behaviours of (possibly concurrent) programs running in isolation.
Hence, if a pomset represents some execution of an isolated program, it is guarded.
On the other hand, the characterisation in terms of these properties provides a proof method to show that a pomset is \emph{not} guarded, by demonstrating the failure of one such property.

We start by observing that guarded pomsets alternate states and actions.
Formally, we can capture this in three properties.
Let $U = [\lp{u}] \in \SP(\Act\cup\State)$. We say that $s' \in S_\lp{u}$ is a \emph{predecessor} of $s$ if it is the latest node ordered strictly before $s$---i.e., $s$ $s' <_\lp{u} s$ and for all $s'' \in S_\lp{u}$ such that $s'' <_\lp{u} s$ it holds that $s'' \leq_\lp{u} s'$.
The notion of \emph{successor} is defined dually.
A node is a \emph{state-node} if it is labelled by an element of $\State$, and an \emph{action-node} otherwise.
\begin{enumerate}[label={(A\arabic*)},leftmargin=1cm]
  \item\label{item:minmax}
  $U$ admits a unique minimum and maximum, $*_{\min}, *_{\max} \in S_\lp{u}$, labelled by states.

  \item\label{item:alternating} Every two related state-nodes are separated by an action-node.

  \item\label{item:succpred} Action-nodes have unique state-nodes as neighbours (their predecessor and successor).
\end{enumerate}

The next property formalises the idea
that two related observations cannot contradict each other, such as in the program $(x=1)\cdot (x=2)$. To this end, we need the notion of a \emph{path}. A path for a variable $v$ from a state-node $u$ to another state-node $s$ is a chain such that the changes in the value of $v$ between $u$ and $s$ are explained by the actions between them and recorded in all the states between $u$ and $s$.

\begin{definition}[Path]\label{def:path}
Let $U = [\lp{u}] \in \Pom(\Act\cup \State)$ and $u_1,u_2\in S_\lp{u}$ such that $u_1\leq_\lp{u} u_2$ and $\lambda_\lp{u}(u_1),\lambda_\lp{u}(u_1)\in \State$.
We say a \emph{path} $p_v$ from $u_1$ to $u_2$ for variable $v\in \var$ is a sequence of nodes $q_1, a_1, \dots, a_n, q_{n+1} \in S_\lp{u}$ that satisfy the following conditions:
\begin{enumerate}[label={(P\arabic*)},leftmargin=1cm]
  \item\label{item:acties}
  For all $1 \leq i \leq n$, we have $\lambda_\lp{u}(a_i)\in \Act$ and $u_1\leq_\lp{u} a_i \leq_\lp{u} u_2$ for all $i$.
  Additionally we require that $a_i\leq_\lp{u}a_{i+1}$ for $1\leq i <n$.

  \item\label{item:guards}
  For all $1 \leq i \leq n + 1$ it holds that $\lambda_\lp{u}(q_i)\in \State$, and for all $1 \leq i \leq n$, the predecessor of $a_i$ is $q_{i}$, and the successor of $a_i$ is $q_{i+1}$.
  Additionally we have that $\lambda_\lp{u}(q_1)=\lambda_\lp{u}(u_1)$, $v\in\dom(\lambda_\lp{u}(u_1))$ and
  $\lambda_\lp{u}(q_{n+1})=\lambda_\lp{u}(u_2)$.
  Lastly, for $1\leq  i \leq n$ we have:
  \[
  \lambda_\lp{u}(q_{i+1})(v) =
  \begin{cases}
  n & \lambda_\lp{u}(a_{i})= v\leftarrow n\text{ for some }n\in \val\\
  \lambda_\lp{u}(q_{i})(v') & \lambda_\lp{u}(a_{i})= v\leftarrow v'\text{ for some }v'\in \var, v'\in\dom(\lambda_\lp{u}(q_{i}))\\
  \lambda_\lp{u}(q_{i})(v) & \text{otherwise }
  \end{cases}
  \]
\end{enumerate}
\end{definition}

\begin{example}
The following is a path for $x$:
\[(x=1)\cdot(y\leftarrow 3)\cdot (x=1)\cdot (x\leftarrow 2)\cdot(x=2\wedge y=3)\cdot (x\leftarrow y)\cdot(x=3)\]
Note that this is not a path for $y$, because it is not assigned a value by the final atom.
\end{example}

We can now formulate another criterion for a pomset executing in isolation: for every variable in the domain of a state-node, there is a path explaining the changes in value of that variable between the state-node and the maximum node of the pomset.
\begin{enumerate}[label={(A\arabic*)},leftmargin=1cm,resume]
      \item\label{item:path} For all state-nodes $u\in S_\lp{u}$ and $v\in\dom(\lambda_\lp{u}(u))$, there is a path for $v$ from $u$ to $*_{\max}$.
\end{enumerate}

\begin{example}
  The first pomset below satisfies~\ref{item:path}, and the second pomset does not, as there is no path from beginning to end for $x$: the value of $x$ in the second observation is not in accordance to the previous assignment.

  \begin{tikzpicture}[node distance=0.5cm]
      \begin{scope}[every node/.style={anchor=center}]
      \node (test1) {$(x=2\wedge y=2)$};
      \node[yshift=4mm,right=of test1] (act1) {$(x\leftarrow 4)$};
      \node[yshift=-4mm,right=of test1] (act2) {$(y\leftarrow 3)$};
      \node[yshift=-4mm, right=of act1] (test2) {$(x=4\wedge y=3)$};
      \end{scope}
      \draw[->] (test1) edge (act1);
      \draw[->] (test1) edge (act2);
      \draw[->] (act1) edge (test2);
      \draw[->] (act2) edge (test2);
  \end{tikzpicture}

  \begin{tikzpicture}[node distance=0.5cm]
      \begin{scope}[every node/.style={anchor=center}]
      \node (test1) {$(x=2\wedge y=4)$};
      \node[right=of test1] (act1) {$(x\leftarrow 4)$};
      \node[right=of act1] (test2) {$(x=5\wedge y=4)$};
      \node[right=of test2] (act3) {$(y\leftarrow 2)$};
      \node[right=of act3] (test5) {$(x=4\wedge y=2)$};
      \end{scope}
      \draw[->] (test1) edge (act1);
      \draw[->] (act1) edge (test2);
      \draw[->] (test2) edge (act3);
      \draw[->] (act3) edge (test5);
  \end{tikzpicture}
\end{example}

If a pomset represents an isolated program, an action has an effect on its successor.
If that action is of the form $v \leftarrow n$, then the sucessor should assign $n$ to $v$; likewise, if the action is of the form $v \leftarrow v'$, then the successor should assign the value of $v'$ to $v$, but the predecessor should also be aware of a value for $v'$.
\begin{enumerate}[label={(A\arabic*)},leftmargin=1cm,resume]
  \item\label{item:effect} If $u\in S_\lp{u}$ such that $\lambda_\lp{u}(u)=v\leftarrow n$ for some $v\in \var$ and $n\in \val$, we require that the successor of $u$ is $s$ s.t. $\lambda_\lp{u}(s)(v)=n$.
  \item\label{item:annoyingassignments}
  Let $u\in S_\lp{u}$ s.t. $\lambda_\lp{u}(u)=v\leftarrow v'$ for some $v,v'\in \var$ and let $p$ and $s$ be the predecessor, resp.\ successor, of $u$. Then $v'\in\dom(\lambda_{\lp{u}}(p))$ and
  $\lambda_\lp{u}(s)(v)=\lambda_\lp{u}(s)(v')=\lambda_\lp{u}(p)(v')$.
\end{enumerate}

\begin{example}
The pomset below on the left violates~\ref{item:effect}, because the successor of $(x \leftarrow 1)$ does not assign $2$ to $x$.
On the other hand, the pomset on the right satisfies~\ref{item:annoyingassignments}, because the predecessor of $(x \leftarrow y)$ has a value for $y$, and that value is assigned to $x$ in the successor.
\vspace{-0.7em}
\[
  \begin{tikzpicture}[node distance=0.5cm]
      \begin{scope}[every node/.style={anchor=center}]
      \node (test1) {$(x=0)$};
      \node[right=of test1] (act1) {$(x\leftarrow 1)$};
      \node[right=of act1] (test2) {$(x=2)$};
      \end{scope}
      \draw[->] (test1) edge (act1);
      \draw[->] (act1) edge (test2);
  \end{tikzpicture}
  \quad
  \begin{tikzpicture}[node distance=0.5cm]
      \begin{scope}[every node/.style={anchor=center}]
      \node (test1) {$(x=1\wedge y=2)$};
      \node[right=of test1] (act1) {$(x\leftarrow y)$};
      \node[right=of act1] (test2) {$(x=2\wedge y=2)$};
      \end{scope}
      \draw[->] (test1) edge (act1);
      \draw[->] (act1) edge (test2);
  \end{tikzpicture}
\]
\end{example}

Finally, isolated programs cannot observe variables that have not been assigned a value anywhere in the program.
On the pomset-level, this translates to:
\begin{enumerate}[label={(A\arabic*)},leftmargin=1cm,resume]
    \item\label{item:prepath} Let $u\in S_\lp{u}$ be a state-node. Then for all $v\in\dom(\lambda_\lp{u}(u))$, there exists a path for $v$ from $s\in S_\lp{u}$ to $u$
    such that either $v\in \dom(\lambda_\lp{u}(s))$ and $s=*_{\min}$ or $s$ is the successor of an assignment-node with label $v\leftarrow k$ with $k\in \var\cup \val$.
\end{enumerate}

\begin{example}
  The pomset on the left does not satisfy~\ref{item:prepath}, but the one on the right does.
\vspace{-0.7em}
\[
  \begin{tikzpicture}[node distance=0.5cm]
      \begin{scope}[every node/.style={anchor=center}]
      \node (test1) {$(x=1)$};
      \node[right=of test1] (act1) {$(x\leftarrow 2)$};
      \node[right=of act1] (test2) {$(x=2\wedge y=2)$};
      \end{scope}
      \draw[->] (test1) edge (act1);
      \draw[->] (act1) edge (test2);
  \end{tikzpicture}
  \quad
  \begin{tikzpicture}[node distance=0.5cm]
      \begin{scope}[every node/.style={anchor=center}]
      \node (test1) {$(x=1)$};
      \node[right=of test1] (act1) {$(y\leftarrow 2)$};
      \node[right=of act1] (test2) {$(x=1\wedge y=2)$};
      \end{scope}
      \draw[->] (test1) edge (act1);
      \draw[->] (act1) edge (test2);
  \end{tikzpicture}
\]
\end{example}

Guarded pomsets satisfy~\ref{item:minmax}--\ref{item:prepath}. In fact, there exists an equivalence:

\begin{restatable}{thrm}{fromguardedtocausal}\label{lemma:fromguardedtocausal}
For $U\in\SP$, $U$ is guarded if and only if $U$ satisfies~\ref{item:minmax}--\ref{item:prepath}.
\end{restatable}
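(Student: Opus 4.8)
The plan is to prove the two implications separately: the forward direction (``guarded $\Rightarrow$ (A1)--(A7)'') by induction on the derivation of $U \in \G$, and the backward direction by strong induction on $|S_\lp{u}|$, exploiting the series-parallel decomposition of $U$.

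\emph{Guarded implies (A1)--(A7).} I induct on the four rules defining $\G$. The two base cases ($\alpha \in \State$ and $\alpha\cdot\ltr a\cdot\alpha[\ltr a]$) are direct checks against the definitions of predecessor, successor and path. For the sequential gluing rule, from $U_1\cdot\alpha$ and $\alpha\cdot U_2$ guarded I combine the inductively known properties; the only delicate point is the behaviour around the shared state-node $\alpha$, where one checks that $\alpha$ stays comparable to every node, that the alternation properties (A2)--(A3) survive the seam, and that a path or prepath of $U_1\cdot\alpha\cdot U_2$ either lies entirely in one half or factors through $\alpha$, which is enough to reassemble (A4) and (A7). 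For the parallel gluing rule the key observation is that in $U\parallel V$ any chain --- in particular any path or prepath --- lies entirely within $U$ or within $V$, since the two branches are pairwise incomparable; hence the witnesses for (A4)/(A7) at the combined states $\alpha\oplus\gamma$ and $\beta\oplus\delta$ are inherited from the two summands, and the compatibility condition built into $\oplus$ is exactly what makes the values recorded along these paths agree on shared variables. Properties (A5)--(A6) transfer verbatim.

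\emph{(A1)--(A7) implies guarded.} I induct on $|S_\lp{u}|$ and inspect the root of the series-parallel decomposition of $U$. It cannot be a parallel node, as the parallel composition of two non-empty pomsets has no least element, contradicting (A1); so the root is a leaf or a sequential node. If it is a leaf, (A1) forces $U$ to be a single state-node, covered by the first rule. If it is a sequential node, write $U = U_1\cdots U_m$ with each $U_i$ prime; (A1) forces $U_1 = *_{\min}$ and $U_m = *_{\max}$ to be state-leaves, and (A2) forces $m\geq 3$. Now two cases. If some interior factor $U_i$ is a state-leaf $\sigma$, then $\sigma$ is comparable to every node of $U$; using the minimality built into ``predecessor''/``successor'' one shows every path or prepath of $U$ that meets both sides of $\sigma$ must pass through $\sigma$, so that $U_1\cdots U_{i-1}\cdot\sigma$ and $\sigma\cdot U_{i+1}\cdots U_m$ both satisfy (A1)--(A7); they are strictly smaller, so the induction hypothesis and the sequential gluing rule give $U\in\G$. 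Otherwise no interior factor is a state-leaf; then a short analysis of the possible seams $U_i\mid U_{i+1}$ --- using crucially that an action-node has a \emph{unique} state-node as both predecessor and successor (A3), so that an interior parallel factor, which has several minimal and maximal elements, cannot sit next to an action-leaf or to another parallel factor --- forces $m = 3$, i.e.\ $U = \alpha\cdot U_2\cdot\beta$ with $U_2$ prime. If $U_2$ is an action-leaf $\ltr a$, then (A4)--(A7) pin down $\beta = \alpha[\ltr a]$ and guarantee $\alpha[\ltr a]$ exists, so the second rule applies. If $U_2 = A_1\parallel\cdots\parallel A_p$ with $p\geq 2$, I regroup it as $X\parallel Y$, split the flanking states $\alpha = \alpha_X\oplus\alpha_Y$ and $\beta = \beta_X\oplus\beta_Y$ by letting $\alpha_X$ (resp.\ $\beta_X$) forget exactly the variables untouched by $X$, verify that $\alpha_X\cdot X\cdot\beta_X$ and $\alpha_Y\cdot Y\cdot\beta_Y$ satisfy (A1)--(A7), apply the induction hypothesis, and finish with the parallel gluing rule.

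\emph{Main obstacle.} The crux is the last case: choosing the state split for the parallel gluing rule correctly and proving the two restricted sub-pomsets still satisfy (A1)--(A7). This needs careful bookkeeping with the notion of path (\cref{def:path}) --- one must show that every path or prepath witnessing (A4)/(A7) in $U$ restricts to a legitimate path or prepath in the appropriate branch (again because a path is a chain and so cannot straddle two parallel branches) and that the split domains recombine under $\oplus$ to give back $\alpha$ and $\beta$. The seam analysis yielding $m = 3$ in the no-interior-state case, though elementary, must also be done with care, since it is exactly what guarantees the parallel block is properly sandwiched between two state-nodes and hence amenable to the parallel gluing rule at all.
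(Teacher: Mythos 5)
Your overall plan mirrors the paper's: the forward direction by induction on the derivation of $U \in \G$, and the backward direction by strong induction on the size of the series-parallel pomset $U$. Your presentation of the backward direction via a maximal sequential decomposition $U_1\cdots U_m$ and a seam analysis is superficially different from the paper's, which writes $U = \alpha\cdot V\cdot\beta$ and case-splits on $V$ being a single letter, $V_0\cdot V_1$, or $V_0\parallel V_1$, but the two are equivalent; your argument that $m = 3$ when no interior factor is a state-leaf corresponds to the paper's observation that, in the $V_0\cdot V_1$ case, exactly one of the two boundary sets (the $\leq$-maxima of the left half and the $\leq$-minima of the right half) is a singleton labelled by a state.

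The gap is in the parallel case. You propose splitting the flanking states by letting $\alpha_X$ ``forget exactly the variables untouched by $X$,'' and, read syntactically, this fails. Take $U = \alpha\cdot\bigl((w\leftarrow 3)\parallel(z\leftarrow 5)\bigr)\cdot\beta$ with $\alpha = \{v\mapsto 0, w\mapsto 0, z\mapsto 0\}$ and $\beta = \{v\mapsto 0, w\mapsto 3, z\mapsto 5\}$: this $U$ satisfies (A1)--(A7), but $v$ appears in no node of either branch, so both $\alpha_X$ and $\alpha_Y$ would drop $v$, giving $\alpha_X\oplus\alpha_Y\neq\alpha$, and the parallel gluing rule would reconstruct a pomset other than $U$. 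The split the paper uses is path-based and, crucially, asymmetric between $\alpha$ and $\beta$: it puts $\alpha_0(v) = \alpha(v)$ exactly when there is a path for $v$ from $*_{\min}$ to $*_{\max}$ lying entirely inside $V_0$ (so in the example $v$ \emph{is} kept by both halves, since $\alpha,(w\leftarrow 3),\beta$ is a path for $v$ inside $V_0$ even though $V_0$ never mentions $v$), and puts $\beta_0(v) = \beta(v)$ when there is such a path \emph{or} an assignment to $v$ inside $V_0$ --- the extra disjunct handles variables that first appear through an assignment, which (A7) allows. You rightly flag the state split and the domain bookkeeping as the crux, but ``untouched by $X$'' does not pin this definition down, and without the path-based definition you can prove neither $\alpha_X\oplus\alpha_Y = \alpha$ nor that (A4)/(A7) descend to $\alpha_X\cdot X\cdot\beta_X$.
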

\begin{proof}[Sketch]
The forward implication is proved by induction on the construction of $\G$.
For the other direction, we perform induction on the size of $U$ (which is possible because $U$ is series-parallel and therefore finite).
The induction hypothesis then states that whenever $V$ is strictly smaller than $U$, and $V$ satisfies the seven properties, then $V$ is guarded.
Since $U$ satisfies~\ref{item:minmax}, we know that either it consists of one node labelled by a state, in which case $U$ is immediately guarded, or $U=\alpha\cdot V\cdot \beta$ for $\alpha,\beta\in\State$ and $V \in \SP$.
This gives us four cases to consider: $V=1$, $V=\ltr{a}$ for some $\ltr{a}\in\Act\cup\State$, $V=V_0\cdot V_1$ or $V=V_0\parallel V_1$.
The first case can be disregarded as $U$ would then violate~\ref{item:alternating}.
In the second case,~\ref{item:path}--\ref{item:prepath} can be used to show that $\beta=\alpha[\ltr{a}]$.
In the latter two cases we show that $U$ is built out of two strictly smaller pomsets that satisfy~\ref{item:minmax}--\ref{item:prepath}, making them guarded by the induction hypothesis.
When these two pomsets are combined to form $U$, this is done according to the rules of guarded pomsets, making $U$ guarded as well.
\ifarxiv\else%
The details can be found in the full version of this paper~\cite{fullversion}.
\fi%
\end{proof}

\section{Litmus Test}\label{section:litmus}
The POCKA-semantics of a program captures all \emph{possible} behaviours
of the program, including all behaviours that could arise when it is put in
parallel with other programs. In this section we look at the behaviour
of the litmus test when it is executed in isolation. In the previous section we saw that if a pomset represents an execution of a program in isolation, it is guarded, and hence it is sufficient to look at the guarded pomsets. We demonstrate that there are in fact no guarded pomsets in the semantics of the litmus test, which shows that it passes. This suggests the guarded fragment of the POCKA-semantics is sequentially consistent: the programs behave as if memory accesses performed concurrently are interleaved and executed sequentially and writes to memory are broadcasted to all threads instantaneously.

Recall the litmus test $t$ we considered before, with $\var=\{x,y,r_0,r_1\}$ and $\val=\{0,1\}$:
\[t:=(r_0=0\wedge r_1=0)\cdot ((x\leftarrow 1\cdot r_0\leftarrow y) \parallel (y\leftarrow 1\cdot r_1\leftarrow x)) \cdot \overline{(r_0=1\vee r_1=1)} \]

Our strategy for showing that the semantics of $t$ does not contain guarded pomsets, is to first show that all pomsets in the semantics of $t$ have certain property. We then claim that if a pomset has this property, then it is not guarded, using~\ref{item:minmax}--\ref{item:prepath} from \cref{section:causal}.

\begin{definition}[Litmus Pomsets]\label{def:propq}
    Let $x,y,r_0,r_1,w\in \var$ be distinct and $0,1\in \val$.
    A pomset $U=[\lp{u}]$ has property $P$, denoted $P(U)$, if there exists $u_1,u_2,v_1,v_2,w\in
    S_\lp{u}$ s.t.
    \begin{enumerate}
      \setlength\itemsep{0em}
      \item\label{item:existence-in-u} the following conditions hold:
      \begin{mathpar}\lambda_\lp{u}(u_1)=(x\leftarrow
    1) \and \lambda_\lp{u}(u_2)=(y\leftarrow 1) \and \lambda_\lp{u}(v_1)=(r_0\leftarrow y) \and \lambda_\lp{u}(v_2)=(r_1\leftarrow x) \and
      \lambda_\lp{u}(w)(r_0)=0=\lambda_\lp{u}(w)(r_1)
      \and
  u_1 \leq_\lp{u} v_1 \leq_\lp{u} w
        \and u_2 \leq_\lp{u} v_2 \leq_\lp{u} w
      \end{mathpar}
      Graphically, we can represent these conditions as the following diagram:

      \hfill
      \begin{tikzpicture}[xscale=1.4,yscale=.4]
        \node(u1)at (0,2) {$u_1:x\leftarrow 1$};
        \node(u2)at (0,0) {$u_2:y\leftarrow 1$};
        \node(v1)at (2,2) {$v_1:r_0\leftarrow x$};
        \node(v2)at (2,0) {$v_2:r_1\leftarrow y$};
        \node(w)at (5,1) {$w:\left[
            \begin{array}{c@{\mapsto}c}
              r_0&0\\
              r_1&0
            \end{array}
          \right]$};
        \draw[->] (u1) to (v1);
        \draw[->] (u2) to (v2);
        \draw[->] (v1) to (w);
        \draw[->] (v2) to (w);
      \end{tikzpicture}
      \hfill$ $

      \item\label{item:relative-existence} For other assignment-nodes in $U$, we have the following conditions. Let $k \in\val\cup\var$.
      \begin{mathpar}
        \forall z. \lambda_\lp{u}(z)=(x\leftarrow k) \Rightarrow z\leq_\lp{u} u_1
      \and
      \forall z.
      \lambda_\lp{u}(z)=(y\leftarrow k)
      \Rightarrow z\leq_\lp{u} u_2
      \and
      \forall z.
      \lambda_\lp{u}(z)=(r_0\leftarrow k)
      \Rightarrow z\leq_\lp{u} v_1
      \and
      \forall z. \lambda_\lp{u}(z)=(r_1\leftarrow k) \Rightarrow z\leq_\lp{u} v_2
    \end{mathpar}
   \end{enumerate}
\end{definition}

The property $P$ describes the actions and observations found in the litmus test, and their relative ordering.
For instance, $\forall z. \lambda_\lp{u}(z)=(x\leftarrow n) \Rightarrow z\leq_\lp{u} u_1$ states that all action-nodes that change the value of $x$, occur before node $u_1$
Hence, the maximal node that alters the value of $x$, changes $x$ to $1$. The other requirements are explained similarly.

\begin{restatable}{lem}{ifpnotguarded}\label{lemma:ifpnotguarded}
    Let $U=[\lp{u}]\in\SP$. If $P(U)$ then $U$ is not guarded.
\end{restatable}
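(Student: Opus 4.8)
I argue by contradiction. Suppose $U=[\lp{u}]$ is guarded and $P(U)$ holds; then by \cref{lemma:fromguardedtocausal} the pomset $U$ satisfies \ref{item:minmax}--\ref{item:prepath}, and I fix nodes $u_1,u_2,v_1,v_2,w$ as in \cref{def:propq}. Since the labels $x\leftarrow 1,\ y\leftarrow 1,\ r_0\leftarrow y,\ r_1\leftarrow x$ lie in $\Act$, the nodes $u_1,u_2,v_1,v_2$ are action-nodes, so by \ref{item:succpred} each has a unique predecessor and a unique successor, both state-nodes (these exist because by \ref{item:minmax} none of these nodes equals $*_{\min}$ or $*_{\max}$). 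Write $p_1,s_1$ (resp.\ $p_2,s_2$) for the predecessor and successor of $v_1$ (resp.\ $v_2$), and $t_1$ for the successor of $u_1$. Note also that $w$ is a state-node, since $\lambda_\lp{u}(w)(r_0)$ is defined.

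The heart of the argument is to pin down the values of $x$, $y$, $r_0$ and $r_1$ at $*_{\max}$. By \ref{item:effect} applied to $u_1$ we get $\lambda_\lp{u}(t_1)(x)=1$; by clause~\ref{item:relative-existence} every node whose label writes $x$ lies $\leq_\lp{u} u_1<_\lp{u}t_1$, so no action-node of a path for $x$ from $t_1$ to $*_{\max}$ (which exists by \ref{item:path}) writes $x$, whence $x$ keeps value $1$ along it and $\lambda_\lp{u}(*_{\max})(x)=1$; symmetrically $\lambda_\lp{u}(*_{\max})(y)=1$. Next, \ref{item:annoyingassignments} on $v_1$ yields $y\in\dom(\lambda_\lp{u}(p_1))$ and $\lambda_\lp{u}(s_1)(r_0)=\lambda_\lp{u}(p_1)(y)$; since every node writing $r_0$ lies $\leq_\lp{u}v_1$, both a path for $r_0$ from $s_1$ to $*_{\max}$ and one from $w$ to $*_{\max}$ keep $r_0$ constant, so $\lambda_\lp{u}(p_1)(y)=\lambda_\lp{u}(s_1)(r_0)=\lambda_\lp{u}(*_{\max})(r_0)=\lambda_\lp{u}(w)(r_0)=0$, the last equality by clause~\ref{item:existence-in-u}; symmetrically (using \ref{item:annoyingassignments} on $v_2$) $\lambda_\lp{u}(p_2)(x)=0$.

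Now I close a cycle. Since $y\in\dom(\lambda_\lp{u}(p_1))$ with $\lambda_\lp{u}(p_1)(y)=0$ but $\lambda_\lp{u}(*_{\max})(y)=1$, any path for $y$ from $p_1$ to $*_{\max}$ (available by \ref{item:path}) must contain an action-node writing $y$; taking the last such node $\mu$ on the path, the value of $y$ stays $1$ from $\mu$'s successor on, and $\mu>_\lp{u}p_1$. Clause~\ref{item:relative-existence} forces $\mu\leq_\lp{u}u_2$, so $p_1<_\lp{u}\mu\leq_\lp{u}u_2$. Symmetrically there is a node $\nu$ writing $x$ with $p_2<_\lp{u}\nu\leq_\lp{u}u_1$. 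Finally, clause~\ref{item:existence-in-u} gives $u_1\leq_\lp{u}v_1$ with $u_1\neq v_1$ (different labels), so $u_1<_\lp{u}v_1$ and hence $u_1\leq_\lp{u}p_1$ (as $p_1$ is the predecessor of $v_1$); chaining, $p_2<_\lp{u}\nu\leq_\lp{u}u_1\leq_\lp{u}p_1$, i.e.\ $p_2<_\lp{u}p_1$. Symmetrically $u_2\leq_\lp{u}p_2$ and $p_1<_\lp{u}\mu\leq_\lp{u}u_2\leq_\lp{u}p_2$, i.e.\ $p_1<_\lp{u}p_2$. These two statements contradict each other, so $U$ cannot be guarded.

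The only step needing care is the repeated ``constant along a path'' argument: using \ref{item:succpred} and \cref{def:path}, one must check that once every node writing a variable $v$ lies strictly below the starting node of a path, none of the intermediate action-nodes of that path writes $v$ (any such node would have to be simultaneously $\leq_\lp{u}$ and $\geq_\lp{u}$ the start, forcing it to be a state-node), so that the defining recurrence for paths always takes its ``otherwise'' branch and the value of $v$ is genuinely preserved. Everything else is bookkeeping; the real idea is just to read off the four values at $*_{\max}$ and then compare the predecessors $p_1$ and $p_2$ of the two reads.
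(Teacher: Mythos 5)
Your proof is correct and follows essentially the same strategy as the paper's: invoke Theorem~\ref{lemma:fromguardedtocausal} to pass to properties~\ref{item:minmax}--\ref{item:prepath}, use paths together with the ordering constraints of property $P$ to pin down the relevant values, and then produce a cyclic chain in $\leq_\lp{u}$ that antisymmetry rules out. The bookkeeping is slightly different — you close the cycle at the predecessors $p_1, p_2$ of the two reads and route value comparisons through $*_{\max}$, whereas the paper closes it at $v_1, v_2$ and their successors and works with $w$ directly — but this is the same argument.
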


We show that $P$ is an invariant under closure w.r.t.\ $\hexch$ and $\hcontr$.
To this end, it is useful to study the effect of the contraction order on the level of pomsets; we introduce the following partial order relation on pomsets, analogous to the subsumption order.

\begin{definition}[Contraction Order]%
  \label{definition:contraction-pomsets}
  Let $U = [\lp{u}]$ and $V = [\lp{v}]$ be pomsets over $\Act \cup \State$.
  We write $U \preceq V$ holds iff there exists a surjection $h\colon S_\lp{v} \to S_\lp{u}$ satisfying: (i) $\lambda_{\lp{u}} \circ h = \lambda_{\lp{v}}$;
(ii) $v \leq_{\lp{v}} v'$ implies $h(v) \leq_{\lp{u}} h(v')$;
(iii) if $h(v) \leq_{\lp{u}} h(v')$, then $\lambda_{\lp{v}}(v),
\lambda_{\lp{v}}(v') \in \State$ implies $v \leq_{\lp{v}} v'$ or $v' \leq_{\lp{v}} v$, and $\lambda_{\lp{v}}(v)$ or
$\lambda_{\lp{v}}(v') \not\in \State$ implies $v \leq_{\lp{v}} v'$.
\end{definition}

We then prove the analogue of \cref{lemma:exch-closure-vs-subsumption}, relating $\preceq$ to closure w.r.t. $\hcontr$ as follows.

\begin{restatable}{lem}{lemmaclosurevscontraction}%
\label{lemma:closure-vs-contraction}
Let $L \subseteq \SP$ and $U \in \SP$.
Now $U\in\closure[\hcontr]L$ iff $U \preceq V$ for some $V \in L$.
\end{restatable}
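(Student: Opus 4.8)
The plan is to establish the two implications of \cref{lemma:closure-vs-contraction} separately, mirroring the structure of the proof of \cref{lemma:exch-closure-vs-subsumption}. For the right-to-left direction, I would first check that $\preceq$ is a preorder (reflexivity is the identity surjection; transitivity is composition of surjections, with condition (iii) preserved under composition — this needs a small case analysis), and then verify that $\closure[\hcontr]{L}$ is downward closed under $\preceq$ by showing each application of the $\hcontr$ rule produces a pomset that is $\preceq$-below one already present. Concretely, the single contraction hypothesis $\alpha \le \alpha \cdot \alpha$ instantiated in a context $C \in \PC$ takes a pomset of the form $C[\alpha \cdot \alpha]$ down to $C[\alpha]$; the surjection $h$ here is the identity outside the hole and merges the two $\alpha$-nodes inside, and conditions (i)–(iii) are readily checked for this specific $h$. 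Transitivity of $\preceq$ then gives that every $U \preceq V \in L$ lies in $\closure[\hcontr]{L}$.

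For the left-to-right direction — which I expect to be the main obstacle — I would argue by induction on the derivation of $U \in \closure[\hcontr]{L}$ using the inductive description of closure in \cref{def:closure}. The base case $U \in L$ is handled by reflexivity. For the inductive step, suppose $C[\sembka{\alpha}] \subseteq \closure[\hcontr]{L}$ and we add $C[\sembka{\alpha \cdot \alpha}]$; fix $U \in C[\sembka{\alpha \cdot \alpha}]$, so $U = C[\alpha \cdot \alpha]$ (since $\sembka{\alpha \cdot \alpha} = \{\alpha \cdot \alpha\}$ and $\sembka{\alpha} = \{\alpha\}$). By the induction hypothesis applied to $C[\alpha] \in \closure[\hcontr]{L}$, there is some $V \in L$ with $C[\alpha] \preceq V$; and since $U = C[\alpha\cdot\alpha] \preceq C[\alpha]$ via the node-merging surjection above, transitivity of $\preceq$ yields $U \preceq V$. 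The delicate point is that condition (iii) in \cref{definition:contraction-pomsets} is crafted precisely so that $\preceq$ only permits the identification of \emph{state-labelled} nodes that are already comparable (and forbids identifying action-nodes or reordering across them) — so I must check carefully that the merging map for a single contraction step genuinely satisfies (iii), and that (iii) is stable under the composition used in transitivity. The interaction between the pomset-context insertion $C[-]$ and the order structure (i.e., that $C[\alpha\cdot\alpha] \preceq C[\alpha]$ really does arise from a legitimate surjection, using the explicit clauses defining $C[U]$) is the second place where care is needed; this is essentially a structural induction on $C \in \PC$ showing that $W \preceq W'$ implies $C[W] \preceq C[W']$, analogous to the monotonicity of $C[-]$ under subsumption used implicitly in \cref{lemma:exch-closure-vs-subsumption}.

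Assembling these pieces, both inclusions follow, establishing $U \in \closure[\hcontr]{L}$ iff $U \preceq V$ for some $V \in L$. The only genuinely new technical content relative to the existing $\hexch$-analogue is the bookkeeping around condition (iii) — ensuring it is exactly the invariant that the $\hcontr$ rule preserves — and I would expect the write-up to spend most of its length on that verification, with the context-monotonicity lemma for $\preceq$ stated as an auxiliary fact (provable by the same structural induction on $\PC$ as for subsumption).
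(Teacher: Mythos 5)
Your left-to-right argument (second paragraph) is the paper's, modulo a double sign error: the rule $\alpha \leq \alpha\cdot\alpha$, read through Definition~\ref{def:closure}, adds $C[\sembka{\alpha}]$ once $C[\sembka{\alpha\cdot\alpha}]$ is already in the closure, not the other way around; and since the witnessing map of Definition~\ref{definition:contraction-pomsets} is a surjection from the carrier of the \emph{larger} pomset onto the smaller, the node-merging map you describe witnesses $C[\alpha] \preceq C[\alpha\cdot\alpha]$ rather than $C[\alpha\cdot\alpha] \preceq C[\alpha]$. Once both reversals are corrected, the inductive step becomes exactly the paper's: apply the induction hypothesis to $C[\alpha\cdot\alpha]$, use $C[\alpha]\preceq C[\alpha\cdot\alpha]$ (context-monotonicity of $\preceq$), and finish by transitivity.

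The genuine gap is in the right-to-left direction, which you treat as routine. You propose to prove that $\closure[\hcontr]{L}$ is downward closed under $\preceq$ by ``showing each application of the $\hcontr$ rule produces a pomset that is $\preceq$-below one already present'' --- but that fact establishes the \emph{other} inclusion ($\closure[\hcontr]{L}$ is contained in the $\preceq$-downward closure of $L$); it says nothing about whether an arbitrary $U$ with $U\preceq V\in L$ is reachable by closure steps, and appealing to transitivity does not close the gap. The obstacle is that one $\preceq$-witness can collapse many state-nodes at once, in several disjoint places in the pomset, whereas a single closure step merges only two adjacent copies of one state inside one context $C$. The paper bridges this by first proving decomposition lemmas for $\preceq$ (if $U_0\cdot U_1\preceq V$ then $V$ splits as $V_0\cdot V_1$ with $U_i\preceq V_i$, and likewise for $\parallel$; \cref{lemma:preceq-seq}, \cref{lemma:preceq-parallel}), and then using them to characterise $\preceq$ restricted to $\SP$ as the \emph{smallest precongruence} containing $\alpha\preceq\alpha\cdot\alpha$ (\cref{lemma:preceq-smallest-preorder}). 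Only with that inductive generating description can one reduce a general $U\preceq V$ to a finite chain of atomic contractions, each of which the closure rule absorbs. None of this machinery appears in your sketch, and the context-monotonicity lemma you flag as the main auxiliary ingredient runs in the wrong direction: it lifts a given $\preceq$-relation into a context, whereas what is needed is to \emph{decompose} a given $\preceq$-witness into local atomic steps.
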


With this characterisation in hand, we can prove that $P$ is invariant under closure.

\begin{restatable}{lem}{pclosure}\label{lemma:p-closure}
Let $e\in\termspocka$. If $\forall U\in\semwpocka{e}$ we have $P(U)$, then $\forall V\in \sempocka{e}$ it holds that $P(V)$.
\end{restatable}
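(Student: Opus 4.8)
The plan is to work from the identity $\sempocka{e} = \closure[\hexch\cup\hcontr]{\semwpocka{e}}$ and reduce the statement to two monotonicity properties of the predicate $P$. Write $\mathcal{P} = \{ U \in \SP : P(U) \}$; the lemma asserts that $\semwpocka{e} \subseteq \mathcal{P}$ implies $\closure[\hexch\cup\hcontr]{\semwpocka{e}} \subseteq \mathcal{P}$. By \cref{def:closure}, $\closure[\hexch\cup\hcontr]{L}$ is the smallest language containing $L$ that satisfies the closure rule for every hypothesis in $\hexch\cup\hcontr$, i.e.\ satisfies the rules coming from $\hexch$ and those coming from $\hcontr$; equivalently, it is the smallest language containing $L$ that is simultaneously $\hexch$-closed and $\hcontr$-closed. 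Hence it is enough to check that $\mathcal{P}$ is $\hexch$-closed and $\hcontr$-closed. Recalling that a language equals its $H$-closure exactly when it is $H$-closed, \cref{lemma:exch-closure-vs-subsumption} turns $\hexch$-closedness of $\mathcal{P}$ into the implication ``$P(V)$ and $U \sqsubseteq V$ imply $P(U)$'', and \cref{lemma:closure-vs-contraction} turns $\hcontr$-closedness of $\mathcal{P}$ into ``$P(V)$ and $U \preceq V$ imply $P(U)$''. This route has the pleasant feature that one never needs to know how the two closures interact.

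For subsumption, I would take the label- and order-preserving bijection $h\colon S_\lp{v}\to S_\lp{u}$ witnessing $U \sqsubseteq V$ and carry the five witness nodes $u_1,u_2,v_1,v_2,w$ of $P(V)$ forward to $h(u_1),\dots,h(w)$. Label preservation yields all the label requirements of condition~\ref{item:existence-in-u} --- in particular $h(w)$ is again a state-node sending both $r_0$ and $r_1$ to $0$ --- and order preservation keeps the chains $u_1\leq v_1\leq w$ and $u_2\leq v_2\leq w$. For condition~\ref{item:relative-existence} I would argue nodewise: any assignment-node $z'$ of $U$ with, say, $\lambda_\lp{u}(z') = (x\leftarrow k)$ pulls back along $h^{-1}$ to a node of the same label, which $P(V)$ places below $u_1$, so $z'$ lands below $h(u_1)$; the other three bounds are the same. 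The contraction case is essentially identical with $h$ now the surjection of \cref{definition:contraction-pomsets}: its clauses~(i) and~(ii) supply label- and order-preservation, so condition~\ref{item:existence-in-u} transfers unchanged, and for condition~\ref{item:relative-existence} I would pick, by surjectivity, some preimage of $z'$, observe that it is again an action-node with label $(x\leftarrow k)$, and transport the bound from $P(V)$ as before.

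The only genuinely delicate point, and the only place the two cases differ, is condition~\ref{item:relative-existence}: since it quantifies over \emph{every} assignment-node of the smaller pomset $U$, I must be sure each such node has a counterpart in $V$ constraining it. For $\sqsubseteq$ this comes for free from bijectivity of $h$; for $\preceq$ it comes from surjectivity of $h$, once one notes that label preservation sends a preimage of an action-node to an action-node. Beyond this nothing is subtle: $P$ says nothing about minima, maxima, alternation of states and actions, or distinctness of the five witness nodes, only about the presence of suitably labelled nodes and the order relations between them, and both kinds of data are manifestly preserved by label- and order-preserving maps --- so the asymmetric clause~(iii) of \cref{definition:contraction-pomsets} is never actually invoked. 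Combining the two implications with the reduction in the first paragraph then gives the lemma.
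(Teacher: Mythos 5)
Your argument is correct, and it reorganises the reduction to the two orders in a way that genuinely departs from the paper's proof, even though the downstream verification --- that $P$ is monotone downward along $\sqsubseteq$ and along $\preceq$, using only label- and order-preservation together with bijectivity (respectively surjectivity) --- coincides with the paper's. The paper first invokes the factorisation $\sempocka{e} = \closurep[\hcontr]{\closure[\hexch]{\semwpocka{e}}}$ (\cref{lemma:factorise-exch}, imported from~\cite{fossacs2020}), and then uses \cref{lemma:exch-closure-vs-subsumption,lemma:closure-vs-contraction} to produce, for each $V \in \sempocka{e}$, a chain $V \preceq W \sqsubseteq X$ with $X \in \semwpocka{e}$, along which $P$ is transferred. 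You instead exhibit your set $\mathcal{P}$ of pomsets satisfying $P$ as a prefixed point of the closure operator: $\hexch\cup\hcontr$-closedness decomposes as the conjunction of $\hexch$-closedness and $\hcontr$-closedness (the inference rule of \cref{def:closure} ranges over hypotheses one at a time), and the same two lemmas turn each of these into downward closure of $\mathcal{P}$ under the corresponding order. What this buys is independence from the factorisation lemma --- you never need to know how the two closures compose --- and a formulation that would extend verbatim to any set of hypotheses whose closure is characterised by a preorder under which $P$ is downward monotone. The cost is a small amount of fixed-point bookkeeping (minimality of the closure, and the equivalence of being $H$-closed and being a fixed point of $\closure[H]{-}$), which you handle correctly. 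Both routes are valid; yours is arguably the more modular of the two, and you are right that clause~(iii) of \cref{definition:contraction-pomsets} plays no role.
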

\begin{proof}[Sketch]
By~\cite[Lemma~5.4]{fossacs2020} we know that $\sempocka{e} = \closurep[\hcontr]{\closure[\hexch]{\semwpocka{e}}}$.
It then follows, by \cref{lemma:exch-closure-vs-subsumption,lemma:closure-vs-contraction}, that if $V \in \sempocka{e}$, then there must exist $W, X \in \SP$ with $X \in \semwpocka{e}$ and $V \preceq W \sqsubseteq X$.
We then show that $P$ is preserved by both of these orders.
\end{proof}

\begin{corollary}
The semantics of the litmus test contains no guarded pomsets: $\sempocka{t}\cap \G=\emptyset$.
\end{corollary}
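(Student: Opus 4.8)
The plan is to derive the corollary immediately from \cref{lemma:p-closure} and \cref{lemma:ifpnotguarded}. Since $\sempocka{t} = \closure[\hexch\cup\hcontr]{\semwpocka{t}}$, it suffices to prove the single auxiliary claim that $P(U)$ holds for every $U \in \semwpocka{t}$. Granting this, \cref{lemma:p-closure} yields $P(V)$ for all $V \in \sempocka{t}$, and \cref{lemma:ifpnotguarded} then shows that no such $V$ is guarded, i.e.\ $\sempocka{t} \cap \G = \emptyset$. So all the content lies in the auxiliary claim.

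To prove it, I would unfold $\semwpocka{-}$ along the structure of
\[
  t = (r_0=0 \wedge r_1=0) \cdot ((x \leftarrow 1 \cdot r_0 \leftarrow y) \parallel (y \leftarrow 1 \cdot r_1 \leftarrow x)) \cdot \overline{(r_0=1 \vee r_1=1)}.
\]
Because $\semwpocka{p} = \State^* \cdot \semplexpl{p} \cdot \State^*$ for $p \in \termspl$ and $\semwpocka{v \leftarrow k} = \State^* \cdot \{v \leftarrow k\} \cdot \State^*$, unfolding this composition shows that every $U = [\lp{u}] \in \semwpocka{t}$ contains exactly four action-nodes, labelled $x \leftarrow 1$, $y \leftarrow 1$, $r_0 \leftarrow y$, $r_1 \leftarrow x$ --- pairwise distinct, since $x, y, r_0, r_1$ are --- and all remaining nodes are state-nodes. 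Name these nodes $u_1, u_2, v_1, v_2$ as in \cref{def:propq}. The orderings $u_1 \leq_\lp{u} v_1$ and $u_2 \leq_\lp{u} v_2$ hold because within each parallel branch sequential composition places the first factor entirely below the second, and parallel composition preserves these orders; and taking $w$ to be the state-node coming from $\semplexpl{\overline{(r_0=1 \vee r_1=1)}}$, which sits in the trailing $\semwpocka{\cdot}$-factor and hence after the whole parallel block, gives $v_1 \leq_\lp{u} w$ and $v_2 \leq_\lp{u} w$. The second condition of \cref{def:propq} is then trivial: $u_1, u_2, v_1, v_2$ are the only assignment-nodes of $U$, so the only instance of each quantified implication is the one where $z$ is the relevant node itself, which is settled by reflexivity of $\leq_\lp{u}$.

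The one genuinely computational step --- and the part I expect to require the most care --- is verifying $\lambda_\lp{u}(w)(r_0) = 0 = \lambda_\lp{u}(w)(r_1)$, which is exactly where the litmus test's assumption $\val = \{0,1\}$ enters. This is a pseudocomplement computation mirroring the earlier worked example for $\semplexpl{\overline{v=n}}$: one has $\semplexpl{(r_0=1) \vee (r_1=1)} = \{\alpha \mid \alpha(r_0) = 1 \text{ or } \alpha(r_1) = 1\}$, and $\semplexpl{\overline{(r_0=1 \vee r_1=1)}}$ is the union of all downward-closed sets of states disjoint from it. A state $\alpha$ lies in that union iff its downward closure avoids the set; since any coordinate on which $\alpha$ is undefined could be extended to the value $1$, this forces $r_0, r_1 \in \dom(\alpha)$ with $\alpha(r_0) \neq 1 \neq \alpha(r_1)$, and with $\val = \{0,1\}$ this means $\alpha(r_0) = 0 = \alpha(r_1)$. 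Hence $w$ satisfies the last clause of \cref{def:propq}, which completes the auxiliary claim and with it the corollary.
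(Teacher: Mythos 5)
Your proposal is correct and follows exactly the paper's strategy: establish that every pomset in the unclosed semantics $\semwpocka{t}$ satisfies $P$ (picking $u_1,u_2,v_1,v_2$ as the unique assignment-nodes and $w$ as the state-node arising from $\semplexpl{\overline{(r_0=1\vee r_1=1)}}$), then invoke \cref{lemma:p-closure} and \cref{lemma:ifpnotguarded}. Your write-up is simply more explicit than the paper's one-sentence argument, in particular spelling out the pseudocomplement computation that pins $w$ down to $\lambda_{\lp{u}}(w)(r_0)=0=\lambda_{\lp{u}}(w)(r_1)$; that detail is correct and exactly where $\val=\{0,1\}$ and downwards-closure are used.
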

\begin{proof}
All pomsets in $\semwpocka{t}$ have property $P$ if we pick for $u_1$ the node with label $(x\leftarrow 1)$, for $v_1$ the node with label $(r_0\leftarrow y)$, and same for $u_2$ and $v_2$ (see \cref{example:semantics-litmus}). Lastly, we pick for $w$ the node with label $\delta$.
By \cref{lemma:p-closure} we can conclude that all pomsets in $\sempocka{t}$ have property $P$, and by \cref{lemma:ifpnotguarded} we infer that $t$ has no guarded pomsets in its semantics.
\end{proof}

We showed that we can correctly analyse the litmus test in our algebraic framework. In the next example we show that addition of one extra axiom, which is a commonly made assumption in programming languages, makes the litmus test fail on the guarded semantics.

\begin{example}
We add the following axiom, which states that assignments to different variables can be swapped as long as the assigned values are none of the involved variables:
\[
v\leftarrow k \cdot v'\leftarrow k' \equivpocka v'\leftarrow k' \cdot v\leftarrow k \text{ for } v,v'\in\var,\ k,k'\in\var\cup\val,\ k'\neq v\neq v',\ k\neq v\neq v'
\]
We show that with this assumption, which is commonly made in programming languages, we get guarded pomsets in the semantics of the litmus program. We can derive:
\begin{align*}
((r_0\leftarrow y)\cdot (r_1\leftarrow x)) &\equivpocka ((r_0\leftarrow y)\parallel 1)\cdot (1\parallel (r_1\leftarrow x)) \tag{Unit axiom} \\
&\leqqpocka ((r_0\leftarrow y) \cdot 1)\parallel (1\cdot (r_1\leftarrow x)) \tag{Exchange Law} \\
&\equivpocka (r_0\leftarrow y) \parallel (r_1\leftarrow x) \tag{Unit axiom}
\end{align*}
Similarly, we can derive that $(x\leftarrow 1)\cdot (y\leftarrow 1)\leqqpocka (x\leftarrow 1)\parallel (y\leftarrow 1)$. Hence, we have
\begin{align*}
& ((r_0\leftarrow y)\cdot (r_1\leftarrow x))\cdot ((x\leftarrow 1)\cdot (y\leftarrow 1))
\leqqpocka ((r_0\leftarrow y) \parallel (r_1\leftarrow x))\cdot ((x\leftarrow 1)\parallel (y\leftarrow 1)) \\
& \leqqpocka    ((r_0\leftarrow y) \cdot (x\leftarrow 1))\parallel ((r_1\leftarrow x)\cdot (y\leftarrow 1)) \tag{Exchange law} \\
& \equivpocka  ((x\leftarrow 1) \cdot (r_0\leftarrow y))\parallel ((y\leftarrow 1)\cdot (r_1\leftarrow x)) \tag{New axiom}
\end{align*}
Let $e= ((r_0\leftarrow y)\cdot (r_1\leftarrow x))\cdot ((x\leftarrow 1)\cdot (y\leftarrow 1))$. We can conclude that
\[
(r_0=0\wedge r_1=0)\cdot e \cdot \overline{(r_0=1\vee r_1=1)} \leqqpocka t
\]
From soundness, we infer that $\sempocka{(r_0=0\wedge r_1=0)\cdot e \cdot \overline{(r_0=1\vee r_1=1)} }\subseteq \sempocka{t}$. In the left set we find at least one guarded pomset. Let $\alpha=(r_0=0\wedge r_1=0\wedge x=0\wedge y=0)$, $\beta=(r_0=0\wedge r_1=0\wedge x=1\wedge y=0)$ and $\gamma = (r_0=0\wedge r_1=0\wedge x=1\wedge y=1)$.
\begin{tikzpicture}[node distance=0.5cm]
    \begin{scope}[every node/.style={anchor=center}]
    \node (alpha) {$\alpha$};
    \node[right=of alpha] (gamma1) {$(r_0\leftarrow y)$};
    \node[right=of gamma1] (alpha2) {$\alpha$};
    \node[right=of alpha2] (gamma2) {$(r_1\leftarrow x)$};
    \node[right=of gamma2] (alpha3) {$\alpha$};
    \node[right=of alpha3] (gamma3) {$(x\leftarrow 1)$};
    \node[right=of gamma3] (alpha4) {$\beta$};
    \node[right=of alpha4] (gamma4) {$(y\leftarrow 1)$};
    \node[right=of gamma4] (alpha5) {$\gamma$};
    \end{scope}
    \draw[->] (alpha) edge (gamma1);
    \draw[->] (gamma1) edge (alpha2);
    \draw[->] (alpha2) edge (gamma2);
    \draw[->] (gamma2) edge (alpha3);
    \draw[->] (alpha3) edge (gamma3);
    \draw[->] (gamma3) edge (alpha4);
    \draw[->] (alpha4) edge (gamma4);
    \draw[->] (gamma4) edge (alpha5);
\end{tikzpicture}

It is easy to show that this pomset is guarded by observing that $\alpha\cdot(r_0\leftarrow y)\cdot\alpha$, $\alpha\cdot (r_1\leftarrow x)\cdot \alpha$, $\alpha\cdot(x\leftarrow 1)\cdot\beta$ and $\beta\cdot(y\leftarrow 1)\cdot \gamma$ are all guarded. Hence, by adding this one extra axiom, we find guarded pomsets in the semantics of the litmus test, meaning that this axiom breaks sequential consistency.
\end{example}

\section{Discussion}\label{sec:discussion}
We presented POCKA, a sound and complete algebraic framework that can be used to analyse concurrent programs that manipulate variables. We identified the guarded fragment of the semantics, and showed this fragment captures the behaviour of programs executing in isolation. We demonstrated reasoning in POCKA by analysing a litmus test, also suggesting that the guarded fragment of the POCKA-semantics is sequentially consistent.

This work is built on Kleene algebra and extensions thereof. It is closest to Concurrent Kleene algebra with Observations~\cite{kappe-brunet-rot-silva-wagemaker-zanasi-2019,fossacs2020}, which was proposed to integrate concurrency with a form of tests (i.e., observations). We deviate from CKAO by using partial observations and accordingly changing the algebraic structure of observations (a PCDL instead of a Boolean algebra), and by incorporating explicit assignments and tests to manipulate variables. Programs such as the litmus test that we analyse in POCKA are outside the scope of CKAO\@.

The idea of using a PCDL and partial functions in the semantics comes from Jipsen and Moshier~\cite{jipsen-moshier-2016}. In the current paper we establish completeness w.r.t.\ the partial function model, which is missing in \emph{loc.\ cit.}
A further contrast is that POCKA includes as basic syntax atomic programs and assertions pertaining to variable assignment, as occur in the litmus test.
The definition of guarded pomset that we used is close to the one proposed in~\cite{jipsen-moshier-2016}. We provided an extensive analysis of guarded pomsets and showed how they can be used to study concrete program behaviour: our new characterisation in terms of concrete properties of pomsets (Theorem~\ref{lemma:fromguardedtocausal}) is essential for the analysis of the litmus test in Section~\ref{section:litmus}.

We suggest three avenues for future research. Firstly, the concrete observations and assignments that we have used are reminiscent of NetKAT~\cite{netkat,netkat2}, an algebraic framework based on Kleene algebra with tests that allows for reasoning about networks. 
POCKA is thus suggestive of a concurrent version of NetKAT, in which algebraic reasoning about concurrent networks could be studied. While NetKAT arises as a particular instance of KAT, POCKA is \emph{not} an instance of its closest relative in the Kleene algebra family, CKAO, due to the aforementioned move from an \emph{arbitrary} Boolean algebra of observations to a \emph{concrete} PCDL\@. It would therefore be of interest to formulate the necessary metatheory for the analogous framework of CKA with partial observations (where partial observations are given by an arbitrary PCDL), and situate POCKA within it.

This naturally leads to a third line of research. We have used the CKAH framework to obtain a completeness proof, and it turned out that the proof technique was perfectly amenable to a replacement of the Boolean algebra structure of observations with our observation algebra. This raises the question: which conditions are necessary on the algebraic structure of observations to be able to prove completeness in a similar manner? In particular, what conditions are needed for a result similar to \cref{lemma:reification} to hold? Our conjecture is that the observation algebra needs to be such that all elements can be written as a finite sum of join-irreducible elements of the algebra (cf. \cref{axiomaticsum}).

\bibliographystyle{plainurl}
\bibliography{bibliography}

\begin{thebibliography}{10}

\bibitem{litmus}
Jade Alglave, Luc Maranget, Susmit Sarkar, and Peter Sewell.
\newblock Litmus: Running tests against hardware.
\newblock In {\em TACAS}, pages 41--44, 2011.
\newblock \href {https://doi.org/10.1007/978-3-642-19835-9_5}
  {\path{doi:10.1007/978-3-642-19835-9_5}}.

\bibitem{netkat}
Carolyn~Jane Anderson, Nate Foster, Arjun Guha, Jean{-}Baptiste Jeannin, Dexter
  Kozen, Cole Schlesinger, and David Walker.
\newblock {NetKAT}: semantic foundations for networks.
\newblock In {\em POPL}, pages 113--126, 2014.
\newblock \href {https://doi.org/10.1145/2535838.2535862}
  {\path{doi:10.1145/2535838.2535862}}.

\bibitem{blyth-2005}
Thomas~S. Blyth.
\newblock {\em Lattices and Ordered Algebraic Structures}.
\newblock Springer-Verlag London, 2005.

\bibitem{conway-1971}
John~Horton Conway.
\newblock {\em Regular Algebra and Finite Machines}.
\newblock Chapman and Hall, Ltd., London, 1971.

\bibitem{DoumaneKPP19}
Amina Doumane, Denis Kuperberg, Damien Pous, and Pierre Pradic.
\newblock Kleene algebra with hypotheses.
\newblock In {\em FOSSACS}, pages 207--223, 2019.
\newblock \href {https://doi.org/10.1007/978-3-030-17127-8_12}
  {\path{doi:10.1007/978-3-030-17127-8_12}}.

\bibitem{netkat2}
Nate Foster, Dexter Kozen, Matthew Milano, Alexandra Silva, and Laure Thompson.
\newblock A coalgebraic decision procedure for {NetKAT}.
\newblock In {\em POPL}, pages 343--355, 2015.
\newblock \href {https://doi.org/10.1145/2676726.2677011}
  {\path{doi:10.1145/2676726.2677011}}.

\bibitem{gischer-1988}
Jay~L. Gischer.
\newblock The equational theory of pomsets.
\newblock {\em Theor. Comput. Sci.}, 61:199--224, 1988.
\newblock \href {https://doi.org/10.1016/0304-3975(88)90124-7}
  {\path{doi:10.1016/0304-3975(88)90124-7}}.

\bibitem{grabowski-1981}
Jan Grabowski.
\newblock On partial languages.
\newblock {\em Fundam. Inform.}, 4(2):427, 1981.

\bibitem{hoare-moeller-struth-wehrman-2009}
Tony Hoare, Bernhard M{\"{o}}ller, Georg Struth, and Ian Wehrman.
\newblock Concurrent {K}leene algebra.
\newblock In {\em CONCUR}, pages 399--414, 2009.
\newblock \href {https://doi.org/10.1007/978-3-642-04081-8_27}
  {\path{doi:10.1007/978-3-642-04081-8_27}}.

\bibitem{jipsen-moshier-2016}
Peter Jipsen and M.~Andrew Moshier.
\newblock Concurrent {K}leene algebra with tests and branching automata.
\newblock {\em J. Log. Algebr. Meth. Program.}, 85(4):637--652, 2016.
\newblock \href {https://doi.org/10.1016/j.jlamp.2015.12.005}
  {\path{doi:10.1016/j.jlamp.2015.12.005}}.

\bibitem{kappe-brunet-rot-silva-wagemaker-zanasi-2019}
Tobias Kapp{\'{e}}, Paul Brunet, Jurriaan Rot, Alexandra Silva, Jana Wagemaker,
  and Fabio Zanasi.
\newblock Kleene algebra with observations.
\newblock In {\em CONCUR}, pages 41:1--41:16, 2019.
\newblock \href {https://doi.org/10.4230/LIPIcs.CONCUR.2019.41}
  {\path{doi:10.4230/LIPIcs.CONCUR.2019.41}}.

\bibitem{fossacs2020}
Tobias Kapp\'{e}, Paul Brunet, Alexandra Silva, Jana Wagemaker, and Fabio
  Zanasi.
\newblock Concurrent {K}leene algebra with observations: From hypotheses to
  completeness.
\newblock In {\em FOSSACS}, pages 381--400, 2020.
\newblock \href {https://doi.org/10.1007/978-3-030-45231-5_20}
  {\path{doi:10.1007/978-3-030-45231-5_20}}.

\bibitem{kozen-1994}
Dexter Kozen.
\newblock A completeness theorem for {K}leene algebras and the algebra of
  regular events.
\newblock {\em Inf. Comput.}, 110(2):366--390, 1994.
\newblock \href {https://doi.org/10.1006/inco.1994.1037}
  {\path{doi:10.1006/inco.1994.1037}}.

\bibitem{kozen-1996}
Dexter Kozen.
\newblock Kleene algebra with tests and commutativity conditions.
\newblock In {\em TACAS}, pages 14--33, 1996.
\newblock \href {https://doi.org/10.1007/3-540-61042-1_35}
  {\path{doi:10.1007/3-540-61042-1_35}}.

\bibitem{kozen-2000}
Dexter Kozen.
\newblock On {H}oare logic and {K}leene algebra with tests.
\newblock {\em {ACM} Trans. Comput. Log.}, 1(1):60--76, 2000.
\newblock \href {https://doi.org/10.1145/343369.343378}
  {\path{doi:10.1145/343369.343378}}.

\bibitem{krob-1990}
Daniel Krob.
\newblock A complete system of {B}-rational identities.
\newblock In {\em ICALP}, pages 60--73, 1990.
\newblock \href {https://doi.org/10.1007/BFb0032022}
  {\path{doi:10.1007/BFb0032022}}.

\bibitem{lampie}
Leslie Lamport.
\newblock How to make a correct multiprocess program execute correctly on a
  multiprocessor.
\newblock {\em {IEEE} Trans. Computers}, 46(7):779--782, 1997.
\newblock \href {https://doi.org/10.1109/12.599898}
  {\path{doi:10.1109/12.599898}}.

\bibitem{laurence-struth-2014}
Michael~R. Laurence and Georg Struth.
\newblock Completeness theorems for bi-{K}leene algebras and series-parallel
  rational pomset languages.
\newblock In {\em RAMiCS}, pages 65--82, 2014.
\newblock \href {https://doi.org/10.1007/978-3-319-06251-8_5}
  {\path{doi:10.1007/978-3-319-06251-8_5}}.

\bibitem{lodaya-weil-2000}
Kamal Lodaya and Pascal Weil.
\newblock Series-parallel languages and the bounded-width property.
\newblock {\em Theoretical Computer Science}, 237(1):347--380, 2000.
\newblock \href {https://doi.org/10.1016/S0304-3975(00)00031-1}
  {\path{doi:10.1016/S0304-3975(00)00031-1}}.

\bibitem{reynolds}
John~C. {Reynolds}.
\newblock {Separation Logic: A Logic for Shared Mutable Data Structures}.
\newblock In {\em LiCS}, July 2002.
\newblock \href {https://doi.org/10.1109/LICS.2002.1029817}
  {\path{doi:10.1109/LICS.2002.1029817}}.

\bibitem{salomaa-1966}
Arto Salomaa.
\newblock Two complete axiom systems for the algebra of regular events.
\newblock {\em J. {ACM}}, 13(1):158--169, 1966.
\newblock \href {https://doi.org/10.1145/321312.321326}
  {\path{doi:10.1145/321312.321326}}.

\end{thebibliography}

\clearpage%
\appendix%
\section{Proofs about observation algebra}%
\label{appendix:proofs}

\soundnesspcdl*
\begin{proof}
The fact that $\mathrm{\acronym}$ is a PCDL and the assignment $\sempl{-}
$ is well-defined establishes the soundness of the PCDL axioms. We thus only have the domain-specific axioms left to verify.
\begin{itemize}
  \item
  If $n \neq m$, it is immediate that
  \[
    \sempl{v=n\wedge v=m}
        = \sempl{v=n} \cap \sempl{v=m}
        = \{\alpha\pipe\alpha(v)=n\} \cap \{\alpha\pipe\alpha(v)=m\}
        = \emptyset
        = \sempl{\bot}
  \]

  \item Next we show $\sempl{\overline{v=n}} \subseteq \sempl{\bigvee_{m \neq n} v=m}$.
	Assume
    \[
        \alpha \in \sempl{\overline{v=n}} = \bigcup\{B\in P_{\leq}(\State) \mid B \cap \sempl{v=n}=\emptyset\}.
    \]
	We have some downwards-closed $B \subseteq \State$ such that $\alpha \in B$ and $B \cap \sempl{v=n} = \emptyset$.
    Suppose towards a contradiction that $\alpha(v)$ is undefined, or that $\alpha(v) = n$.
    We can then choose $\alpha' \in \State$ to be $n$ on $v$, and identical to $\alpha$ elsewhere; in that case, $\alpha' \leq \alpha$, which means that $\alpha' \in B$.
    But then, since $\alpha' \in \sempl{v = n}$ by construction, we have a contradiction with the fact that $B \cap \sempl{v = n} = \emptyset$.
    Thus, there exists an $m \in \val$ such that $\alpha(v) = m$ and $m \neq n$.
    It then follows that $\alpha \in \sempl{\bigvee_{m \neq n} v = m}$.

  \item Next we prove that $\sempl{\overline{\bigwedge_{i}v_i=n_i}}\subseteq \sempl{\bigvee_{i}\overline{v_i=n_i}}$, if the $v_i$ are distinct.
  Let $\alpha \in B$ such that $B\cap \sempl{\bigwedge_{i}v_i=n_i} =\emptyset$.
We claim that for some $i$, $\alpha(v_i) = m \neq n_i$. Suppose otherwise: then for each $v_i$ either $\alpha(v_i) = n_i$ or $\alpha(v_i)$ is undefined. Define:
\[ \alpha'(v) ::= \begin{cases}
			n_i & \text{ if } v = v_i \text{ and } \alpha(v_i) \text{ undefined;} \\
			\alpha(v) & \text{ otherwise.}
			\end{cases} \]

This is well-defined by the assumption that the $v_i$ are all distinct. By construction, $\alpha'\leq \alpha$ and
  $\alpha'\in \bigcap_{i}\{\beta \pipe \beta(v_i)=n_i \}=\sempl{\bigwedge_{i}v_i=n_i}$. As $B$ is downwards-closed, we get $\alpha'\in B$, contradicting $B\cap \sempl{\bigwedge_{i}v_i=n_i} =\emptyset$.  Hence for some $i$, $\alpha(v_i) = m \neq n_i$. Hence $\alpha \in \sempl{\overline{v_i = n_i}} \subseteq \sempl{\bigvee_i \overline{v_i = n_i}}$ as required.
\end{itemize}

For the inductive step, we verify that the closure rules for congruence preserve soundness. This is all immediate from the definition of $\sempl{-}$. For instance, if $e = e_0 \lor e_1$, $f = f_0 \lor f_1$, $e_0 \equivpl f_0$ and $e_1 \equivpl f_1$, then $\sempl{e}=\sempl{e_0}\cup \sempl{e_1}=\sempl{f_0}\cup \sempl{f_1}=\sempl{f}$,
  where we have used that $\sempl{e_0}=\sempl{f_0}$ and $\sempl{e_1}=\sempl{f_1}$ by the induction hypothesis. \qedhere
\end{proof}

\lemmabasicfacts*
\begin{proof}
Assume $\alpha \in \sempl{\pi_\beta}$. Then for all $v \in \dom(\beta)$, $\alpha(v)$ is defined and $\alpha(v) =\beta(v)$, hence $\alpha \leq \beta$. Assume $\alpha \leq \beta$. Then
$\pi_{\alpha}\leqqpl \pi_{\beta}$ is established from $\pi_{\alpha}\wedge \pi_{\beta}\equivpl \pi_{\alpha}$: by the assumption, every conjunct in $\bigwedge_{\beta(v)=n}v=n$ appears as a conjunct in $\bigwedge_{\alpha(v)=n}v=n$, so by idempotence $\pi_\alpha \land \pi_\beta \equivpl \Bigl( \bigwedge_{\alpha(v)=n}v=n \Bigr) \wedge \Bigl(\bigwedge_{\beta(v)=n}v=n \Bigr) \equivpl \bigwedge_{\alpha(v)=n}v=n \equivpl \pi_\alpha$.
Finally, assume $\pi_\alpha \leqqpl \pi_{\beta}$. By soundness $\sempl{\pi_\alpha} \subseteq \sempl{\pi_\beta}$, and it is trivial to
establish $\alpha \in \sempl{\pi_\alpha}$.
\end{proof}

\axiomaticsum*
\begin{proof}
	Noting that $\bigvee \{\alpha \in \State \mid \alpha\leqqpl p\} \leqqpl p$ by definition, we focus on the other inequality,
	proceeding by induction on $p$. For the base cases, $\bot \leqqpl \bigvee \{\alpha \in \State \mid \alpha\leqqpl \bot \}$ by definition; $\top \leqqpl \bigvee \{\alpha \in \State \mid \alpha\leqqpl \top \}$ as
	$\top \equivpl \pi_{\emptyset} \in \{ \alpha \in \State \mid \alpha \leqqpl \top \}$; and
	$v=n \leqqpl \bigvee \{ \alpha \in \State \mid \alpha \leqqpl v = n \}$ as
	$v=n \equivpl \pi_{\{v \mapsto n\}} \in \{ \alpha \in \State \mid \alpha \leqqpl v = n \}$.

In the induction step we have three cases.
\begin{itemize}
  \item If $p = p_0\wedge p_1$, by the inductive hypothesis and distributivity we obtain
	\[p_0 \land p_1 \leqqpl \bigvee\{\alpha \land  \beta \mid \alpha \leqqpl p_0, \beta\leqqpl p_1\}.\]
	We claim that
	$\{\alpha \land  \beta \mid \alpha \leqqpl p_0, \beta\leqqpl p_1\} \subseteq
	\{\alpha \mid \alpha \leqqpl p_0 \land p_1\} \cup \{ \bot \}$. Call $\alpha, \beta \in \State$ \emph{compatible} if,
	for any $v \in \dom(\alpha) \cap \dom(\beta)$, $\alpha(v) = \beta(v)$. Take $\alpha \leqqpl p_0$ and
	$\beta \leqqpl p_1$. There are two cases. In the first, $\alpha$ and $\beta$ are compatible. Then
	define $\gamma$ by
	\[ \gamma(v) ::= \begin{cases}
				\alpha(v) & \text{ if } \alpha(v) \text{ defined;} \\
				\beta(v) & \text{ if } \beta(v) \text{ defined;} \\
				\text{undefined} & \text{ otherwise.} \\
				\end{cases} \]
	This is well-defined by compatability. Then $\gamma \leq \alpha$ as well as $\gamma \leq \beta$, and hence by \cref{lemma:basicfacts} we find $\gamma \leqqpl \alpha \land \beta \leqqpl p_0 \land p_1$.
	In the other case, $\alpha$ and $\beta$ are not compatible: hence for some distinct $n$ and $m$, $v=n$ and $v=m$ are among the conjuncts of $\alpha \land \beta$.
	By the axiom $v=n \land v=m \equivpl \bot$, it then follows that $\alpha \land \beta \equivpl \bot$.
	We obtain
	\[ \bigvee \{\alpha \land \beta \mid \alpha \leqqpl p_0, \beta\leqqpl p_1\}
		\leqqpl \bigvee (\{\alpha \mid \alpha \leqqpl p_0 \land p_1\} \cup \{ \bot \})
		\equivpl \bigvee \{\alpha \mid \alpha \leqqpl p_0 \land p_1\}. \]

  \item If $p = p_0\vee p_1$, we derive
  \begin{align*}
    p_0\vee p_1 & \leqqpl \bigvee\{\alpha \in \State \mid \alpha\leqqpl p_0\} \vee \bigvee\{\beta \in \State \mid \beta\leqqpl p_1\} \tag{IH} \\
          & \leqqpl \bigvee\{\alpha \in \State \mid\alpha\leqqpl p_1\vee p_2\} \tag{$\alpha\leqqpl p_0\leqqpl p_0\vee p_1$, similar for $\beta$} 
  \end{align*}
  \item If $p = \overline{p_0}$, we derive
  \begin{align*}
    \overline{p_0} & \equivpl \overline{\bigvee\{\alpha\mid \alpha\leqqpl p_0\}} \tag{IH} \\
          & \equivpl \bigwedge\{\overline{\alpha}\mid \alpha\leqqpl p_0\} \tag{De Morgan} \\
          & \equivpl \bigwedge\{ \overline{\bigwedge_{\alpha(v)=n} v = n}\mid \alpha\leqqpl p_0\} \tag{Definition of $\pi_\alpha = \alpha$} \\ 
          & \leqqpl \bigwedge\{\bigvee_{\alpha(v)=n} \overline{v = n} \mid \alpha\leqqpl p_0\}  \tag{De Morgan-like domain-specific axiom} \\
          & \leqqpl \bigwedge\{\bigvee_{\substack{\alpha(v)=n \\ m \neq n}} v=m \mid \alpha\leqqpl p_0\} \tag{Pseudocomplement domain-specific axiom}
  \end{align*}
Note that the De Morgan law applied in the second step is indeed satisfied by PCDLs~\cite{blyth-2005}.
Now, define $K ::= \{ \alpha \in \State \mid \alpha \leqqpl p_0\}$, $J_\alpha ::= \{ (v, m) \mid \alpha(v) = n \neq m \}$, $J ::= \bigcup_{\alpha \in K} J_\alpha$ and $F ::= \{ f: K \rightarrow J \mid \forall \alpha \in K, f(\alpha) \in J_{\alpha} \}$. Further, let
\[ p_{\alpha, (v, m)} ::= \begin{cases}
					v=m & \text{ if } \alpha(v) = n \neq m; \\
					\bot & \text{ otherwise}
				\end{cases} \]
Then \[\bigwedge\{\bigvee_{\substack{\alpha(v)=n \\ m \neq n}} v=m \mid \alpha\leqqpl p_0\}
\equivpl \bigwedge_{\alpha \in K} \bigvee_{(v, m) \in J_{\alpha}} p_{\alpha, (v,m)} \equivpl
\bigvee_{f \in F} \bigwedge_{\alpha \in K} p_{\alpha, f(\alpha)}\]
by distributivity. For each $f \in F$, if the $p_{\alpha, f(\alpha)}$ are compatible,
$\bigwedge_{\alpha \in K} p_{\alpha, f(\alpha)} \equivpl \beta_f$, for a $\beta_f$ with the property that for every
$\alpha \leqqpl p_0$, $\beta_f \land \alpha = \bot$ (as by definition, for each such $\alpha$, $\beta_f$
has some $v = m$ as a conjunct, where $\alpha$ has a conjunct $v = n$ for $n \neq m$). If they are incompatible,
$\bigwedge_{\alpha \in K} p_{\alpha, f(\alpha)} \equivpl \bot$. Hence
\[ \bigvee_{f \in F} \bigwedge_{\alpha \in K} p_{\alpha, f(\alpha)} \leqqpl
	\bigvee \{ \beta \mid \text{for all } \alpha, \alpha \leqqpl p_0 \text{ implies } \alpha \land \beta \equivpl \bot\}. \]
For any $\beta$ satisfying the property that $\text{for all } \alpha, \alpha \leqqpl p_0 \text{ implies } \alpha \land \beta \equivpl \bot$, we have
$\beta \land p_0 \equivpl \beta \land \bigvee \{ \alpha \mid \alpha \leqqpl p_0 \} \equivpl \bigvee \{ \alpha \land \beta \mid \alpha \leqqpl
p_0 \} \equivpl
 \bot$, so $\beta \leqqpl \overline{p_0}$ and
\[ \bigvee \{ \beta \mid \text{for all } \alpha \leqqpl p_0 \text{ implies } \alpha \land \beta \equivpl \bot\} \leqqpl \bigvee \{ \beta \mid \beta \leqqpl \overline{p_0} \}, \]
completing the proof. \qedhere
\end{itemize}
\end{proof}

\soundnesscompletenesspcdl*
\begin{proof}
The left-to-right direction follows from \cref{soundnesspcdl}.
For the right-to-left direction, 
suppose that $\sempl{p}=\sempl{q}$. By \cref{axiomaticsum}, we obtain that
\[
\sempl{p}=\sempl{\bigvee\{\alpha \in \State \mid \alpha\leqqpl p\}}=\sempl{\bigvee\{\beta \in \State \mid \beta\leqqpl q\}}=\sempl{q}.
\]

We prove $\alpha\leqqpl p$ if and only if $\alpha \leqqpl q$. Take $\alpha\leqqpl p$.
Then $\alpha \in \sempl{\alpha} \subseteq \sempl{p} = \bigcup_{\beta \leqqpl q} \sempl{\beta}$ by \cref{soundnesspcdl,lemma:basicfacts}.
Hence for some $\beta \leqqpl q$, $\alpha \in \sempl{\beta}$, and by \cref{lemma:basicfacts} once more,
$\alpha \leqqpl \beta \leqqpl q$. The other direction is symmetric. It follows that
\[ p \equivpl \bigvee \{ \alpha \in \State \mid \alpha \leqqpl p \} \equivpl \bigvee \{ \alpha \in \State \mid \alpha \leqqpl q\} \equivpl q, \]
as required.
\end{proof}

\section{Proofs towards completeness}

The following three results are all needed in the proofs that follow, and come from~\cite{fossacs2020}.
First of all, we can prove the following useful properties about the interaction between closure and other operators on pomset languages:
\begin{lemma}%
\label{lemma:composition-vs-closure}
Let $L, K \subseteq \Pom$ and $C \in \PC$.
The following hold.

\smallskip
\begin{minipage}{0.40\textwidth}
\begin{enumerate}
    \item\label{property:hypothesis-closure}
    $L \subseteq {\closure K}$ iff ${\closure L} \subseteq {\closure K}$.

    \item\label{property:hypothesis-monotone}
    If $L \subseteq K$, then ${\closure L} \subseteq {\closure K}$.

    \item\label{property:hypothesis-union}
    ${\closurep {L \cup K}} = {\closurep {{\closure L} \cup {\closure K}}}$

    \item\label{property:hypothesis-concat}
    ${\closurep {L \cdot K}} = {\closurep {{\closure L} \cdot {\closure K}}}$
\end{enumerate}
\end{minipage}
\begin{minipage}{0.54\textwidth}
\begin{enumerate}
 \setcounter{enumi}{4}
    \item\label{property:hypothesis-parallel}
    ${\closurep {L \parallel K}} = {\closurep {{\closure L} \parallel {\closure K}}}$

    \item\label{property:hypothesis-star}
    ${\closurep {L^*}} = \closure {({\left(\closure L\right)}^*)}$

    \item\label{property:hypothesis-context}
    If $\closure{L} \subseteq \closure{K}$, then $\closure{C[L]} \subseteq \closure{C[K]}$.

    \item\label{property:hypothesis-sp}
    If $L \subseteq \SP$, then $\closure{L} \subseteq \SP$.
\end{enumerate}
\end{minipage}
\end{lemma}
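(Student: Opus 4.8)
The plan is to reduce all eight items to the single observation that $\closure{-}$ is a \emph{closure operator} on subsets of $\Pom$ --- extensive ($L \subseteq \closure L$), monotone, and idempotent ($\closure{\closure L} = \closure L$) --- together with one genuinely new ingredient governing its interaction with pomset contexts. Extensivity is the first inference rule of \cref{def:closure}. For item~\ref{property:hypothesis-closure}, the point is that $\closure K$ is by construction the least language that both contains $K$ and is \emph{$H$-closed}, i.e.\ validates the second inference rule for all $e \leq f \in H$ and $C \in \PC$; so if $L \subseteq \closure K$ then $\closure K$ is an $H$-closed language containing $L$ and hence $\closure L \subseteq \closure K$ by minimality, while the converse is immediate from extensivity. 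Idempotence and item~\ref{property:hypothesis-monotone} are then formal consequences, and item~\ref{property:hypothesis-union} follows by the standard two-inclusion argument, using $L \cup K \subseteq \closure L \cup \closure K$ for one direction and $\closure L, \closure K \subseteq \closure{L \cup K}$ together with item~\ref{property:hypothesis-closure} for the other. Item~\ref{property:hypothesis-sp} is handled separately: since $\sembka{-}$ lands in $2^{\SP}$ and inserting sp-pomsets into pomset contexts yields sp-pomsets, the conclusion $C[\sembka{e}] \subseteq \SP$ of the closure rule holds unconditionally, so $\SP$ is itself $H$-closed; as $L \subseteq \SP$, minimality gives $\closure L \subseteq \SP$.

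The core of the argument is the auxiliary claim that $C[\closure K] \subseteq \closure{C[K]}$ for every $C \in \PC$ and $K \subseteq \Pom$. Granting it, item~\ref{property:hypothesis-context} is immediate: if $\closure L \subseteq \closure K$ then $C[L] \subseteq C[\closure L] \subseteq C[\closure K] \subseteq \closure{C[K]}$, so $\closure{C[L]} \subseteq \closure{C[K]}$ by item~\ref{property:hypothesis-closure}. To prove the claim I would fix $C$ and consider $M := \{ U \in \Pom \mid C[U] \in \closure{C[K]} \}$; since $K \subseteq M$ is clear, it suffices to show that $M$ is $H$-closed, for then $\closure K \subseteq M$ by minimality --- which is exactly the claim. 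So suppose $e \leq f \in H$, $D \in \PC$, and $D[\sembka{f}] \subseteq M$. Writing $C \circ D$ for the pomset context obtained by substituting $D$ into the gap of $C$ --- again a pomset context, with $(C \circ D)[U] = C[D[U]]$ --- the hypothesis reads $(C \circ D)[\sembka{f}] \subseteq \closure{C[K]}$, so $H$-closedness of $\closure{C[K]}$ gives $(C \circ D)[\sembka{e}] \subseteq \closure{C[K]}$, that is, $D[\sembka{e}] \subseteq M$. The main obstacle is exactly this bookkeeping: one must verify that composing pomset contexts again yields a pomset context and that insertion is associative with respect to composition --- the well-definedness of insertion mentioned after the definition of $\PC$, provable by induction on the structure of $C$.

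The remaining items follow from the claim. In each of items~\ref{property:hypothesis-concat}, \ref{property:hypothesis-parallel} and \ref{property:hypothesis-star} the easy inclusion is monotonicity applied to $L \cdot K \subseteq \closure L \cdot \closure K$, to $L \parallel K \subseteq \closure L \parallel \closure K$, and to $L^* \subseteq (\closure L)^*$ respectively. For the reverse inclusion in item~\ref{property:hypothesis-concat} it suffices, by item~\ref{property:hypothesis-closure}, to show $\closure L \cdot \closure K \subseteq \closure{L \cdot K}$: instantiating the claim with a context of the shape $* \cdot V$ for $V \in K$ yields $\closure L \cdot K \subseteq \closure{L \cdot K}$, and then instantiating it with $U \cdot *$ for $U \in \closure L$ and invoking item~\ref{property:hypothesis-closure} again absorbs the second factor, giving $\closure L \cdot \closure K \subseteq \closure{L \cdot K}$; item~\ref{property:hypothesis-parallel} is verbatim the same with $\parallel$ in place of $\cdot$. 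For item~\ref{property:hypothesis-star} I would prove $(\closure L)^n \subseteq \closure{L^*}$ by induction on $n$: the base case is $\{1\} = L^0 \subseteq L^* \subseteq \closure{L^*}$, and the inductive step combines the induction hypothesis, the fact that $\closure L \subseteq \closure{L^*}$ (from $L \subseteq L^*$ and item~\ref{property:hypothesis-closure}), the concatenation inclusion just established, and $L^* \cdot L^* \subseteq L^*$, to obtain $(\closure L)^{n+1} = (\closure L)^n \cdot \closure L \subseteq \closure{L^*} \cdot \closure{L^*} \subseteq \closure{L^* \cdot L^*} \subseteq \closure{L^*}$; taking the union over $n$ and applying item~\ref{property:hypothesis-closure} finishes.
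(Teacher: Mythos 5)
The paper does not actually prove this lemma: in the appendix it is stated and explicitly attributed to \cite{fossacs2020} ("The following three results... come from~\cite{fossacs2020}"), so there is no in-paper proof to compare against. Your proof is correct and self-contained. The decomposition is clean and proceeds exactly as one would hope: you first observe that $\closure{-}$ is a closure operator on $2^{\Pom}$ (extensive, monotone, idempotent), from which items~\ref{property:hypothesis-closure}--\ref{property:hypothesis-union} are formal consequences; you isolate the one substantive lemma, $C[\closure K] \subseteq \closure{C[K]}$, proved by showing $M = \{U \mid C[U] \in \closure{C[K]}\}$ is $H$-closed via composition of pomset contexts; and the remaining compatibility statements~\ref{property:hypothesis-concat}--\ref{property:hypothesis-context} then reduce to that lemma by choosing one-hole contexts of the form $* \cdot V$ and $U \cdot *$ (resp.\ $* \parallel V$), with item~\ref{property:hypothesis-star} closed off by the standard induction on the finite powers $(\closure L)^n$. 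The one point you flag as requiring bookkeeping --- that $\PC$ is closed under substitution of a context into the hole of another, with $(C \circ D)[U] = C[D[U]]$ --- is indeed the only technical detail needing verification, and it goes through by a routine structural induction on $C$. Item~\ref{property:hypothesis-sp} is also handled correctly: the conclusion of the closure rule $C[\sembka{e}] \subseteq \SP$ holds unconditionally because $\sembka{e} \subseteq \SP$ by definition of the BKA semantics and because insertion into a pomset context preserves series-parallelity, so $\SP$ is vacuously $H$-closed and minimality does the rest.
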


Second, we can note the following result about the interaction between $\hexch$ and $\hcontr$:

\begin{lemma}%
\label{lemma:factorise-exch}
For any $L\in 2^{\SP}$, we have $\closure[\hcontr \cup
\hexch]{L}=\closure[\hcontr]{(\closure[\hexch]{L})}$.
\end{lemma}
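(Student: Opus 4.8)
The plan is to verify that $\closure[\hcontr]{(\closure[\hexch]{L})}$ has the universal property that defines $\closure[\hcontr\cup\hexch]{L}$: that it is the least language containing $L$ and closed under both $\hexch$-steps and $\hcontr$-steps, in the sense of \cref{def:closure}.

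The inclusion $\closure[\hcontr]{(\closure[\hexch]{L})}\subseteq\closure[\hcontr\cup\hexch]{L}$ is routine. Since $\closure[\hcontr\cup\hexch]{L}$ contains $L$ and is in particular closed under $\hexch$-steps, minimality of $\closure[\hexch]{L}$ gives $\closure[\hexch]{L}\subseteq\closure[\hcontr\cup\hexch]{L}$; applying the $\hcontr$-closure operator, which is monotone (\cref{lemma:composition-vs-closure}), yields $\closure[\hcontr]{(\closure[\hexch]{L})}\subseteq\closure[\hcontr]{(\closure[\hcontr\cup\hexch]{L})}$, and the right-hand side equals $\closure[\hcontr\cup\hexch]{L}$ because that language is already closed under $\hcontr$-steps. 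For the converse inclusion it suffices, again by minimality of $\closure[\hcontr\cup\hexch]{L}$, to show that $\closure[\hcontr]{(\closure[\hexch]{L})}$ contains $L$ (immediate, as $L\subseteq\closure[\hexch]{L}\subseteq\closure[\hcontr]{(\closure[\hexch]{L})}$), is closed under $\hcontr$-steps (immediate, by construction), and is closed under $\hexch$-steps. This last point is the heart of the argument.

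To establish it I would switch to the concrete characterisations of the two closures. As $L\subseteq\SP$, all languages in play remain inside $\SP$ (\cref{lemma:composition-vs-closure}), so \cref{lemma:exch-closure-vs-subsumption,lemma:closure-vs-contraction} apply: for $N\subseteq\SP$, $\closure[\hexch]{N}$ is the downward closure of $N$ under $\sqsubseteq$, and $\closure[\hcontr]{N}$ is the downward closure of $N$ under $\preceq$. Consequently a sublanguage of $\SP$ is closed under $\hexch$-steps precisely when it is downward closed under $\sqsubseteq$, so it is enough to prove that the $\preceq$-downward closure of a $\sqsubseteq$-downward closed $P\subseteq\SP$ is again $\sqsubseteq$-downward closed. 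Unfolding the definitions, this reduces to a commutation (``diamond'') property of the two orders on sp-pomsets,
\[
  U\sqsubseteq V \ \text{and}\ V\preceq V' \quad\Longrightarrow\quad \exists X\in\SP.\ U\preceq X \ \text{and}\ X\sqsubseteq V',
\]
since, given $U\sqsubseteq V\preceq V'$ with $V'\in P$, such an $X$ lies in $P$ (as $X\sqsubseteq V'$ and $P$ is $\sqsubseteq$-downward closed), and $U\preceq X$ then places $U$ in the $\preceq$-downward closure of $P$.

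The diamond property is the step I expect to be the main obstacle. From a label- and order-preserving bijection $h_1\colon S_V\to S_U$ witnessing $U\sqsubseteq V$, and a surjection $h_2\colon S_{V'}\to S_V$ witnessing $V\preceq V'$, I would build $X$ on the carrier $S_{V'}$, taking $h_1\circ h_2$ as the candidate surjection onto $S_U$ (it carries the correct labels because $h_1$ and $h_2$ preserve labels), and equipping $S_{V'}$ with the order obtained from $\leq_{V'}$ by adding exactly the extra sequentialisation that $U$ imposes over $V$, pulled back along $h_2$ and trimmed so that no two nodes become related whose $h_1\circ h_2$-images are incomparable in $U$. One then checks: (i) this relation is a partial order extending $\leq_{V'}$, so that $X\sqsubseteq V'$; (ii) $X$ is series-parallel --- here I would use the characterisation of sp-pomsets as the N-free pomsets and verify that no forbidden pattern is created; and (iii) $h_1\circ h_2$ meets conditions (ii) and (iii) of \cref{definition:contraction-pomsets}, so that $U\preceq X$. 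Checking series-parallelness, together with the clause of \cref{definition:contraction-pomsets} that treats state-nodes and action-nodes asymmetrically, is the delicate part; the remainder is routine bookkeeping, and combining it with the reasoning above shows that $\closure[\hcontr]{(\closure[\hexch]{L})}$ is $\hexch$-closed and hence equals $\closure[\hcontr\cup\hexch]{L}$.
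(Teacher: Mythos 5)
The paper itself gives no proof of this lemma: it appears in the appendix as one of three results imported directly from~\cite{fossacs2020}. So there is no in-paper argument to compare against, and your proposal has to be assessed on its own.

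Your structure is sound. The easy inclusion is argued correctly, and you rightly reduce the hard inclusion---via \cref{lemma:exch-closure-vs-subsumption,lemma:closure-vs-contraction} together with the observation that a language is $\hexch$- resp.\ $\hcontr$-closed exactly when it is $\sqsubseteq$- resp.\ $\preceq$-downward closed---to the commutation $U\sqsubseteq V\preceq V'\Rightarrow\exists X\in\SP.\ U\preceq X\sqsubseteq V'$. That is the right diamond, and your proposed construction does realise it. Writing $g=h_1\circ h_2\colon S_{V'}\to S_U$, take $X$ on $S_{V'}$ with ${\leq_X}={\leq_{V'}}\cup\{(x,y)\mid g(x)<_U g(y)\}$. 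No transitive closure or ``trimming'' is in fact needed: since $x\leq_{V'}y$ already forces $g(x)\leq_U g(y)$, the two parts of the union compose into each other, so this relation is transitive as written; it is antisymmetric because a two-sided chain forces $g(x)=g(y)$, and clause~(iii) of \cref{definition:contraction-pomsets}, applied to $h_2$, then places $x,y$ on a common $\leq_{V'}$-chain of state-nodes. The identity witnesses $X\sqsubseteq V'$, and conditions~(ii)--(iii) of \cref{definition:contraction-pomsets} for $g$ hold by construction, giving $U\preceq X$. The $\SP$-membership you flag as delicate also goes through: $\leq_X$-incomparable nodes must have $g$-incomparable images (a strict image relation would have been recorded in $\leq_X$, and equal images place the nodes on a common state-chain), so an $\mathsf{N}$-configuration in $X$ would push forward along $g$ to an $\mathsf{N}$ in $U$, contradicting $U\in\SP$. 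What your sketch leaves open is exactly these verifications, but the route is correct and does close.
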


We can then prove soundness of the POCKA semantics w.r.t.\ the axioms.

\soundnesspocka*
\begin{proof}
  By construction it is immediate that $\sempocka{-}$ is closed under closure with respect to $\hexch\cup\hcontr$.
  We then proceed by induction on $\equivpocka$. For all the pairs from $\equivbkaexpl$, it follows from \cref{theorem:bka-completeness} that $\semwpocka{e}=\semwpocka{f}$. Then immediately their POCKA-semantics also coincide. For all
  the pairs from $\equivplexpl$,
  we make use of \cref{soundnesscompletenesspcdl}. Note that $\sempocka{-}$ almost coincides with $\semplexpl{-}$ on $\termspl$, so the proof is very straightforward.
  For instance, take $p\vee(p\wedge q)\equivplexpl p$. Then $\semwpocka{p\vee(p\wedge q)}= \State^* \cdot \semplexpl{p\vee(p\wedge q)} \cdot \State^*$.
  From \cref{soundnesscompletenesspcdl}, we know that $\semplexpl{p\vee(p\wedge q)}=\semplexpl{p}$, and thus we obtain
  $\semwpocka{p\vee(p\wedge q)}= \State^* \cdot \semplexpl{p} \cdot \State^* = \semwpocka {p}$. Then from this we can conclude that
 $\sempocka{p\vee(p\wedge q)}=\closure[\hexch\cup\hcontr]{\semwpocka{p\vee(p\wedge q)}}=\closure[\hexch\cup\hcontr]{\semwpocka{p}}=\sempocka{p}$.
 We can prove soundness of the other observation algebra axioms analogously.

  The next axiom is the exchange law. We show that
  $\sempocka{(e\parallel f)\cdot (g\parallel h)}\subseteq\sempocka{(e\cdot
  g)\parallel (f\cdot h)}$.
  By \cref{lemma:composition-vs-closure}\eqref{property:hypothesis-closure}, it
  suffices to prove that
  $\semwpocka{(e\parallel f)\cdot (g\parallel h)}\subseteq\sempocka{(e\cdot
  g)\parallel (f\cdot h)}$. Take an element in $\semwpocka{(e\parallel f)\cdot
  (g\parallel h)}$. This is thus a pomset of the form $(X\parallel Y)\cdot
  (V\parallel W)$ for $X\in \semwpocka{e}$, $Y\in\semwpocka{f}$,
  $V\in\semwpocka{g}$ and $W\in\semwpocka{h}$. Thus we immediately obtain that
  $(X\cdot V)\parallel (Y\cdot W)\in \semwpocka{(e\cdot
  g)\parallel (f\cdot h)}$. From \cref{lemma:factorise-exch}, we know that
  $\sempocka{-}=\closure[\hcontr]{(\closure[\hexch]{\semwpocka{-}})}$. We know that $(X\parallel Y)\cdot
  (V\parallel W)\sqsubseteq(X\cdot V)\parallel (Y\cdot W) $ and that $(X\cdot V)\parallel (Y\cdot W)\in \closure[\hcontr]{(\closure[\hexch]{\semwpocka{(e\cdot
  g)\parallel (f\cdot h)}})}$.
  Then we can apply \cref{lemma:factorise-exch,lemma:exch-closure-vs-subsumption} to obtain that $(X\parallel Y)\cdot
  (V\parallel W)\in\sempocka{(e\cdot
  g)\parallel (f\cdot h)}$.

Left to verify are the interface axioms.
  To check $p\wedge q\leqqpocka p\cdot q$ it again suffices to prove that
  $\semwpocka{p\wedge q}\subseteq\sempocka{p\cdot q}$ by
  \cref{lemma:composition-vs-closure}\eqref{property:hypothesis-closure}. We take an element in
  $\semwpocka{p\wedge q}$. This a pomset of the form $U\cdot \alpha \cdot V$ such
  that $U,V\in\State^*$ and $\alpha\in\semplexpl{p}\cap\semplexpl{q}$. We can establish that $U\cdot
  \alpha\cdot\alpha\cdot V\in\semwpocka{p\cdot q}$. Now take the pomset context $C=U\cdot *\cdot V$.
  We have that $C[\sembka{\alpha\cdot\alpha}]=\{U\cdot \alpha\cdot \alpha\cdot V\}\subseteq\semwpocka{p\cdot q}\subseteq \sempocka{p\cdot q} $. Then by closure we find $C[\sembka{\alpha}]=\{U\cdot \alpha\cdot V\}\subseteq \sempocka{p\cdot q} $.

  Since $\semwpocka{\bot} = \emptyset = \semwpocka{0}$, it follows that $\sempocka{\bot}=\emptyset=\sempocka{0}$.
  Similarly, since $\semwpocka{p + q} = \semwpocka{p} \cup \semwpocka{q} = \semwpocka{p \vee q}$, we also have that $\sempocka{p + q} = \sempocka{p \vee q}$

  Now we have four axioms left. The first one we verify is $\top\cdot p\leqqpocka p$ for $p\in\termspl$. We immediately obtain that $\semwpocka{\top\cdot p}=\State\cdot\State^*\cdot\semplexpl{p}\cdot \State^*\subseteq \State^*\cdot\semplexpl{p}\cdot \State^*=\semwpocka{p}\subseteq \sempocka{p}$.
  The axioms $\top\cdot p\leqqpocka p$, $a\cdot \top \leqqpocka a$, and $\top\cdot a \leqqpocka a$ for $a\in\Act$ are all verified in a similar manner.

  In the inductive step we need to check whether the closure rules for congruence have been preserved. We distinguish four cases.
  \begin{itemize}
    \item
    If $e = e_0 + e_1$ and $f = f_0 + f_1$ with $e_0 \equivpocka f_0$ and $e_1 \equivpocka f_1$, then by induction we know that $\sempocka{e_0} = \sempocka{f_0}$ and $\sempocka{e_1} = \sempocka{f_1}$.
    By \cref{lemma:composition-vs-closure}\eqref{property:hypothesis-union}, we can then derive that
    \begin{align*}
      \sempocka{e} &= \closure[\hexch\cup\hcontr] {(\semwpocka{e_0} \cup \semwpocka{e_1})} \\
      & = \closure[\hexch\cup\hcontr] {(\closure[\hexch\cup\hcontr] {\semwpocka{e_0}} \cup
     \closure[\hexch\cup\hcontr]{\semwpocka{e_1}})} \\
     & =
     \closure [\hexch\cup\hcontr]{(\closure [\hexch\cup\hcontr]{\semwpocka{f_0}} \cup
      \closure[\hexch\cup\hcontr]{\semwpocka{f_1}})} \\
     &= \closure[\hexch\cup\hcontr] {(\semwpocka{f_0} \cup \semwpocka{f_1})}
     \\
     & = \closure[\hexch\cup\hcontr]{\semwpocka{f}}
     \end{align*}
    \item
    The cases for $\cdot$, $\parallel$ and $*$ are argued similarly.\qedhere
\end{itemize}
\end{proof}

\ifarxiv%

Next, we need~\cite[Lemma~3.4]{fossacs2020}, which goes as follows.
\begin{lemma}%
\label{lemma:tinycontextparallel}
Let $C \in \PC$, let $V, W \in \Pom$, and $\ltr{a} \in \alphabet$.
If $C[U] = V \parallel W$, then there exists a $C' \in \PC$ such that $C = C' \parallel W$ and $C'[U] = V$, or $C = V \parallel C'$ and $C'[U] = W$.
\end{lemma}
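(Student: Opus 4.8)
The plan is to prove this by induction on the structure of the pomset context $C \in \PC(\alphabet)$; throughout, $U$ stands for the one-node pomset labelled by $\ltr a$, and the argument leans on the fact that its carrier is a singleton. In the base case $C = {*}$ we have $C[U] = U$, a single-node pomset, so in any decomposition $V \parallel W$ one of the two carriers must be empty; say $W = 1$ and $V = U$ (the case $V = 1$ is symmetric). Then $C = {*} = {*} \parallel 1 = {*} \parallel W$, and $C' := {*}$ satisfies $C'[U] = U = V$.

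For the inductive step, consider first $C = C_0 \cdot X$ or $C = X \cdot C_0$ with $C_0 \in \PC(\alphabet)$ and $X \in \SP(\alphabet)$. If $X = 1$ then $C = C_0$ and we are done by the induction hypothesis. Otherwise both $X$ and $C_0[U]$ are nonempty, so $C[U]$ is a sequential composition of two nonempty sp-pomsets, and I claim it cannot equal $V \parallel W$ with $V, W$ both nonempty: a node $p$ of the first factor and a node $q$ of the second satisfy $p < q$, hence lie on the same side of the parallel split, while any node $r$ on the other side satisfies $r < q$ or $p < r$ (according to which factor it lies in), contradicting that $r$ is incomparable to both $p$ and $q$. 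Hence one of $V, W$ — say $W$ — equals $1$ and $V = C[U]$, and $C' := C$ works since $C = C \parallel 1 = C' \parallel W$ and $C'[U] = V$.

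The crux is the case $C = X \parallel C_0$ with $X \in \SP(\alphabet)$ and $C_0 \in \PC(\alphabet)$; if $X = 1$ we again finish by the induction hypothesis, so assume $X$ nonempty and identify $X$, $C_0[U]$, $V$, $W$ with the corresponding sub-pomsets of a representative of $C[U] = X \parallel C_0[U] = V \parallel W$. The single $\ltr a$-node lies in $C_0[U]$ and in exactly one of $V, W$, say in $V$. The key observation is that $C[U]$ decomposes as a four-fold parallel composition along the two overlapping splits: two nodes in distinct ``cells'' among $S_X \cap S_V$, $S_X \cap S_W$, $S_{C_0[U]} \cap S_V$, $S_{C_0[U]} \cap S_W$ are always incomparable (they differ across one of the two splits, and there are no order relations across either), so $C[U] = V_X \parallel W_X \parallel V_0 \parallel W_0$ for the induced sub-pomsets $V_X, W_X, V_0, W_0$ on these cells — each an sp-pomset, being a union of connected components of an sp-pomset — and one reads off $X = V_X \parallel W_X$, $C_0[U] = V_0 \parallel W_0$, $V = V_X \parallel V_0$, $W = W_X \parallel W_0$. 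Since the $\ltr a$-node is in $V_0$, the cell $W_0$ is $*$-free and thus equals a sub-pomset of $C_0$, so $C_0 = C_0' \parallel W_0$ where $C_0'$ is $V_0$ with its $\ltr a$-node relabelled to $*$; then $C_0' \in \PC(\alphabet)$ (an sp-pomset over $\alphabet \cup \{*\}$ with a single $*$) and $C_0'[U] = V_0$. Rearranging the four-fold product, $C = X \parallel C_0 = (V_X \parallel C_0') \parallel (W_X \parallel W_0) = C' \parallel W$ with $C' := V_X \parallel C_0' \in \PC(\alphabet)$ and $C'[U] = V_X \parallel V_0 = V$; if the $\ltr a$-node had instead been in $W$, the same reasoning with $V$ and $W$ interchanged gives $C = V \parallel C'$ with $C'[U] = W$.

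I expect the main obstacle to be precisely this parallel case: it hinges on the structural fact that parallel decompositions of a pomset are ``orthogonal'' to one another — the four-cell argument is an inline proof of the interpolation law $P_1 \parallel P_2 = Q_1 \parallel Q_2 \Rightarrow \exists R_{ij}.\ P_i = R_{i1} \parallel R_{i2} \wedge Q_j = R_{1j} \parallel R_{2j}$, together with the standard lemma that a union of connected components of an sp-pomset is again an sp-pomset — and on the bookkeeping that the relabelled cell is a bona fide context in $\PC(\alphabet)$, which is exactly where it matters that $U$ is a single node, keeping the hole confined to one cell. The base and sequential cases are routine by comparison.
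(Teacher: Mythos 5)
The paper does not give a proof of this lemma---it is imported from the earlier CKAH paper (the cited Lemma~3.4)---so there is no in-paper argument to compare against; I'll just assess yours on its own terms. Your reading of $U$ as the one-node pomset labelled $\ltr{a}$ (matching the sequential companion lemma, which explicitly writes $C[\ltr{a}]$) is the intended one, and the proof is correct. The parallel case is indeed the crux, and the four-cell argument is the right structural insight: both $\{X, C_0[U]\}$ and $\{V, W\}$ induce bipartitions of the connected components of $C[U]$, so their common refinement gives four pairwise unrelated cells, each a union of connected components and hence (being an induced subposet of an $\N$-free poset) again series-parallel; reassembling $X$, $C_0[U]$, $V$ and $W$ as parallel compositions of pairs of cells is immediate, the single substituted node lies in exactly one cell, and relabelling it back to $*$ gives the required $C' \in \PC$. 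Two small points worth tightening. First, the ``$X = 1$, done by the induction hypothesis'' shortcuts are only legitimate if your induction is on a derivation of $C$ in the $\PC$ grammar, or on node count together with the observation that every context with more than one node admits a nontrivial sequential or parallel split; since as a pomset $1\cdot C_0$ \emph{is} $C_0$, an induction ``on the structure of $C$'' should make explicit which structure it means. Second, in the parallel case the shortcut is not even needed: the four-cell argument works verbatim when $X=1$, producing $V_X = W_X = 1$.
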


Analogously to the previous lemma, we can prove the following.
\begin{lemma}\label{lemma:tinycontextsequential}
  Let $C \in \PC$, let $V, W \in \Pom$, and $\ltr{a} \in \alphabet$.
  If $C[\ltr{a}] = V \cdot W$, then there exists a $C' \in \PC$ such that $C = C' \cdot W$ and $C'[\ltr{a}] = V$, or $C = V \cdot C'$ and $C'[\ltr{a}] = W$.
\end{lemma}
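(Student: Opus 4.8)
The plan is to exploit that $C$ and $C[\ltr{a}]$ have the \emph{same} underlying labelled poset except for the label of a single node. Fix a representative of $C$, let $p_*$ be its unique node labelled by $*$, and recall that $C[\ltr{a}]$ is obtained from $C$ by relabelling $p_*$ to $\ltr{a}$ and changing nothing else; in particular $C[\ltr{a}]$ and $C$ share a carrier and a partial order. Suppose $C[\ltr{a}] = V \cdot W$. By definition of sequential composition, the carrier of $C[\ltr{a}]$ partitions as $S_V \sqcup S_W$ where the order restricted to $S_V$ (resp.\ $S_W$) is the order of $V$ (resp.\ $W$) and every element of $S_V$ lies below every element of $S_W$. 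Since $C$ has the very same carrier and order, this partition also witnesses $C = (C{\restriction}S_V)\cdot(C{\restriction}S_W)$, where $C{\restriction}S$ denotes the induced labelled sub-poset on $S$ (which is again series-parallel, as induced sub-pomsets of series-parallel pomsets are series-parallel). The node $p_*$ lies in exactly one of $S_V$, $S_W$, and this dichotomy is exactly the dichotomy in the conclusion.

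Suppose first $p_* \in S_V$. Then $C{\restriction}S_W$ contains no $*$-node, so it is an sp-pomset over $\alphabet$; since $p_*\notin S_W$ the relabelling did not touch $S_W$, hence $C{\restriction}S_W = W$. On the other hand $C{\restriction}S_V$ contains exactly one $*$-node, namely $p_*$, so $C{\restriction}S_V\in\PC$; and relabelling $p_*$ to $\ltr{a}$ inside it gives precisely $C[\ltr{a}]{\restriction}S_V = V$, i.e.\ $(C{\restriction}S_V)[\ltr{a}] = V$. Taking $C' := C{\restriction}S_V$ yields $C = C'\cdot W$ with $C'[\ltr{a}] = V$. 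The case $p_* \in S_W$ is symmetric and yields $C = V\cdot C'$ with $C' := C{\restriction}S_W \in \PC$ and $C'[\ltr{a}] = W$. The degenerate cases $V = 1$ and $W = 1$ are subsumed here, with $C' = C$.

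This is in the same spirit as the proof of \cref{lemma:tinycontextparallel} (which could equally be redone by induction on the construction of $C$, using the sequential analogue of the interpolation fact used there), but the shared order of $C$ and $C[\ltr{a}]$ makes a direct argument possible. I do not expect a substantial obstacle: the only points needing care are the bookkeeping that the relevant restriction has exactly one occurrence of $*$ (hence lies in $\PC$), that it is series-parallel, and that substituting $\ltr{a}$ commutes with restriction — all routine once the ``same poset, one relabelled node'' picture is in place.
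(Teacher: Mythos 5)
Your proof is correct, and it takes a genuinely different route from the one the paper gestures at. The paper simply declares that \cref{lemma:tinycontextsequential} can be proved ``analogously to the previous lemma,'' i.e.\ to \cref{lemma:tinycontextparallel}, whose proof in the cited reference proceeds by induction on the construction of $C$ as a pomset context. You instead give a direct, non-inductive argument built around a single observation: because $\ltr{a}$ is a one-element pomset, $C[\ltr{a}]$ and (a suitable representative of) $C$ share a carrier and order, differing only in the label of the distinguished $*$-node. This lets you transport a sequential decomposition of $C[\ltr{a}]$ to a sequential decomposition of $C$ in one step, with $p_*$'s membership in $S_V$ or $S_W$ deciding which disjunct holds. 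The bookkeeping you flag as routine really is routine: $C{\restriction}S_V$ and $C{\restriction}S_W$ are induced sub-posets of an N-free poset, hence N-free, hence series-parallel; exactly one of them contains $*$, so it is a pomset context by the alternative characterisation of $\PC$; and relabelling $p_*$ commutes with restricting to a set containing $p_*$, which gives $(C{\restriction}S_V)[\ltr{a}] = C[\ltr{a}]{\restriction}S_V = V$ (or symmetrically for $W$).

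What the two approaches buy: the inductive approach is more robust and is what one would need to prove the lemma for a general inserted pomset $U$ (as \cref{lemma:tinycontextparallel} is stated in the paper), since for non-trivial $U$ the carriers of $C$ and $C[U]$ no longer coincide and one must worry about $U$ ``straddling'' the $V$/$W$ cut. Your direct argument is shorter and more transparent, but it crucially exploits that the substituted pomset is a single vertex and so cannot straddle the decomposition. Since the paper only states the sequential lemma for a letter $\ltr{a}$ (and that is all it uses downstream in \cref{lemma:composition-vs-top} and \cref{lemma:closed-alt}), your restriction is exactly the one the paper needs, so the simplification is well-placed. It would be worth adding a sentence making the single-letter dependence explicit, so a reader does not mistakenly try to reuse the argument for general $U$.
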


The following properties of $\htop$ will also be useful.

\begin{lemma}%
\label{lemma:composition-vs-top}
Let $L, K \subseteq \SP$.
The following hold:
\begin{mathpar}
\closure[\htop]{L} \cup \closure[\htop]{K} = \closurep[\htop]{L \cup K}
\and
\closure[\htop]{L} \cdot \closure[\htop]{K} = \closurep[\htop]{L \cdot K}
\and
\closure[\htop]{L} \parallel \closure[\htop]{K} = \closurep[\htop]{L \parallel K}
\and
{\left( \closure[\htop]{L} \right)}^* = \closurep[\htop]{L^*}
\end{mathpar}
\end{lemma}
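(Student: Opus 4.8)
The plan is to reduce all four equalities to a single observation: that the composite languages appearing on the left-hand sides are \emph{already} $\htop$-closed. Indeed, \cref{lemma:composition-vs-closure}, instantiated at $H = \htop$, yields $\closurep[\htop]{L \bowtie K} = \closurep[\htop]{\closure[\htop]{L} \bowtie \closure[\htop]{K}}$ for each $\bowtie \in \{\cup, \cdot, \parallel\}$, and $\closurep[\htop]{L^*} = \closure[\htop]{(\closure[\htop]{L})^*}$. Hence, once I know that $\closure[\htop]{L} \bowtie \closure[\htop]{K}$ (respectively $(\closure[\htop]{L})^*$) is itself $\htop$-closed, the outer closure on the right-hand side collapses and the claimed identity drops out. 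These composites also stay within $\SP$ by \cref{lemma:composition-vs-closure} (closure preserves series-parallelness, and $\SP$ is closed under the operations in play), so there is no well-definedness concern.

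It then remains to establish $\htop$-closedness. Unfolding \cref{def:closure} for the set $\htop$ — and using that $\sembka{c}$ and $\sembka{\alpha \cdot c}$ are singletons whenever $\alpha \in \State$ and $c \in \Act \cup \State$ — a language $X$ is $\htop$-closed exactly when, for all such $\alpha$, $c$ and every $C \in \PC$, $C[c] \in X$ implies both $C[\alpha \cdot c] \in X$ and $C[c \cdot \alpha] \in X$. For the union this is immediate: if $C[c] \in \closure[\htop]{L} \cup \closure[\htop]{K}$, then $C[c]$ lies in one of the two summands, which are themselves $\htop$-closed, so $C[\alpha \cdot c]$ and $C[c \cdot \alpha]$ land in the same summand. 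For sequential composition, suppose $C[c] \in \closure[\htop]{L} \cdot \closure[\htop]{K}$, say $C[c] = U \cdot V$ with $U \in \closure[\htop]{L}$ and $V \in \closure[\htop]{K}$. Applying \cref{lemma:tinycontextsequential} to the single letter $c$, the gap of $C$ lies either in the $U$-factor or in the $V$-factor: either $C = C' \cdot V$ with $C'[c] = U$, or $C = U \cdot C'$ with $C'[c] = V$. In the first case $C[\alpha \cdot c] = C'[\alpha \cdot c] \cdot V$, and since $\closure[\htop]{L}$ is $\htop$-closed and contains $C'[c] = U$, it contains $C'[\alpha \cdot c]$; hence $C[\alpha \cdot c] \in \closure[\htop]{L} \cdot \closure[\htop]{K}$. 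The second case, and the variant with $c \cdot \alpha$, are handled symmetrically. The parallel case is identical, appealing to \cref{lemma:tinycontextparallel} instead. For the Kleene star, write $(\closure[\htop]{L})^* = \bigcup_{n \in \naturals} (\closure[\htop]{L})^n$: each power is $\htop$-closed by induction on $n$ — the base $(\closure[\htop]{L})^0 = \{1\}$ is closed vacuously, since $C[c]$ always has at least one node and so is never the empty pomset, and the step $(\closure[\htop]{L})^{n+1} = (\closure[\htop]{L})^n \cdot \closure[\htop]{L}$ is the sequential case applied to the (inductively $\htop$-closed) power $(\closure[\htop]{L})^n$ and to $\closure[\htop]{L}$ — and a union of $\htop$-closed languages is plainly $\htop$-closed, so $(\closure[\htop]{L})^*$ is $\htop$-closed.

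I expect no deep obstacle here; the two points that need care are (i) applying the tiny-context lemmas to the \emph{single} letter $c$, which is precisely the right-hand side of every hypothesis in $\htop$, so that the decomposition of the context $C$ is completely controlled, and (ii) treating the Kleene star through its finite powers rather than attempting to decompose an arbitrary element of $L^*$ directly. Everything else is routine bookkeeping with pomset contexts.
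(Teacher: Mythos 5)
Your proof is correct and follows essentially the same route as the paper's: the easy inclusions come from \cref{lemma:composition-vs-closure}, the converse inclusions are reduced via the simplified (singleton-hypothesis) characterisation of $\closure[\htop]{-}$ to the tiny-context lemmas (\cref{lemma:tinycontextsequential,lemma:tinycontextparallel}), and the star case is handled by expanding into powers. The only presentational difference is that you phrase the hard direction as ``the composite on the left is already $\htop$-closed, so the outer closure collapses,'' whereas the paper performs an explicit induction on the derivation of an element of $\closurep[\htop]{L \bowtie K}$; these are the same argument in two guises. Likewise, for the star the paper goes through an auxiliary countable-union inclusion $\closure[\htop]{\bigcup_n L_n}\subseteq\bigcup_n\closure[\htop]{L_n}$ while you observe directly that a union of $\htop$-closed sets is $\htop$-closed — again equivalent.
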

\begin{proof}
The inclusions from left to right are a consequence of \cref{lemma:composition-vs-closure}.

For the other inclusions, it is useful to note that, since for all $e \leq f \in \htop$ we have that $\sembka{e}$ as well as $\sembka{f}$ are singletons, the language $\closure[\htop]{L}$ can be more simply described as the smallest set containing $L$ and satifying the rule
\begin{mathpar}
\inferrule{%
    c \in \Act \cup \State \\
    \alpha \in \State \\
    C \in \PC \\
    C[c] \in \closure[\htop]{L}
}{%
    C[\alpha \cdot c] \in \closure[\htop]{L} \\
    C[c \cdot \alpha] \in \closure[\htop]{L} \\
}
\end{mathpar}

We now treat the other inclusions on a case-by-case basis.
for the first one, we show more generally that if ${(L_n)}_{n \in \naturals}$ is a family of pomset languages, then $\closure[\htop]{\Bigl( \bigcup_{n \in \naturals} L_n \Bigr)} \subseteq \bigcup_{n \in \naturals} \closure[\htop]{L_n}$.
We do this by induction on the construction of $U \in \closure[\htop]{\Bigl( \bigcup_{n \in \naturals} L_n \Bigr)}$ as induced by the characterisation of $\closurep[\htop]{-}$ above.
In the base, $U \in \bigcup_{n \in \naturals} L_n$, in which case the claim follows immediately.
Otherwise, if there exist $c \in \State \cup \Act$, $\alpha \in \State$ and $C \in \PC$ such that $C[c] \in \closure[\htop]{\Bigl( \bigcup_{n \in \naturals} L_n \Bigr)}$ and either $U = C[\alpha \cdot c]$ or $U = C[c \cdot \alpha]$, then by induction we know that $C[c] \in \closure[\htop]{L_n}$ for some $n \in \naturals$.
If $U = C[\alpha \cdot c]$, then $U = C[\alpha \cdot c] \in \closure[\htop]{L_n} \subseteq \bigcup_{n \in \naturals} \closure[\htop]{L_n}$.
The case where $U = C[c \cdot \alpha]$ can be treated similarly.

Next, we show that if $U \in \closurep[\htop]{L \cdot K}$, then $U \in \closure[\htop]{L} \cdot \closure[\htop]{K}$.
In the base, $U \in L \cdot K \subseteq \closure[\htop]{L} \cdot \closure[\htop]{K}$.
Otherwise, if there exist $c \in \State \cup \Act$, $\alpha \in \State$ and $C \in \PC$ such that $C[c] \in \closurep[\htop]{L \cdot K}$ and either $U = C[\alpha \cdot c]$ or $U = C[c \cdot \alpha]$, then by induction we have $C[c] \in \closure[\htop]{L} \cdot \closure[\htop]{K}$, i.e., $C[c] = V \cdot W$ such that $V \in \closure[\htop]{L}$ and $W \in \closure[\htop]{K}$.
By \cref{lemma:tinycontextsequential}, we find $C' \in \PC$ with either $C = C' \cdot W$ and $C'[c] = V$, or $C = V \cdot C'$ and $C'[c] = W$.
If $U = C[\alpha \cdot c]$, then in the former case $U = C'[\alpha \cdot c] \cdot W \in \closure[\htop]{L} \cdot \closure[\htop]{K}$; in the latter case, $U = V \cdot C'[\alpha \cdot c] \in \closure[\htop]{L} \cdot \closure[\htop]{K}$.
The case where $U = C[c \cdot \alpha]$ can be treated similarly.

An argument similar to the case for sequential composition shows that if $\closurep[\htop]{L \parallel K}$ is contained in $\closure[\htop]{L} \parallel \closure[\htop]{K}$, where this time we use \cref{lemma:tinycontextparallel}.

The final containment can be showed using the arguments above; after all, we have
\[
    \closurep[\htop]{L^*}
        = \closure[\htop]{\Bigl( \bigcup_{n \in \naturals} L^n \Bigr)}
        \subseteq \bigcup_{n \in \naturals} \closurep[\htop]{L^n}
        \subseteq \bigcup_{n \in \naturals} {\left( \closure[\htop]{L} \right)}^n
        = {\left( \closure[\htop]{L} \right)}^*
    \qedhere
\]
\end{proof}

\begin{lemma}\label{lemma:closed-alt}
  For all $e\in\termspocka$, we have that
  $\semwpocka{e}=\closure[\htop]{\semwpocka{e}}$.
\end{lemma}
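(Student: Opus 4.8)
The plan is to observe first that the inclusion $\semwpocka{e} \subseteq \closure[\htop]{\semwpocka{e}}$ is free from \cref{def:closure}, so that the whole content of the lemma is the reverse inclusion, i.e.\ that $\semwpocka{e}$ is \emph{already} closed under $\htop$; I would prove this by structural induction on $e$. Before starting I would record the convenient reformulation of the closure operator already used in the proof of \cref{lemma:composition-vs-top}: since $\sembka{e'}$ and $\sembka{f'}$ are singletons for every hypothesis $e' \leq f' \in \htop$, the set $\closure[\htop]{L}$ is the least set containing $L$ and closed under the rule ``if $c \in \Act \cup \State$, $\alpha \in \State$, $C \in \PC$ and $C[c] \in \closure[\htop]{L}$, then $C[\alpha \cdot c], C[c \cdot \alpha] \in \closure[\htop]{L}$''.

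For the base cases I would argue directly. The case $e = 0$ is vacuous ($\semwpocka{0} = \emptyset$), and $e = 1$ is immediate, since $C[c]$ always contains at least the node contributed by $c$ and so is never the empty pomset $1$; hence the rule above never fires and $\{1\}$ is $\htop$-closed. The interesting base case is $e \in \Act \cup \termspl$: here $\semwpocka{e}$ has the shape $\State^* \cdot L_0 \cdot \State^*$ with $L_0 = \{a\}$ or $L_0 = \semplexpl{p}$, and crucially $L_0 \subseteq \Act \cup \State$, i.e.\ $L_0$ consists of single-node pomsets. Consequently every pomset in $\semwpocka{e}$ is a totally ordered chain all of whose nodes are states except possibly one, which carries a label from $L_0$; I would then show $\closure[\htop]{\semwpocka{e}} \subseteq \semwpocka{e}$ by induction on the construction of the closure. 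The only nontrivial step is that inserting one extra state-labelled node immediately before or after an existing node of such a chain yields another chain of the same kind: it stays totally ordered, the distinguished node keeps its label, and only a state-node is added, so the result again lies in $\State^* \cdot L_0 \cdot \State^* = \semwpocka{e}$.

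The inductive step is routine. For $e = e_0 \cdot e_1$, $e_0 + e_1$, $e_0 \parallel e_1$ and $e_0^*$ I would invoke \cref{lemma:composition-vs-top} (applicable since $\semwpocka{e_i} \subseteq \SP$), which lets $\closure[\htop]{-}$ commute through each of these operations; combined with the induction hypotheses in the form of the reverse inclusions $\closure[\htop]{\semwpocka{e_i}} \subseteq \semwpocka{e_i}$ this gives, for instance, $\closure[\htop]{\semwpocka{e_0 \cdot e_1}} = \closurep[\htop]{\semwpocka{e_0} \cdot \semwpocka{e_1}} = \closure[\htop]{\semwpocka{e_0}} \cdot \closure[\htop]{\semwpocka{e_1}} \subseteq \semwpocka{e_0} \cdot \semwpocka{e_1} = \semwpocka{e_0 \cdot e_1}$, and the other three operators are identical using the remaining identities of \cref{lemma:composition-vs-top}.

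The main obstacle is precisely the base case $e \in \Act \cup \termspl$: one must see exactly why the $\htop$-closure adds nothing to $\State^* \cdot L_0 \cdot \State^*$. The point that makes it work is the single-node-ness of $L_0$: there is no ``gap'' between the distinguished node and itself into which the closure could insert a node, while $\State^*$ freely absorbs state insertions anywhere. Had $L_0$ contained a pomset with two or more nodes, the closure could insert a state strictly between two of its nodes and escape the language, so this observation is genuinely needed. Everything past that point is bookkeeping built on top of \cref{lemma:composition-vs-top}.
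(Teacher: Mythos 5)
Your proof is correct and takes essentially the same route as the paper's: reduce the compositional cases to the atomic one via \cref{lemma:composition-vs-top}, then close the atomic case by induction on the construction of $\closure[\htop]{-}$ using the singleton-hypothesis reformulation of the closure rule. The one spot where you diverge is the atomic step itself: the paper applies \cref{lemma:tinycontextsequential} twice to decompose the context $C$ with $C[c] \in \State^* \cdot L_0 \cdot \State^*$ and explicitly track which of the three factors $\State^*$, $L_0$, $\State^*$ absorbs the inserted state-node, whereas you appeal directly to the observation that every pomset in $\State^* \cdot L_0 \cdot \State^*$ with $L_0$ a set of single nodes is a chain with at most one non-state node, so inserting one more state-node preserves this form. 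Both close the gap; your version leans on the true but unargued fact that a $C \in \PC$ with $C[c]$ totally ordered must itself be a sequential context, which is precisely what the paper's appeal to \cref{lemma:tinycontextsequential} makes explicit.
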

\begin{proof}
  By \cref{lemma:composition-vs-top}, it suffices to prove the claim for all $a \in \Act \cup \termspl$; by definition of closure, we already know that $\semwpocka{a} \subseteq \closure[\htop]{\semwpocka{a}}$.
  For the converse inclusion, we proceed by induction on the construction of $\closure[\htop]{\semwpocka{a}}$ as characterised in the previous proof.
  Thus, in the base we have $U \in \closure[\htop]{\semwpocka{a}}$ because $U \in \semwpocka{a}$, in which case the claim holds vacuously.
  For the inductive step, we have $U \in \closure[\htop]{\semwpocka{a}}$ because there exist $c \in \State \cup \Act$, $\alpha \in \State$, and $C \in \PC$ such that $C[c] \in \closure[\htop]{\semwpocka{a}}$ and either $U = C[\alpha \cdot c]$ or $U = C[c \cdot \alpha]$.
  By induction, we find that $C[c] \in \semwpocka{a}$, and hence $C[c] = V \cdot W \cdot X$ for $V, X \in \State^*$ and $W \in \semplexpl{a}$ or $W = a$, depending on whether $a \in \termspl$ or $a \in \Act$ respectively.
  By applying \cref{lemma:tinycontextsequential} twice, we find that either $C = C' \cdot W \cdot X$ with $C'[c] = V$, or $C = V \cdot C' \cdot X$ with $C'[c] = W$, or $C = V \cdot W \cdot C'$ with $C'[c] = X$.
  In the first and the last case, we find that $C'[\alpha \cdot c], C'[c \cdot \alpha] \in \State^*$ as well, and hence $C[\alpha \cdot c], C[c \cdot \alpha] \in \semwpocka{a}$.
  In the second case, we note that $C'[\alpha \cdot c], C'[c \cdot \alpha] \subseteq \State^* \cdot L \cdot \State^*$ for $L = \semplexpl{a}$ or $L = \{ a \}$ (again, depending on whether $a \in \termspl$ or $a \in \Act$), and hence
  \[
    C[\alpha \cdot c], C[c \cdot \alpha]
        \in \State^* \cdot \State^* \cdot L \cdot \State^* \cdot \State^*
        = \State^* \cdot L \cdot \State^* = \semwpocka{a}.
        \qedhere
  \]
\end{proof}

\lemmaunclosedvstop*
\begin{proof}
  We proceed by induction on $e$.
  In the base, $e = a$ for some $a \in \Act \cup \termspl$.

  First, let $x \in \semwpocka{a}$; in that case, $x = u \cdot y \cdot v$ for $u, v \in \State^*$ and $y \in \State \cup \Act$.
  We claim that $u\cdot y$ is an element of $\closure[\htop]{\sembka{\hat r(a)}}$.
  We proceed by induction on the length of $u$.
  If $u=1$, then there are two cases to consider.
  \begin{itemize}
    \item
    If $a \in \Act$, then $y = a = \hat r(a)$; in that case, we find that $y \in \sembka{y} = \sembka{\hat r(a)}$.
    \item
    If $a \in \termspl$, then $y \in \State$ and $y \leqqplexpl a$.
    In that case $y \in \sembka{\hat r(a)}$ as well.
  \end{itemize}
  Thus, we have $u \cdot y = y \in \sembka{\hat r(a)}\subseteq \closure[\htop]{\sembka{\hat r(a)}}$.
  If $u$ has length $m+1$, we know that $u=u'\cdot \alpha$ for some $\alpha\in \State$ and $u'$ has length $m$.
  Our induction hypothesis tells us that $u'\cdot y \in \closure[\htop]{\sembka{\hat r(a)}}$.
  We can take $C=u'\cdot *$ to obtain from the induction hypothesis that $\{C[y]\}=C[\sembka{y}]\subseteq \closure[\htop]{\sembka{\hat r(a)}}$.
  As $\alpha\cdot y\leq y\in \htop$, we obtain that $C[\sembka{\alpha\cdot y}]\subseteq \closure[\htop]{\sembka{\hat r(a)}}$ from the definition of closure.
  Because $C[\sembka{\alpha\cdot y}]=\{u'\cdot\alpha\cdot y \}=\{u\cdot y\}$, we have reached the desired conclusion.
  Then we can show in a similar matter that $u\cdot y \cdot v\in \closure[\htop]{\sembka{\hat r(a)}}$.

  For the other direction, we start by proving that $\sembka{\hat{r}(a)}\subseteq \semwpocka{a}$.
  We have two cases.
  \begin{itemize}
    \item
    If $a \in \Act$, then $\sembka{\hat r(a)}=\{a\}\subseteq\State^*\cdot\{a\}\cdot \State^*=\semwpocka{a}$.
    \item
    Otherwise, $a = p$ with $p\in\termspl$.
    We have $\sembka{\hat r(p)}=\sembka{\sum_{\alpha\leqqplexpl p}\alpha}$.
    Take $x\in\sembka{\hat r(p)}$.
    Thus $x=\alpha$ for some $\alpha\leqqplexpl p$.
    Applying \cref{axiomaticsum,lemma:basicfacts}, we obtain that
    \[
        x\in\State^*\cdot\semplexpl{\bigvee_{\alpha\leqqplexpl p}\alpha}\cdot \State^*=\State^*\cdot\semplexpl{p}\cdot \State^*=\semwpocka{p}.
    \]
  \end{itemize}
From \cref{lemma:closed-alt} we know that $\semwpocka{a} = \closure[\htop]{\semwpocka{a}}$; it then follows that $\closure[\htop]{\sembka{\hat{r}(a)}} \subseteq \semwpocka{a}$.

  For the inductive step, we can rely on \cref{lemma:composition-vs-top}.
  For instance, if $e = e_0 + e_1$, then $\closure[\htop]{\sembka{\hat{r}(e)}} = \closurep[\htop]{\sembka{\hat{r}(e_0)} \cup \sembka{\hat{r}(e_1)}} = \closure[\htop]{\sembka{\hat{r}(e_0)}} \cup \closure[\htop]{\sembka{\hat{r}(e_1)}} = \semwpocka{e_0} \cup \semwpocka{e_1} = \semwpocka{e}$.
\end{proof}

\completenessalt*
\begin{proof}
  In the following, let $g = \sum_{\alpha \leq \top} \alpha$.
  We start by proving $e\equivbkaexpl^{\htop}\hat s(e)$.
  We know already that for $a\in\Act\cup\State$, we have $a \leqqbkaexpl^{\htop} g^* \cdot a  \cdot g^* = \hat{s}(a)$, since $1 \leqq g^*$.

  For the other direction, it suffices to show that $\hat s(a) \leqqbkaexpl^{\htop} a$ for $a \in \State \cup \Act$.
  To this end, let $a \in \State \cup \Act$.
  We are going to use the least fixpoint axioms of KA\@.
  We know that $g \cdot a \leqqbkaexpl^{\htop} a$ and $a \cdot g \leqqbkaexpl^{\htop} a$ follow immediately from the hypotheses in $\htop$ and distributivity.
  Hence, we have that $a + a \cdot g \leqqbkaexpl^{\htop} a$.
  Now we can apply one of the least fixpoint axioms ($e+(f\cdot h)\leqqbkaexpl f \Rightarrow e\cdot h^* \leqqbkaexpl f$) to obtain $a \cdot g^* \leqqbkaexpl^{\htop} a$.
  From this it follows that $a \cdot g^* + g \cdot a \leqqbkaexpl^{\htop} a$.
  To this we apply the other least-fixpoint axiom ($e+(f\cdot h)\leqqbkaexpl h \Rightarrow e f^* \cdot e \leqqbkaexpl h$), to conclude that $g^*\cdot a \cdot g^* \leqqbkaexpl^{\htop} a$.

  The next thing to prove is that $\closure[\htop]{\sembka{e}}=\sembka{\hat s(e)}$.
  This can be done by noting that $\semwpocka{e} = \sembka{\hat{s}(e)}$, and hence $\sembka{\hat{s}(e)} = \closure[\htop]{\sembka{\hat{s}(e)}}$ by \cref{lemma:closed-alt}.
  From \cref{lemma:soundness} and $e\equivbkaexpl^{\htop}\hat s(e)$ we then conclude that $\closure[\htop]{\sembka{e}}=\closure[\htop]{\sembka{\hat s(e)}}=\sembka{\hat s(e)}$.
\end{proof}

\fi%

\implicationaxioms*
\begin{proof}
  If $e \leq f \in \hcontr$, then $e = \alpha$ and $f = \alpha \cdot \alpha$ for some $\alpha \in \State$.
      We can then derive that $\alpha \equiv \alpha \wedge \alpha
      \leqq \alpha \cdot \alpha$, using the PCDL and the Interface axioms. Hence $e \leqq f$.
      If $e \leq f \in \htop$, then we distinguish two cases.
      \begin{enumerate}
        \item Let $e= \alpha \cdot c $ and $g=c$ for $c\in\Act$ and $\alpha\in\State$.
        Then we derive
        \[
        \alpha \cdot c \leqq \top \cdot c \leqq c \tag{PCDL and Interface axioms}
        \]
        The case where $e = c \cdot \alpha$ is similar.
        \item Let $e = \alpha \cdot \beta $ and $g=\beta$ for $\alpha,\beta\in\State$.
        Then we derive
        \[
        \alpha \cdot \beta \leqq\top \cdot \beta \leqq \beta \tag{PCDL and Interface axioms}
        \]
        The case where $e = \beta \cdot \alpha$ is similar. Hence, we can conclude that $e\leqq f$. \qedhere
      \end{enumerate}
\end{proof}

\ifarxiv%

\section{Proofs about guardedness}

We can establish the following three facts about paths.
\begin{lemma}\label{lemma:subpath}
  Let $U = [\lp{u}] \in \Pom(\Act\cup \State)$ and $u_1,u_2\in S_\lp{u}$.
  If $p_v$ is a path from $u_1$ to $u_2$ for $v\in \var$, then for any $z_1,z_2\in{q_1,\dots,q_{n+1}}$ such that $z_1\leq_\lp{u} z_2$, the action-nodes and state-nodes between $z_1$ and $z_2$ in $p_v$ form a path from $z_1$ to $z_2$ for $v$.
\end{lemma}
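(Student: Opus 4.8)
The plan is to reduce everything to a single structural observation about paths — that the nodes of a path are linearly ordered — after which the path axioms \ref{item:acties} and \ref{item:guards} restrict to a contiguous sub-block essentially by inspection.

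First I would show that if $q_1, a_1, \dots, a_n, q_{n+1}$ is a path, then $q_1 <_\lp{u} a_1 <_\lp{u} q_2 <_\lp{u} \cdots <_\lp{u} q_{n+1}$ is a chain. Indeed, \ref{item:guards} says $q_i$ is the predecessor and $q_{i+1}$ the successor of $a_i$, so $q_i <_\lp{u} a_i <_\lp{u} q_{i+1}$; chaining these together (using also $a_i \le_\lp{u} a_{i+1}$ from \ref{item:acties}) gives the claim. In particular the state-nodes $q_i$ are pairwise $\le_\lp{u}$-comparable with $q_j \le_\lp{u} q_k$ iff $j \le k$. Hence, writing $z_1 = q_j$ and $z_2 = q_k$ with $z_1 \le_\lp{u} z_2$, we have $j \le k$, and the action- and state-nodes of $p_v$ lying $\le_\lp{u}$-between $z_1$ and $z_2$ are exactly the sub-block $q_j, a_j, q_{j+1}, \dots, a_{k-1}, q_k$ (just the node $q_j$ when $j = k$). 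This is the claimed candidate path from $z_1$ to $z_2$ for $v$.

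Then I would check \ref{item:acties} and \ref{item:guards} for this sub-block, regarded as a path from $z_1$ to $z_2$ for $v$. Almost every conjunct is simply inherited: the labels of the $a_i$ lying in $\Act$ and of the $q_i$ lying in $\State$; the predecessor/successor requirements on each $a_i$ (which are properties of $U$, not of the chosen path); the monotonicity $a_i \le_\lp{u} a_{i+1}$; and the update recurrence for $\lambda_\lp{u}(q_{i+1})(v)$, all of which restrict verbatim to $j \le i \le k-1$. The bound $z_1 \le_\lp{u} a_i \le_\lp{u} z_2$ follows from the chain, since $q_j \le_\lp{u} q_i <_\lp{u} a_i <_\lp{u} q_{i+1} \le_\lp{u} q_k$, and $\lambda_\lp{u}(q_j) = \lambda_\lp{u}(z_1)$, $\lambda_\lp{u}(q_k) = \lambda_\lp{u}(z_2)$ hold trivially since $z_1 = q_j$, $z_2 = q_k$. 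The only condition not directly inherited is $v \in \dom(\lambda_\lp{u}(z_1)) = \dom(\lambda_\lp{u}(q_j))$; this I would obtain by a short induction showing $v \in \dom(\lambda_\lp{u}(q_i))$ for all $1 \le i \le n+1$, with base case $i = 1$ being the original hypothesis $v \in \dom(\lambda_\lp{u}(u_1))$ from \ref{item:guards}, and inductive step using that the recurrence in \ref{item:guards} defines $\lambda_\lp{u}(q_{i+1})(v)$ in every branch (using $v' \in \dom(\lambda_\lp{u}(q_i))$ in the $v \leftarrow v'$ case and the induction hypothesis in the last case). This last induction is the only step requiring any thought; the rest is bookkeeping, and the main ``obstacle'' is merely to make the informal phrase ``between $z_1$ and $z_2$ in $p_v$'' precise, which the chain observation settles.
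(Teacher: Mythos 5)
Your proof is correct and follows essentially the same route as the paper's: restrict the given path to the contiguous sub-block of its nodes lying between $z_1$ and $z_2$ and observe that the path conditions are inherited. You are actually more careful than the paper, which dismisses \ref{item:guards} with ``by construction'': you correctly flag that the domain condition $v \in \dom(\lambda_\lp{u}(z_1))$ is not literally inherited and needs the short induction you give, showing $v \in \dom(\lambda_\lp{u}(q_i))$ propagates along the path via the update recurrence.
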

\begin{proof}
Let $a_k,\dots,a_m$ be the action-nodes between $z_1$ and $z_2$ on $p_v$. We know by construction that $z_1\leq_\lp{u}a_k\leq_\lp{u} a_{k+1}\leq_\lp{u}\cdots\leq_\lp{u} a_m\leq_\lp{u} z_2$. This verifies the first property of a path.
For the state-nodes we take the nodes $z_1$ and $z_2$ and the state-nodes between them in $p_v$. We denote these state-nodes with $q_k,\cdots q_{m+1}$. They satisfy~\ref{item:guards} by construction.
\end{proof}

The following notion is also useful.

\begin{definition}[Bottleneck]\label{def:bottleneck}
    Let $U = [\lp{u}] \in \Pom(\Act\cup \State)$ and $u_0,u_1,u_2\in S_\lp{u}$.
We say $u_1$ is a \emph{bottleneck between $u_0$ and $u_2$} if $u_0\leq_\lp{u} u_1 \leq_\lp{u} u_2$ and
for all $u_3\in S_\lp{u}$ s.t. $u_0\leq_\lp{u} u_3$ we have $u_1\leq_\lp{u} u_3$ or $u_3\leq_\lp{u} u_1$.
\end{definition}

\begin{lemma}\label{lemma:bottleneck}
    Let $U = [\lp{u}] \in \Pom(\Act\cup \State)$ and $u_1,u_2\in S_\lp{u}$ s.t.\ $u_1\leq_\lp{u} u_2$.
    If there exists a path $p_v$ from $u_1$ to $u_2$, and a bottleneck $u_3$ between them, then the bottleneck is on $p_v$.
\end{lemma}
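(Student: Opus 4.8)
The plan is to use that the path $p_v$ from $u_1$ to $u_2$ is a chain in $\lp{u}$ whose least node is $u_1$ and whose greatest node is $u_2$, and that the bottleneck $u_3$, being comparable with every node above $u_1$, must therefore coincide with one of the nodes of $p_v$ or else sit strictly between two consecutive ones --- and the latter is forbidden by the predecessor/successor requirements in~\ref{item:guards}.

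First I would collect the structural facts about $p_v = q_1, a_1, \dots, a_n, q_{n+1}$. By~\ref{item:guards}, $q_i$ is the predecessor of $a_i$ and $q_{i+1}$ its successor, so $q_1 <_\lp{u} a_1 <_\lp{u} q_2 <_\lp{u} \cdots <_\lp{u} a_n <_\lp{u} q_{n+1}$; in particular any two nodes of $p_v$ are comparable, $q_1 = u_1$ and $q_{n+1} = u_2$, and every node of $p_v$ lies above $u_1$ (for $a_i$ by~\ref{item:acties}, and for $q_i$ with $i \geq 2$ since $q_i >_\lp{u} a_{i-1} \geq_\lp{u} u_1$).

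Since $u_3$ is a bottleneck between $u_1$ and $u_2$ and every node of $p_v$ is $\geq_\lp{u} u_1$, the node $u_3$ is comparable with every node of $p_v$. Let $x^-$ be the greatest node of $p_v$ with $x^- \leq_\lp{u} u_3$ (it exists, as $q_1 = u_1 \leq_\lp{u} u_3$) and $x^+$ the least node of $p_v$ with $u_3 \leq_\lp{u} x^+$ (it exists, as $u_3 \leq_\lp{u} u_2 = q_{n+1}$). If $x^- = x^+$ then $u_3 = x^-$ and $u_3$ lies on $p_v$. Otherwise $x^- <_\lp{u} u_3 <_\lp{u} x^+$, and since $u_3$ is comparable with every node of the chain $p_v$, the nodes $x^-$ and $x^+$ are consecutive on $p_v$, so $(x^-, x^+)$ equals $(q_i, a_i)$ or $(a_i, q_{i+1})$ for some $i$. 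In the first case $u_3 <_\lp{u} a_i$, so $u_3$ is strictly below $a_i$; as $q_i$ is the predecessor of $a_i$ this forces $u_3 \leq_\lp{u} q_i = x^-$, contradicting $x^- <_\lp{u} u_3$. The second case is symmetric, using that $q_{i+1}$ is the successor of $a_i$. Hence $u_3$ occurs on $p_v$.

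The steps I expect to be most delicate are purely organisational: being sure that $u_1$ and $u_2$ are the extreme nodes of the chain $p_v$, so that $u_3$ cannot fall below $q_1$ or above $q_{n+1}$, and recognising that the ``strictly between two consecutive path nodes'' case is precisely what the predecessor/successor clauses of~\ref{item:guards} exclude. The degenerate case $n = 0$ is immediate: then $u_1 = q_1 = q_{n+1} = u_2$, and the only bottleneck between $u_1$ and $u_2$ is $u_1$ itself, which lies on $p_v$.
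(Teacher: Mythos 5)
Your proof is correct and takes essentially the same approach as the paper: both arguments use the bottleneck's comparability with every path node together with the predecessor/successor clauses of \ref{item:guards} to show that $u_3$ cannot sit strictly between two consecutive nodes of the chain $p_v$. The paper phrases it as a proof by contradiction with a case split on the label of the largest path node below $u_3$, whereas you sandwich $u_3$ between $x^-$ and $x^+$ and argue directly, but the underlying mechanism is identical.
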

\begin{proof}
Proof by contradiction. Suppose that $u_3$ is not on $p_v$. Take the biggest element of the path that is below $u_3$, denote it with $s_1$. If $\lambda_\lp{u}(s_1)\in \Act$, we know it has a unique successor state-node that is on $p_v$, denote that with $s_2$.
This means that $u_3<_\lp{u}s_2$ as $s_1$ was the biggest element below $u_3$ on $p_v$. This is a contradiction with the fact that $s_1\leq_\lp{u}u_3$ and $u_3$ not on $p_v$. Now suppose that $\lambda_\lp{u}(s_1)\in \State$.
If $u_2=s_1$, then we get that $u_3\leq_\lp{u} u_2 \leq_\lp{u} u_3$, which is a contradiction. Thus we know that on $p_v$ there must be a node with an action label after $s_1$, let us call it $s_3$ and its unique predecessor is $s_1$.
And thus we have $u_3 <_\lp{u} s_3$. From this we can conclude that $u_3\leq_\lp{u} s_1$ which can only be true if $u_3 = s_1$, which is a contradiction.
\end{proof}

\begin{lemma}\label{lemma:pathconcatenation}
    Let $U = [\lp{u}] \in \Pom(\Act\cup \State)$ and $u_0,u_1,u_2\in S_\lp{u}$ such that $u_0\leq_\lp{u} u_1\leq_\lp{u} u_2$.
    If there exists a path $p_v$ for $v$ from $u_0$ to $u_1$, and a path $s_v$ for $v$ from $u_1$ to $u_2$, then the path $t_v$ obtained by taking the union of $p_v$ and $s_v$ is a path for $v$ from $u_0$ to $u_2$.
\end{lemma}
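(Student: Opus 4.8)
The plan is to exhibit $t_v$ explicitly and then verify, clause by clause, that it satisfies Definition~\ref{def:path}. Write the given path from $u_0$ to $u_1$ as $p_v = q_1, a_1, \dots, a_n, q_{n+1}$, so that $\lambda_\lp{u}(q_1) = \lambda_\lp{u}(u_0)$, $v \in \dom(\lambda_\lp{u}(u_0))$, $\lambda_\lp{u}(q_{n+1}) = \lambda_\lp{u}(u_1)$, and $u_0 \leq_\lp{u} a_i \leq_\lp{u} u_1$ with $a_1 \leq_\lp{u} \dots \leq_\lp{u} a_n$; and the given path from $u_1$ to $u_2$ as $s_v = q'_1, a'_1, \dots, a'_m, q'_{m+1}$, with $\lambda_\lp{u}(q'_1) = \lambda_\lp{u}(u_1)$, $v \in \dom(\lambda_\lp{u}(u_1))$, $\lambda_\lp{u}(q'_{m+1}) = \lambda_\lp{u}(u_2)$, and $u_1 \leq_\lp{u} a'_j \leq_\lp{u} u_2$ with $a'_1 \leq_\lp{u} \dots \leq_\lp{u} a'_m$. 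Taking the union of $p_v$ and $s_v$ means joining them at the state-node $u_1$, which is the final node $q_{n+1}$ of $p_v$ and the initial node $q'_1$ of $s_v$; this yields the sequence $t_v = q_1, a_1, \dots, a_n, u_1, a'_1, \dots, a'_m, q'_{m+1}$. When $n = 0$ or $m = 0$ this collapses to $s_v$, resp.\ $p_v$, and one checks separately that the surviving path already runs from $u_0$ to $u_2$, using that $\lambda_\lp{u}(u_0) = \lambda_\lp{u}(u_1)$ in that degenerate case.

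Next I would check \ref{item:acties} for $t_v$. Its action-nodes are $a_1, \dots, a_n, a'_1, \dots, a'_m$, all labelled in $\Act$ and all lying between $u_0$ and $u_2$ because $u_0 \leq_\lp{u} a_i \leq_\lp{u} u_1 \leq_\lp{u} u_2$ and $u_0 \leq_\lp{u} u_1 \leq_\lp{u} a'_j \leq_\lp{u} u_2$. They form a single $\leq_\lp{u}$-chain: the two blocks are internally ordered by hypothesis, and they are bridged by $a_n \leq_\lp{u} u_1 \leq_\lp{u} a'_1$. For \ref{item:guards}, the state-node condition and the predecessor/successor conditions for each $a_i$ (resp.\ each $a'_j$) are inherited verbatim from $p_v$ (resp.\ $s_v$) --- in particular the successor of $a_n$ is $u_1$ and the predecessor of $a'_1$ is $u_1$, so the seam is consistent; the boundary requirements $\lambda_\lp{u}(q_1) = \lambda_\lp{u}(u_0)$ and $v \in \dom(\lambda_\lp{u}(u_0))$ come from $p_v$, and $\lambda_\lp{u}(q'_{m+1}) = \lambda_\lp{u}(u_2)$ from $s_v$; and the value-update equation at each action-node of $t_v$ is exactly the equation carried by that node in $p_v$ or in $s_v$, the two systems agreeing at the shared node $u_1$.

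I do not expect a real obstacle here: the lemma is essentially bookkeeping, and once the juxtaposition above is set up, every clause of Definition~\ref{def:path} for $t_v$ reduces to the corresponding clause for $p_v$ or for $s_v$. The one place that needs genuine (if small) attention is the chain condition in \ref{item:acties}, whose only non-formal ingredient is the inequality $a_n \leq_\lp{u} u_1 \leq_\lp{u} a'_1$ coming from the range constraints of the two paths; and one should not forget to dispatch the degenerate cases $n = 0$ and $m = 0$ mentioned above.
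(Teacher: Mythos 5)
Your proof is correct and takes essentially the same approach as the paper's. The paper likewise writes out the concatenated path explicitly (with action-nodes $a_1,\dots,a_n,b_1,\dots,b_m$ and state-nodes $q_1,\dots,q_{n+1},w_2,\dots,w_{m+1}$), identifies the seam by asserting $q_{n+1}=w_1$, and then verifies the predecessor/successor conditions of \ref{item:guards} by a three-way case split on the index, declaring the remaining clauses ``verified in a similar manner''; you do the same bookkeeping, and are in fact a bit more explicit in spelling out \ref{item:acties} (the chain condition bridged by $a_n \leq_\lp{u} u_1 \leq_\lp{u} a'_1$) and the degenerate cases $n=0$, $m=0$, which the paper passes over.
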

\begin{proof}
  Let $a_1,\dots, a_n$ be the action-nodes in $p_v$ and
  $b_1,\dots, b_m$ the action nodes of $s_v$. The action-nodes of
  $t_v$ are then $a_1,\dots, a_n,b_1,\dots, b_m=c_1,\dots, c_{n+m}$. The first requirement of a path is then automatically
  satisfied for $t_v$, by construction of $\leq_\lp{u}$. Take
  $q_1,\dots, q_{n+1}$ to be the state-nodes in $p_v$ and
  $w_1,\dots, w_{m+1}$ the state-nodes in $s_v$ and
  let $q_1,\dots, q_{n+1},w_2,\dots, w_{m+1}$ be the state-nodes of $t_v$, which we will denote with $x_1,\dots, x_{n+m+1}$. Per definition we know that $q_{n+1}=w_1$.
  We need that the predecessor of $c_i$ is $x_{i}$ for $1\leq i\leq n+m$. If $1\leq i\leq n$, this condition is immediately satisfied. For $n+1<i\leq n+m$, we know that $c_i$ is $b_j$ for some $1\leq j\leq m$. Thus $i=j+n$.
  The predecessor of $b_j$ is $w_{j}$, which is $x_{j+n}=x_{i}$. For the case where $i=n+1$, we know that the predecessor of $c_{n+1}$ is the predecessor of $b_1$, which is $w_1=q_{n+1}$. We have that $x_{i}=x_{n+1}$, so we have obtained the desired result. We can give a similar argument why the successor of $c_i$ is $x_i$ for $1\leq i\leq n+m$. The other properties of a path can be verified in a similar manner.
\end{proof}

We introduce the following definition in order to aid in the proof of why~\ref{item:minmax}--\ref{item:prepath} hold in a pomset if and only if that pomset is \emph{guarded}.

\begin{definition}
Let $U = [\lp{u}]$ and $V = [\lp{v}]$ be pomsets.
We say that $U$ is a \emph{convex subpomset} of $V$ when there exists an injective function $h\colon S_\lp{u} \to S_\lp{v}$ such that all of the following hold:
\begin{enumerate}[label={(\roman*)},leftmargin=1cm]
    \item
    Labels are partially preserved. We have for $u\in S_\lp{u}$, if $\lambda_\lp{u}(u)\in\Act$ or $\lambda_\lp{v}(h(u))\in\Act$ then $\lambda_\lp{v} \circ h (u) = \lambda_\lp{u}$, and $\lambda_\lp{u}(u)\in\State$ if and only if $\lambda_\lp{v} \circ h(u)\in\State$.

    \item
    Order is preserved and reflected, i.e., for $u_0, u_1 \in S_\lp{u}$ we have $u_0 \leq_\lp{u} u_1$ if and only if $h(u_0) \leq_\lp{v} h(u_1)$.

    \item
    $h$ is convex, i.e., if $u_0, u_1 \in S_\lp{u}$ and $v \in S_\lp{v}$ with $h(u_0) \leq_\lp{v} v \leq_\lp{v} h(u_1)$, then $h(u) = v$ for some $u \in S_\lp{u}$.
\end{enumerate}
\end{definition}

\begin{lemma}\label{lemma:succandpred}
Let $U,V\in\Pom(\Act\cup\State)$ and $u_0,u_1\in S_\lp{u}$. Assume $U$ is a convex subpomset of $V$ such that $h$ maps the minimum of $U$ to the minimum of $V$ and same for the maximum. If $u_1$ is the successor or predecessor of $u_0$, then $h(u_1)$ is the successor, resp.\ predecessor, of $h(u_0)$ in $V$.
\end{lemma}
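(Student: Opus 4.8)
The plan is to reduce everything to the definitions of \emph{successor} and \emph{convex subpomset}, carrying out the successor case in detail; the predecessor case is entirely dual, with the roles of the minimum and maximum of $U$ interchanged. So I would fix $u_1$ as the successor of $u_0$ in $U$---that is, $u_0 <_\lp{u} u_1$ and every $s \in S_\lp{u}$ with $u_0 <_\lp{u} s$ satisfies $u_1 \leq_\lp{u} s$---and aim to verify the two conditions that make $h(u_1)$ the successor of $h(u_0)$ in $V$: (i) $h(u_0) <_\lp{v} h(u_1)$, and (ii) every $t \in S_\lp{v}$ with $h(u_0) <_\lp{v} t$ satisfies $h(u_1) \leq_\lp{v} t$.

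Condition (i) is immediate: $u_0 <_\lp{u} u_1$ gives $h(u_0) \leq_\lp{v} h(u_1)$ by preservation of order, and $h(u_0) \neq h(u_1)$ by injectivity of $h$ since $u_0 \neq u_1$. For condition (ii), the crucial step---which I expect to be the only real content of the argument---is to observe that any such $t$ must lie in the image of $h$. Writing $M$ for the maximum of $U$, the hypothesis gives that $h(M)$ is the maximum of $V$, so $t \leq_\lp{v} h(M)$; together with $h(u_0) \leq_\lp{v} t$ this sandwiches $t$ between the two image-nodes $h(u_0)$ and $h(M)$, and convexity of $h$ then yields $t = h(u)$ for some $u \in S_\lp{u}$.

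Once $t = h(u)$ is available, the remainder is bookkeeping: reflection of order together with injectivity upgrades $h(u_0) <_\lp{v} h(u)$ to $u_0 <_\lp{u} u$; the defining property of $u_1$ as successor of $u_0$ then gives $u_1 \leq_\lp{u} u$; and preservation of order gives $h(u_1) \leq_\lp{v} h(u) = t$, which is precisely (ii). Running the dual argument---using that $h$ sends the minimum of $U$ to the minimum of $V$, so that any $t <_\lp{v} h(u_0)$ is squeezed between the image of the minimum of $U$ and $h(u_0)$ and hence lies in the image by convexity---settles the predecessor case. The only thing to be careful about throughout is keeping strict and non-strict inequalities aligned when transferring them across $h$; beyond the sandwiching observation there is no genuine obstacle.
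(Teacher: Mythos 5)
Your proof is correct and follows essentially the same route as the paper's: establish $h(u_0) <_\lp{v} h(u_1)$ by preservation plus injectivity, then for any competitor $t$ above $h(u_0)$ sandwich it between $h(u_0)$ and the image of the maximum, invoke convexity to pull $t$ back into the image of $h$, reflect the order into $U$, apply the successor property there, and push forward. The only difference is cosmetic: you are slightly more careful than the paper about distinguishing strict from non-strict inequalities.
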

\begin{proof}
If $u_1$ is the successor of $u_0$ in $U$, we know that $u_0\leq_{\lp{u}} u_1$. By convexity, we obtain $h(u_0)\leq_{\lp{v}}h(u_1)$. Now suppose that there exists $u_2\in S_{\lp{v}}$ such that $h(u_0) \leq_{\lp{v}} u_2$.
As we know that $h(*_{\max})$ is the maximum of $V$, we know that $h(u_0) \leq_{\lp{v}} u_2 \leq_{\lp{v}} h(*_{\max})$. By convexity, we obtain $u_2'\in S_{\lp{u}}$ such that $h(u_2')=u_2$ and $u_0 \leq_\lp{u} u_2'$.
As $u_1$ is the successor of $u_0$, we get that $u_1\leq_{\lp{u}} u_2'$.
By convexity, this means that $h(u_1)\leq_{\lp{v}} h(u_2')$. As $h(u_2')=u_2$, we have obtained the desired result.
The case where $u_1$ is the predecessor of $u_0$ in $U$ is proved analogously.
\end{proof}

\begin{lemma}\label{lemma:succandpredtwo}
Let $U,V\in\Pom(\Act\cup\State)$ s.t. $U$ is a convex subpomset of $V$. If $h(u_1)$ is the successor, resp.\ predecessor, of $h(u_0)$ in $V$, then $u_1$ is the successor, resp.\ predecessor, of $u_0$ in $U$.
\end{lemma}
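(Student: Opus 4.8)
The plan is simply to unfold the definition of \emph{successor} (resp.\ \emph{predecessor}) on both sides and transport each conjunct across the embedding $h$ using only the three defining properties of a convex subpomset, namely injectivity of $h$, preservation of order, and reflection of order. Since the predecessor case is obtained from the successor case by reversing all the orderings, I would state and prove the successor case and remark that the predecessor case is dual. So assume $h(u_1)$ is the successor of $h(u_0)$ in $V$, i.e.\ $h(u_0) <_\lp{v} h(u_1)$ and every $v'' \in S_\lp{v}$ with $h(u_0) <_\lp{v} v''$ satisfies $h(u_1) \leq_\lp{v} v''$.

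First I would check that $u_0 <_\lp{u} u_1$: from $h(u_0) \leq_\lp{v} h(u_1)$ and reflection of order we get $u_0 \leq_\lp{u} u_1$, and since $h$ is injective and $h(u_0) \neq h(u_1)$ we have $u_0 \neq u_1$, hence $u_0 <_\lp{u} u_1$. Next I would verify the maximality clause: let $u'' \in S_\lp{u}$ be arbitrary with $u_0 <_\lp{u} u''$. Preservation of order gives $h(u_0) \leq_\lp{v} h(u'')$, and injectivity again upgrades this to $h(u_0) <_\lp{v} h(u'')$. Applying the successor property of $h(u_1)$ to the node $h(u'') \in S_\lp{v}$ yields $h(u_1) \leq_\lp{v} h(u'')$, and reflection of order then gives $u_1 \leq_\lp{u} u''$. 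Together with $u_0 <_\lp{u} u_1$, this shows exactly that $u_1$ is the successor of $u_0$ in $U$.

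I do not expect a genuine obstacle here. Unlike \cref{lemma:succandpred}, this direction needs neither the convexity of $h$ nor the hypothesis that $h$ sends minima to minima and maxima to maxima: we only ever compare with an arbitrary node lying strictly above (resp.\ below) $u_0$, never with a global extremum, and the image of such a node under $h$ is automatically a legitimate witness to test against the successor (resp.\ predecessor) property, which quantifies over all of $S_\lp{v}$. The only point requiring a little care is the repeated appeal to injectivity to turn the non-strict inequalities produced by order-preservation into the strict inequalities needed to invoke the successor property.
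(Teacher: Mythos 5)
Your proof is correct and takes essentially the same route as the paper: transport the two clauses of the successor definition across $h$ using order-preservation and order-reflection, with the predecessor case by duality. You are in fact slightly more careful than the paper's version, which writes the maximality test with a non-strict $u_0\leq_\lp{u}u_2$ (requiring your injectivity step to repair), and your observation that convexity proper and the extremum-preservation hypothesis are not needed here, unlike in \cref{lemma:succandpred}, is accurate.
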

\begin{proof}
If $h(u_1)$ is the successor of $h(u_0)$ in $V$, we know that $h(u_0)\leq_{\lp{v}} h(u_1)$. Then, by convexity, we obtain $u_0\leq_{\lp{u}}u_1$. Now suppose that there exists $u_2\in S_{\lp{u}}$ such that $u_0 \leq_{\lp{u}} u_2$.
Thus we know that $h(u_0) \leq_{\lp{v}} h(u_2)$. As $h(u_1)$ is the successor of $h(u_0)$, we obtain $h(u_1)\leq_{\lp{v}} h(u_2)$ and subsequently that $u_1 \leq_\lp{u} u_2$.
For the case where $h(u_1)$ is the predecessor of $h(u_0)$ in $V$, the proof is analogous.
\end{proof}

\begin{lemma}%
\label{lemma:subpomset}
Let $U, V \in \Pom(\Act \cup \State)$.
If $U$ is a convex subpomset of $V$ with a unique state-labelled minimum and maximum, and $V$ obeys~\ref{item:minmax}--\ref{item:succpred}, then so does $U$.
\end{lemma}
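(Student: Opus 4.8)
The plan is to notice first that~\ref{item:minmax} for $U$ is exactly the standing hypothesis (that $U$ has a unique state-labelled minimum $*_{\min}$ and maximum $*_{\max}$), so the remaining work is to derive~\ref{item:alternating} and~\ref{item:succpred} for $U$ from their validity in $V$. Throughout I would lean on the three clauses of the definition of convex subpomset: clause~(i) transfers the status of being a state-node or an action-node in either direction along $h$; clause~(ii) transfers strict order in either direction (using injectivity of $h$); and clause~(iii), convexity, lets me pull a node of $V$ that lies between two nodes of the image of $h$ back into $S_\lp{u}$.

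For~\ref{item:alternating}: given related state-nodes $u_0 <_\lp{u} u_1$ of $U$, their images are state-nodes with $h(u_0) <_\lp{v} h(u_1)$, so~\ref{item:alternating} for $V$ supplies an action-node $a$ of $V$ with $h(u_0) <_\lp{v} a <_\lp{v} h(u_1)$. Since $a$ sits between two nodes in the image of $h$, convexity gives $a = h(u)$ for some $u \in S_\lp{u}$; clause~(i) makes $u$ an action-node and order reflection gives $u_0 <_\lp{u} u <_\lp{u} u_1$, so $u_0$ and $u_1$ are separated by an action-node. No use of $*_{\min}$ or $*_{\max}$ is needed in this case.

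For~\ref{item:succpred}: let $u$ be an action-node of $U$; then $h(u)$ is an action-node of $V$, so by~\ref{item:succpred} for $V$ it has a predecessor $p$ and a successor $q$, both state-nodes. To pull $p$ back, observe that $*_{\min} <_\lp{u} u$ (as $u$ is an action-node while $*_{\min}$ is state-labelled), hence $h(*_{\min}) <_\lp{v} h(u)$; since $p$ is \emph{the} predecessor of $h(u)$ in $V$ we get $h(*_{\min}) \leq_\lp{v} p \leq_\lp{v} h(u)$, so convexity yields $p = h(p')$ for a (necessarily state-labelled, by~(i)) $p' \in S_\lp{u}$. I then verify $p'$ is the predecessor of $u$ in $U$: $p' <_\lp{u} u$ by order reflection, and any $s'' <_\lp{u} u$ has $h(s'') <_\lp{v} h(u)$, hence $h(s'') \leq_\lp{v} p = h(p')$, hence $s'' \leq_\lp{u} p'$. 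The successor case is symmetric, pulling $q$ back with the help of $*_{\max}$.

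The transfers of labels and order via clauses~(i)--(ii) are routine; the delicate point is~\ref{item:succpred}, where one must exhibit the \emph{genuine} predecessor and successor in $U$, not merely some node below or above $u$. This is precisely where the unique state-labelled minimum and maximum of $U$ earn their keep---they provide the outer bounds required to invoke convexity---while the maximality (resp.\ minimality) built into the definition of predecessor (resp.\ successor) in $V$, combined with order reflection, forces the pulled-back node to be a predecessor (resp.\ successor) again. Note that \cref{lemma:succandpred,lemma:succandpredtwo} cannot be used directly here, since they presuppose that $h$ maps the minimum of $U$ to that of $V$ (and likewise for maxima), which is not assumed in this lemma.
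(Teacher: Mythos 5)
Your proof is correct and follows essentially the same route as the paper's: \ref{item:minmax} is the standing hypothesis, \ref{item:alternating} is transferred by pushing the two state-nodes forward, invoking \ref{item:alternating} for $V$, and pulling the intervening action-node back through convexity, and \ref{item:succpred} is transferred using $*_{\min}$ (resp.\ $*_{\max}$) to furnish the lower (resp.\ upper) endpoint that convexity requires, then verifying that the pulled-back node is the genuine predecessor (resp.\ successor) via order-reflection and the maximality/minimality built into those notions. One small inaccuracy in your closing remark: only \cref{lemma:succandpred} assumes that $h$ sends the minimum of $U$ to that of $V$; \cref{lemma:succandpredtwo} does not, and could in fact replace the final verification step once the predecessor and successor have been pulled back into $S_\lp{u}$.
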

\begin{proof}
Since $U$ has a unique state-labelled minimum and maximum, it suffices to verify~\ref{item:alternating} and~\ref{item:succpred}.
Let $U = [\lp{u}]$ and $V = [\lp{v}]$, and let $U$ be a convex subpomset of $V$ with $h\colon S_\lp{u} \to S_\lp{v}$.
\begin{enumerate}[label={(A\arabic*)},leftmargin=1cm]
    \setcounter{enumi}{1}
    \item
    Suppose $u_0, u_1 \in S_\lp{u}$ such that $u_0 <_\lp{u} u_1$ and $\lambda_\lp{u} (u_0),\lambda_\lp{u} (u_1)\in\State$; in that case, we know that $h(u_0) <_\lp{v} h(u_0)$, since $h$ is order-preserving and injective.
    Since $V$ is has the property of~\ref{item:alternating}, we obtain $v \in S_\lp{v}$ such that $h(u_0) <_\lp{v} v <_\lp{v} h(u_1)$ and $\lambda_\lp{v}(v) \in \Act$.
    Because $h$ is convex, we obtain $u \in S_\lp{u}$ such that $h(u) = v$, which tells us that $u_0 <_\lp{u} u <_\lp{u} u_1$.
    Since $\lambda_\lp{v}(h(u)) = \lambda_\lp{v}(v) \in \Act$, we get $\lambda_\lp{u}(u) \in\Act$, and the condition is satisfied.

    \item
    Suppose $u_0 \in S_\lp{u}$ with $\lambda_\lp{u}(u_0) \in \Act$; hence, we know that $h(u_0) \in S_\lp{v}$ with $\lambda_\lp{u}(h(u_0)) \in \Act$.
    Since $V$ obeys~\ref{item:succpred}, we obtain $v \in S_\lp{v}$ such that $\lambda_\lp{v}(v) \in \State$, and $v$ is the predecessor of $h(u_0)$ in $S_\lp{v}$.
    Let $*_{\min}$ be the unique state-labelled $\leq_\lp{u}$-minimum of $S_\lp{u}$, which exists by the premise.
    We then know that $*_{\min} <_\lp{u} u_0$, and hence $h(*_{\min}) <_\lp{v} h(u_0)$.
    Since $v$ is the predecessor of $h(u_0)$, it follows that $h(*_{\min}) \leq_\lp{v} v \leq_\lp{v} h(u_0)$; by convexity of $h$ we obtain $u_1 \in S_\lp{u}$ such that $h(u_1) = v$.
    It is easily seen that $u_1$ is state-labelled.
    To see that $u_1$ is the predecessor of $u_0$ in $S_\lp{u}$, first note that since $h(u_1) <_\lp{v} h(u_0)$ and $h$ is order-reflecting, it follows that $u_1 <_\lp{u} u_0$.
    Moreover, if $u_1' <_\lp{u} u_0$, then $h(u_1') <_\lp{v} h(u_0)$, and hence $h(u_1') \leq_\lp{v} h(u_1)$, meaning $u_1' \leq_\lp{u} u_1$.
    \qedhere
\end{enumerate}
\end{proof}

\fromguardedtocausal*
\begin{proof}
  We first prove the implication from left to right.
Note that guarded pomsets are series-parallel by construction.
To show that guarded pomsets satisfy~\ref{item:minmax}--\ref{item:prepath}, we proceed by induction on the construction of $\G$.
In the base, where $U = \alpha$ for $\alpha \in \State$ or $U = \alpha \cdot \ltr{a} \cdot \alpha[\ltr{a}]$ for $\alpha \in \State$ and $\ltr{a} \in \Act$, the requirements hold immediately.
In the inductive step, there are two cases, and for each of these cases we need to verify all seven properties.
\begin{itemize}
    \item
    If $U = V \cdot \alpha \cdot W$ such that $[\lp{v}] = V \cdot \alpha$ and $[\lp{w}] = \alpha \cdot W$ are guarded, then $V \cdot \alpha$ and $\alpha \cdot W$ satisfy all seven properties by induction.
    Without loss of generality, we can assume that $S_\lp{u} = S_\lp{v} \cup S_\lp{w}$ and
    ${\leq_\lp{u}} = {\leq_{\lp{v}}} \cup {\leq_{\lp{w}}} \cup S_\lp{v} \times S_\lp{w}$ and
    $\lambda_\lp{u} = \lambda_\lp{v} \cup \lambda_\lp{w}$, with $S_\lp{v} \cap S_\lp{w} = \{ * \}$ such that $*$ is the node labelled by $\alpha$ in $U$.
    In particular, this means that for all $v \in S_\lp{v}$ and $w \in S_\lp{w}$ we have that $v \leq_\lp{v} * \leq_\lp{w} w$.
    \begin{enumerate}[label={(A\arabic*)},leftmargin=1cm]
        \item
        The minimum node of $[\lp{v}]$ is also the minimum of $[\lp{u}]$, and similarly the maximum node of $[\lp{w}]$ is the maximum node of $[\lp{u}]$; since these nodes are state-labelled by induction, the condition holds.

        \item
        Let $u_0, u_1 \in S_\lp{u}$ be such that $\lambda_\lp{u}(u_0), \lambda_\lp{u}(u_1) \in \State$ and $u_0 <_\lp{u} u_1$.
        If $u_0, u_1 \in S_\lp{v}$, then $u_0 <_\lp{v} u_1$ and (since $[\lp{v}]$ satisfies~\ref{item:alternating}) we find $u_2 \in S_\lp{v}$ such that $u_0 <_\lp{v} u_2 <_\lp{v} u_1$ and $\lambda_\lp{v}(u_2) \in \Act$.
        From this, it follows that $u_2 \in S_\lp{u}$ such that $u_0 <_\lp{u} u_2 <_\lp{u} u_1$ and $\lambda_\lp{u}(u_2) \in \Act$.
        The case where $u_0, u_1 \in S_\lp{w}$ is similar.

        We are left with the case where $u_0 \in S_\lp{v} \setminus \{ * \}$ and $u_1 \in S_\lp{w} \setminus \{ * \}$.
        Since $[\lp{v}]$ satisfies~\ref{item:alternating} and $u_0 <_\lp{v} *$, we obtain $u_2 \in S_\lp{v}$ such that $u_0 <_\lp{v} u_2 <_\lp{v} *$ and $\lambda_\lp{v}(u_2) \in \Act$.
        Because $* <_\lp{w} u_1$, it then follows that $u_0 <_\lp{u} u_2 <_\lp{u} * <_\lp{u} u_1$ and $\lambda_\lp{u}(u_2) \in \Act$.

        \item
        Let $u_0 \in S_\lp{u}$ such that $\lambda_\lp{u}(u_0) \in \Act$.
        If $u_0 \in S_\lp{v}$, then by induction there exists a $u_1 \in S_\lp{v}$ such that $\lambda_\lp{v}(u_1) \in \State$ and $u_1$ is the predecessor of $u_0$ in $[\lp{v}]$.
        Since $\lambda_\lp{u}(u_1) \in \State$, it remains to prove that $u_1$ is the predecessor of $u_0$ in $[\lp{u}]$ as well. To see this, note that $u_1 <_\lp{u} u_0$, and if $u_1' \in S_\lp{u}$ such that $u_1' <_\lp{u} u_0$, then $u_1' \in S_\lp{v}$ by
        definition of $[\lp{u}]$, and hence also $u_1' <_\lp{v} u_0$,
        meaning that $u_1' \leq_\lp{v} u_1$, and thus $u_1' \leq_\lp{u} u_1$.
To show that $u_0$ has a state-node as successor as well, we can argue similarly.
        The case where $u_0 \in S_\lp{w}$ is analogous.

        \item
        If we take $u\in S_\lp{w}$, we know immediately that for every $v\in\dom(\lambda_\lp{u}(u))$ there is a path from $u$ to $*_{\max}$ by the induction hypothesis, where $*_{\max}$ is the maximum node of $[\lp{u}]$. Thus we only consider the case where $u\in S_\lp{v}\setminus\{*\}$.
        From the induction hypothesis we obtain a path $p_v$ for $v$ from $u$ to $*$ in $V$, and a path $q_v$ for $v$ from $*$ to $*_{\max}$ in $W$.
        Note that $*_{\max}$ in $W$ is the same node as $*_{\max}$ in $U$. Using \cref{lemma:pathconcatenation}, we obtain a path for $v$ from $u$ to $*_{\max}$.

        \item
        This item follows immediately from the induction hypothesis.

        \item This item also follows immediately from the induction hypothesis.

        \item
        If we take $u\in S_\lp{v}$, we know immediately that the condition is satisfied by the induction hypothesis. Thus we only consider the case where $u\in S_\lp{w}\setminus\{*\}$. Take $v\in\dom(\lambda_\lp{u}(u))$.
        From the induction hypothesis we obtain a path $p_v$ for $v$ from $s_0\in S_\lp{w}$ to $u$ in $[\lp{w}]$ such that $s_0=*$, or $s_0$ is the successor of an assignment-node with label $v\leftarrow k$ with $k\in \var\cup \val$. In the latter case, we are done immediately. In the case where $s_0=*$, we know this means that $v\in\dom(\lambda_\lp{v}(*))$.
        From our induction hypothesis we then obtain a path $q_v$ from $s_1\in S_\lp{v}$ such that $s_1=*_{\min}$, or $s_1$ is the successor of an assignment-node with label $v\leftarrow k$ with $k\in \var\cup \val$. The union of $q_v$ and $p_v$ is then a path for $v$ from $s_1$ to $u$ in $U$ such that $s_1=*_{\min}$, or $s_1$ is
        the successor of an assignment-node with label $v\leftarrow k$ with $k\in \var\cup \val$ (\cref{lemma:pathconcatenation}).
    \end{enumerate}

    \item
    If $U = \alpha\oplus\gamma \cdot (V \parallel W) \cdot \beta\oplus\delta$ such that $[\lp{v}] =
    \alpha \cdot V \cdot \beta$ and $[\lp{w}] = \gamma \cdot W \cdot \delta$ are guarded, then $[\lp{v}]$
    and $[\lp{w}]$ satisfy~\ref{item:minmax}--\ref{item:prepath} by induction.
    Without loss of generality, we assume that $S_\lp{u} = S_\lp{v} \cup S_\lp{w}$ and ${\leq_\lp{u}} =
    {\leq_\lp{v}} \cup {\leq_\lp{w}}$,
    with $S_\lp{v} \cap S_\lp{w} = \{ *_{\min}, *_{\max} \}$, with $*_{\min}$ the node with label
    $\alpha$ in $[\lp{v}]$ and label $\gamma$ in $[\lp{w}]$ and $*_{\max}$ the node with label $\beta$ in
    $[\lp{v}]$ and label $\delta$ in $[\lp{w}]$. We also have $\lambda_\lp{u}(*_{\min})=\alpha\oplus\gamma$
    and $\lambda_\lp{u}(*_{\max})=\beta\oplus\delta$ and for all nodes $x\in S_\lp{u}\setminus
    \{*_{\min},*_{\max}\}$ have $\lambda_\lp{u}(x) = \lambda_\lp{v}(x)$
    if $x\in S_\lp{v}$ and $\lambda_\lp{w}(x)$ if $x\in S_\lp{w}$.

    \begin{enumerate}[label={(A\arabic*)},leftmargin=1cm]
        \item
        To see that $*_{\min}$ is the minimum of $[\lp{u}]$, note that if $u \in S_\lp{u}$ then either $u \in S_\lp{v}$ or $u \in S_\lp{w}$.
        In the former case, we know that $*_{\min} \leq_\lp{v} s$, and hence $*_{\min} \leq_\lp{u} s$.
        The case where $u \in S_\lp{w}$ can be argued similarly.
        We can show that $*_{\max}$ is the maximum of $[\lp{u}]$ analogously.
        Lastly, we note that $\lambda_\lp{u}(*_{\min}), \lambda_\lp{u}(*_{\max}) \in \State$.

        \item
        Let $u_0, u_1 \in S_\lp{u}$ be such that $\lambda_\lp{u}(u_0), \lambda_\lp{u}(u_1) \in \State$ and $u_0 <_\lp{u} u_1$.
        By definition of $\leq_\lp{u}$, we have that either $u_0 <_\lp{v} u_1$ or $u_0 <_\lp{w} u_1$. W.l.o.g.\ we assume $u_0 <_\lp{v} u_1$. We obtain $u_2 \in S_\lp{v}$ such that $u_0
        <_\lp{v} u_2 <_\lp{v} u_1$ and $\lambda_\lp{v}(u_2) \in \Act$.
        It then follows that $u_2 \in S_\lp{u}$ such that $u_0 <_\lp{u} u_2 <_\lp{u} u_1$ and $\lambda_\lp{u}(u_2) \in \Act$.

        \item
        Let $u_0 \in S_\lp{u}$ be such that $\lambda_\lp{u}(u_0) \in \Act$.
        W.l.o.g.\ let $u_0 \in S_\lp{v}$. Hence, we find $u_1 \in S_\lp{v}$ such that $\lambda_\lp{v}(u_1) \in \State$ and $u_1$ is the predecessor of $u_0$ in $[\lp{v}]$ from the induction hypothesis.
        Since $u_1 \in S_\lp{u}$ and $\lambda_\lp{u}(u_1) \in \State$ automatically, it remains to show that $u_1$ is the predecessor of $u_0$ in $[\lp{u}]$ as well. This follows immediately from \cref{lemma:succandpred}. That $u_0$ has a state-node as its successor is proved analogously.

        \item Let $u \in S_\lp{u}$ be such that $\lambda_\lp{u}(u)\in \State$. Take $v\in\dom(\lambda_\lp{u}(u))$. W.l.o.g.\ assume that $u \in S_\lp{v}$. If $u\neq *_{\min}$ and $u\neq *_{\max}$,
        then we
        find a path for $v$ from $u$ to $*_{\max}$ in $[\lp{v}]$. This immediately is also a path for $v$ from $u_0$ to $*_{\max}$ in $[\lp{u}]$ as $\beta\subseteq\beta\oplus\delta$.
        If $u = *_{\max}$, there is a trivial path for $v$ from $*_{\max}$ to $*_{\max}$. If $u = *_{\min}$, we can assume without loss of generality that $v\in\dom(\alpha)$, and use our induction hypothesis to obtain a path for $v$ in a way similar to the first case.

        \item Let $u \in S_\lp{u}$ be such that $\lambda_\lp{u}(u)= v\leftarrow n$ for some $v\in \var$ and $n\in \val$. W.l.o.g.\ we assume $u \in S_\lp{v}$. We find $s \in S_\lp{v}$ such
        that $\lambda_\lp{v}(s) \in \State$, $s$ is the successor of $u$ in $[\lp{v}]$ and $\lambda_\lp{v}(s)(v)=n$. We immediately obtain $s \in S_\lp{u}$ and $\lambda_\lp{u}(s)(v)=n$.
        By \cref{lemma:succandpred}, using the identity function as witness to show that $[\lp{v}]$ is a convex subpomset of $[\lp{u}]$, we get that $s$ is the successor of $u$ in $[\lp{u}]$, and we are done.

        \item Let $u \in S_\lp{u}$ be such that $\lambda_\lp{u}(u)= v\leftarrow v'$ for some $v,v'\in \var$. If $u \in S_\lp{v}$, then we find $p,s \in S_\lp{v}$ such
        that $\lambda_\lp{v}(p),\lambda_\lp{v}(s) \in \State$,
        $p$ is the predecessor of $u$ in $[\lp{v}]$, $s$ is the successor of $u$ in $[\lp{v}]$, $v'\in\dom(\lambda_\lp{v}(p))$ and $\lambda_\lp{v}(s)(v)=\lambda_\lp{v}(s)(v')=\lambda_\lp{v}(p)(v')$.
        We know that $p,s \in S_\lp{u}$ and $\lambda_\lp{u}(s)(v)=\lambda_\lp{u}(s)(v')=\lambda_\lp{u}(p)(v')$ immediately, as $\alpha\subseteq\alpha\oplus\gamma$ and $\beta\subseteq\beta\oplus\delta$ and for all other nodes the labels are the same.
        From \cref{lemma:succandpred}, using the identity function as witness to show that $[\lp{v}]$ is a convex subpomset of $[\lp{u}]$, we get that $s$ is the successor of $u$ in $[\lp{u}]$, and $p$ is the predecessor of $u$ in $[\lp{u}]$, and we are done.
        The case where $u \in S_\lp{w}$ is similar.

        \item Let $u \in S_\lp{u}$ be s.t. $\lambda_\lp{u}(u)\in \State$. Take
        $v\in\dom(\lambda_\lp{u}(u))$. W.l.o.g.\ let $u \in S_\lp{v}$. If $u\neq *_{\min}$ and $u\neq
        *_{\max}$, we
        find a path for $v$ from $s\in S_\lp{v}$ to $u$ such that $s= *_{\min}$ or $s$ is the
        successor of an assignment-node $a\in S_\lp{v}$ with label $v\leftarrow k$ for $k\in \var\cup \val$.
        If $s= *_{\min}$, as $\alpha\subseteq\alpha\oplus\gamma$, the path is also a path from $s$ to
        $u$ in $[\lp{u}]$. If $s \neq *_{\min}$, then we know that $a\in S_\lp{u}$ with label
        $v\leftarrow k$ and $s\in S_\lp{u}$. From \cref{lemma:succandpred} we know that $s$ is the
        successor of $a$ in $S_\lp{u}$. We can also immediately infer that the path for $v$ from $s$
        to $u$ in $[\lp{v}]$ is also a path for $v$ from $s$ to $u$ in $[\lp{u}]$.
        The case where $u = *_{\min}$ is trivial. If $u =
        *_{\max}$, we can assume w.l.o.g.\ that $v\in\dom(\beta)$, and use our induction
        hypothesis to obtain a path for $v$ similar to the first case.
        \qedhere
    \end{enumerate}
    \end{itemize}
    For the implication from right to left, we proceed by induction on the size of $U$; this is well-founded, because $U$ is series-parallel and therefore finite.
Specifically, our induction hypothesis is that if $V$ satisfies~\ref{item:minmax}--\ref{item:prepath} and is strictly smaller than $U$, then $V$ is guarded.

Since $U$ satisfies~\ref{item:minmax}, there exist $*_{\min}, *_{\max} \in S_\lp{u}$, respectively the $\leq_\lp{u}$-minimum and $\leq_\lp{u}$-maximum of $S_\lp{u}$, such that $\lambda_\lp{u}(*_{\min}) = \alpha \in \State$ and $\lambda_\lp{u}(*_{\max}) = \beta \in \State$.
If $*_{\min} = *_{\max}$, we know that $U = \alpha = \beta$, which makes $U$ guarded.
Otherwise, we can write $U = \alpha \cdot V \cdot \beta$ for some pomset $V$, and note that $V$ is series-parallel because $U$ is --- after all, if $U$ does not contain an $\mathsf{N}$-shape, then neither does $V$.
This gives us four cases to consider.
\begin{itemize}
    \item
    If $V = 1$, we reach a contradiction, for then $U = \alpha \cdot \beta$, which fails~\ref{item:alternating}, as there is no action in between the state-labelled nodes $\alpha$ and $\beta$.
    We can therefore exclude this case.

    \item
    If $V = \ltr{a}$ for some $\ltr{a} \in \Act \cup \State$, we exclude the possibility that $\ltr{a} = \gamma \in \State$, for then $U$ would again fail~\ref{item:alternating}.
    Hence, $\ltr{a} \in \Act$. We verify that $\beta=\alpha[\ltr{a}]$. We have two cases. Suppose first that $a = v\leftarrow n$ for some $n\in \val$. Take $v'\in \var$. Then if $v=v'$, we know that $\beta(v')=n$, because of~\ref{item:effect}. So then $\beta(v')=\alpha[\ltr{a}](v')$.
    If $v\neq v'$, then we distinguish two cases again. If $v'\in\dom(\alpha)$, then as $U$ satisfies~\ref{item:path}, we obtain $\beta(v')=\alpha(v')$. Hence, $\beta(v')=\alpha[\ltr{a}](v')$. If
    $v'\notin\dom(\alpha)$, we do a proof by contradiction. Suppose that $v'\in\dom(\beta)$. Then because $U$
    satisfies~\ref{item:prepath}, as $\ltr{a}$ does not change the value of $v'$, we know that there is a path from $\alpha$ to $\beta$ for $v'$. Thus we have that $v'\in\dom(\alpha)$, which is a contradiction. Hence, $v'\notin\dom(\beta)$.
    We can conclude that $\beta=\alpha[\ltr{a}]$.

    For the second case, suppose that $\ltr{a} = v\leftarrow v'$ for some $v'\in \var$. From~\ref{item:annoyingassignments}, we establish that $v'\in\dom(\alpha)$, and $\beta(v)=\alpha(v')=\beta(v')$.
    Hence, $\alpha[\ltr{a}]$ exists. Take $v''\in \var$. If $v=v''$, we know that $\beta(v'')=\alpha(v')$, and thus $\beta(v'')=\alpha[\ltr{a}](v'')$.
    If $v\neq v''$, then we distinguish two cases. If $v''\in\dom(\alpha)$, as $U$ satisfies~\ref{item:path}, we obtain $\beta(v'')=\alpha(v'')$. Hence, $\beta(v'')=\alpha[\ltr{a}](v'')$. We can exclude the case where
    $v''\notin\dom(\alpha)$ similarly as above.
    Hence, $\beta=\alpha[\ltr{a}]$ and $\alpha[\ltr{a}]$ exists, which makes $U = \alpha \cdot \ltr{a} \cdot \beta$ a guarded pomset.

    \item
    Suppose that $V = V_0 \cdot V_1$ for some non-empty series-parallel pomsets $V_0$ and $V_1$.
    We write $[\lp{v}_0] = \alpha \cdot V_0$ and $[\lp{v}_1] = V_1 \cdot \beta$.
    Without loss of generality, we assume that $S_{\lp{v}_0}$ and $S_{\lp{v}_1}$ are disjoint, and all of the following hold:
    \begin{mathpar}
        S_\lp{u} = S_{\lp{v}_0} \cup S_{\lp{v}_1}
        \and
        \leq_\lp{u} = {\leq_{\lp{v}_0}} \cup {\leq_{\lp{v}_1}} \cup S_{\lp{v}_0} \times S_{\lp{v}_1}
        \and
        \lambda_\lp{u}(u) =
            \begin{cases}
            \lambda_{\lp{v}_0}(u) & u \in S_{\lp{v}_0} \\
            \lambda_{\lp{v}_1}(u) & u \in S_{\lp{v}_1} \\
            \end{cases}
    \end{mathpar}

    Let $T_0 \subseteq S_{\lp{v}_0}$ be the set of $\leq_{\lp{v}_0}$-maxima of $S_{\lp{v}_0}$, and let $T_1 \subseteq S_{\lp{v}_1}$ be the set of $\leq_{\lp{v}_1}$-minima of $S_{\lp{v}_1}$; note that these sets are non-empty, because $V_0$ and $V_1$ are non-empty.
    We proceed to make the following observations about the labels of nodes in $T_0$ and $T_1$:
    \begin{itemize}
        \item
        Suppose that $v_0 \in T_0$ and $v_1 \in T_1$ such that $\lambda_{\lp{v}_0}(v_0) \in \State$ and $\lambda_{\lp{v}_1}(v_1) \in \State$.
        In that case, we know that $v_0 <_\lp{u} v_1$. Since $U$ satisfies~\ref{item:alternating}, we obtain a $v_2 \in S_{\lp{u}}$ such that $v_0 <_\lp{u} v_2 <_\lp{u} v_1$ and $\lambda_\lp{u}(v_2) \in \Act$.
        We have now reached a contradiction, for if $v_2 \in S_{\lp{v}_0}$ then $v_0$ is not a $\leq_{\lp{v}_0}$-maximum of $S_{\lp{v}_0}$, and if $v_2 \in S_{\lp{v}_1}$ then $v_1$ is not a $\leq_{\lp{v}_1}$-minimum of $S_{\lp{v}_1}$.
        Thus we can conclude that \emph{at most} one of $T_0, T_1$ contains a node labelled by an atom.

        \item
        By the above and the fact that $T_0$ and $T_1$ are non-empty, it follows that at least one of $T_0, T_1$ contains a node labelled by a letter.
        For instance, suppose $v_0 \in T_0$ such that $\lambda_{\lp{v}_0}(v_0) \in \Act$.
        In that case, since $U$ satisfies~\ref{item:succpred}, we find that there exists a $v_1 \in S_\lp{u}$ such that $v_1$ is the successor of $v_0$, and $\lambda_\lp{u}(v_1) \in \State$.
        We know that $v_1 \not\in S_{\lp{v}_0}$, for then $v_0$ would not be $\leq_{\lp{v}_0}$-maximal in $S_{\lp{v}_0}$, and thus $v_1 \in S_{\lp{v}_1}$.
        Furthermore, if $v_1' \in S_{\lp{v}_1}$, then $v_0 <_\lp{u} v_1'$, and hence $v_1 \leq_\lp{u} v_1'$, meaning $v_1 \leq_{\lp{v}_1} v_1'$.
        This means that $v_1$ is the unique $\leq_{\lp{v}_1}$-minimum of $S_{\lp{v}_1}$, and hence $T_1 = \{ v_1 \}$.
        A similar analysis applies when $v_1 \in T_1$ such that $\lambda_{\lp{v}_1}(v_1) \in \Act$.
    \end{itemize}

    From the above, we learn that either $T_0$ is a singleton whose only element is state-labelled, and all elements of $T_1$ are action-labelled, or vice versa.
    For the remainder, we assume the former (the dual is argued similarly).
    Let $T_0 = \{ * \}$, and choose $\gamma = \lambda_\lp{u}(*)$; we claim that $[\lp{v}_0]$ and $\gamma \cdot [\lp{v}_1]$ satisfy~\ref{item:minmax}--\ref{item:prepath} and are smaller than $U$.

    \begin{enumerate}[label={(A\arabic*)},leftmargin=1cm]
      \setcounter{enumi}{2}
      \item
      For $[\lp{v}_0]$, we first recall that $[\lp{v}_0] = \alpha \cdot V_0$, and hence $[\lp{v}_0]$ contains a unique state-labelled minimum.
      Since $T_0 = \{ * \}$ and $\lambda_{\lp{v}_0}(*) = \gamma \in \State$, we know that $[\lp{v}_0]$ also contains a unique state-labelled maximum.
      Moreover, since $U = [\lp{v}_0] \cdot [\lp{v}_1]$, it is straightforward to show that $[\lp{v}_0]$ is a convex subpomset of $U$; by \cref{lemma:subpomset},
      it then follows that $[\lp{v}_0]$ satisfies~\ref{item:alternating} and~\ref{item:succpred}.

      \item
      Take $u\in S_{\lp{v}_0}$ such that $\lambda_{\lp{v}_0}(u)\in\State$ and $v\in\dom(\lambda_{\lp{v}_0}(u))$.
      We immediately obtain $u\in S_{\lp{u}}$, $\lambda_{\lp{u}}(u)\in\State$ and $v\in\dom(\lambda_{\lp{u}}(u))$.
      As $[\lp{u}]$ satisfies~\ref{item:path}, there exists a path $p_v$ for $v$ from $u$ to $*_{\max}$ in $[\lp{u}]$.
      Since $U = [\lp{v}_0] \cdot [\lp{v}_1]$ and $*$ is the local $\leq_{\lp{v}_0}$-maximum of $[\lp{v}_0]$, we know that $u\leq_\lp{u} * \leq_\lp{u} *_{\max}$. Suppose there exists $w\in S_\lp{u}$ such that $u\leq_\lp{u} w$.
      If $w\in S_{\lp{v}_0}$, then $w\leq_\lp{u} *$. If $w\in S_{\lp{v}_1}$, then $*\leq_\lp{u} w$.
      Hence, via \cref{lemma:bottleneck} we can conclude that $*$ is on path $p_v$.
      From \cref{lemma:subpath} we then know that there exists a path $q_v$ for $v$ from $u$ to $*$ in $[\lp{u}]$. As these are all nodes that also occur in $S_{\lp{v}_0}$, $q_v$ is also a path for $v$ from $u$ to $*$ in $[\lp{v}_0]$.
      As $*$ is the unique state-labelled maximum of $[\lp{v}_0]$, this proves that $[\lp{v}_0]$ satisfies~\ref{item:path}.

      \item
      Take $u\in S_{\lp{v}_0}$ such that $\lambda_{\lp{v}_0}(u)= v\leftarrow n$ for some $v\in \var$ and $n\in \val$. Hence, $u\in S_{\lp{u}}$ such that $\lambda_{\lp{u}}(u)= v\leftarrow n$.
      As $[\lp{u}]$ satisfies~\ref{item:effect}, we obtain a node $s\in S_\lp{u}$ such that $s$ is the successor of $u$ and $\lambda_{\lp{u}}(s)(v)=n$. Because $[\lp{v}_0]$ has a maximum state-labelled node, we know that the
      successor of $u$ in $U=[\lp{v}_0] \cdot [\lp{v}_1]$ is in fact also a node in $S_{\lp{v}_0}$.
      Thus we have $s\in S_{\lp{v}_0}$ such that $\lambda_{\lp{v}_0}(s)(v)=n$.
      From the fact that $[\lp{v}_0]$ is a convex subpomset of $[\lp{u}]$ using the identity function and \cref{lemma:succandpredtwo}, we can conclude that $s$ is a successor of $u$ in $[\lp{v}_0]$.
      Hence, $[\lp{v}_0]$ satisfies~\ref{item:effect}.

      \item
      Take $u\in S_{\lp{v}_0}$ such that $\lambda_{\lp{v}_0}(u)= v\leftarrow v'$ for some $v,v'\in \var$. Hence, $u\in S_{\lp{u}}$ such that $\lambda_{\lp{u}}(u)= v\leftarrow v'$.
      Because $[\lp{u}]$ satisfies~\ref{item:annoyingassignments}, we obtain nodes $p,s\in S_\lp{u}$ such that $p$ is the
      predecessor of $u$, $v'\in\dom(\lambda_\lp{u}(p))$, $s$ is the successor of $u$ and
      $\lambda_\lp{u}(s)(v)=\lambda_\lp{u}(s)(v')=\lambda_\lp{u}(p)(v')$.
      We know immediately that $p\in S_{\lp{v}_0}$ with the same label as in $[\lp{u}]$. Because $[\lp{v}_0]$ has a maximum state-labelled node, we know that the successor of $u$ in $U=[\lp{v}_0] \cdot [\lp{v}_1]$ is also a node in $S_{\lp{v}_0}$.
      Hence, $s\in S_{\lp{v}_0}$ such that $\lambda_{\lp{v}_0}(s)(v)=\lambda_{\lp{v}_0}(s)(v') = \lambda_{\lp{v}_0}(p)(v')$.
      From \cref{lemma:succandpredtwo}, using the identity function as a witness to show that $[\lp{v}_0]$ is a convex subpomset of $[\lp{u}]$, we conclude that $s$ and $p$ are the successor and predecessor of $u$ in $[\lp{v}_0]$.
      Thus $[\lp{v}_0]$ satisfies~\ref{item:annoyingassignments}.

      \item
      If $u\in S_{\lp{v}_0}$ such that $\lambda_{\lp{v}_0}(u)\in\State$ and $v\in\dom(\lambda_{\lp{v}_0}(u))$, we immediately obtain $u\in S_{\lp{u}}$, $\lambda_{\lp{u}}(u)\in\State$ and
      $v\in\dom(\lambda_{\lp{u}}(u))$.
      As $[\lp{u}]$ satisfies~\ref{item:prepath}, there exists a path $p_v$ for $v$ from $s\in S_\lp{u}$ to $u$ such that either $s=*_{\min}$ or $s$ is the successor of an assignment-node $a$ with label $v\leftarrow k$ with $k\in \var\cup \val$.
      If $s=*_{\min}$, then, as the minimal node of $[\lp{u}]$ is the same as the minimal node of $[\lp{v}_0]$ and all nodes on path $p_v$ also occur in $S_{\lp{v}_0}$, we can conclude that $p_v$ is a path from the unique minimum of $[\lp{v}_0]$ to $u$ for $v$ in $[\lp{v}_0]$.
      If $s\neq *_{\min}$, then we know that $a \leq_\lp{u} s \leq_\lp{u} u$, and thus that $s,a\in S_{\lp{v}_0}$. This means that the path $p_v$ exists entirely out of nodes that are also in $S_{\lp{v}_0}$. Hence we have a path in $[\lp{v}_0]$ for $v$
      from $s$ to $u$. From \cref{lemma:succandpredtwo}, we know that $s$ is the successor of $a$ in $[\lp{v}_0]$. Hence, $[\lp{v}_0]$ satisfies~\ref{item:prepath}.
    \end{enumerate}

      Furthermore, since $U = [\lp{v}_0] \cdot [\lp{v}_1]$ and $[\lp{v}_1]$ is non-empty, we know that $[\lp{v}_0]$ is smaller than $U$.

      \begin{enumerate}[label={(A\arabic*)},leftmargin=1cm]
      \setcounter{enumi}{2}
        \item
      For $\gamma \cdot [\lp{v}_1]$, we first recall that $[\lp{v}_1] = V_1 \cdot \beta$, and hence $\gamma \cdot [\lp{v}_1]$ contains a unique state-labelled maximum.
      Furthermore, $\gamma \cdot [\lp{v}_1]$ contains a unique state-labelled minimum by construction, as well.
      Since $[\lp{v}_0]$ has a unique maximum labelled by $\gamma$, we can write $[\lp{v}_0] = W \cdot \gamma$ for some pomset $W$; moreover, $W$ must be non-empty, for otherwise $\alpha \cdot V_0 = [\lp{v}_0] = \gamma$, meaning $V_0$ is empty.
      Since $U = [\lp{v_0}] \cdot [\lp{v}_1] = W \cdot \gamma \cdot [\lp{v}_1]$, we find that $\gamma \cdot [\lp{v}_1]$ is a convex subpomset of $U$.
      By \cref{lemma:subpomset}, it then follows that $\gamma \cdot [\lp{v}_1]$ satisfies~\ref{item:alternating} and~\ref{item:succpred}.

      \item
      Now consider $u\in S_{\lp{v}_1}\cup\{*\}$ such that $\lambda_{\lp{v}_1}(u)\in\State$ and $v\in\dom(\lambda_{\lp{v}_1}(u))$.
      We immediately obtain $u\in S_{\lp{u}}$, $\lambda_{\lp{u}}(u)\in\State$ and $v\in\dom(\lambda_{\lp{u}}(u))$.
      As $[\lp{u}]$ satisfies~\ref{item:path}, there exists a path $p_v$ for $v$ from $u$ to $*_{\max}$ in $[\lp{u}]$.
      Since $U = [\lp{v}_0] \cdot [\lp{v}_1]$, we know the nodes of $p_v$ all also exist in $S_{\lp{v}_1}$, and the unique state-labelled maximum of $U$, $*_{\max}$, is also the unique state-labelled maximum of $\gamma \cdot [\lp{v}_1]$.
      We immediately obtain that $p_v$ is a path for $v$ from $u$ to $*_{\max}$ in $\gamma \cdot [\lp{v}_1]$. Hence, $\gamma \cdot [\lp{v}_1]$ satisfies~\ref{item:path}.

      \item
      Take $u\in S_{\lp{v}_1}$ s.t. $\lambda_{\lp{v}_1}(u)= v\leftarrow n$ for some $v\in \var$ and $n\in \val$. Thus $u\in S_{\lp{u}}$ such that $\lambda_{\lp{u}}(u)= v\leftarrow n$.
      Because $[\lp{u}]$ satisfies~\ref{item:effect}, we obtain a node $s\in S_\lp{u}$ such that $s$ is the successor
      of $u$ and $\lambda_{\lp{u}}(s)(v)=n$. We know that $U=[\lp{v}_0] \cdot [\lp{v}_1]$. Hence, $s\in S_{\lp{v}_1}$.
    We can conclude that $s\in S_{\lp{v}_1}$ such that $\lambda_{\lp{v}_1}(s)(v)=n$.
      From \cref{lemma:succandpredtwo}, we infer that $s$ is also the successor of $u$ in $[\lp{v}_1]$.
      Hence, $\gamma \cdot [\lp{v}_1]$ satisfies~\ref{item:effect}.

      \item
      Take $u\in S_{\lp{v}_1}$ s.t. $\lambda_{\lp{v}_1}(u)= v\leftarrow v'$ for some $v,v'\in \var$. Thus $u\in S_{\lp{u}}$ such that $\lambda_{\lp{u}}(u)= v\leftarrow v'$.
      Because $[\lp{u}]$ satisfies~\ref{item:annoyingassignments}, we obtain nodes $p,s\in S_\lp{u}$ such that $p$ and $s$ are respectively the
      predecessor and successor of $u$, $v'\in\dom(\lambda_\lp{u}(p))$, and
      $\lambda_\lp{u}(s)(v)=\lambda_\lp{u}(s)(v')=\lambda_\lp{u}(p)(v')$.
      Immediately we obtain $s\in S_{\lp{v}_1}$. Because $\gamma \cdot [\lp{v}_1]$ has a minimum state-labelled node, we know that the predecessor of $u$ in $U=[\lp{v}_0] \cdot [\lp{v}_1]$ is also a node in $S_{\lp{v}_1}\cup\{*\}$.
      Hence, $p,s\in S_{\lp{v}_1}\cup\{*\}$ s.t.  $\lambda_{\lp{v}_1}(s)(v)=\lambda_{\lp{v}_1}(s)(v') = \lambda_{\lp{v}_1}(p)(v')$.
      From \cref{lemma:succandpredtwo}, we infer that $s$ and $p$ are the successor and predecessor of $u$ in $\gamma\cdot[\lp{v}_1]$.
      Thus $\gamma\cdot [\lp{v}_1]$ satisfies~\ref{item:annoyingassignments}.

      \item
      If $u\in S_{\lp{v}_1}\cup\{*\}$ s.t. $\lambda_{\lp{v}_1}(u)\in\State$ and $v\in\dom(\lambda_{\lp{v}_1}(u))$,
      we know that $u\in S_{\lp{u}}$, $\lambda_{\lp{u}}(u)\in\State$ and
      $v\in\dom(\lambda_{\lp{u}}(u))$.
      As $[\lp{u}]$ satisfies~\ref{item:prepath}, there is a path $p_v$ for $v$ from $s\in S_\lp{u}$ to $u$ such that either $s=*_{\min}$ or $s$ is the successor of a node $a$ with label $v\leftarrow k$ for $k\in \var\cup \val$.
      If $s=*_{\min}$, we know that $*_{\min}\leq_\lp{u} * \leq_\lp{u} u$. Suppose there exists $w\in S_\lp{u}$ such that $*_{\min}\leq_\lp{u} w$.
      If $w\in S_{\lp{v}_0}$, then since $U = [\lp{v}_0] \cdot [\lp{v}_1]$ and $*$ is the local $\leq_{\lp{v}_0}$-maximum of $[\lp{v}_0]$, we obtain $w\leq_\lp{u} *$.
      If $w\in S_{\lp{v}_1}$, $*\leq_\lp{u} w$.
      Then we apply \cref{lemma:bottleneck} to conclude that $*$ is on path $p_v$.
      From \cref{lemma:subpath} we obtain a path $t_v$ for $v$ from $*$ to $u$ in $[\lp{u}]$. As these are all nodes that also occur in $S_{\lp{v}_1}\cup\{*\}$, $t_v$ is also a path for $v$ from $*$ to $u$ in $\gamma \cdot [\lp{v}_1]$.
      As $*$ is the unique state-labelled minimum of $\gamma \cdot [\lp{v}_1]$, this proves that $\gamma \cdot [\lp{v}_1]$ satisfies~\ref{item:prepath}.

      If $s\neq*_{\min}$, we have two cases. If $s\in S_{\lp{v}_1}$, because the minimal nodes of $[\lp{v}_1]$ are assignment-labelled, we know that $a$ is also a node in $[\lp{v}_1]$ and the path $p_v$ only contains nodes
      that are in $S_{\lp{v}_1}$.
      Thus we have a path for $v$ from $s$ to $u$, with $s$ the successor of assignment-node $a$ (\cref{lemma:succandpredtwo}) and the label of $a$ is $v\leftarrow k$. If
      $s\in S_{\lp{v}_0}$, then also $a\in S_{\lp{v}_0}$ and $s\leq_{\lp{u}} * \leq_{\lp{u}}u$.
      Suppose there exists $w\in S_\lp{u}$ such that $s \leq_\lp{u} w$. If $w\in S_{\lp{v}_0}$, then $w\leq_\lp{u} *$.
      If $w\in S_{\lp{v}_1}$, as $U = [\lp{v}_0] \cdot [\lp{v}_1]$ and $*$ is the local $\leq_{\lp{v}_0}$-maximum of $[\lp{v}_0]$, we obtain $*\leq_\lp{u} w$.
      We apply \cref{lemma:bottleneck} to conclude that $*$ is on path $p_v$.
      From \cref{lemma:subpath} we obtain a path $t_v$ for $v$ from $*$ to $u$ in $[\lp{u}]$. As these are all nodes that also occur in $S_{\lp{v}_1}\cup\{*\}$, $t_v$ is also a path for $v$ from $*$ to $u$ in $\gamma \cdot [\lp{v}_1]$.
      As $*$ is the unique state-labelled minimum of $\gamma \cdot [\lp{v}_1]$, this proves that also in this case $\gamma \cdot [\lp{v}_1]$ satisfies~\ref{item:prepath}.
    \end{enumerate}

    Since $U = W \cdot \gamma \cdot [\lp{v}_1]$ and $W$ is non-empty, we know that $\gamma \cdot [\lp{v}_1]$ is smaller than $U$. Finally, since $[\lp{v}_0] = W \cdot
    \gamma$ and $\gamma \cdot [\lp{v}_1]$ satisfy~\ref{item:minmax}--\ref{item:prepath} and are strictly smaller than $U$, we can conclude by the induction hypothesis that both are guarded.
    This implies that $W \cdot \gamma \cdot [\lp{v}_1] = [\lp{v}_0] \cdot [\lp{v}_1] = U$ is guarded by definition.

    \item
    Suppose that $V = V_0 \parallel V_1$ for some non-empty series-parallel pomsets $V_0$ and $V_1$.
    As $U$ satisfies~\ref{item:minmax}--\ref{item:prepath}, we know that for each $v\in\dom(\alpha)$ there exists a path $p_v$ from $*_{\min}$ (note that $\lambda_\lp{u}(*_{\min})=\alpha$) to $*_{\max}$. As every node on $p_v$ is
    related via $\leq_{\lp{u}}$, we know $p_v$ exists out of nodes from either only $V_0$ or just $V_1$.
    This leads to the following definition of $\alpha_0$, $\beta_0$, $\alpha_1$, and $\beta_1$. For $v\in \var$:
    \[
    \alpha_0(v) =
    \begin{cases}
    \alpha(v) & \exists \text{ a path }p_v\text{ from }*_{\min}\text{ to }*_{\max}\text{ in }U\text{ that only uses nodes in }V_0\\
    \text{undefined} & \text{otherwise}
    \end{cases}
    \]
    and
    \[
    \beta_0(v) =
    \begin{cases}
    \beta(v) & \exists u_0\in S_{\lp{v}_0}\text{ s.t. }\lambda_{\lp{v}_0}(u_0)=v\leftarrow k\\
    \beta(v) & \exists \text{ a path }p_v\text{ from }*_{\min}\text{ to }*_{\max}\text{ in }U\text{ that only uses nodes in }V_0\\
    \text{undefined} & \text{otherwise}
    \end{cases}
    \]

    We define $\alpha_1$ and $\beta_1$ analogously. We claim that $\alpha_0\oplus\alpha_1$ is defined and equal to $\alpha$. For $v\in\dom(\alpha)$, there is a path for $v$ from $*_{\min}$ to $*_{\max}$ in $U$ by~\ref{item:path}. This
    path runs either through $V_0$ or $V_1$. Hence, $\alpha_0(v)=\alpha(v)$ or $\alpha_1(v)=\alpha(v)$ (or both) holds, which implies $\alpha_0\oplus\alpha_1(v)=\alpha(v)$.
    For $v\in\dom(\alpha_0\oplus\alpha_1)$, we know without loss of generality that $v\in\dom(\alpha_0)$ and then by construction we know that $\alpha_0\oplus\alpha_1(v)=\alpha_0(v)=\alpha(v)$.

    Similarly, we show that $\beta_0\oplus\beta_1$ is defined and equal to $\beta$. For $v\in\dom(\beta)$, as $U$ satisfies~\ref{item:prepath}, we have two cases. In the first case, there exists a path for $v$ from $*_{\min}$ to $*_{\max}$ that runs either entirely through $V_0$ or entirely through $V_1$. Hence, $\beta_0(v)=\beta(v)$
    or $\beta_1(v)=\alpha(v)$ (or both) holds, which implies $\beta_0\oplus\beta_1(v)=\beta(v)$.
    In the other case, there exists $s\in S_{\lp{v}_0}$ such that $\lambda_{\lp{v}_0}(s)=v\leftarrow k$ or $s\in S_{\lp{v}_1}$ such that $\lambda_{\lp{v}_1}(s)=v\leftarrow k$.
    W.l.o.g.\ we assume the former.
    Thus $\beta_0(v)=\beta(v)$.
    For $v\in\dom(\beta_0\oplus\beta_1)$, we know w.l.o.g.\ that $v\in\dom(\beta_0)$ and then by construction we know that $\beta_0\oplus\beta_1(v)=\beta_0(v)=\beta(v)$.

    We write $[\lp{v}_0]= \alpha_0 \cdot V_0 \cdot \beta_0$ and $[\lp{v}_1]= \alpha_1 \cdot V_1 \cdot \beta_1$. Without loss of generality, we assume that $S_\lp{u} = S_\lp{v_0} \cup S_\lp{v_1}$ and ${\leq_\lp{u}} =
    {\leq_\lp{v_0}} \cup {\leq_\lp{v_1}}$,
    with $S_\lp{v_0} \cap S_\lp{v_1} = \{ *_{\min}, *_{\max} \}$, with $*_{\min}$ the node with label
    $\alpha_0$ in $[\lp{v_0}]$ and label $\alpha_1$ in $[\lp{v_1}]$ and $*_{\max}$ the node with label $\beta_0$ in
    $[\lp{v_0}]$ and label $\beta_1$ in $[\lp{v_1}]$. We also have $\lambda_\lp{u}(*_{\min})=\alpha$
    and $\lambda_\lp{u}(*_{\max})=\beta$ and for all nodes $v\in S_\lp{u}\setminus
    \{*_{\min},*_{\max}\}$ have $\lambda_\lp{u}(v) = \lambda_\lp{v_0}(v)$
    if $v\in S_\lp{v_0}$ and $\lambda_\lp{v_1}(v)$ if $v\in S_\lp{v_1}$.

    We now argue that $\alpha_0 \cdot V_0 \cdot \beta_0$ and $\alpha_1 \cdot V_1 \cdot \beta_1$ satisfy~\ref{item:minmax}--\ref{item:prepath}.
    We only show the argument for $\alpha_0 \cdot V_0 \cdot \beta_0$, as the proof for $\alpha_1 \cdot V_1 \cdot \beta_1$ is identical.
    \begin{enumerate}[label={(A\arabic*)},leftmargin=1cm]
      \setcounter{enumi}{2}
      \item
    It is immediate that $\alpha_0 \cdot V_0 \cdot \beta_0$ is a convex subpomset of $U$, and hence by \cref{lemma:subpomset},
    we can conclude that $\alpha_0 \cdot V_0 \cdot \beta_0$ satisfies~\ref{item:alternating} and~\ref{item:succpred}.

    \item
    If $u\in S_{\lp{v}_0}$ s.t. $\lambda_{\lp{v}_0}(u)\in\State$ and $v\in\dom(\lambda_{\lp{v}_0}(u))$,
    we get $u\in S_\lp{u}$ s.t. $v\in\dom(\lambda_\lp{u}(u))$.
    As $U$ satisfies~\ref{item:path}, we obtain a path $p_v$ for $v$ from $u$ to $*_{\max}$.
    We distinguish three cases. If $u\in S_{\lp{v}_0}\setminus\{*_{\min},*_{\max}\}$, then $p_v$ only uses nodes in $V_0$ and from~\ref{item:prepath} we obtain $\beta_0(v)=\beta(v)$
    (there exists a path for $v$ from $*_{\min}$ to $u$ which combined with $p_v$ forms a path from $*_{\min}$ to $*_{\max}$ for $v$ that uses only nodes in $V_0$, or there exists a node $s\in S_{\lp{v}_0}$ such that $\lambda_{\lp{v}_0}(s)=v\leftarrow k$).
    This makes $p_v$ a path for $v$ from $u$ to $*_{\max}$ in $[\lp{v}_0]$.
    If $u = *_{\min}$, then by construction of $\alpha_0$ we obtain a path $q_v$ for $v$ from $*_{\min}$ to $*_{\max}$ in $U$ that only uses nodes in $V_0$ (note that $v\in\dom(\lambda_{\lp{v}_0}(u))=\dom(\alpha_0)$). By definition of $\beta_0$ we have then that $\beta_0(v)=\beta(v)$.
    The path $q_v$ is then immediately a path from $*_{\min}$ to $*_{\max}$ for $v$ in $[\lp{v}_0]$.
    If $u=*_{\max}$, we get a trivial path from $u$ to $*_{\max}$ for $v$. Hence, $\alpha_0 \cdot V_0 \cdot \beta_0$ satisfies~\ref{item:path}.

    \item
    If $u\in S_{\lp{v}_0}$ s.t. $\lambda_{\lp{v}_0}(u)=v\leftarrow n$ and $n\in \val$, then we get $u\in S_\lp{u}$ s.t. $\lambda_{\lp{u}}(u)=v\leftarrow n$.
    As $U$ satisfies~\ref{item:effect}, we know that the successor of $u$ is $s$ such that $\lambda_{\lp{u}}(s)(v)=n$.
    We distinguish two cases. The first case is $s\in S_{\lp{v}_0}\setminus\{*_{\min},*_{\max}\}$ with $\lambda_{\lp{v}_0}(s)(v)= n$. From \cref{lemma:succandpredtwo}, we infer that $s$ is the successor of $u$ in $[\lp{v}_0]$ as well.
    In the other case we have $s=*_{\max}$. By definition we have $\beta_0(v)=\beta(v)=n$. To see that $\beta_0$ is the successor of $u$ in $[\lp{v}_0]$, we use \cref{lemma:succandpredtwo}.
  Hence, $\alpha_0 \cdot V_0 \cdot \beta_0$ satisfies~\ref{item:effect}.

    \item
    If $u\in S_{\lp{v}_0}$ such that $\lambda_{\lp{v}_0}(u)=v\leftarrow v'$ for $v,v'\in \var$, we get $u\in S_\lp{u}$ such that $\lambda_{\lp{u}}(u)=v\leftarrow v'$.
    As $U$ satisfies~\ref{item:annoyingassignments}, we obtain $p,s\in S_\lp{u}$ such that $p$ and $s$ are resp.\ the predecessor and successor of $u$ and $\lambda_{\lp{u}}(s)(v)=\lambda_{\lp{u}}(s)(v')=\lambda_{\lp{u}}(p)(v')$.
    \begin{itemize}
      \item Let $p\in S_{\lp{v}_0}\setminus\{*_{\min},*_{\max}\}$. From \cref{lemma:succandpredtwo}, we infer that $p$
    is the predecessor of $u$ in $[\lp{v}_0]$ as well. We then have two cases. Either $s\in
    S_{\lp{v}_0}\setminus\{*_{\min},*_{\max}\}$ with
    $\lambda_{\lp{v}_0}(s)(v')=\lambda_{\lp{v}_0}(s)(v)=\lambda_{\lp{v}_0}(p)(v')$, in which case
    via \cref{lemma:succandpredtwo} we are done immediately, or $s=*_{\max}$.
    In the latter case, by construction of $\beta_0$ we have $\beta_0(v)=\beta(v)$. To see that $\beta_0$ is the successor of $u$ in $[\lp{v}_0]$, we use \cref{lemma:succandpredtwo}. From $U$ satisfying~\ref{item:prepath}, we infer that there exists
    a path $p_{v'}$ from $w\in S_\lp{u}$ to $p$ such that either $w=*_{\min}$ or $w$ is the successor of
    an assignment-node $a$ with label $v'\leftarrow k$ for $k\in \var\cup \val$. In the first case, we obtain a path from $s$ to $*_{\max}$ which combined with $p_{v'}$ forms a path from $*_{\min}$ to
    $*_{\max}$ for $v'$, and all these nodes are in $[\lp{v}_0]$. Hence, by construction of $\beta_0$ we obtain $\beta_0(v')=\beta(v')$. In the second case, we know node $a\in S_\lp{v_0}$, and thus $\beta_0(v')=\beta(v')$.
  Hence,  $\beta_0(v')=\beta(v')=\beta(v)=\beta_0(v)=\lambda_{\lp{u}}(p)(v')=\lambda_{\lp{v}_0}(p)(v')$, and we are done.
  \item
    If $p=*_{\min}$, we have two cases. First, we consider the case where $s \in S_{\lp{v}_0}\setminus\{*_{\min},*_{\max}\}$. From \cref{lemma:succandpredtwo}, we know that
    $*_{\min}$ and $s$ are resp.\ the predecessor and successor of $u$ in $[\lp{v_0}]$. We also have
    $\lambda_{\lp{v_0}}(s)(v)=\lambda_{\lp{v_0}}(s)(v')=\lambda_{\lp{u}}(p)(v')$. From $U$ satisfying~\ref{item:path}, we know there is a
    path $p_{v'}$ from $s$ to $*_{\max}$ for $v'$. This path only uses nodes in $V_0$. Then, we know that
    $p_{v'}$ together with the node $u$ forms a path for $v'$ from $*_{\min}$ to $*_{\max}$ using nodes in
    $V_0$. By construction, we have $\alpha_0(v')=\alpha(v')=\lambda_{\lp{u}}(p)(v')$.
    In the other case, we have $s=*_{\max}$. By construction we have $\alpha_0(v')=\alpha(v')$, and $\beta_0(v')=\beta(v')$. We also get immediately that $\beta_0(v)=\beta(v)$.
    Thus we have $\lambda_{\lp{v}_0}(s)(v)=\beta_0(v)=\beta(v)=\beta(v')=\beta_0(v')=\lambda_{\lp{v}_0}(s)(v')$ and $\lambda_{\lp{v}_0}(s)(v')=\beta(v')=\alpha(v')=\alpha_0(v')=\lambda_{\lp{v}_0}(p)(v')$.
    This concludes the proof that $\alpha_0 \cdot V_0 \cdot \beta_0$ satisfies~\ref{item:annoyingassignments}.
  \end{itemize}

    \item
    If $u\in S_{\lp{v}_0}$ s.t. $\lambda_{\lp{v}_0}(u)\in\State$ and $v\in\dom(\lambda_{\lp{v}_0}(u))$, $u\in S_\lp{u}$ such that $v\in\dom(\lambda_\lp{u}(u))$.
    As $U$ satisfies~\ref{item:prepath}, there exists a path $p_v$ for $v$ from $s$ to $u$ such that $s=*_{\min}$ or $s$ is the successor of an assignment-node with label $v\leftarrow k$ with $k\in \var\cup \val$.
    In the former case, we distinguish three cases.
    If $u\in S_{\lp{v}_0}\setminus\{*_{\min},*_{\max}\}$, then $p_v$ runs entirely through $V_0$. A $U$ satisfies~\ref{item:path}, there exists a path $t_v$ from $u$ to $*_{\max}$ using only nodes in $V_0$. Combining $p_v$ and $t_v$ we obtain a path from
    $*_{\min}$ to $*_{\max}$ for
    $v$ through $V_0$. Then by construction $\alpha_0(v)=\alpha(v)$.
    This makes $p_v$ a path for $v$ from $\alpha_0$ to $u$ in $V_0$.
    If $u=*_{\min}$, the case is trivial.
    If $u=*_{\max}$, then $\alpha_0(v)=\alpha(v)$ and $\beta_0(v)=\beta(v)$.
    This makes $p_v$ a path for $v$ from $*_{\min}$ to $u$ in $[\lp{v_0}]$.
    In the second case, we distinguish two cases.
    If $u\in S_{\lp{v}_0}\setminus\{*_{\min},*_{\max}\}$, then $p_v$ uses only nodes in $V_0$ and we are done immediately.
    If $u=*_{\max}$ then by definition of $\beta_0$ we have either that there exists a node $w$ in $S_{\lp{v}_0}$ such that $\lambda_{\lp{v}_0}(w)=v\leftarrow k$ or there exists a path $t_v$ for $v$ from $*_{\min}$ to $*_{\max}$ using only nodes in $V_0$.
    In the former case, using the fact that we know that $\alpha_0 \cdot V_0 \cdot \beta_0$ already satisfies~\ref{item:effect}, we know the successor of $w$, $y$, is such that
    $\lambda_{\lp{v}_0}(y)(v)=k$. By~\ref{item:path}, this gives us a path $s_v$ for $v$ from $y$ to
    $\beta_0$ in $[\lp{v}_0]$.
    If there exists a path $t_v$ from $*_{\min}$ to $*_{\max}$ using only nodes in $V_0$, we know that $\alpha_0(v)=\alpha(v)$ and $\beta_0(v)=\beta(v)$, and thus $t_v$ is a path for $v$ from $\alpha_0$ to $\beta_0$ in $[\lp{v}_0]$.
    Hence, $\alpha_0 \cdot [\lp{v}_0]\cdot \beta_0$ satisfies~\ref{item:prepath}.
  \end{enumerate}

    This makes $\alpha_0 \cdot V_0 \cdot \beta_0$ and $\alpha_1 \cdot V_1 \cdot \beta_1$ satisfy~\ref{item:minmax}--\ref{item:prepath}, and they are strictly smaller than $U$, and hence by the induction hypothesis we know that they are guarded.
    This makes $U = \alpha_0\oplus\alpha_1 \cdot (V_0 \parallel V_1) \cdot \beta_0\oplus\beta_1$ a guarded pomset by definition.
    \qedhere
\end{itemize}
\end{proof}

\fi

\section{Proofs about the litmus test}

\ifpnotguarded*
\begin{proof}
 We prove by contradiction; assume that $P(U)$ and that $U$ is guarded. Via \cref{lemma:fromguardedtocausal} we conclude that $U$ satisfies~\ref{item:minmax}--\ref{item:prepath}. From $P(U)$ we infer that there
 exists $u_1,u_2,w\in S_\lp{u}$ such that $\lambda_\lp{u}(u_1)=(x\leftarrow 1)$, $u_1\leq w$ and
 $\lambda_\lp{u}(w)(r_0)=0=\lambda_\lp{u}(w)(r_1)$.
From~\ref{item:alternating} and~\ref{item:effect}, we infer that $u_1$ has a unique
successor node $s_1\in S_\lp{u}$ such that $s_1$ is state-labelled, and $\lambda_\lp{u}(s_1)(x)=1$. From~\ref{item:path} there exists a path for $x$ from $s_1$ to $w$. Hence, if
$\lambda_\lp{u}(w)(x)\neq 1$, there must be at least one assignment between $s_1$ and $w$ altering the
value of $x$, as the path must explain how the value of $x$ changed from $1$ to $0$. Hence, there exists a node $u_3\in S_\lp{u}$ such that $s_1\leq_\lp{u} u_3 \leq_\lp{u} w$ and
$\lambda_\lp{u}(u_3)=(x\leftarrow n)$ for $n\in \var\cup \val$. However, from property $P$, we know that all such assignments occur before $u_1$, and thereby strictly before $s_1$.
From this we can conclude that $\lambda_\lp{u}(w)(x) = 1$. Similarly, we obtain $\lambda_\lp{u}(w)(y) =  1$.

From~\ref{item:alternating} and~\ref{item:annoyingassignments}, we know that $v_1$ has a unique successor node $t_1$, such that $\lambda_{\lp{u}}(t_1)(r_0)=\lambda_{\lp{u}}(t_1)(y)$.
Then from~\ref{item:path}, there must be a path for $r_0$ from $t_1$ to $w$. With similar reasoning as for $x$ above, we obtain $\lambda_\lp{u}(t_1)(r_0)=\lambda_\lp{u}(w)(r_0)=0$. Similarly, we obtain a successor node $t_2$ of $v_2$ such that $\lambda_{\lp{u}}(t_2)(r_1)=\lambda_{\lp{u}}(t_2)(x)=0$.

As we have that $\lambda_\lp{u}(t_1)(y)=0$ and $\lambda_\lp{u}(w)(y)=1$ and $t_1\leq_\lp{u} w$, we can
conclude from~\ref{item:path} that there must be a path from $t_1$ to $w$ for $y$ such that
this path contains at least one assignment that alters the value for $y$. Thus, there exists a node
$u_3$ such that $t_1\leq_\lp{u} u_3 \leq_\lp{w} w$ and $u_3$ has a label that changes the value of $y$. Similarly, we obtain a node $u_4$ such that $t_2\leq_\lp{u} u_4 \leq_\lp{w} w$ and $u_4$ changes the value of
$x$. From property $P$, we obtain $u_3\leq_\lp{u} u_2$ and $u_4\leq_\lp{u} u_1$.
Then, making use of the fact that $t_1$ and $t_2$ are the successors of $v_1$ and $v_2$ respectively, we can derive:
$
v_2 \leq_\lp{u} t_2 \leq_\lp{u} u_4 \leq_\lp{u} u_1 \leq_\lp{u} v_1 \leq_\lp{u} t_1 \leq_\lp{u} u_3 \leq_\lp{u} u_2 \leq_\lp{u} v_2
$
Then, by antisymmetry, all these nodes are equivalent. As they cannot be, we have a contradiction. Hence, $U$ is not a guarded pomset. Hence, $U$ is not a guarded pomset.
\end{proof}

\ifarxiv%

To prove \cref{lemma:closure-vs-contraction}, we must first consider a series of auxiliary lemmas.

\begin{lemma}\label{lemma:preceq-seq}
Let $U_0,U_1,V\in\SP$. If $U_0\cdot U_1\preceq V$, then there exist $V_0,V_1\in\SP$ such that
\begin{mathpar}
V=V_0\cdot V_1
\and
U_0\preceq V_0
\and
U_1\preceq V_1
\end{mathpar}
\end{lemma}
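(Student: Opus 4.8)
The plan is to prove the statement by induction on the structure of the series-parallel pomset $V$, exploiting the definition of $\preceq$ via a surjection $h\colon S_\lp{v}\to S_{\lp{u}_0\cdot \lp{u}_1}$ with the three listed properties. Write $U=U_0\cdot U_1=[\lp{u}]$, so $S_\lp{u}=S_{\lp{u}_0}\sqcup S_{\lp{u}_1}$ with everything in $S_{\lp{u}_0}$ below everything in $S_{\lp{u}_1}$. Since $U$ is a nontrivial sequential composition, the base cases $V=1$, $V=\ltr a$ are either vacuous or forced to match $U$ directly (if $V$ is a single node, then $h$ surjective forces $U$ to be a single node, contradicting that both $U_0,U_1$ are built from $\SP$ and hence nonempty unless one of them is $1$; the degenerate cases where $U_0=1$ or $U_1=1$ are handled by taking $V_0=1$ or $V_1=1$). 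The interesting inductive cases are $V=V'\cdot V''$ and $V=V'\parallel V''$.

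First I would dispatch the parallel case: if $V=V'\parallel V''$, I claim this is impossible unless one factor is trivial. The preimages $h^{-1}(S_{\lp{u}_0})$ and $h^{-1}(S_{\lp{u}_1})$ partition $S_\lp{v}$; pick $a\in S_{\lp{u}_0}$, $b\in S_{\lp{u}_1}$ with $a<_\lp{u}b$, and pick preimages $v_a,v_b$. By property (iii) (the reflection condition), since $h(v_a)=a<_\lp{u}b=h(v_b)$ and at least one of the labels is whatever it is, we get an ordering between $v_a$ and $v_b$ in $V$, forcing them into the same parallel component — and by varying $a,b$ over all of $S_{\lp{u}_0}\times S_{\lp{u}_1}$ one shows the entire preimage lies in a single component, contradicting surjectivity unless the other component is empty (i.e. $=1$), which reduces to the earlier degenerate case. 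For the sequential case $V=V'\cdot V''$: I would show the partition $\{h^{-1}(S_{\lp{u}_0}), h^{-1}(S_{\lp{u}_1})\}$ is compatible with the cut between $V'$ and $V''$, i.e. that $S_{\lp{u}_0}$'s preimage is a down-set and $S_{\lp{u}_1}$'s an up-set in $V$. Using this, either the cut falls "inside" $V'$ (so $V'=V_0'\cdot V_1'$ with $h$ restricting appropriately and $V_1'\cdot V''$ serving as the right factor), or inside $V''$, or exactly at the $V'/V''$ boundary; in each case one sets $V_0,V_1$ accordingly, restricts $h$, and checks the three defining properties of $\preceq$ are inherited. When the cut is strictly inside $V'$ one applies the induction hypothesis to $V'$.

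The main obstacle I anticipate is the bookkeeping in the sequential case: showing that $h^{-1}(S_{\lp{u}_0})$ is genuinely a down-set of $S_\lp{v}$ and that the restrictions $h|_{h^{-1}(S_{\lp{u}_0})}$ and $h|_{h^{-1}(S_{\lp{u}_1})}$ still satisfy property (iii) — in particular, the case split in (iii) on whether the relevant labels lie in $\State$ must survive restriction, which it does since labels are preserved by $h$ and hence by its restrictions. One should also take care that when the cut lands properly inside $V'$, the pieces obtained really are series-parallel (this is automatic since $\SP$ is closed under the relevant decompositions, as $V$ is series-parallel and the factors of a sequential decomposition of an sp-pomset are sp-pomsets). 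Everything else is routine verification that the surjection restricts correctly and that $\SP$-closure is preserved, so I would not grind through those checks in detail.
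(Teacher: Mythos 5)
Your approach is a genuinely different route from the paper's, though it hinges on the same key observation. The paper argues directly: using the surjection $h\colon S_\lp{v}\to S_{\lp{u}_0\cdot\lp{u}_1}$, it defines $V_0$ and $V_1$ as the induced sub-posets on $h^{-1}(S_{\lp{u}_0})$ and $h^{-1}(S_{\lp{u}_1})$, then verifies $V = V_0\cdot V_1$ and that the restrictions of $h$ witness $U_i\preceq V_i$. You instead set up an induction on the series-parallel structure of $V$. Both arguments turn on the same fact: that the two preimages form a ``strong cut'' of $V$ (the first is a down-set, the second an up-set, and every node of the first lies strictly below every node of the second), which you correctly derive from conditions (ii) and (iii) of $\preceq$.

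There is, however, a gap in your inductive step for the sequential case. When the cut lands strictly inside $V'$ (say $h^{-1}(S_{\lp{u}_0})\subsetneq S_{\lp{v}'}$), you assert $V' = V_0'\cdot V_1'$ with $V_0' = V'|_{h^{-1}(S_{\lp{u}_0})}$, citing the induction hypothesis for $V'$. But the stated IH is the lemma itself, which requires some premise $W_0\cdot W_1 \preceq V'$, and there is no obvious candidate for $W_1$: the image $h(S_{\lp{v}'})$ is $S_{\lp{u}_0}$ together with an \emph{arbitrary} subset $C$ of $S_{\lp{u}_1}$, and to exhibit $U|_{S_{\lp{u}_0}\cup C}$ as a sequential composition $U_0\cdot(U_1|_C)$ of \emph{sp-pomsets} already requires the very fact you are trying to prove. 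What you actually need is a standalone lemma --- if $V\in\SP$ and $S_\lp{v}=A\sqcup B$ with every node of $A$ strictly below every node of $B$, then $V|_A,V|_B\in\SP$ and $V = V|_A\cdot V|_B$ --- proved by precisely the structural induction you outline (where the parallel-case impossibility argument you give is exactly right). With that lemma in hand, the remainder of your argument (and the paper's direct one) goes through: apply it to the preimage cut, restrict $h$, and check the three conditions of $\preceq$ for each piece, as you say. You should also note that you are, commendably, the only one of the two proofs to explicitly flag that the pieces must be shown to be series-parallel; the paper leaves this implicit, although the same strong-cut lemma is needed to close that step there too.
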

\begin{proof}
  We write $V=[\lp{v}]$, $U_0=[\lp{u_0}]$ and $U_1=[\lp{u_1}]$.
  For $i \in \{0, 1\}$, we choose $V_i = [\lp{v}_i]$ by
  \begin{mathpar}
  S_{\lp{v_i}}= \{v\in S_{\lp{v}} \pipe h(v)\in S_\lp{u_i}\}
  \and
  {\leq_\lp{v_i}} = {\leq_{\lp{v}}} \cap S_\lp{v_i}^2
  \and
  \lambda_{\lp{v_i}}(v)=\lambda_{\lp{v}}(v)
  \end{mathpar}
  where $h: S_\lp{v} \to S_{\lp{u}_0 \cdot \lp{u}_1}$ witnesses that $U_0 \cdot U_1 \preceq V$.
  We first verify whether $V=V_0\cdot V_1$.
  \begin{itemize}
    \item
    For the carrier, take $x \in S_\lp{v}$.
    This establishes that $h(x) \in S_{\lp{u}_0}$ or $h(x) \in S_{\lp{u}_1}$.
    Hence, $x \in S_{\lp{v_0}}$ or $x \in S_{\lp{v_1}}$.
    The converse inclusion holds by construction.

    \item
    To check that ${\leq_{\lp{v_0} \cdot \lp{v_1}}} = {\leq_\lp{v}}$, first suppose that $x\leq_\lp{v} y$.
    We have three cases to distinguish.
    \begin{itemize}
      \item
      If $h(x),h(y)\in S_\lp{u_0}$, then $x,y\in S_\lp{v_0}$ and we immediately obtain that $x\leq_{\lp{v_0}}y$ and thus that $x\leq_{\lp{v_0\cdot v_1}}y$.
      Same for $h(x),h(y)\in S_\lp{u_1}$.

      \item
      On the other hand, if $h(x)\in S_\lp{u_1}$ and $h(y)\in S_\lp{u_0}$, then $h(y) \leq_{\lp{u}_0 \cdot \lp{u}_1} h(x)$.
      Since $h(x) \leq_{\lp{u}_0 \cdot \lp{u}_1} h(y)$ already, this implies that $h(x) = h(y)$.
      But then $h(x) \in S_{\lp{u}_0}$, which contradicts that $S_{\lp{u}_0}$ is disjoint from $S_{\lp{u}_1}$.
      We can therefore disregard this case.

      \item
      This leaves the last possibility where $h(x)\in S_\lp{u_0}$ and $h(y)\in S_\lp{u_1}$.
      This establishes that $x\in S_\lp{v_0}$ and $y\in S_\lp{v_1}$, and we obtain that $x\leq_{\lp{v_0\cdot v_1}}y$.
    \end{itemize}

    \noindent
    Conversely, if $x \leq_\lp{v_0\cdot v_1} y$, then we distinguish two possibilities.
    \begin{itemize}
      \item
      If $x,y\in S_{\lp{v_0}}$, then $x\leq_\lp{v_0} y$ so $x\leq_\lp{v} y$ immediately, and similarly when $x,y\in S_{\lp{v_1}}$.

      \item
      On the other hand, if $x\in S_{\lp{v_0}}$ and $y\in S_\lp{v_1}$, then $h(x)\leq_\lp{u_0\cdot u_1} h(y)$.
      This means by properties of $\preceq$ that either $x\leq_\lp{v} y$, in which case we are done, or $y\leq_\lp{v} x$.
      In the latter case, we obtain that $h(y)\leq_\lp{u_0\cdot u_1} h(x)$, which results in $h(x) = h(y)$.
      This contradicts the fact that $S_{\lp{u}_0}$ is disjoint from $S_{\lp{u}_1}$, which means that we can disregard this case.
  \end{itemize}

  \item
  For the labels it follows immediately that $[\lp{v_0\cdot v_1}]$ gives the same labels as $[\lp{v}]$.
\end{itemize}

\noindent
The next thing to show is that $U_0 \preceq V_0$ and $U_1 \preceq V_1$.
Both cases are similar, so we only prove $U_0 \preceq V_0$.
We take as a witness the function $h_0: S_{\lp{v}_0} \to S_{\lp{u}_0}$ given by $h_0(x) = h(x)$.
\begin{itemize}
  \item
  First we prove that this is a surjective function from $V_0$ to $U_0$.
  For $v\in S_\lp{v_0}$, we know that $h_0(v) = h(v) \in S_\lp{u_0}$ by construction.
  For $u\in S_\lp{u_0}$, we know there exists $v\in S_{\lp{v}}$ such that $h(v)=u$ by surjectivity of $h$ on $U_0\cdot U_1$.
  Suppose that $v\in S_{\lp{v_1}}$.
  By construction of $S_\lp{v_1}$ this means that $h(v) \in S_\lp{u_1}$.
  This is a contradiction, thus we must have that $v\in S_{\lp{v_0}}$, and we conclude that $h$ is also surjective when restricted to $S_\lp{v_0}$.

  \item
  If $x\leq_\lp{v_0} y$, then $x\leq_\lp{v} y$ thus $h(x) \leq_\lp{u_0\cdot u_1} h(y)$.
  Since both $h(x),h(y)\in S_{\lp{u_0}}$, we obtain that $h(x) \leq_\lp{u_0} h(y)$.

  \item
  If $h(x)\leq_\lp{u_0}h(y)$, then $x,y\in S_\lp{v_0}$. If $\lambda_\lp{v_0}(x)\in\Act$ or $\lambda_\lp{v_0}(y)\in\Act$, then $\lambda_\lp{v}(x)\in\Act$ or $\lambda_\lp{v}(y)\in\Act$.
  From $h(x)\leq_\lp{u_0}h(y)$ we obtain that $h(x)\leq_\lp{u_0\cdot u_1}h(y)$.
  Thus it follows that $x \leq_\lp{v} y$.
  Since $h(x),h(y)\in S_\lp{u_0}$, we know that $x,y\in S_\lp{v_0}$, and we can conlude $x \leq_\lp{v_0} y$.

  \item
  The last requirement (if $h(x)\leq_\lp{u_0}h(y)$ and $\lambda_\lp{v_0}(x),\lambda_\lp{v_0}(y)\in\State$, then $x\leq_\lp{v_0} y$ or $y \leq_\lp{v_0}x$) is checked similarly.
  \qedhere
\end{itemize}
\end{proof}

\begin{lemma}\label{lemma:preceq-parallel}
Let $U_0,U_1,V\in\SP$. If $U_0\parallel U_1\preceq V$, then there exist $V_0,V_1\in\SP$ such that
\begin{mathpar}
V = V_0\parallel V_1
\and
U_0\preceq V_0
\and
U_1\preceq V_1
\end{mathpar}
\end{lemma}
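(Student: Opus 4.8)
The plan is to mirror the proof of \cref{lemma:preceq-seq} almost verbatim, exploiting that parallel composition is even more rigid than sequential composition: in $\lp{u}_0 \parallel \lp{u}_1$ no element of $S_{\lp{u}_0}$ is comparable to any element of $S_{\lp{u}_1}$. Write $V = [\lp{v}]$, $U_0 = [\lp{u}_0]$, $U_1 = [\lp{u}_1]$, and let $h\colon S_\lp{v} \to S_{\lp{u}_0 \parallel \lp{u}_1}$ witness $U_0 \parallel U_1 \preceq V$. For $i \in \{0,1\}$ I would set $V_i = [\lp{v}_i]$ with $S_{\lp{v}_i} = \{ v \in S_\lp{v} \mid h(v) \in S_{\lp{u}_i} \}$ and $\leq_{\lp{v}_i}$, $\lambda_{\lp{v}_i}$ the restrictions of $\leq_\lp{v}$, $\lambda_\lp{v}$; being induced sub-pomsets of $V$, the $V_i$ are series-parallel.

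First I would check that $V = V_0 \parallel V_1$. The carrier is the disjoint union $S_{\lp{v}_0} \cup S_{\lp{v}_1}$, since $h$ is a function and $S_{\lp{u}_0}, S_{\lp{u}_1}$ partition $S_{\lp{u}_0 \parallel \lp{u}_1}$. For the order, $\leq_{\lp{v}_0} \cup \leq_{\lp{v}_1} \subseteq \leq_\lp{v}$ is immediate; conversely, if $x \leq_\lp{v} y$ then $h(x) \leq_{\lp{u}_0 \parallel \lp{u}_1} h(y)$, and since comparable elements of a parallel composition lie on the same side, $x, y$ lie in a common $S_{\lp{v}_i}$, whence $x \leq_{\lp{v}_i} y$. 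This is the one place where the parallel case is genuinely \emph{easier} than \cref{lemma:preceq-seq}, where one had to rule out a ``cross'' ordering $S_{\lp{u}_0} \times S_{\lp{u}_1}$ via disjointness of carriers. Labels agree by construction.

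It then remains to show $U_0 \preceq V_0$ (and symmetrically $U_1 \preceq V_1$), for which I would take as witness $h_0 \colon S_{\lp{v}_0} \to S_{\lp{u}_0}$, the restriction of $h$ to $S_{\lp{v}_0}$. Well-definedness into $S_{\lp{u}_0}$ and surjectivity follow from the definition of $S_{\lp{v}_0}$ together with surjectivity of $h$; label preservation is inherited, since the labels of $S_{\lp{u}_0}$-nodes in $\lp{u}_0 \parallel \lp{u}_1$ are exactly $\lambda_{\lp{u}_0}$. Order preservation: $x \leq_{\lp{v}_0} y$ gives $x \leq_\lp{v} y$, so $h(x) \leq_{\lp{u}_0 \parallel \lp{u}_1} h(y)$, and both images being in $S_{\lp{u}_0}$ yields $h_0(x) \leq_{\lp{u}_0} h_0(y)$. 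For the two order-reflection clauses of $\preceq$, assume $h_0(x) \leq_{\lp{u}_0} h_0(y)$, hence $h(x) \leq_{\lp{u}_0 \parallel \lp{u}_1} h(y)$; the corresponding clauses for $h$ then give $x \leq_\lp{v} y$ (resp.\ $x \leq_\lp{v} y$ or $y \leq_\lp{v} x$), which restrict to $\leq_{\lp{v}_0}$ since $x, y \in S_{\lp{v}_0}$. As no step is harder than in the sequential case, the only real ``obstacle'' is bookkeeping; the one thing worth a second look is the degenerate case where $U_0$ or $U_1$ is the empty pomset, in which the corresponding $V_i$ is empty as well and the claim is trivial.
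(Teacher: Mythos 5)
Your proof is correct and takes essentially the same route as the paper: define $V_i$ as the induced sub-pomset on $h^{-1}(S_{\lp{u}_i})$, verify $V = V_0 \parallel V_1$ by chasing the order through $h$ and using that in $\lp{u}_0 \parallel \lp{u}_1$ comparable elements must lie on the same side, and show $U_i \preceq V_i$ by restricting $h$. You have in fact written out the $U_i \preceq V_i$ verification that the paper elides as ``analogous to \cref{lemma:preceq-seq},'' and your observation that the parallel case is slightly cleaner (no cross-ordering to rule out) is accurate.
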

\begin{proof}
  We write $V=[\lp{v}]$, $U_0=[\lp{u_0}]$ and $U_1=[\lp{u_1}]$.
  For $i \in \{0, 1\}$, we choose $V_i = [\lp{v}_i]$ by
  \begin{mathpar}
  S_{\lp{v_i}}= \{v\in S_{\lp{v}} \pipe h(v)\in S_\lp{u_i}\}
  \and
  {\leq_\lp{v_i}} = {\leq_{\lp{v}}} \cap S_\lp{v_i}^2
  \and
  \lambda_{\lp{v_i}}(v)=\lambda_{\lp{v}}(v)
  \end{mathpar}
  where $h: S_\lp{v} \to S_{\lp{u}_0 \cdot \lp{u}_1}$ witnesses that $U_0 \cdot U_1 \preceq V$.
  We first verify whether $V=V_0\parallel V_1$.
  \begin{itemize}
    \item
    Take $x\in S_\lp{v}$.
    Then we know that $h(x)$ in $U_0$ or in $U_1$ from $U_1\parallel U_2\preceq V$. Hence, we can conclude that $x\in S_{\lp{v_0}}$ or $x\in S_{\lp{v_1}}$.
    The converse inclusion holds by construction.

    \item
    To check that ${\leq_\lp{v_0\parallel v_1}} = {\leq_\lp{v}}$ we check both directions.
    If $x\leq_\lp{v} y$, then we know that $h(x) \leq_{\lp{u_0} \parallel \lp{u_1}} h(y)$.
    In that case $h(x) \leq_{\lp{u}_i} h(y)$ for $i \in \{0,1\}$.
    In that case we know that $x, y \in S_{\lp{v}_i}$ and we immediately obtain that $x \leq_{\lp{v_i}} y$ and thus that $x\leq_{\lp{v_i\parallel v_i}}y$.
    The converse claim, i.e., that whenever $x \leq_{\lp{v}_0 \parallel \lp{v}_1} y$ also $x \leq_{\lp{v}} y$ holds by construction.

    \item
   For the labels it follows immediately that $[\lp{v_0\parallel v_1}]$ gives the same labels as $[\lp{v}]$.
  \end{itemize}

  \noindent
  It remains to show $U_0\preceq V_0$ and $U_1\preceq V_1$; this is analagous to the proof of \cref{lemma:preceq-seq}.
\end{proof}

\begin{lemma}\label{lemma:preceq-smallest-preorder}
  Let $\preceq^{\SP}$ be $\preceq$ restricted to $\SP$.
  Then $\preceq^{\SP}$ is the smallest precongruence (preorder monotone w.r.t.\ the operators) such that for all $\alpha \in \State$ we have that $\alpha \preceq^{\SP} \alpha \cdot \alpha$.
\end{lemma}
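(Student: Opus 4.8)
The plan is to establish the two inclusions separately: that $\preceq^{\SP}$ is a precongruence with $\alpha \preceq^{\SP} \alpha \cdot \alpha$ for every $\alpha \in \State$, and that it is contained in every precongruence with this property.

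For the first inclusion I would check the defining properties directly from \cref{definition:contraction-pomsets}. Reflexivity is witnessed by the identity surjection. The generating inequations $\alpha \preceq^{\SP} \alpha \cdot \alpha$ are witnessed by the surjection that collapses the two (comparable) $\alpha$-labelled nodes of $\alpha \cdot \alpha$ onto the single node of $\alpha$; conditions (i)--(iii) are then immediate, using that the two nodes of $\alpha \cdot \alpha$ are ordered. For transitivity I would compose witnessing surjections $h$ (for $U \preceq V$) and $g$ (for $V \preceq W$): $h \circ g$ is again surjective, conditions (i) and (ii) propagate at once, and for (iii) I would run a short case split on whether the two labels involved lie in $\State$, first applying (iii) for $h$ to pull an order relation back from $U$ to $V$ and then (iii) for $g$ to pull it further back to $W$. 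Monotonicity with respect to $\cdot$ and $\parallel$ I would obtain by taking, from witnesses $h_0$ of $U_0 \preceq V_0$ and $h_1$ of $U_1 \preceq V_1$, the disjoint-union surjection $h_0 \sqcup h_1$; here the verification of (iii) rests on the observation that in $U_0 \cdot U_1$ no order relation runs from the $U_1$-part back into the $U_0$-part, and in $U_0 \parallel U_1$ none runs between the two parts, so every related pair stays within a single component, where (iii) for $h_0$ or $h_1$ applies.

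For the second inclusion, I would fix an arbitrary precongruence $R$ on $\SP$ with $\alpha \mathrel{R} \alpha \cdot \alpha$ for all $\alpha \in \State$, assume $U \preceq^{\SP} V$, and prove $U \mathrel{R} V$ by induction on the number of nodes of $U$. If $U = 1$, the witnessing surjection forces $V = 1 = U$, so $U \mathrel{R} V$ by reflexivity of $R$. If $U$ is a single node with label $\ltr{a}$, condition (i) forces every node of $V$ to carry the label $\ltr{a}$; when $\ltr{a} \in \Act$ condition (iii) makes every pair of nodes of $V$ comparable in both directions, so by antisymmetry $V$ is a single node and $V = U$; when $\ltr{a} = \alpha \in \State$ condition (iii) makes the order of $V$ total, so $V$ is the $k$-fold sequential product $\alpha \cdot \alpha \cdots \alpha$ for some $k \ge 1$, and a secondary induction on $k$ — combining $\alpha \mathrel{R} \alpha \cdot \alpha$, monotonicity of $R$ along $\cdot$, reflexivity, and transitivity — yields $U = \alpha \mathrel{R} \alpha \cdot \alpha \cdots \alpha = V$. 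If $U$ has at least two nodes, then since $U \in \SP$ it decomposes (non-uniquely) as $U = U_0 \cdot U_1$ or $U = U_0 \parallel U_1$ with $U_0, U_1 \in \SP$ non-empty; applying \cref{lemma:preceq-seq} or \cref{lemma:preceq-parallel} splits $V$ accordingly as $V_0 \cdot V_1$ or $V_0 \parallel V_1$ with $U_0 \preceq V_0$ and $U_1 \preceq V_1$, and the induction hypothesis (available since $U_0, U_1$ have strictly fewer nodes) together with monotonicity of $R$ gives $U \mathrel{R} V$.

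The main obstacle I anticipate is purely the bookkeeping in conditions (i)--(iii) — especially keeping condition (iii) straight through composition of surjections and through the disjoint-union construction — rather than any conceptual difficulty. The one genuinely content-bearing observation is the single-node base case of the second inclusion: one must notice that $\preceq$ can relate a one-node $\State$-labelled pomset only to a chain of identically labelled nodes, which is exactly the configuration generated by iterating $\alpha \preceq \alpha \cdot \alpha$; everything else is routine, and the heavy lifting for the compound cases is already done by \cref{lemma:preceq-seq,lemma:preceq-parallel}.
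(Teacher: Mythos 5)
Your proposal is correct and follows essentially the same route as the paper's own proof: both parts (showing $\preceq^{\SP}$ is a precongruence with the generating inequations, then showing minimality by induction on $U$ with \cref{lemma:preceq-seq,lemma:preceq-parallel} handling the compound cases) match. The only cosmetic differences are that you phrase the second part as induction on the number of nodes where the paper uses structural induction on the sp-pomset (equivalent here), and you spell out the single-node base cases in slightly more detail (e.g.\ explicitly invoking antisymmetry for the $\Act$-labelled case), which the paper states more tersely.
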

\begin{proof}
  We first show that $\preceq^{\SP}$ is a preorder monotone w.r.t.\ the operators. Thus we need to prove that $\preceq^{\SP}$ is reflexive, transitive and monotone. Reflexivity follows immediately by using the identity function as a witness, and transivity by
sing function composition.

  For monotonicity we check whether for $U_0\preceq^{\SP}V_0$ and $U_1\preceq^{\SP}V_1$ we have $U_0\cdot U_1\preceq^{\SP}V_0\cdot V_1$ and $U_0\parallel U_1\preceq^{\SP}V_0\parallel V_1$.
  For $i \in \{0, 1\}$, let $U_i = [\lp{u}_i]$ and $V_i = [\lp{v}_i]$, and let $h_i: S_{\lp{v}_i} \to S_{\lp{u}_i}$ be the function witnessing that $U_i \preceq V_i$.
  We choose $h$ as the union of $h_0$ and $h_1$.
  To see that $h$ witnesses that $U_0 \cdot U_1 \preceq^{\SP} V_0 \cdot V_1$, we have to check the conditions of \cref{definition:contraction-pomsets}.
  \begin{itemize}
    \item
    On the one hand, if $v\leq_{\lp{v_0\cdot v_1}}v'$ because $v\in S_{\lp{v_0}}$ and $v'\in S_{\lp{v_1}}$, we immediately obtain that $h(v)=h_0(v)\leq_{\lp{u_0\cdot u_1}}h_1(v')=h(v')$, as we know that $h_0(v)\in S_{\lp{u_0}}$ and $h_1(v')\in S_{\lp{u_1}}$.

    On the other hand, if $v\leq_{\lp{v_0\cdot v_1}}v'$ because $v,v'\in S_{\lp{v_i}}$ for $i\in\{0,1\}$, we can establish immediately that $h_i(v)\leq_\lp{u_i} h_i(v')$ and thus $h(v)\leq_\lp{u_0\cdot u_1} h(v')$.

    \item
    Next, suppose that $x, y \in S_{\lp{v}_0} \cup S_{\lp{v}_1}$ such that $h(x) \leq_{\lp{u}_0 \cdot \lp{u}_1} h(y)$ and $\lambda_{\lp{v}_0 \cdot \lp{v}_1}(x), \lambda_{\lp{v}_0 \cdot \lp{v}_1}(y) \in \State$.
    We should show that $x \leq_{\lp{v}_0 \cdot \lp{v}_1} y$ or $y \leq_{\lp{v}_0 \cdot \lp{v}_1} x$.
    We distinguish three cases:
    \begin{itemize}
      \item
      If $x \in S_{\lp{v_0}}$ and $y \in S_\lp{v_1}$, then we know immediately that $x \leq_{\lp{v_0\cdot v_1}} y$ and we are done.

      \item
      If $x \in S_\lp{v_1}$ and $y \in S_\lp{v_0}$, we have $y \leq_{\lp{v_0\cdot v_1}} x$ and we are also done.

      \item
      If $x, y \in S_{\lp{v_i}}$ for some $i \in \{0, 1\}$, then we know that $h(x),h(y)\in S_{\lp{u_i}}$, and thus that $h_i(v)\leq_{\lp{u_i}}h_i(v')$.
      As $\lambda_{\lp{v}_i}(x) = \lambda_{\lp{v}_0 \cdot \lp{v}_1}(x)$ and $\lambda_{\lp{v}_i}(y) = \lambda_{\lp{v}_0 \cdot \lp{v}_1}(y)$, we can establish that $x \leq_\lp{v_i} y$ or $y \leq_\lp{v_i} x$.
      Hence, $x \leq_{\lp{v}_0 \cdot \lp{v}_1} y$ or $y \leq_{\lp{v}_0 \cdot \lp{v}_1} x$.
    \end{itemize}

    \item
    Now uppose that $x, y \in S_{\lp{v}_0} \cup S_{\lp{v}_1}$ such that $h(x) \leq_{\lp{u}_0 \cdot \lp{u}_1} h(y)$ and $\lambda_{\lp{v}_0 \cdot \lp{v}_1}(x) \in \Act$ or $\lambda_{\lp{v}_0 \cdot \lp{v}_1}(y) \in \Act$.
    By an argument similar to the previous case, we can argue $x \leq_{\lp{v}_0 \cdot \lp{v}_1} y$.

    \item
    The labelling requirements are satisfied immediately.
  \end{itemize}
  A similar argument can be used to show that $h$ witnesses $U_0 \parallel U_1 \preceq^{\SP} V_0 \parallel V_1$.

  The last thing to check is whether for $\alpha \in \State$ we have that $\alpha \preceq^{\SP} \alpha \cdot \alpha$.
  This is witnessed by the unique function from the carrier of the latter to that of the former.

  \smallskip
  Next thing to show is that $\preceq^{\SP}$ is the smallest preorder for which all of the conditions hold.
  Suppose we have another preorder $\triangleleft$ that fulfills the conditions.
  We now need to show that if $U\preceq^{\SP} V$, then $U\triangleleft V$.
  We do this by induction on $U$, as $U$ is a series-parallel pomset.
  In the base we have three cases to consider.
  \begin{itemize}
    \item
    If $U = 1$, then $U\preceq^{\SP} V$ means that $V=1$, and thus $U \triangleleft V$.

    \item
    If $U = \ltr{a}$ for $\ltr{a} \in \Act$, $U \preceq^{\SP} V$ can only hold when $V = a$, and thus $U \triangleleft V$ again.

    \item
    If $U = \alpha$ for some $\alpha \in \State$, then $U \preceq^{\SP} V$ implies that $V = \alpha^n$ for some $n \geq 1$.
    A straightforward argument by induction on $n$ then shows that $U \triangleleft V$.
  \end{itemize}

  \noindent
  In the inductive step we have two cases.
  \begin{itemize}
    \item
    If $U=U_0\cdot U_1$ and $U\preceq^{\SP} V$.
    We know from \cref{lemma:preceq-seq} that $V=V_0\cdot V_1$ and $U_0\preceq^{\SP} V_0$ and $U_1\preceq^{\SP} V_1$.
    From the induction hypothesis we obtain that $U_0\triangleleft V_0$ and $U_1\triangleleft V_1$.
    As $\triangleleft$ is monotone w.r.t the operators, we obtain that $U\triangleleft V$.

    \item
    If $U=U_0\parallel U_1$ and $U\preceq^{\SP} V$.
    We know from \cref{lemma:preceq-parallel} that $V=V_0\parallel V_1$ and $U_0\preceq^{\SP} V_0$ and $U_1\preceq^{\SP} V_1$.
    From the induction hypothesis we obtain that $U_0\triangleleft V_0$ and $U_1\triangleleft V_1$.
    As $\triangleleft$ is monotone w.r.t.\ the operators, we obtain that $U\triangleleft V$.\qedhere
  \end{itemize}
\end{proof}

\noindent
Using these lemmas, we can now prove \cref{lemma:closure-vs-contraction}.

\lemmaclosurevscontraction*
\begin{proof}
  For the implication from left to right, we write $L\preceq K$ if for every $U\in L$, there exists $V\in K$ such that $U\preceq V$.
  We then reformulate the statement as
  \[
      \forall A\subseteq \closure[\hcontr]L,\text{ }A\preceq L.
  \]
  We proceed by induction on the construction of $\closure[\hcontr]L$.
  \begin{itemize}
    \item
    The base case is $A=L$, this one is trivial as $\preceq$ is reflexive.

    \item
    For the inductive step we have $A=C[\sembka{\alpha}]$ for $\alpha\in\State$ and $C[\sembka{\alpha\cdot\alpha}]\in\closure[\hcontr]L$.
    From the induction hypothesis we obtain $C[\sembka{\alpha\cdot\alpha}]\preceq L$.
    Now, if $U \in C[\sembka{\alpha}]$, then $U = C[\alpha]$.
    Since $C[\alpha] \preceq C[\alpha \cdot \alpha] \in C[\sembka{\alpha \cdot \alpha}]$, it follows that $C[\sembka{\alpha}] \preceq C[\sembka{\alpha \cdot \alpha}]$; by transitivity, we then have that $A = C[\sembka{\alpha}] \preceq L$.
  \end{itemize}

  \noindent
  For the direction from right to left, we first prove that if $C\in\PC$ and $U,V\in\SP$ such that $U\preceq V$ and $C[V]\in \closure[\hcontr]L$, then $C[U]\in \closure[\hcontr]L$, by induction on the construction of $\preceq$ as characterised in \cref{lemma:preceq-smallest-preorder}.
  In the base, there are two cases.
  \begin{itemize}
    \item
    If $U\preceq V$, because $U=V$, we find that $C[U]=C[V]\in \closure[\hcontr]L$ immediately.

    \item
    If $U\preceq V$ because $U = \alpha$ and $V = \alpha \cdot \alpha$ for some $\alpha \in \State$, then $C[\sem{\alpha \cdot \alpha}] \subseteq \closure[\hcontr]{L}$, which means that $C[U] \in C[\sem{\alpha}] \subseteq \closure[\hcontr]{L}$.
  \end{itemize}
  For the inductive step, there are three cases to consider.
  \begin{itemize}
    \item
    If $U \preceq V$ because $U=U_0\cdot U_1$ and $V=V_0\cdot V_1$ with $U_0\preceq^{\SP}V_0$ and $U_1\preceq^{\SP}V_1$, then first choose $C'=C[*\cdot V_1]$.
    Thus $C'[V_0]=C[V_0\cdot V_1]\in \closure[\hcontr]L$.
    From the induction hypothesis we obtain that $C'[U_0]\in \closure[\hcontr]L$ and $C'[U_0]=C[U_0\cdot V_1]$.
    Then take $C''=C[U_0\cdot *]$. Thus $C''[V_1]=C[U_0\cdot V_1]\in \closure[\hcontr]L$.
    Again from the induction hypothesis we get that $C''[U_1]\in\closure[\hcontr]L$, and $C''[U_1]=C[U_0\cdot U_1]=C[U]$.

    \item
    If $U\preceq V$ because $U=U_0\parallel U_1$ and $V=V_0\parallel V_1$ with $U_0\preceq^{SP}V_0$ and $U_1\preceq^{SP}V_1$, the proof proceeds as above.

    \item
    If $U\preceq V$ because there exists a $W\in\Pom$ and $U\preceq W$ and $W\preceq V$, then by the induction hypothesis we find that $C[W]\in\closure[\hcontr]L$.
    By applying the induction hypothesis again we can conclude that $C[U]\in\closure[\hcontr]L$.
  \end{itemize}
  Then if $V \in L$ s.t. $U \preceq V$, we can choose $C=*$ to find that $C[V]=V\in\closure[\hcontr]L $, and thus $U=C[U]\in\closure[\hcontr]L$, using the previously derived fact.
\end{proof}

\fi

\pclosure*
\begin{proof}
First, note that $\sempocka{e} = \closure[\hexch\cup\hcontr]{\semwpocka{e}} = \closurep[\hcontr]{\closure[\hexch]{\semwpocka{e}}}$ by~\cref{lemma:factorise-exch}.
Thus, if $V \in \sempocka{e}$, we apply \cref{lemma:closure-vs-contraction}, to infer that there exists a pomset $W\in \closure[\hexch]{\semwpocka{e}}$ such that
$V\preceq W$.
Next we can apply \cref{lemma:exch-closure-vs-subsumption}, to obtain a pomset $U\in \semwpocka{e}$ such that $W\sqsubseteq U$. We know that $U$ has property $P$. We first show that $W$ also has property $P$, and then that the same holds for $V$. From the definition of $\sqsubseteq$ we get that there exists a bijective pomset morphism $h$ from $W$ to $U$.
Thus we have $U=[\lp{u}]$ and $W=[\lp{w}]$ and a bijective function $h\colon S_\lp{u}\to S_\lp{w}$ such that $\lambda_\lp{w}\circ h =\lambda_\lp{u}$ and
if $u\leq_\lp{u} u'$ then $h(u)\leq_\lp{w} h(u')$. Now we need to verify the two properties of \cref{def:propq}.
\begin{enumerate}
    \item As $\lambda_\lp{w}(h(u_1))=\lambda_\lp{u}(u_1)$, we get
    $\lambda_\lp{w}(h(u_1))=(x\leftarrow 1)$. The same for the other existential statements of
    \cref{item:existence-in-u}. For the ordering: from $u_1 \leq_\lp{u} v_1 \leq_\lp{u} w$ we immediately obtain
    that $h(u_1)\leq_\lp{w} h(v_1)\leq_\lp{w} h(w)$ and similarly for
    $h(u_2)\leq_\lp{w}h(v_2)\leq_\lp{w}h(w)$.

    \item
    Take a $z$ such that $\lambda_\lp{w}(z)=(x\leftarrow n)$ for $ n\in \val\cup \var$. As $h$ is surjective, we know there exists a node $s\in S_\lp{u}$ such that $h(s)=z$ and $\lambda_\lp{w}(h(s))=\lambda_\lp{u}(s)$. As $P(U)$, we get that $s\leq_\lp{u} u_1$.
Hence, $h(s)\leq_\lp{w} h(u_1)$ and thus $z\leq_\lp{w} h(u_1)$. An analogue argument can be given for the other conditions in property~\eqref{item:relative-existence}.
 \end{enumerate}
This demonstrates that $W$ has property $P$. We know that $V\preceq W$, and we will show this implies that $V$ also has property $P$. From the definition of $\preceq$ we know that there exists a pomset morphism $h$ from $V$ to $W$. The argument to verify the two  properties of \cref{def:propq} is exactly the same as above. Hence we can conclude that $V$ has property $P$.
\end{proof}

\end{document}